\renewcommand{\chaptermark}[1]{\markboth{\textsc{\@chapapp}\ \thechapter:\ #1}{}}
\def\cleardoublepage{\clearpage\if@twoside \ifodd\c@page\else
    \hbox{}
    \thispagestyle{plain}
    \newpage
    \if@twocolumn\hbox{}\newpage\fi\fi\fi}
\makeatother \clearpage{\pagestyle{plain}\cleardoublepage}
\renewcommand{\chaptermark}[1]{\markboth{ \emph{#1}}{}}
\titlespacing{\section}{0pt}{*1}{*1}
\titlespacing{\subsection}{0pt}{*1}{*1}
\titlespacing{\subsubsection}{0pt}{*1}{*1}
\titleformat{\chapter}[display]
{\normalfont\bfseries\filcenter}
{\LARGE\MakeUppercase{\chaptertitlename} \thechapter}
{1pc}
{\Huge}
\theoremstyle{plain}
\newtheorem{definition}{Definition}[section]
\newtheorem{problem}[definition]{Problem}
\newtheorem{conjecture}[definition]{Conjecture}
\theoremstyle{plain}
\newtheorem{proposition}[definition]{Proposition}
\newtheorem{Proposition}[definition]{Proposition}
\newtheorem{lemma}[definition]{Lemma}
\newtheorem{Lemma}[definition]{Lemma}
\newtheorem{theorem}[definition]{Theorem}
\newtheorem{Theorem}[definition]{Theorem}
\newtheorem{corollary}[definition]{Corollary}
\theoremstyle{plain}
\newtheorem{remark}[definition]{Remark}
\newtheorem{example}[definition]{Example}
\newenvironment{packed_enum}{
\begin{enumerate}
  \setlength{\itemsep}{1pt}
  \setlength{\parskip}{0pt}
  \setlength{\parsep}{0pt}
}{\end{enumerate}}
\def\>{\rangle}
\def\<{\langle}
\def\be{\begin{equation}}
\def\ee{\end{equation}}
\def\beu{\begin{equation*}}
\def\eeu{\end{equation*}}
\def\bee{\begin{eqnarray*}}
\def\eee{\end{eqnarray*}}
\newcommand{\nc}{\newcommand}
\newcommand{\ket}[1]{|#1\rangle}
\newcommand{\bra}[1]{\langle#1|}
\newcommand{\braket}[1]{|#1\rangle\langle#1|}
\newcommand{\ti}[1]{\tilde{#1}}
\newcommand{\ov}[1]{\overline{#1}}
\newcommand{\initstate}{\psi^{C_MBR}}
\newcommand{\initstateA}{\psi^{C_1C_2\ldots C_mABR}}
\newcommand{\inputstate}{\psi^{C_1C_2\ldots C_mAB}}
\newcommand{\inputGroupstate}{\psi^{{\cal T}A{\ov{\cal T}}BR}}
\newcommand{\inputRelativeA}{\psi^{{\cal T}A R_{\cK}}}
\newcommand{\inputRelativeB}{\psi^{{\overline{\cal T}} B R_{\cKbar}}}
\newcommand{\projstate}{\psi_{J_M}^{C^1_MB^0_MBR}}
\newcommand{\projstatebraket}{\braket{\psi_{J_M}}^{C^1_MB^0_MBR}}
\newcommand{\reducstate}{\psi_{{J_M}}^{C^1_M R}}
\newcommand{\mergestate}{\psi^{B_MBR}}
\newcommand{\decouplestate}{\tau^{C^1_M} \otimes \psi^{R}}
\newcommand \Tr {\mathrm{Tr}}
\newcommand \calE {{\cal E}}
\newcommand{\cK}{{\cal T}}
\newcommand{\cKbar}{{\overline{\cal T}}}
\newcommand{\cT}{\ov{\cal T}}
\newcommand{\cTo}{\ov{{\cal T}^0}}
\newcommand{\cTu}{\ov{{\cal T}^1}}
\nc{\outputrho}{\rho^{\cK^1A^1_{\cK} \cTu
B^1_{\cKbar}A_{\cK}B_{\cKbar} ABRX}}
\newcommand {\X}{{\cal X}}
\newcommand {\Y}{{\cal Y}}
\newcommand {\cX}{{\overline{\cal X}}}
\newcommand {\cY}{{\overline{\cal Y}}}
\newcommand{\cF}{{\cal F}}
\newcommand{\cH}{{\cal H}}
\newcommand{\cL}{{\cal L}}
\nc{\smfrac}[2]{\mbox{$\frac{#1}{#2}$}}
 \nc{\UA}{U^A_{j_{\cK}}}
 \nc{\VB}{V^B_{j_{\cKbar}}}
 \nc{\jK}{j_{\cK}}
 \nc{\jKbar}{j_{\cKbar}}
\def\squareforqed{\hbox{\rlap{$\sqcap$}$\sqcup$}}
\def\qed{\ifmmode\squareforqed\else{\unskip\nobreak\hfil
\penalty50\hskip1em\null\nobreak\hfil\squareforqed
\parfillskip=0pt\finalhyphendemerits=0\endgraf}\fi}
\renewenvironment{proof}{\noindent \textbf{{Proof~} }}{\qed\medskip}
\newenvironment{proof+}[1]{\noindent \textbf{{Proof #1~} }}{\qed\medskip}
\title{\large{\textbf{Multiparty quantum protocols for assisted entanglement distillation}} \\[2.5cm]
Nicolas Dutil \\[2cm]
\normalsize{School of Computer Science}\\
\normalsize{McGill University, Montr\'eal}\\
May 2011\\[4.5cm]
A thesis submitted to McGill University in partial fulfillment of the requirements of the degree of Ph.D. \\[1cm]
\normalsize{\copyright Nicolas Dutil, 2011}\author{}\date{} \\
}
\begin{document}
\frontmatter
\pagestyle{empty}
\maketitle
\fancyhead[RO]{}
\parindent=0.3in 
\pagenumbering{roman}                  
\newpage
\thispagestyle{empty}
\mbox{}
\pagestyle{plain}
\singlespacing
\onehalfspacing
\chapter*{Abstract}
\addcontentsline{toc}{section}{Abstract}
Quantum information theory is a multidisciplinary field whose objective is to understand what happens when information is stored in the state of a quantum system. Quantum mechanics provides us with a new resource, called quantum entanglement, which can be exploited to achieve novel tasks such as teleportation and superdense coding. Current technologies allow the transmission of entangled photon pairs across distances up to roughly 100 kilometers. For longer distances, noise arising from various sources degrade the transmission of entanglement to the point that it becomes impossible to use the entanglement as a resource for future tasks. One strategy for dealing with this difficulty is to employ quantum repeaters, stations intermediate between the sender and receiver that can participate in the process of entanglement distillation, thereby improving on what the sender and receiver could do on their own.

Motivated by the problem of designing quantum repeaters, we study entanglement distillation between two parties, Alice and Bob, starting from a mixed state and with the help of repeater stations. We extend the notion of entanglement of assistance to arbitrary tripartite states and exhibit a protocol, based on a random coding strategy, for extracting pure entanglement. We use these results to find achievable rates for the more general scenario, where many spatially separated repeaters help two recipients distill entanglement.

We also study multiparty quantum communication protocols in a more general context. We give a new protocol for the task of multiparty state merging. The previous multiparty state merging protocol required the use of time-sharing, an impossible strategy when a single copy of the input state is available to the parties. Our protocol does not require time-sharing for distributed compression of two senders. In the one-shot regime, we can achieve multiparty state merging with entanglement costs not restricted to corner points of the entanglement cost region. Our analysis of the entanglement cost is performed using (smooth) min- and max-entropies. We illustrate the benefits of our approach by looking at different examples.

\chapter*{R\'esum\'e}
\addcontentsline{toc}{section}{R\'esum\'e}
\foreignlanguage{french}{}
L'informatique quantique a pour objectif de comprendre les propri\'et\'es de l'information lorsque 
celle-ci est repr\'esent\'ee 
par l'\'etat d'un syst\`eme quantique. La m\'ecanique quantique nous fournit une nouvelle ressource, 
l'intrication quantique, qui 
peut \^etre exploit\'ee pour effectuer une  t\'el\'eportation quantique ou un codage superdense. 
Les technologies actuelles permettent 
la transmission de paires de photons intriqu\'es au moyen d'une fibre optique sur des distances 
maximales d'environ 100 kilom\`etres. 
Au-del\`a de cette distance, les effets d'absorption et de dispersion d\'egradent la qualit\'e de l'intrication.
Une strat\'egie pour contrer ces difficult\'es consiste en l'utilisation de 
r\'ep\'eteurs quantiques: des stations interm\'ediaires entre l'\'emetteur et le r\'ecepteur, 
qui peuvent \^etre utilis\'ees durant le processus de distillation d'intrication, d\'epassant ainsi 
ce que l'\'emetteur et le r\'ecepteur peuvent accomplir par eux-m\^emes.

Motiv\'es par le probl\`eme pr\'ec\'edent, nous \'etudions la distillation d'intrication 
entre deux parties \`a partir 
d'un \'etat mixte \`a l'aide de r\'ep\'eteurs quantiques. Nous \'etendons la notion d'intrication assist\'ee 
aux \'etats tripartites arbitraires
et pr\'esentons un protocole fond\'e sur une strat\'egie de codage al\'eatoire. 
Nous utilisons ces r\'esultats pour trouver des taux de 
distillation r\'ealisable dans le sc\'enario le plus g\'en\'eral, o\`u les deux parties ont recours 
\`a de nombreux r\'ep\'eteurs durant la distillation 
d'intrication.

En \'etroite liaison avec la distillation d'intrication, nous \'etudions \'egalement les 
protocoles de communication quantique multipartite. Nous \'etablissons un nouveau 
protocole pour effectuer un transfert d'\'etat multipartite. Une caract\'eristique 
de notre protocole est sa capacit\'e 
d'atteindre des taux qui ne correspondent pas \`a des points extr\^emes de la r\'egion r\'ealisable 
sans l'utilisation d'une strat\'egie de temps-partag\'e. 
Nous effectuons une analyse du co\^ut d'intrication en utilisant les mesures 
d'entropie minimale et maximale et illustrons les avantages 
de notre approche \`a l'aide de diff\'erents exemples. Finalement, nous proposons une variante de notre 
protocole, o\`u deux r\'ecepteurs et plusieurs \'emetteurs partagent un \'etat mixte. Notre protocole, 
qui effectue un transfert partag\'e, est appliqu\'e au probl\`eme de distillation assist\'ee.

\renewcommand{\appendixname}{}
\singlespacing
\renewcommand\contentsname{Table of Contents}
\tableofcontents
\parindent=0.3in 

\onehalfspacing
\chapter*{Notation}
\addcontentsline{toc}{section}{Notation}
\noindent
\begin{tabular}{p{3.7cm}p{10.5cm}}
\hline
\textbf{Common} &  \\
\hline
$\log$ & Binary logarithm. \\
$\ln$ & Natural logarithm. \\
$e$ & Euler's number.\\
$\mathbb{R}$ & Real numbers. \\
$\mathbb{C}$ & Complex numbers. \\
$\overline{c}$ & Complex conjugate of $c$. \\
\end{tabular} \\
\noindent
\begin{tabular}{p{3.7cm}p{10.5cm}}
\hline
\multicolumn{2}{l}{\textbf{Spaces}} \\
\hline
$A, B, C, \ldots$ & Hilbert spaces associated with the systems $A, B, C, \ldots$ \\
$\tilde{A}, \tilde{B}, \tilde{C}, \ldots$ & Typical subspaces of $A^{\otimes n}, B^{\otimes n}, C^{\otimes n}, \ldots$ \\
$d_A$ & Dimension of the space $A$. \\
$\Pi_{\tilde{A}}$ & Projector onto the typical subspace $\tilde{A}$. \\
$AB$ &  Tensor product $A \otimes B$ or composite system $AB$.\\
$A^n$ & Tensor product composed of $n$ copies of $A$. \\
$A_M$ & Tensor product $A_1 \otimes A_2 \otimes \ldots \otimes A_M$. \\
${\cal L}(A,B)$ & Space of linear operators from $A$ to $B$. \\
${\cal L}(A)$ & ${\cal L}(A,A)$. \\
\end{tabular} \\
\noindent
\begin{tabular}{p{3.7cm}p{10.5cm}}
\hline
\multicolumn{2}{l}{\textbf{Vectors}} \\
\hline
$\ket{\psi}^A, \ket{\phi}^A, \ldots$ & Vectors belonging to $A$.\\
$\braket{\psi}^A$ & Projector onto the vector $\ket{\psi}$. \\
$\langle \psi | \phi \rangle$ & Inner product of the vectors $\ket{\psi}$ and $\ket{\phi}$. \\
$\ket{\Phi^K}$ & Maximally entangled state of dimension $K$. \\
\end{tabular}\\
\noindent
\begin{tabular}{p{3.7cm}p{10.5cm}}
\hline
\multicolumn{2}{l}{\textbf{Operators}} \\
\hline
${\cal P}(A)$ & Set of positive semidefinite operators on $A$. \\
${\cal B}(A)$ & Set of density operators on $A$. \\
${\cal S}_{\leq}(A)$ & Set of sub-normalized density operators on $A$. \\
$\rho^A, \psi^A, \ldots$ & Density operators on $A$. \\
$\tau^A$ & Maximally mixed state of dimension $d_A$. \\
$\mathrm{id}_A$ & Identity map on ${\cal L}(A)$.\\
$I^{A}$ & Identity operator acting on $A$. \\
$\|X\|_1$ & Trace norm of the operator $X$. \\
$\|X\|_2$ & Hilbert-Schmidt norm of the operator $X$. \\
\end{tabular}\\
\noindent
\begin{tabular}{p{3.7cm}p{10.5cm}}
\hline
\multicolumn{2}{l}{\textbf{Distance measures for operators}} \\
\hline
$F(\rho,\bar{\rho})$ & Fidelity between $\rho$ and $\bar{\rho}$. \\
$\bar{F}(\rho,\bar{\rho})$ & Generalized fidelity between $\rho$ and $\bar{\rho}$. \\
$D(\rho,\bar{\rho})$ & Trace distance between $\rho$ and $\bar{\rho}$. \\
$\bar{D}(\rho,\bar{\rho})$ & Generalized trace distance between $\rho$ and $\bar{\rho}$. \\
$P(\rho,\bar{\rho})$ & Purified distance between $\rho$ and $\bar{\rho}$.
\end{tabular}
\\
\noindent
\begin{tabular}{p{3.7cm}p{10.5cm}}
\hline
\multicolumn{2}{l}{\textbf{Measures of information}} \\
\hline
$S(A)_{\psi}$ & von Neumann entropy of the density operator $\psi^A$. \\
$S(A|B)_{\psi}$ & Conditional von Neumann entropy of $\psi^{AB}$. \\
$H_{\min}(\rho^{AB}|\sigma^B)$ & Min-entropy of $\rho^{AB}$ relative to $\sigma^{B}$. \\
$H_{\min}(A|B)_{\rho}$ & Conditional min-entropy of $\rho^{AB}$ given $B$. \\
$H_{\max}(A|B)_{\rho}$ & Conditional max-entropy of $\rho^{AB}$ given $B$. \\
$H^{\epsilon}_{\min}(A|B)_{\rho}$ & Smooth min-entropy of $\rho^{AB}$ given $B$. \\
$H^{\epsilon}_{\max}(A|B)_{\rho}$ & Smooth max-entropy of $\rho^{AB}$ given $B$. \\
$H_2(\rho^{AB}|\sigma^{B})$ & Collision entropy of $\rho^{AB}$ relative to $\sigma^{B}$. \\
$I(A;B)_{\psi}$ & Mutual information of the density operator $\psi^{AB}$. \\
$I(A\rangle B)_{\psi} $ & Coherent information of the density operator $\psi^{AB}$. \\
$D(\psi^{AB})$ & Distillable entanglement of the density operator $\psi^{AB}$.\\
$E_A(\psi^{ABC})$ & Entanglement of assistance of the pure state $\psi^{ABC}$. \\
$D_A(\psi^{ABC})$ & Entanglement of assistance of the state $\psi^{ABC}$. \\

\hline
\end{tabular}

\onehalfspacing
\chapter*{Acknowledgements}
\addcontentsline{toc}{section}{Acknowledgements}
First, I would like to thank my two supervisors, Patrick Hayden and Claude Cr\'epeau, for their guidance and financial support throughout the years. The writing of this thesis would not have been possible without their beliefs in my success during the more difficult periods. Many thanks going to Claude Cr\'epeau for accepting to be my supervisor, which allowed me to quickly enter the Ph.D. program, for providing me with financial support for more than two years and for inviting me to a workshop in Barbados. I'm very thankful to Patrick Hayden for his quick responses to many of my questions, for deepening my understanding of the fundamental concepts of quantum information theory, for his constant optimism and enthusiasm and for offering me many opportunities to travel and meet new people.

I would like to acknowledge my coauthors Abubakr Muhammad, Kamil Br\'adler and Patrick Hayden, with whom I published my first research article. I thank also Nilanjana Datta for inviting me at a summer workshop at the University of Cambridge where a good portion of my thesis work started. I'm also grateful to Mario Berta for answering many questions I had regarding his work during my stay at Cambridge and afterwards. I'm also thankful for quick replies by Renato Renner and Marco Tomamichel, which provided me with accurate answers to questions I had regarding their work, and Mark Wilde, J\"urg Wullschleger, Andreas Winter for helpful discussions and comments regarding two papers written by me and my coauthor Patrick Hayden. I would also like to acknowledge the following members (past and present) of the CQIL: Ivan Savov, Omar Fawzi, Jan Florjanczyk, Fr\'ed\'eric Dupuis, Simon-Pierre Desrosiers, Ben Sprott, Nima Lashkari, David Avis, and Prakash Panangaden.

Finally, I would like to thank my family and particularly my parents, who always supported me throughout the years, and my soon to be wife M\'elanie Bertrand. Her support and love during this period of my life will always be remembered.
\chapter*{Contribution of authors}

Most of the work contained in this thesis appears in two papers. The material contained in Chapter 5 has been published~\cite{Dutil2} in the journal of Quantum Information and Computation. This is joint work with my supervisor Patrick Hayden. The majority of the content appearing in Chapters 3 and 4 has been submitted to the IEEE Transactions on Information Theory and is joint work with my supervisor Patrick Hayden. The current version \cite{Dutil1} of this paper is available from the e-print arXiv.

\onehalfspacing
\mainmatter

\fancyhead[LE]{\leftmark}
\fancyhead[RO]{\emph{Chapter \thechapter}}
\pagestyle{fancy}
\pagenumbering{arabic}
\setcounter{page}{1}
\titleformat{\chapter}[display]
{\normalfont\Large\filcenter\sffamily}
{\titlerule[1pt]%
\vspace{1pt}
\titlerule
\vspace{1pc}%
\LARGE\MakeUppercase{\chaptertitlename} \thechapter}
{1pc}
{\titlerule
\vspace{1pc}
\Huge}

\chapter{Introduction}
\section{Motivation}
Information is a general concept which has many meanings, but is mostly understood as knowledge communicated between two entities. The science of information has origins dating back to the 19th century, with the works of Andre\"i Markov on probability theory and Ludwig Boltzmann on statistical mechanics. The founder of the theory is usually identified as Claude E. Shannon, who formalized the notion of information through the concepts of entropy and mutual information. These measures characterize the limiting behavior of several operational quantities, such as the minimum compression length of a message or the capacity of transmitting information through a noisy channel.

Quantum information theory is a multidisciplinary field whose objective is to understand what happens when information is stored in the state of a quantum system. Quantum mechanics provides us with a new resource, called quantum entanglement, best explained from the words of Erwin Schr\"odinger, who coined the term in his 1935 seminal paper ``Discussion of probability relations between separated systems''\cite{schroedinger}:
\emph{When two systems, of which we know the states by their respective representatives, enter into temporary physical interaction due to known forces between them, and when after a time of mutual influence the systems separate again, then they can no longer be described in the same way as before, viz. by endowing each of them with a representative of its own. I would not call that one but rather the characteristic trait of quantum mechanics, the one that enforces its entire departure from classical lines of thought. By the interaction the two representatives [the quantum states] have become entangled.}

Entanglement can be measured, transformed, and purified. It is essential to performing communication tasks such as quantum teleportation \cite{teleportation} and superdense coding \cite{superdense}. It is also exploited for other computational and cryptographic tasks which are impossible for classical systems (for instance, cheating in a coin tossing challenge \cite{BB84} or winning a pseudo-telepathy game \cite{pseudo}).

\begin{figure}[t]
\begin{center}
\includegraphics{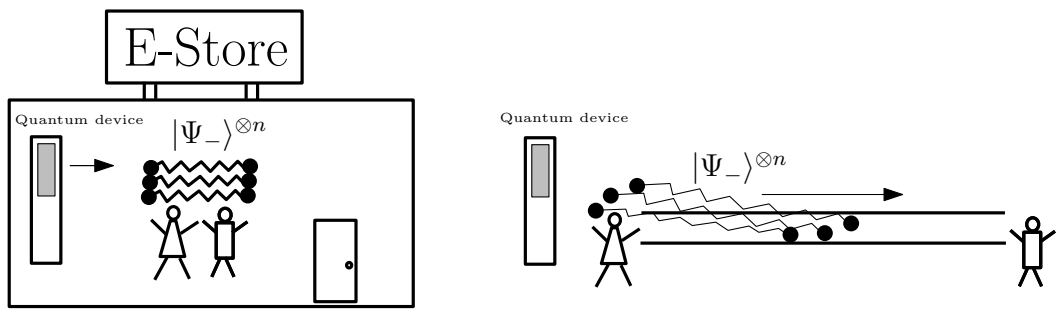}
\end{center}
\caption{Two different methods for establishing entanglement between Alice and Bob.} \label{fig:distribution}
\end{figure}

 Quantum teleportation, remote state preparation \cite{Bennett02} and device-independent cryptography ~\cite{ekert,device-indep} are examples of tasks which work on the assumption that entanglement can be shared between two spatially separated parties. To establish entanglement, the parties could meet at a common location and generate entangled pairs, with each party leaving with one half of each pair, or one of the parties could produce entanglement at his laboratory and send one half of each pair through a noiseless quantum channel (see Figure \ref{fig:distribution}) to the other party. The former strategy is currently infeasible as most quantum memories have very short storage times and are not designed to be moveable (see \cite{memories} for a review of quantum memories). As for the latter possibility, recent experiments \cite{fiber, free-space} have been successful at transmission of polarized entangled photons, with minimal loss of fidelity, over a distance of 144 kilometers in free-space. (The maximum distance is roughly 100 kilometers for transmission through a fiber.) If Alice and Bob are located further away than this distance, absorption and dispersion effects will eventually degrade entanglement fidelity to the point of making long-range entanglement-based communication impossible.

 One strategy for dealing with this difficulty is to employ quantum repeaters, stations intermediate between the sender and receiver that can participate in the process of entanglement distillation, thereby improving on what the sender and receiver could do on their own~\cite{repeaters,hierarchy,repeat-exp1,repeat-exp2}. By introducing such stations between different laboratories, and possibly interconnecting a subset of them via fiber optics, we can construct a quantum network (Figure \ref{fig:network1}).

 Each node of the network represents local physical systems which hold quantum information, stored in quantum memories. The information stored at the node can then be processed locally by using optical beam splitters \cite{Beam} and planar lightwave circuit technologies~\cite{Lightwave}, among other technologies. Entanglement between neighboring nodes can be established by locally preparing a state at one node and distributing part of it to the neighboring node using the physical medium connecting the two nodes. One of the main tasks then becomes the design of protocols that use the entanglement between the neighboring nodes to establish pure entanglement between the non adjacent nodes.
\begin{figure}[h]
\begin{center}
\includegraphics{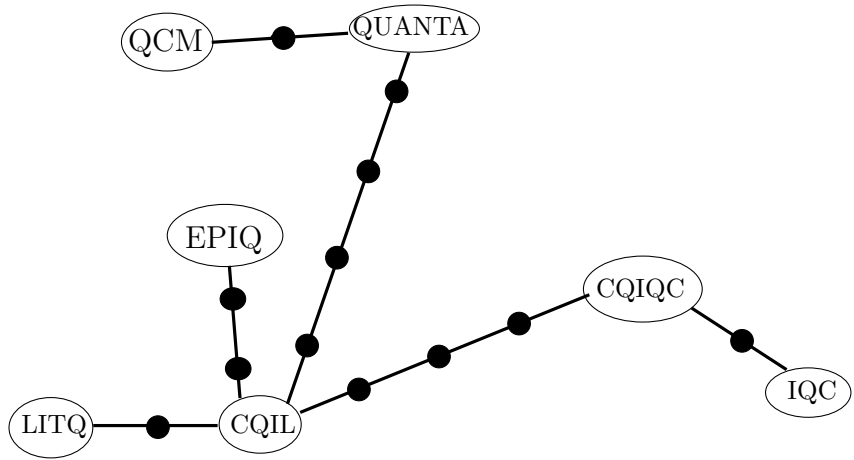}
\end{center}
\caption{A hypothetical quantum network connecting various university quantum laboratories. Repeater stations are represented by black dots.} \label{fig:network1}
\end{figure}

Let's consider the simplest non-trivial network, which was studied previously in \cite{cirac}, and consists of two laboratories separated by a repeater station (see Figure \ref{fig:swap}). At one endpoint of the network, Alice prepares an entangled system in the state $\ket{\psi}^{AC_1}=\sqrt{\lambda_1} \ket{00}^{AC_1} + \sqrt{\lambda_2}\ket{11}^{AC_1}$, and sends the $C_1$ part to the repeater station using the (noiseless) quantum channel  connecting them. Without loss of generality, we can assume that $\lambda_1 \geq \lambda_2$. The repeater prepares an entangled system in the same state $\ket{\psi}^{C_1C_2}$ and transmits the $C_2$ part to Bob. To establish entanglement between the laboratories, the repeater station performs a projective measurement on the composite system $C_1C_2$ with projectors corresponding to each of the four Bell states:
\begin{eqnarray*}
P_{00} &=& \braket{\Phi_{+}}^{C_1C_2} \\
P_{01} &=& \braket{\Psi_{+}}^{C_1C_2} \\
P_{10} &=& \braket{\Phi_{-}}^{C_1C_2} \\
P_{11} &=& \braket{\Psi_{-}}^{C_1C_2}.
\end{eqnarray*}
If the Bell measurement yields outcome $01$ or $11$, both occurring with equal probability $\lambda_1\lambda_2$, then Alice and Bob share the state $\frac{1}{\sqrt{2}}(\ket{01}^{AB}\pm \ket{10}^{AB})$. For the outcome $01$, they recover the \textit{singlet} state $\frac{1}{\sqrt{2}}(\ket{01}^{AB} - \ket{10}^{AB})$ from the state $\frac{1}{\sqrt{2}}(\ket{01}^{AB} + \ket{10}^{AB})$ if Bob applies the operator $Z$ on system, where
\begin{equation*}
 \begin{split}
 Z &= \ket{0}\bra{0}^B - \ket{1}\bra{1}^B \\
 \end{split}
\end{equation*}
 is a Pauli operator. For measurement outcomes $00$ and $10$, obtained with equal probabilities $\frac{\lambda^2_1+\lambda^2_2}{2}$, the reduced states on Alice's and Bob's systems are $\frac{1}{\sqrt{\lambda^2_1+\lambda^2_2}}(\lambda_1\ket{00}^{AB} \pm \lambda_2\ket{11}^{AB})$. These states are not \textit{maximally} entangled. ( See Chapter 2 for a precise definition.) To obtain a singlet state with optimal probability $\frac{2\lambda^2_2}{\lambda^2_1+\lambda^2_2}$, Bob performs the following generalized measurement and communicates the outcome to Alice:
\begin{figure}[t]
\begin{center}
\includegraphics{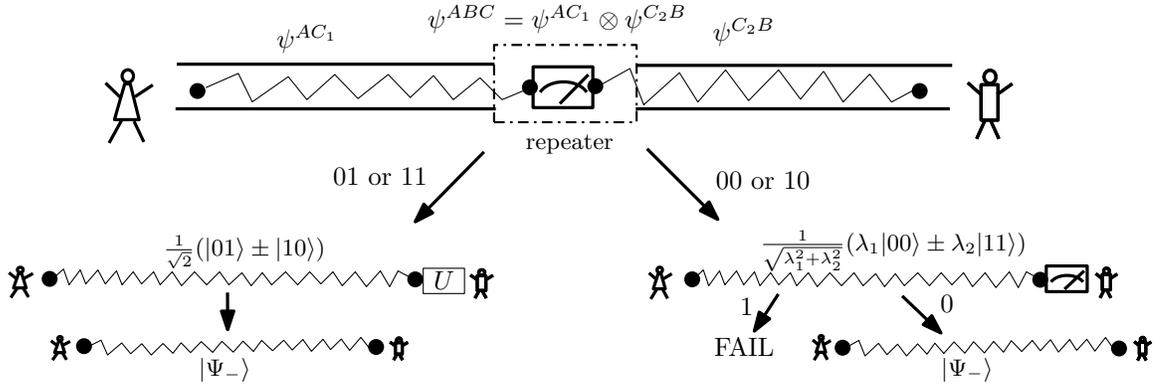}
\end{center}
\caption{A one-dimensional chain with one repeater node. To perform entanglement swapping, the repeater node performs a Bell measurement. Depending on the outcome, Bob will follow with either a decoding operation (i.e a unitary) to get back a singlet, or a generalized measurement to produce a singlet with optimal probability.} \label{fig:swap}
\end{figure}
\begin{equation}\label{eq:GenMeas}
 M_0 = \frac{\lambda_2}{\lambda_1} \braket{0}^B + \braket{1}^B, \quad
 M_1 = \sqrt{\frac{\lambda^2_1-\lambda^2_2}{\lambda^2_1}} \braket{0}^B.
\end{equation}
If outcome $0$ is obtained, Alice and Bob recover a singlet state by applying appropriate Pauli operators on Bob's share. Otherwise, a failure is declared. Thus, the singlet conversion probability for this \textit{entanglement swapping} strategy is equal to $2(\lambda_1\lambda_2 + \lambda^2_2) = 2\lambda_2$. Remarkably, as was noted in \cite{cirac}, this corresponds to the \textit{optimal singlet conversion probability} (SCP) for the state $\ket{\psi}=\sqrt{\lambda_1} \ket{00} + \sqrt{\lambda_2}\ket{11}$ (to see this, just replace $\lambda_1$ and $\lambda_2$ in eq.~(\ref{eq:GenMeas}) by $\sqrt{\lambda_1}$ and $\sqrt{\lambda_2}$). This shows that the entanglement swapping strategy maximizes singlet conversion probability between Alice and Bob, which, in a one-dimensional chain with identical pure states $\ket{\psi}$ between repeater stations, can never exceed the SCP of $\ket{\psi}$.

Unfortunately, the previous strategy cannot be extended to one dimensional chains with many repeater stations separating Alice and Bob's laboratories. In fact, as was shown in \cite{cirac}, no measurement strategy can keep the SCP between Alice and Bob from decreasing exponentially with the number of repeaters, making them useless for establishing entanglement over long distances.

One way to deal with this problem is to introduce redundancy in the network \cite{repeaters}. By preparing and distributing many copies of the state $\ket{\psi}=\sqrt{\lambda_1}\ket{00} + \sqrt{\lambda_2}\ket{11}$ across the chain, the repeater stations will be able to help Alice and Bob in producing singlets. The redundancy introduced in the network allows the stations to perform joint measurements on their shares, concentrating the entanglement found in each copy of $\ket{\psi}$ into a small number of highly entangled particles. For one-dimensional chains, the rate at which entanglement can be established between the two endpoints will approach the entropy of entanglement $S(A)_{\psi}$, no matter the number of repeaters introduced between the endpoints. The more copies of the state $\ket{\psi}$ are prepared and distributed between the nodes, the more transparent the repeaters will become, allowing us to view the entire chain as a noiseless channel for Alice and Bob.

This is an ideal situation, one unlikely to occur in real experiments, as only a finite number of copies of the state $\ket{\psi}$ will be prepared and the preparation and distribution of copies of this state across the network will be imperfect. It is also reasonable to assume that the storage of many qubits at a repeater station, or at one of the laboratories, will be more prone to errors over time than the storage of a single qubit. Hence, the global state of a quantum network will most likely be mixed. For such mixed state networks, we can ask the question: how much entanglement can we establish between Alice and Bob by performing LOCC operations on the systems part of the network ?

In the following chapters, following a brief review of the relevant concepts in information theory, we consider several variations of the previous question and look at closely related problems. Although we do not solve the assisted distillation problem completely, we give new results for a less restricted form of the problem, compared to what was considered before in the works of DiVincenzo \textit{et al}. and others \cite{dfm, Laustsen, SVW, merge, SW-Nature}, and rediscover known formulas for assisted distillation, established by Smolin et al. and Horodecki et al. in \cite{SVW, merge}, by devising new protocols. In the remainder of this chapter, we give a brief summary of each of the following chapters, and then state the contributions found in this thesis.

\section{Summary}
The thesis consists of six chapters and one appendix.
\subsubsection{Chapter 2: Preliminaries}
 This chapter is divided into three parts. First, we review relevant concepts in linear algebra. From this, we formulate the basic postulates of quantum mechanics in the language of linear algebra and discuss the density operator formalism. For our applications of quantum mechanics, this mathematical approach is more useful than standard formulations in terms of wave functions (Schr\"odinger picture) or time-dependent operators (Heisenberg picture). We then introduce the basics of quantum information theory, its formalism, and important results we will use in the following chapters. Finally, we conclude this chapter by reviewing three entanglement distillation protocols. The first two protocols discussed are examples of ``exact'' approaches to entanglement distillation: assuming the protocols can be implemented without introducing errors, they yield a number of perfect Einstein-Podolsky-Rosen (EPR) entangled pairs with high probability. The Schmidt method describes a procedure, via projective measurements, for extracting EPR pairs. The hashing method, on the other hand, hashes an unknown sequence of Bell pairs until an exact subsequence is found (with high probability). The last protocol involves a different paradigm, prevalent in information theory: the use of random coding for showing the existence of a family of protocols producing states arbitrarily close to a product of EPR pairs at near optimal rates. We discuss this protocol in an informal manner, as this approach will be studied in more detail subsequently and is central to the various tasks analyzed in this thesis.
 \subsubsection{Chapter 3: Multiparty state transfer}
 This chapter has three parts. We begin by introducing the information-processing task of transferring a system from one location to another. Previous work by Abey-esinghe et al. \cite{Hayden001} considered the problem from a ``fully quantum'' perspective: a single sender must use a minimal amount of quantum communication to transfer his entire system to the receiver. Existence of protocols achieving optimal rates was proven in this setting. Work by Horodecki et al. \cite{merge, SW-Nature}, who gave the first formulation of this problem, analyzed the task by substituting quantum communication with pre-shared entanglement and classical communication. Optimal rates were shown to be achievable by using a random measurement strategy. The problem was also extended to the multiple senders, single receiver setting, also known as distributed compression. In this chapter, we analyze new protocols for the task of multiparty state merging ($m$ senders, one receiver) and split-transfer ($m$ senders, two receivers) both in the one-shot regime and in the asymptotic setting. We also apply our split-transfer protocol to recover the formula in \cite{merge}, provided a certain conjecture holds, for the optimal assisted distillable rate when $m$ helpers and two recipients (Alice and Bob) share a pure state.
\subsubsection{Chapter 4: Entanglement cost of multiparty state transfer}
This chapter has two parts. First, we reformulate the one-shot results of Chapter 3 in terms of (smooth) min-entropies and provide protocols for one-shot multiparty state merging and one-shot split transfer. Our work extends some of the previous results by Berta \cite{Berta} and Dupuis et al. \cite{decouplingBerta}, which considered the task of one-shot state merging for the case of a single sender. In the last portion of this chapter, we compare our multiparty state merging protocols for different family of states, highlighting interesting differences between the two protocols.
\subsubsection{Chapter 5: Assisted entanglement distillation}
This chapter has four main sections. After a brief introduction, we extend the entanglement of assistance problem to the case of mixed states shared between a helper (Charlie) and two recipients (Alice and Bob). This problem was first studied in the one-shot regime by DiVincenzo et al. \cite{dfm}, and a formula was found in the asymptotic regime by Smolin et al. \cite{SVW}, which was generalized to an arbitrary number of parties by Horodecki et al. in \cite{merge}. We show an equivalence between the operational notion of assisted entanglement and a one-shot quantity maximizing the average distillable entanglement over all POVMs performed by the helper. We proceed with an asymptotic analysis of the mixed-state assisted distillation problem, deriving a bound on the achievable assisted distillable rate, which surpasses the hashing inequality in certain cases. We generalize this analysis to the multiparty scenario and compare our approach with a hierarchical distillation strategy.
\subsubsection{Chapter 6: Conclusion}
This chapter summarizes the results established in the previous chapters. We also discuss some of the open problems remaining to be solved and propose different lines of research related to the subjects touched upon in this thesis.
\subsubsection{Appendix A: Various technical results}
The appendix contains proofs of various lemmas and propositions used in the previous chapters. We give more details regarding this part of the thesis in the contribution section.
\section{Contributions}
\subsection*{Chapter 2: Preliminaries}
This chapter does not contain any original material. An effort was made, however, to present the introductory material with enough precision and substance that a reader with limited background in quantum information theory may grasp the essential ideas found in the following chapters.
\subsection*{Chapter 3: Multiparty state transfer}
\subsubsection{I. Removing time-sharing}
 The distributed compression protocol of \cite{merge}, although very intuitive and easy to understand by building upon the optimal rates achievable by the state merging primitive, must use a time-sharing argument to demonstrate achievability for rates which are not corner points. Our first contribution of this chapter is to give a protocol for achieving multiparty state merging without requiring a time-sharing strategy. More specifically, we show that distributed compression for the case of two senders is achievable without the use of time-sharing. To show this, we adapt the ideas in \cite{merge} and perform a direct technical analysis of the task of multiparty state merging, obtaining a bound on the decoupling error when each sender performs a random measurement. For the more general case of $m$ senders, there is a technical obstacle to proving that time-sharing is not required for multiparty state merging. The difficulty is more of a general quantum Shannon theory question than a problem with the analysis of multiparty state merging. We make a conjecture, which we call the multiparty typicality conjecture, and prove it is true for the case of a mixed state $\psi^{C_1C_2}$ in appendix A.
\subsubsection{II. Freedom in the distribution of catalytic entanglement}
A nice feature of our approach is to allow more freedom regarding the disposition of the catalytic entanglement, sometimes needed when performing the task of multiparty state merging. We give a simple example to illustrate the benefits of our protocol over the distributed compression protocol of \cite{merge}, which restricts catalytic entanglement to be distributed in a very specific way. This is the second contribution of our chapter. If time-sharing is not required for performing distributed compression for the case of three senders, we show that for certain states our protocol needs no catalytic entanglement, in contrast to the distributed compression protocol of \cite{merge}, which needs catalytic entanglement even if some of the entanglement rates for such states are negative.
\subsubsection{III. Split-transfer}
The last part of this chapter considers the problem of state transfer for multiple senders and two receivers. Here, the senders are split into a group ${\cal T}$ and its complement $\overline{\cal T}$. The objective is to redistribute the global state to the two receivers. More precisely, we must transfer the system $\cal T$ to one receiver while sending $\overline{\cal T}$ to the other receiver. To my knowledge, this problem has not been studied before. Two independent applications of the multiparty merging protocol will achieve a split-transfer with optimal rates. In the spirit of the previous sections of this chapter, we consider this problem directly by customizing our multiparty merging protocol for this task.
\subsubsection{IV. Answering the min-cut conjecture}
Our last contribution in this chapter is an answer to a conjecture posed by Horodecki et al. in \cite{merge} in the context of assisted distillation. The optimal multipartite entanglement of assistance rate was found to be equal to the minimum-cut bipartite entanglement $\min_{\cal T} S(A{\cal T})$, where the minimization is over all possible cuts ${\cal T}$ of the helpers. The proof in \cite{merge} is recursive: they show that, with high probability, the min-cut entanglement is preserved after one helper has finished his random measurement and apply this reasoning recursively for all other helpers. The conjecture asks if this recursive argument can be removed. More precisely, if a strategy where all the helpers performed their random measurements all at once will yield a state which preserves, with high probability, the minimum cut entanglement of the state. We show that this is true for almost all cases provided the multiparty typicality conjecture holds. Under this assumption, we show how to redistribute (many copies of) the original state using our split-transfer protocol in such a way that it preserves the min-cut entanglement. The receivers (Alice and Bob) can follow with a distillation protocol, yielding a rate of EPR pairs corresponding to the min-cut entanglement of the original state.
\subsection*{Chapter 4: Entanglement cost of multiparty state transfer}
\subsubsection{I. Entanglement cost region of multiparty merging}
Our first contribution of this chapter is to reformulate the upper bound derived in Chapter 3 for the decoupling error as a function of various min-entropy quantities. With this result in hand, we give a partial characterization in terms of min-entropies of the entanglement cost region achievable for multiparty state merging when a single copy of the state is available. For any point of this region, we show the existence of multiparty merging protocols of the kind described in the previous chapter, where all the senders measure their systems simultaneously and the decoder implemented by the receiver is not restricted to recovering the systems one at a time. We derive analogous results for the task of split-transfer by applying the same proof technique.
\subsubsection{II. Smooth min-entropy characterization}
Using the approach of Horodecki et al. \cite{merge} for achieving a distributed compression of a multipartite state $\psi^{C_1C_2\ldots C_mR}$, we analyze the entanglement cost associated with multiparty merging when a single-shot state merging protocol is applied iteratively, according to some ordering $\pi: \{1,2,\ldots, m\} \rightarrow \{1,2,\ldots, m\}$ on the senders. By building upon the results of Berta \cite{Berta} and Dupuis et al. \cite{decouplingBerta}, we show the existence of multiparty merging protocols with arbitrarily small error and entanglement cost characterized by the smooth min-entropies of the reduced states $\psi^{C_i \tilde{R}_{\pi^{-1}(i)}}$, where $\tilde{R}_{\pi^{-1}(i)}$ is the relative reference for the sender $C_i$ with respect to an ordering $\pi$ of the senders. This is the second contribution of this chapter.
\subsubsection{III. Examples of one-shot distributed compression}
The remainder of this chapter is devoted to examples. We compare the protocols described in this chapter for the task of distribution compression. We give three examples, two of them being closely related to the second distributed compression example of Chapter 3, and look at the entanglement costs required for merging the states. We find, once again, that our direct approach to the task of multiparty merging yields better results: our protocol outperforms an application of many single-shot two-party state merging protocols by allowing some of the senders to transfer their systems for free.
\subsection*{Chapter 5: Assisted entanglement distillation}
\subsubsection{I. Generalizing the entanglement of assistance}
The first contribution of this chapter is to extend the one-shot entanglement of assistance quantity, first defined in \cite{dfm}, to handle mixed states $\psi^{ABC}$ shared between two recipients (Alice and Bob) and a helper Charlie. This quantity reduces to the original entanglement of assistance when the state is pure. We give an operational definition of assisted distillation for mixed states $\psi^{ABC}$ and show an equivalence between the optimal distillable rate and the regularization of the entanglement of assistance quantity. This equivalence is used in the following section for proving achievable rates on the optimal assisted distillable rate when the parties share many copies of a mixed state $\psi^{ABC}$. We give two upper bounds to the entanglement of assistance for mixed states, and provide an example which saturates one of the upper bounds.
\subsubsection{II. Achievable rates for assisted distillation}
Using the equivalence between the optimal distillable rate and the regularization of the entanglement of assistance quantity, we give a lower bound on the optimal rate for assisted distillation of mixed states for the case of one helper. We prove the existence of a measurement for the helper Charlie which will preserve, with arbitrarily high probability, the minimum cut coherent information $L(\psi):=\{I(AC \rangle B)_{\psi}, I(A\rangle BC)_{\psi} \}$ of the input state. This is the second contribution of this chapter. Using this measurement in a double blocking strategy, Alice and Bob can recover singlets at the rate $L(\psi)$ by applying standard distillation protocols as in \cite{DW}. If Charlie preprocesses his share of the state to optimize the minimum cut coherent information, higher rates can potentially be achieved. If the state $\psi^{ABC}$ does not saturate strong subadditivity, and the coherent information $I(C \rangle AB)_{\psi}$ is positive, the achievable rate is higher than what the hashing inequality guarantees when performing a one-way distillation protocol.
\subsubsection{III. Optimality}
 Achievability of the min-cut coherent information has a surprising consequence: we can achieve a rate close to what could be obtained if Charlie were allowed to send his system to either Alice or Bob, whichever minimizes the minimum cut coherent information. We give a specific example where Charlie is not capable of transferring his system to Alice for free, but the assisted rates achievable are nonetheless close to $I(AC \rangle B)_{\psi}$. When $L(\psi)$ is the coherent information $I(A \rangle BC)_{\psi}$, however, Charlie can merge his system to Bob. For such a case, we can achieve an optimal rate for assisted distillation by applying a merging protocol before engaging in a distillation protocol. This is the third contribution of this chapter.
\subsubsection{IV. Fault-tolerance}
We compare our assisted distillation protocol to a hierarchical strategy consisting of entanglement distillation followed by entanglement swapping. The first example we analyze considers a one-dimensional chain where the Alice to Charlie's channel is noiseless but the Charlie to Bob channel is noisy. For a state in a product form, we find that the rate achieved by our protocol is the same as the rate obtained by using a hierarchical strategy. We modify our setup by introducing a CNOT error affecting Charlie's systems. We show that our random measurement strategy is fault-tolerant against such error: the assisted distillation rate remains the same, even in the absence of error correction by Charlie. On the other hand, the rate obtained by a hierarchical strategy becomes null. Thus, we identify a major weakness to using hierarchical strategies: it is not fault-tolerant against errors arising at Charlie's laboratory.
\subsubsection{V. Multipartite entanglement of assistance}
The last part of this chapter generalizes the multipartite entanglement of assistance of \cite{SVW, merge} to allow an arbitrary multipartite mixed state shared between $m$ helpers and two receivers. Our one-shot quantity reduces to the original multipartite entanglement of assistance quantity when the state is pure. We derive an upper bound to this quantity, and then perform an asymptotic analysis, proving the existence of protocols achieving a rate which is at least the minimum cut coherent information $I(A\cK \rangle B\cKbar)_{\psi}$, where $\cK$ is a cut of the helpers. Our proof relies on a multiple blocking strategy and suggests the possibility of a simpler protocol for achieving the minimum cut coherent information. This is the fifth contribution of this chapter.
\subsection*{Appendix A: Various technical results}
 \subsubsection{I. A different proof of the twirling average}
 We give a detailed calculation of the twirling average, a key result (see \cite{merge} and \cite{Hayden001} for the original proof) used in Chapter 3 for proving one of the important results of this thesis. Our proof does not rely on Schur's lemma, a fundamental result in representation theory, but instead relies on the invariance property of the Haar measure with respect to permutations, sign-flip operators and Hadamard transformations.
\subsubsection{II. Convexity of the entanglement of assistance}
 We give a proof of the convexity of the entanglement of assistance for pure ensembles $\{p_i, \psi_i^{ABC}\}$. This result is used in Chapter 5 for proving an upper bound to the one-shot entanglement of assistance.
\subsubsection{III. Lower bound to the smooth max entropy}
By removing the smallest eigenvalues of a state $\rho$, without disturbing the state too much, we get a useful lower bound to the smooth max entropy $H^{\epsilon}_{\max}(\rho)$. We use this bound in Chapter 4 for the various examples we analyze.
\subsubsection{IV. Multiparty typicality conjecture}
We give a proof that the multiparty typicality conjecture is true for the case of a mixed state $\psi^{C_1C_2}$. Our proof relies on a well-known inequality of probability theory and uses a double blocking strategy for constructing a state which satisfies the typicality conjecture (see Section \ref{sec:iid}).

\chapter{Preliminaries}
 \section{Representation of physical systems}
  \subsection{Hilbert spaces and linear operators}
A set $V$ is a vector space over a field ${\cal F}$ if given two operations, vector addition and scalar multiplication, it satisfies certain axioms (see table \ref{fig:axiomVec}). Examples of commonly used fields are the field of real numbers $\mathbb{R}$, the field of complex numbers $\mathbb{C}$, and the Galois field $F_2$ consisting of two elements, $0$ and $1$, for which addition and multiplication correspond to XOR and AND operations.
\begin{table}
\begin{tabular}{|p{5cm} p{9cm}|}
\hline
\textbf{closure} & If $u$ and $v$ are in $V$, then $u+v$ is in $V$. If $a \in \mathbb{C}$ and $u \in V$, then $au \in V$. \\
\textbf{associativity} & $u +(v+w) = (u+v)+w$ for all $u, v$, and $w$ in $V$. \\
\textbf{compatibility} & $a(bv)=(ab)v$ for all $v \in V$ and all $a,b \in \mathbb{C}$.\\
\textbf{commutativity} & $u+v = v+u$ for all $u,v \in V$.\\
\textbf{zero element} & An element $0$ in $V$ exists such that $v+0 = v = 0 +v$ for all $v \in V$. \\
\textbf{inverse} & For each $v \in V$, an element $-v$ exists in $V$ such that $-v + v = 0 = v + (-v)$.\\
\textbf{distributivity} & $a(v+w) = av + aw$ and $(a+b)v = av + bv$ for all $v,w \in V$ and $a,b \in \mathbb{C}$. \\
\textbf{identity} & $1v = v$ for all $v \in V$. \\
\hline
\end{tabular}
\caption{Axioms for a complex vector space.}
\label{fig:axiomVec}
\end{table}
Examples of vector spaces are the Euclidean $n$-space $\mathbb{R}^n$, the complex vector space $\mathbb{C}^n$, and the space of all functions $f: X \rightarrow \cF$ for any fixed set $X$. For the space $\mathbb{C}^n$, the vectors are the $n$-tuples $z=(z_1, z_2, \ldots, z_n)$ with $z_i \in \mathbb{C}$, and the addition and scalar multiplication operations are defined in a pointwise fashion: for vectors $x = (x_1, x_2, \ldots, x_n)$ and $y =(y_1, y_2, \ldots, y_n)$ in $\mathbb{C}^n$, and scalars $a \in \mathbb{C}$, we have
  \begin{equation*}
     x+y = (x_1+y_1, x_2+y_2,\ldots, x_n+y_n) \quad \text{and} \quad ax = (ax_1, ax_2, \ldots, ax_n).
  \end{equation*}
 A set $B = \{v_1, v_2, \ldots, v_n\}$ of vectors in $V$ is called a basis of the vector space $V$ if it is a linearly independent set which generates the whole space $V$. That is, no vector in $B$ can be written as a linear combination of finitely many other vectors in $B$, and the set of all linear combinations of the vectors in $B$ correspond to the whole space $V$. A vector space with basis $\{v_1, v_2,\ldots, v_n\}$ is said to have dimension $d_V = n$. 

To add notions of length and distance to a vector space, we introduce a third operation, called the inner product $\langle u,v \rangle: V \times V \rightarrow \cF$. Here, the field is usually taken to be either $\mathbb{R}$ or $\mathbb{C}$. An inner product must satisfy the three properties described in table \ref{tab:inner}.
 \begin{table}
\begin{tabular}{|p{5cm}p{9cm}|}
\hline
\textbf{conjugate symmetry} & $\langle u,v\rangle=\overline{\langle v,u\rangle}$ for all $u,v \in V$. \\
\textbf{linearity} & $\langle v+w,u\rangle=\langle v,u\rangle + \langle w,u\rangle$ and $\langle rv,w\rangle =r\langle v,w\rangle$ for all $u, v, w$ in $V$ and $r \in \mathbb{C}$.\\
\textbf{positive-definiteness} & $\langle v,v \rangle \geq 0$ for all $v \in V$ with equality iff $v = 0$. \\
\hline
\end{tabular}
\caption{Axioms for the inner product when ${\cal F}=\mathbb{C}$.}
\label{tab:inner}
\end{table}
A vector space $V$ with an inner product $\langle, \rangle$ is called an inner product space. We can define an inner product for the space $\mathbb{C}^n$ as follows:
\begin{equation*}
  \langle x,y \rangle := \overline{x}_1 y_1 + \overline{x}_2 y_2 + \ldots + \overline{x}_n y_n.
\end{equation*}
We have $\langle x,x \rangle = |x_1|^2 + |x_2|^2 + \ldots + |x_n|^2 \geq 0$, and the other two axioms can be verified just as easily. For an inner product space $V$, we assign a ``length'' to a vector $v$ via the norm
\begin{equation*}
\|v\| := \sqrt{\langle v,v\rangle}.
\end{equation*}
A vector space $V$ on which a norm is defined is called a normed vector space. For two vectors $x, y$ of a normed space $V$, we can add a notion of distance between two vectors $x$ and $y$ by using the norm:
\begin{equation*}
d(x,y) := \|x - y\|.
\end{equation*}
Symmetry and positivity of $d(x,y)$ follow easily from the above definitions. The triangle inequality $d(x,z) \leq d(x,y) + d(y,z)$ can be recovered using the Cauchy-Schwarz inequality:
\begin{equation*}
 |\langle x,y \rangle | \leq \|x\| \|y\|.
\end{equation*}
A space $V$ for which a distance function $d(x,y)$ is defined is called a metric space.  A metric space is complete if and only if every sequence $x_1, x_2, x_3, \ldots$ of vectors in $V$ for which $d(x_n, x_m) \rightarrow 0$, as both $n$ and $m$ independently tends toward infinity, converges in $V$. That is, for every such sequence $x_1, x_2, x_3, \ldots$ there exists a $y \in V$ such that $d(x_n, y) \rightarrow 0$ as $n \rightarrow \infty$.

A Hilbert space $\cH$ is a real or complex inner product space which is also a complete metric space with respect to the distance function induced by the inner product. For finite dimensional Hilbert spaces, the completeness criterion is automatically met and, thus, any real or complex inner product space is also a Hilbert space. As we will see shortly, Hilbert spaces arise in quantum mechanics to model the state space of a physical system. The tasks analyzed in this thesis involve quantum systems which can be adequately described using finite dimensional complex Hilbert spaces. Henceforth, we assume the Hilbert spaces to be of finite dimension. Vectors for a complex Hilbert space ${\cal H}_A$ associated with a physical system $A$ are written using the Dirac notation, also known as bra-ket notation, in the form $\ket{\psi}^A, \ket{\phi}^A, \ldots$ These vectors are called \textit{kets}, and for every ket $\ket{\psi}^A$ of the Hilbert space ${\cal H}_A$, henceforth written simply as $A$, there is an associated linear functional $\bra{\psi}^A: A \rightarrow \mathbb{C}$ called a \textit{bra}:
\begin{equation*}
   \bra{\psi} (\ket{\phi}) := \langle \psi, \phi \rangle,
\end{equation*}
where the right hand side is the inner product of the two vectors $\ket{\psi}^A$ and $\ket{\phi}^A$. The motivation for the bra-ket notation comes from this last definition, where we see that by removing parentheses around the vector $\ket{\phi}^A$ and fusing the bars together on the left hand side of the definition, we obtain a complex number $\langle \psi | \phi \rangle$ called a bra-ket or bracket.

A basis for the space $\mathbb{C}^2$ is given by $\{(0,1), (1,0) \}$, which can be rewritten in braket notation as $\{ \ket{0}, \ket{1} \}$. This is known as the \textit{computational basis} for the space $\mathbb{C}^2$. For the general space $\mathbb{C}^n$, the computational basis will be written as $\{\ket{1}, \ket{2}, \ket{3}, \ldots, \ket{n}\}$.  Any vector $\ket{\psi} \in \mathbb{C}^n$ can then be written as
\begin{equation*}
\ket{\psi} = \sum^n_{i=1}\alpha_i \ket{i} \quad \alpha_i \in \mathbb{C}.
\end{equation*}

Given two Hilbert spaces $A$ and $B$, we can construct a larger Hilbert space of dimension $d_Ad_B$ by taking the tensor product $A \otimes B$. Given two orthonormal bases $\{\ket{v_i}^A\}_{i=1}^{d_A}$ and $\{\ket{w_j}^B\}^{d_B}_{j=1}$ of $A$ and $B$ (i.e $\langle v_i | v_j \rangle = 0$ and $\langle w_i | w_j \rangle = 0$ for any $i \neq j$ ), the tensor product $A \otimes B$ is the space generated by the basis elements $\{\ket{v_i}^A \otimes \ket{w_j}^B\}$. How tensor products $\ket{v_i}^A \otimes \ket{w_j}^B$ are formed for two vectors $\ket{v_i}^A$ and $\ket{w_j}^B$ is a bit more technical, and we refer to \cite{Halmos} for more information on this subject. The tensor product for complex vector spaces satisfies the following three properties:
\singlespacing
\begin{enumerate}
  \item For any $z \in \mathbb{C}$ and arbitrary vectors $\ket{v}^A$ of $A$ and $\ket{w}^B$ of $B$, \[z(\ket{v}^A \otimes \ket{w}^B) = (z \ket{v}^A) \otimes \ket{w}^B = \ket{v}^A \otimes (z\ket{w}^B).\]
  \item For arbitrary vectors $\ket{v_1}^A$ and $\ket{v_2}^A$ in $A$ and $\ket{w}^B$ in $B$, \[(\ket{v_1}^A + \ket{v_2}^A) \otimes \ket{w}^B = \ket{v_1}^A \otimes \ket{w}^B + \ket{v_2}^A \otimes \ket{w}^B. \]
 \item For arbitrary vectors $\ket{v}^A$ in $A$ and $\ket{w_1}^B$ and $\ket{w_2}^B$ in $B$, \[ \ket{v}^A \otimes (\ket{w_1}^B  + \ket{w_2}^B ) = \ket{v}^A \otimes \ket{w_1}^B + \ket{v}^A \otimes \ket{w_2}^B. \]
\end{enumerate}
\onehalfspacing
As an example, for the two Hilbert spaces $A:=\mathbb{C}^n$ and $B:=\mathbb{C}^m$, the tensor product of the two vectors $\ket{\psi}^A = \sum^{n}_{i=1} \alpha_i \ket{i}^A$ and $\ket{\phi}^B = \sum^{m}_{j=1} \beta_j \ket{j}^B$ is given by
 \begin{equation*}
   \ket{\psi}^A \otimes \ket{\phi}^B = \sum^{n}_{i=1}\sum^{m}_{j=1} \alpha_i \beta_j \ket{ij}^{AB},
 \end{equation*}
where we have written $\ket{ij}^{AB}$ for the tensor product $\ket{i}^A \otimes \ket{j}^B$. This shorthand notation will often be used in the following chapters.

Vectors of a Hilbert space $A$ can be transformed via linear operators $L : A \rightarrow B$. The image of $L$ is defined as
\begin{equation*}
\mbox{im } L := \{L \ket{v} : \ket{v} \in A\}.
\end{equation*}
It is a subspace of $B$ and its dimension is called the \textit{rank} of $L$. 
The set of all linear operators $L: A \rightarrow B$ is denoted by $\cL(A,B)$. 
For linear operators acting from $A$ to itself, we use the shorthand notation $\cL(A)$. Given any basis $\{v_1,v_2,\ldots, v_n\}$ of a Hilbert space $A$, the trace of an operator $L \in \cL(A)$ is defined as
\begin{equation*}
\Tr(L) := \sum^n_{i=1} \bra{v_i} A \ket{v_i}.
\end{equation*}

Several classes of linear operators will be of interest to us. The first one is the set of hermitian operators acting on the Hilbert space $A$.
Given a linear operator $L : A \rightarrow A$, the hermitian conjugate (adjoint) $L^{\dag}$ of $L$ is the unique operator such that 
for all vectors $\ket{v}^A, \ket{w}^A \in A$,
\begin{equation*}
 \langle v | Lw\rangle = \langle L^{\dag} v | w\rangle.
\end{equation*}
An operator $H$ whose hermitian conjugate is $H$ is known as hermitian or self-adjoint. In general, for two operators $A$ and $B$, 
we have $(AB)^{\dag} = B^{\dag} A^{\dag}$. By convention, we also define $\ket{v}^{\dag} := \bra{v}$. 
General hermitian operators can be written elegantly via the spectral decomposition theorem.

\begin{theorem}[Spectral decomposition]
Let $H$ be an hermitian operator acting on a Hilbert space $A$. Then, there exists an orthonormal basis $\{\ket{e_i}^A\}^{d_A}_{i=1}$ of 
$A$ such that $H$ is diagonal with respect to this basis:
\begin{equation*}
   H  = \sum^{d_A}_{i=1} \lambda_i \braket{e_i}^A,
\end{equation*}
where all the eigenvalues $\lambda_i$ of $H$ are real numbers.
\end{theorem}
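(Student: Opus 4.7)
The plan is to proceed by induction on $d_A := \dim A$. The base case $d_A = 1$ is immediate: any linear operator on a one-dimensional space is a scalar, and self-adjointness forces that scalar to be real, giving the required decomposition. For the inductive step, I need two ingredients.

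First, I must locate a single eigenvalue-eigenvector pair. Since $A$ is a complex Hilbert space of finite dimension, the characteristic polynomial $\det(H - \lambda I^A)$ has degree $d_A$ over $\mathbb{C}$, and by the fundamental theorem of algebra it admits a root $\lambda_1$. Any nonzero vector in the kernel of $H - \lambda_1 I^A$ normalizes to a unit eigenvector $\ket{e_1}$. A short computation using $H^{\dag} = H$ gives $\lambda_1 = \langle e_1 | H | e_1 \rangle = \overline{\langle e_1 | H | e_1 \rangle}$, so $\lambda_1 \in \mathbb{R}$.

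Second, I peel off $\ket{e_1}$ and recurse on the orthogonal complement $W := \{ \ket{v} \in A : \langle e_1 | v \rangle = 0 \}$, which has dimension $d_A - 1$. The crucial observation is that $W$ is $H$-invariant: for any $\ket{v} \in W$,
\begin{equation*}
\langle e_1 | H | v \rangle = \langle H e_1 | v \rangle = \overline{\lambda_1} \langle e_1 | v \rangle = 0.
\end{equation*}
Hence the restriction $H|_W$ is a well-defined hermitian operator on the $(d_A-1)$-dimensional inner product space $W$, and the induction hypothesis supplies an orthonormal basis $\{\ket{e_2}, \ldots, \ket{e_{d_A}}\}$ of $W$ consisting of eigenvectors of $H|_W$ with real eigenvalues. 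Adjoining $\ket{e_1}$ yields the desired orthonormal eigenbasis of $A$, and rewriting $H$ in this basis gives the claimed diagonal form $H = \sum_{i=1}^{d_A} \lambda_i \braket{e_i}^A$.

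The main obstacle — really the only non-trivial ingredient — is the existence of at least one eigenvalue, which depends essentially on the algebraic closedness of $\mathbb{C}$. This is precisely where the argument would fail for a real inner product space without additional work: one would then need to complexify $H$ or invoke a variational/compactness argument to locate a real eigenvalue. Everything else is bookkeeping: $H$-invariance of $W$ uses only $H^{\dag} = H$, and the diagonal representation follows by combining $H\ket{e_i} = \lambda_i \ket{e_i}$ with the resolution of identity $\sum_i \braket{e_i}^A = I^A$.
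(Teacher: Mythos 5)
Your proof is correct: the induction on dimension, with existence of an eigenvalue from the fundamental theorem of algebra over $\mathbb{C}$, reality of eigenvalues from $H^{\dag}=H$, and $H$-invariance of the orthogonal complement, is the standard and complete argument. The paper itself states this spectral decomposition as background in the preliminaries without giving any proof, so there is no alternative route to compare against; your write-up would serve as a valid proof of the stated theorem as is.
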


The image of $H$ is spanned by all the eigenvectors $\ket{e_i}^A$ with non-zero eigenvalues. It is also called the \textit{support} of $H$.

An important subclass of hermitian operators are projection operators $P: A \rightarrow W$. These are hermitian operators which are also idempotent:
 \begin{equation*}
    P^{\dag} = P \quad \text{and} \quad P^2 = P.
 \end{equation*}
Given a $d_W$-dimensional subspace $W$ of a Hilbert space $A$ and an orthonormal basis $\{\ket{i}^{d_W}_{i=1}\}$ of $W$, 
the projector onto the subspace $W$ is defined as
\begin{equation*}
  P = \sum^{d_W}_{i=1} \braket{i}^W.
\end{equation*}
The orthogonal complement of $P$ is given by $Q:=I^A-P$, where $I^A$ is the identity operator on $A$. 
For a vector $\ket{\psi}^A$ of $A$, the operator $\ket{\psi}\bra{\psi}^A$, often written simply as $\psi^A$, 
is the projector onto the 1-dimensional subspace spanned by the vector $\ket{\psi}^A$. 
Projectors will be used later on to describe the process of measuring a physical system.

Another subclass of hermitian operators that we will frequently use are the positive semidefinite operators. 
An operator $X$ on $A$ is positive-semidefinite if for all vectors $\ket{v}^A \in A$, the inner product $\bra{v} X \ket{v}$ is a real and non-negative number. We will often drop the word ``semidefinite'' and refer to $X$ simply as a positive operator. 
From the spectral decomposition theorem, the eigenvalues and the trace of a positive operator $X$ must be non-negative real numbers. 
Given hermitian operators $H$ and $K$ acting on the space $A$, we say that $H \leq K$ if $K-H$ is positive. 
This defines a partial ordering on the set of hermitian operators.
The class of positive semidefinite operators of trace one have a special importance in quantum mechanics, and we refer to them as \textit{density operators}. We will see shortly that they capture the statistical behavior of a quantum system.

One very important class of linear operators we will be concerned with are the unitary operators. 
An operator $U_A$ in ${\cal L}(A)$ is said to be unitary if $U_A^{\dag}U_A=I_A$. 
This also implies $U_AU_A^{\dag}= I^A$. Unitary operators preserve lengths and angles between vectors. For any pair of vectors $\ket{v_1}^A$ and $\ket{v_2}^A$ of $A$, we have
\begin{equation*}
  \begin{split}
     \| U_A \ket{v_1}^A \|_1 &= \| \ket{v_1}^A \|_1. \\
     \langle U_A\ket{v_1}^A, U_A\ket{v_2}^A \rangle &= \langle \ket{v_1}^A, \ket{v_2}^A \rangle. \\
  \end{split}
\end{equation*}
The last line can be easily seen to hold by rewriting the inner product $\langle U_A\ket{v_1}^A, U_A\ket{v_2}^A \rangle$ in braket notation as $\bra{v_1} U_A^{\dag} U_A \ket{v_2}$. The result then follows by substituting $U_A^{\dag}U_A$ with the identity operator. Unitary operators can be used to construct new orthonormal bases: given an orthonormal basis $\{\ket{v_i}^A\}^{d_A}_{i=1}$ of $A$, let $\ket{w_i}^A = U_A \ket{v_i}^A$. Then $\{\ket{w_i}^A\}^{d_A}_{i=1}$ is an orthonormal basis of $A$.

We can generalize the class of unitary operators by considering input and output spaces of different dimensions. 
An \textit{isometry} $F: A \rightarrow B$ for two Hilbert spaces $A$ and $B$ is a linear operator which satisfies
\begin{equation*}
    d_B(F\ket{a},F\ket{b}) = d_A(\ket{a},\ket{b}),
\end{equation*}
for any two vectors $\ket{a},\ket{b}$ in $A$. A unitary operator $U_A$ on a Hilbert space $A$ 
is a special case of an isometry where the function $F$ is also surjective (i.e the image of $F$ is $A$). 
For two Hilbert spaces $A$ and $B$ of different sizes, with $d_A \leq d_B$, we can extend any unitary operator $U_A$ 
to an isometry $V: A \rightarrow B$ by identifying a subspace $B'$ of dimension $d_{B'}=d_A$ with $A$ and 
letting $V \ket{v}^A = \pi(U\ket{v}^A)$, where $\pi: A \rightarrow B'$ is an isomorphic map from $A$ to $B$. 
Define the kernel ($\mathrm{ker}(F)$) of $F$ to be the subspace of all vectors in $A$ which map to the zero element of $B$ 
under the function $F$. A function $W:A \rightarrow B$ is a \textit{partial isometry} if, for any two vectors $\ket{a},\ket{b}$ 
in the orthogonal complement of $\mathrm{ker}(W)$, we have $d_B(W\ket{a},W\ket{b}) = d_A(\ket{a},\ket{b})$. 
Partial isometries appear in chapter 3 to model a \textit{random coding} strategy in the context of state merging.

Finally, given two linear operators $R$ and $S$ acting on the spaces $A$ and $B$ respectively, the Kronecker product $R \otimes S$ is the matrix
\begin{equation}
 R \otimes S := \left ( \begin{array}{ccc}
                  R_{11}S & \cdots & R_{1d_A}S \\
                   \vdots & \ddots & \vdots \\
                  R_{d_A1}S & \cdots & R_{d_Ad_A}S 
                \end{array} \right ),
\end{equation}
where $R$ and $S$ are the matrix representations of the operators $R$ and $S$. 

\subsection{Quantum mechanics}
Unlike the theory of relativity, which was the work of a single individual \cite{Einstein}, the theory of quantum mechanics as we know it today was the culmination of years of work from various physicists during the first half of the twentieth century. The failure of classical physics to explain observed phenomena such as the ultraviolet catastrophe and the photoelectric effect forced physicists to reconsider the nature of the physical world. A new set of rules was required for making accurate predictions on the outcome of any scientific experiment. After a relatively long process of trial and error, a mathematical formulation of quantum mechanics was made precise and found to successfully predict all scientific experiments known at the time. Since then, no known experiment has contradicted the predictions of quantum mechanics. Any physical theory based on the structure of quantum mechanics must obey the following four basic postulates:
\begin{description}
  \item[Postulate 1] Associated with any physical system $A$ is a Hilbert space $A$ called the state space. The system is completely described by its density operator $\psi^A$, which acts on the state space of the system $A$.

  \item[Postulate 2] The evolution of a closed quantum system $A$ is described by a unitary transformation $U$. That is, if $\psi^A_{t_1}$ and $\psi^A_{t_2}$ are the density operators of the system $A$ at times $t_1$ and $t_2$, they are related by a unitary operator $U_A$ which depends only on $t_1$ and $t_2$:
          \begin{equation*}
              \psi^A_{t_2} = U_A \psi^A_{t_1} U_A^{\dag}.
          \end{equation*}
  \item[Postulate 3] Quantum measurements realized on a physical system $A$ are described by a set of linear operators $\{M_m\}$ acting on the state space of $A$. The probability of obtaining outcome $m$ is given by
      \begin{equation*}
         p(m) = \Tr(M^{\dag}_m M_m \psi^A),
      \end{equation*}
where $\psi^A$ is the density operator describing the system $A$. After obtaining outcome $m$, the system is described by the density operator
 \begin{equation*}
 \frac{M_m \psi^A M_m^{\dag}}{\Tr(M_m \psi^A M_m^{\dag})}.
 \end{equation*}
 The operators $\{M_m\}$ satisfy the completeness equation,
    \begin{equation}\label{eq:Post3}
        \sum_m M^{\dag}_m M_m = I^A.
    \end{equation}
  \item[Postulate 4] The state space of a composite physical system is the tensor product $A_1 \otimes A_2 \otimes \ldots \otimes A_n$ of the state spaces $A_1, A_2, \ldots, A_n$ of the component physical systems. Moreover, if each system $A_i$ is described by the density operator $\psi^{A_i}$, the density operator of the system $A_1A_2\ldots A_n$ is given by $\psi^{A_1}\otimes \psi^{A_2} \otimes \ldots \otimes \psi^{A_n}$.
\end{description}
Other equivalent formulations of quantum mechanics exist (see, for instance, \cite{Cohen,Peres,Griffith}). In the context of quantum information theory, however, the previous formulation in terms of density operators will be very useful as we will often deal with composite systems in an unknown state. The density operator gives a complete mathematical description of the statistical behavior of its associated system. 
From the spectral decomposition, any density operator $\psi^A$ can be written as a convex combination of normalized eigenstates:
\begin{equation*}
  \psi^A = \sum^d_{i=1} \lambda_i \braket{\psi_i}^A,
\end{equation*}
where $\lambda_i > 0$ for $1 \leq i \leq d$. If $d=1$, the system is in the \textit{pure} state $\ket{\psi}^A$, often written simply as $\psi^A$. We will often use the term ``state'' to refer to the density operator $\braket{\psi}^A$ as opposed to the vector $\ket{\psi}^A$ of the state space. If $d > 1$, the system is said to be in the \textit{mixed} state $\psi^A$. In such a case, different \textit{ensembles} $\{p_i, \ket{\psi_i}^A\}$ of pure states may realize the density operator $\psi^A = \sum_i p_i \braket{\psi_i}^A$. As an example, consider the ensembles $\{1/2, \ket{0}^A, 1/2 , \ket{1}^A\}$ and $\{1/2, H\ket{0}^A, 1/2 , H\ket{1}^A \}$, where $H$ is the Hadamard operation:
\begin{equation*}
 \begin{split}
H\ket{0}^A &:= \frac{1}{\sqrt{2}} (\ket{0}^A+\ket{1}^A), \\
H\ket{1}^A &:= \frac{1}{\sqrt{2}} (\ket{0}^A-\ket{1}^A). \\
\end{split}
\end{equation*}
Both ensembles realize the same density operator
\begin{equation*}
\psi^A = 1/2 \braket{0}^A + 1/2\braket{1}^A = I^A/2 = 1/2 ( H \braket{0}^A H^{\dag} + H\braket{1}^AH^{\dag}),
\end{equation*}
which is called a \textit{maximally mixed} state $\tau^A := I^A/2$ of dimension $d_A$.

When realizing a quantum measurement on a system $A$, we may only be interested in the outcome of this measurement (for instance, to distinguish between two possible states $\psi_1^A$ and $\psi_2^A$ of the system). For a set of measurement operators $\{M_m\}$, let
\begin{equation*}
E_m = M_m^{\dag} M_m \quad \forall m.
\end{equation*}
Then, $\{E_m\}$ are positive operators as for any $\ket{v}^A$, we have $\bra{v}E_m\ket{v} \geq 0$. According to Postulate 3, the probability of obtaining outcome $m$ is given by $\Tr(M^{\dag}_m M_m \psi^A)$. Replacing $M^{\dag}_mM_m$ by $E_m$, the probability $p(m)$ is equal to $\Tr(E_m \psi^A)$, and from the completeness equation eq.~(\ref{eq:Post3}), we have
\begin{equation*}
\sum_m E_m = I^A.
\end{equation*}
The set $\{E_m\}$ is the \textit{POVM} (Positive Operator Valued Measurement) associated with the measurement. Conversely, let $\{E_m\}$ be a set of positive operators acting on $A$ which satisfy $\sum_m E_m = I^A$. Writing $E_m = \sum^d_{i=1} \mu_i \braket{e_i}^A$ using the spectral decomposition, we can define measurement operators $M_m := \sqrt{E_m}$, where \[\sqrt{E_m} := \sum^d_{i=1} \sqrt{\mu_i} \braket{e_i}^A.\] Then, the set $\{E_m\}$ is the POVM associated with the measurement described by the operators $\{M_m\}$. It is important to understand that measurements described via POVMs will generally not allow us to known the state of the system afterwards. This is due to the fact that, for a given POVM $\{E_m\}$, we can choose any set of unitaries $\{U_m\}$ and construct measurement operators $M_m = U_m \sqrt{E_m}$ which will describe a measurement with POVM $\{E_m\}$.

Suppose we have a composite system $AB$, whose state is described by its density operator $\psi^{AB}$. We can prescribe a ``state'' to the subsystem $A$ via the \textit{reduced} density operator:
\begin{equation*}
   \psi^A := \Tr_B \psi^{AB},
\end{equation*}
where $\Tr_B : A\otimes B \rightarrow A$ is a linear operator known as the \textit{partial trace}, and is defined by
\begin{equation*}
  \Tr_B (\ket{x_1}\bra{x_2}^A \otimes \ket{y_1}\bra{y_2}^B) := \ket{x_1}\bra{x_2}^A \Tr(\ket{y_1}\bra{y_2}^B),
\end{equation*}
where $\ket{x_1}^A$ and $\ket{x_2}^A$ are any two vectors in the state space of $A$ and $\ket{y_1}^B$ and $\ket{y_2}^B$ are any two vectors in the state space of $B$. This gives a correct description of the statistical behavior of the subsystem $A$ in the sense that for any measurement ${M_m}$ on $A$, the outcome probabilities $p(m)$ computed using $\psi^A$ equal the probabilities $q(m)$ computed using the density operator $\psi^{AB}$ for the measurement ${M_m \otimes I_B}$. The partial trace is the unique function satisfying this property. For the tensor product state $\psi^{AB} = \sigma^A \otimes \phi^B$, the reduced state $\psi^A$ is equal to $\Tr_B \psi^{AB} = \sigma^A$.

\subsection{Schmidt decomposition and purifications}

For any pure state $\psi^{AB}$ of a composite system $AB$, we can always find orthonormal states $\ket{e_i}^A$ for the system $A$ and orthonormal states $\ket{f_i}^B$ for the system $B$ such that $\psi^{AB}$ can be written as a superposition of the states $\ket{e_if_i}^{AB}$. This is the Schmidt decomposition theorem:
\begin{theorem}[Schmidt decomposition]\label{thm:schmidtdecomp}
Suppose $\ket{\psi}^{AB}$ is a pure state of a composite system $AB$. Then, there exist orthonormal states $\{\ket{e_i}^A\}^d_{i=1}$ for the system $A$ and orthonormal states $\{\ket{f_i}^B\}^d_{i=1}$ for the system $B$ such that
 \begin{equation*}
   \ket{\psi}^{AB} = \sum^d_{i=1} \lambda_i \ket{e_i}^A\ket{f_i}^B,
 \end{equation*}
 where the $\lambda_i$ are non-negative real numbers satisfying $\sum_i \lambda^2_i = 1$ known as Schmidt coefficients. The number of non-zero values $\lambda_i$ is called the Schmidt rank for the state $\ket{\psi}^{AB}$.
\end{theorem}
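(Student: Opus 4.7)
The plan is to use the spectral decomposition of the reduced density operator $\psi^A = \Tr_B \braket{\psi}^{AB}$, which is available to us by the theorem stated earlier in this section. Since $\psi^A$ is hermitian and positive semidefinite, we may write $\psi^A = \sum_{i=1}^{d_A} \mu_i \braket{e_i}^A$ with $\mu_i \geq 0$ and $\{\ket{e_i}^A\}$ an orthonormal basis of $A$. Setting $\lambda_i := \sqrt{\mu_i}$, the goal is then to produce an orthonormal family $\{\ket{f_i}^B\}$ of $B$ such that $\ket{\psi}^{AB} = \sum_i \lambda_i \ket{e_i}^A \ket{f_i}^B$.

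First I would expand $\ket{\psi}^{AB}$ in the $A$-basis $\{\ket{e_i}^A\}$ by writing
\begin{equation*}
   \ket{\psi}^{AB} = \sum_{i=1}^{d_A} \ket{e_i}^A \otimes \ket{\tilde{f}_i}^B,
\end{equation*}
where the vectors $\ket{\tilde{f}_i}^B \in B$ are uniquely determined but not a priori orthonormal. Next, I would compute the reduced state directly from this expansion, using the definition of the partial trace on pure outer products, to obtain
\begin{equation*}
   \Tr_B \braket{\psi}^{AB} = \sum_{i,j=1}^{d_A} \langle \tilde{f}_j | \tilde{f}_i \rangle \, \ket{e_i}\bra{e_j}^A.
\end{equation*}
Comparing this matrix-of-inner-products representation with the spectral decomposition $\psi^A = \sum_i \mu_i \braket{e_i}^A$, and using the fact that an operator is diagonal in the basis $\{\ket{e_i}^A\}$ iff its off-diagonal matrix elements vanish, I would conclude that $\langle \tilde{f}_j | \tilde{f}_i \rangle = \mu_i \delta_{ij} = \lambda_i^2 \delta_{ij}$.

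For each $i$ with $\lambda_i > 0$, I would then define $\ket{f_i}^B := \lambda_i^{-1} \ket{\tilde{f}_i}^B$; the relation above shows $\langle f_j | f_i \rangle = \delta_{ij}$, so these vectors are orthonormal. For indices with $\lambda_i = 0$ the corresponding $\ket{\tilde{f}_i}^B$ is the zero vector and contributes nothing to the sum, so I can complete (after restricting to the nonzero terms) the orthonormal set to a basis of $B$ by an arbitrary choice of orthonormal vectors in the orthogonal complement. Substituting $\ket{\tilde{f}_i}^B = \lambda_i \ket{f_i}^B$ back into the expansion yields the claimed decomposition, and the normalization condition $\sum_i \lambda_i^2 = 1$ follows from $\langle \psi | \psi \rangle = \Tr \psi^A = 1$. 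The only mildly delicate point is the bookkeeping around zero Schmidt coefficients and the (possibly unequal) dimensions of $A$ and $B$, but this is handled cleanly by taking $d$ to be the number of nonzero $\lambda_i$ and extending to orthonormal bases on each side as needed.
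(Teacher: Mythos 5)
Your proof is correct. Note that the paper itself does not supply an argument for this theorem: it simply defers to Nielsen and Chuang, where the usual derivation goes through the singular value decomposition of the coefficient matrix $a_{jk}$ in $\ket{\psi}^{AB}=\sum_{jk}a_{jk}\ket{j}^A\ket{k}^B$. Your route is the other standard one, via the spectral decomposition of the reduced operator $\psi^A=\Tr_B\braket{\psi}^{AB}$, and it is self-contained relative to what the chapter has already set up: you only use the spectral theorem stated just above and the defining action of the partial trace on outer products, whereas the SVD approach would require introducing that factorization separately. The key step — expanding $\ket{\psi}^{AB}=\sum_i\ket{e_i}^A\otimes\ket{\tilde f_i}^B$ in the eigenbasis of $\psi^A$, computing $\Tr_B\braket{\psi}^{AB}=\sum_{i,j}\langle\tilde f_j|\tilde f_i\rangle\,\ket{e_i}\bra{e_j}^A$, and matching matrix elements to conclude $\langle\tilde f_j|\tilde f_i\rangle=\lambda_i^2\delta_{ij}$ — is sound, and your handling of the zero coefficients (the corresponding $\ket{\tilde f_i}$ vanish, so only the Schmidt-rank-many terms survive and the orthonormal family on $B$ can be completed arbitrarily) together with $\sum_i\lambda_i^2=\Tr\psi^A=1$ closes the argument; as a side benefit it immediately yields the statement, used right after the theorem in the paper, that $\psi^A$ and $\psi^B$ share the same nonzero spectrum.
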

As a consequence of the Schmidt decomposition, the reduced density operators $\psi^A$ and $\psi^B$ of the state $\psi^{AB}$ share the same spectrum:
\begin{equation*}
\begin{split}
 \psi^A &= \sum^d_{i=1} \lambda^2_i \braket{e_i}^A \\
 \psi^B &= \sum^d_{i=1} \lambda^2_i \braket{f_i}^B. \\
\end{split}
\end{equation*}
For a proof of the Schmidt decomposition theorem, see Nielsen and Chuang~\cite{Nielsen}.

For a mixed state $\psi^A$, it is always possible to introduce another system $R$, called a purification system, and find a pure state $\ket{\psi}^{AR}$ of the composite system $AR$ such that $\psi^{A} = \Tr_R \psi^{AR}$. To see this, write $\psi^{A}$ as $\sum^{d_A}_{i=1} \lambda_i \braket{e_i}^A$ using the spectral decomposition theorem. Let $R$ have dimension $d_R := d_A$, with orthonormal basis states $\ket{e_i}^R$, and define the pure state
\begin{equation*}
  \ket{\psi}^{AR} = \sum^{d_A}_{i=1} \sqrt{\lambda_i} \ket{e_i}^A \ket{e_i}^R.
\end{equation*}
Then, we have
\begin{equation*}
 \begin{split}
   \Tr_R(\braket{\psi}^{AR}) &= \sum_{ij} \sqrt{\lambda_i \lambda_j} \ket{e_i}\bra{e_j}^A \Tr(\ket{e_i}\bra{e_j}^R) \\
     &= \sum_{ij} \sqrt{\lambda_i \lambda_j} \ket{e_i}\bra{e_j}^A \delta_{i,j} \\
    &= \sum^{d_A}_{i=1} \lambda_i \braket{e_i}^A \\
    &= \psi^A, \\
 \end{split}
\end{equation*}
where $\delta_{i,j}$ is the Kronecker symbol. It is always possible to purify a state $\psi^A$ in more than one way. However, for any two purifications $\ket{\psi}^{AR_1}$ and $\ket{\phi}^{AR_2}$ of a state $\psi^A$, with purification systems $R_1$ and $R_2$, there exists a partial isometry $U : R_1 \rightarrow R_2$ taking the $R_1$ system to the $R_2$ system such that
\begin{equation} \label{eq:isometrylink}
  \ket{\phi}^{AR_2} = (I^{A} \otimes U^{R_1}) \ket{\psi}^{AR_1} .
\end{equation}

\subsection{Separable states and maximally entangled states}
A state $\psi^{AB}$ of a composite system $AB$ is called separable if it can be written as a convex combination of tensor products of density operators  $\{\rho^A_i\}$ of the subsystem $A$ and density operators $\{\sigma^B_i\}$ of the subsystem $B$:
\begin{equation*}
 \psi^{AB} = \sum_i p_i \rho^A_i \otimes \sigma^B_i,
\end{equation*}
where $\sum_i p_i = 1$. Since the partial trace is a linear operator, we get from the previous equation that $\psi^A = \sum_i p_i \rho^A_i$ and $\psi^B = \sum_i p_i \sigma^B_i$. A state $\psi^{AB}$ which is not separable is called \textit{entangled}. For a pure state $\ket{\psi}^{AB}$, the previous definition of separability implies that $\ket{\psi}^{AB}$ is separable if and only if there exist vectors $\ket{x}^A$ and $\ket{y}^B$ of $A$ and $B$ such that $\ket{\psi}^{AB} = \ket{x}^A\ket{y}^B$. Alternatively, a pure state $\ket{\psi}^{AB}$ is a product state if and only if its Schmidt rank is $1$. Therefore, any pure entangled state for a composite system $AB$ whose subsystems $A$ and $B$ both have dimensions two must have a Schmidt decomposition with two non-zero values $\lambda_1$ and $\lambda_2$.
Examples of pure entangled states for such a composite system are the Bell states:
\begin{eqnarray*}
  \ket{\Phi_{+}}^{AB} &:=& \frac{1}{\sqrt{2}}(\ket{00} + \ket{11}), \\
  \ket{\Phi_{-}}^{AB} &:=& \frac{1}{\sqrt{2}}(\ket{00} - \ket{11}), \\
  \ket{\Psi_{+}}^{AB} &:=& \frac{1}{\sqrt{2}}(\ket{01} + \ket{10}), \\
  \ket{\Psi_{-}}^{AB} &:=& \frac{1}{\sqrt{2}}(\ket{01} - \ket{10}),
\end{eqnarray*}
where the last state is known as the \textit{singlet} state or an \textit{EPR pair} \cite{EPR}. These states form a basis of the tensor product space $A \otimes B$. In general, for two systems $A$ and $B$ with $d_A \leq d_B$, a \textit{maximally entangled} state of dimension $d_A$ is defined as:
\begin{equation*}
 \ket{\Phi^d} := \frac{1}{\sqrt{d_A}} \sum^{d_A}_{m=1} \ket{m}^A \otimes \ket{m}^B,
\end{equation*}
where $\{ \ket{m}^A\}^{d_A}_{i=1}$ is an orthonormal basis for $A$ and $\{\ket{m}^B\}^{d_A}_{i=1}$ is a family of orthonormal vectors on $B$. The amount of bipartite entanglement in a state $\psi^{AB}$ is measured in \textit{ebits}, with Bell states having an amount of entanglement equal to one \textit{ebit}. A maximally entangled state of dimension $d_A$ is said to have $\log(d_A)$ ebits.

Other examples of entangled states are obtained via mixtures of Bell states. The family of Werner states \cite{Werner} is defined as
\begin{equation*}
  W_F = F \braket{\Psi_{-}} + \frac{(1-F)}{3} (\braket{\Phi_{+}} + \braket{\Phi_{-}} + \braket{\Psi_{+}}),
\end{equation*}
where $0 \leq F \leq 1$. The value of $F$ for the Werner state $W_F$ is equal to $\bra{\Psi_{-}}W_F\ket{\Psi_{-}}$, which is the \textit{entanglement fidelity} of $W_F$ relative to the singlet state. For an arbitrary bipartite state $\psi^{AB}$, the entanglement fidelity of $\psi^{AB}$ relative to the singlet state is defined as:
\begin{equation*}
F^2(\psi^{AB},\braket{\Psi_{-}}) := \bra{\Psi_{-}}\psi^{AB}\ket{\Psi_{-}}.
\end{equation*}
The Werner state $W_F$ is separable for $F \leq 1/2$, and entangled for $F > 1/2$. Werner states facilitate the analysis (see, for instance, Bennett et al. \cite{Bennett}) and construction of entanglement distillation protocols: the process of converting a large number of copies of an entangled state $\psi^{AB}$ to a smaller number of highly entangled states such as EPR pairs.
\section{Quantum information}
\subsection{von Neumann entropy}
The Shannon entropy $H(X)$ \cite{Shannon} of a random variable $X$ yielding outcome $x$ with probability $p_x$ is defined as:
\begin{equation*}
  H(p_x) = H(X) := - \sum_x p_x \log p_x,
\end{equation*}
where the logarithm is taken base 2. This quantity is always non-negative, with $H(X)=0$ if and only if the random variable $X$ yields a definite outcome $x$ with $p_x=1$. It takes a maximal value of $\log d$ for a random variable $X$ generating $d$ possible outcomes with equal probabilities.

For data communication, the Shannon entropy is the theoretical limit at which information produced by a source can be compressed, transmitted and recovered in a lossless way. Its basic unit is the bit. A binary random variable $X$ taking values 0 and 1 with probability one-half is said to have 1 bit of entropy.

For a density operator $\psi^A$, with spectral decomposition $\psi^A = \sum^d_{i=1} \lambda_i \braket{e_i}^A$, we define its \textit{von Neumann entropy} $S(A)_{\psi}$ as
\begin{equation*}
  S(A)_{\psi} = - \Tr( \psi^A \log \psi^A),
\end{equation*}
where $\log \psi^{A} = \sum^d_{i=1} \log(\lambda_i) \braket{e_i}^A$ and the logarithm is taken base two (we define $0\log(0) := 0$). The spectral decomposition of $\psi^A$ allows us to relate the von Neumann entropy to the Shannon entropy:
\begin{equation*}
  S(A)_{\psi} = -\sum^d_{i=1} \lambda_i \log \lambda_i = H(X),
\end{equation*}
where $X$ is a random variable yielding outcome $i$ with probability $\lambda_i$. The von Neumann entropy is non-negative, and is zero if and only if the state is pure. For the maximally mixed state $I^A/d$, we have $S(A)_{I^A/d} = \log d$.

The original motivation for the von Neumann entropy did not come from an information-theoretical context, unlike the Shannon entropy. It was actually an attempt to extend a thermodynamical concept, the Gibbs entropy (see \cite{Fermi} for an introduction to thermodynamics), to the quantum setting. The extension of Shannon's work to the quantum regime happened several years later, beginning with the work of Ohya and Petz \cite{Ohya}. The quantum version of Shannon's noiseless coding theorem was obtained by Benjamin Schumacher \cite{Coding}, who coined the term \textit{qubit}, the basic unit of quantum information, and characterized the von Neumann entropy as the optimal rate at which quantum information produced by a source can be compressed, transmitted and recovered by a receiver in a lossless way.

A qubit is a 2-dimensional quantum system $A$. A composite system in any of the Bell states constitute two qubits. The reduced state of either subsystem is in the maximally mixed state, with entropy $S(A)_{I^A/2} = S(B)_{I^B/2} = 1$. If a state $\psi^{AB}$ of a composite system $AB$ is pure, we have $S(A)_{\psi} = S(B)_{\psi}$. Given a state $\psi^A = \sum_i p_i \psi^A_i$ written as a convex combination of other states $\psi^A_i$, the von Neumann entropy is a \textit{concave} function of its inputs $\psi^A_i$:
\begin{equation*}
  S(A)_{\psi} \geq \sum_i p_i S(A)_{\psi^A_i},
\end{equation*}
where equality holds iff all the states $\psi^A_i$, for which $p_i > 0$, are identical.  The von Neumann entropy is invariant under unitary transformations on the state $\psi^A$:
\begin{equation*}
  S(A)_{\psi} = S(A)_{U\psi U^{\dag}}.
\end{equation*}
For a state $\psi^{XA}$ of the form $\psi^{XA} = \sum_i p_i \braket{i}^X \otimes \psi^A_i$, we have
\begin{equation*}
  S(XA)_{\psi} = H(X) + \sum_i p_i S(A)_{\psi^A_i},
\end{equation*}
where $H(X) = -\sum_i p_i \log p_i$. We refer to the state $\psi^{XA}$ as a \textit{classical-quantum} (cq-)state with classical system $X$.

For a tensor product state $\psi^{AB} = \rho^A \otimes \sigma^B$, we have $S(AB)_{\psi} = S(A)_{\rho} + S(B)_{\sigma}$. The von Neumann entropy of a joint state $\psi^{AB}$ satisfies the following inequality, known as subadditivity (see \cite{Nielsen} for a proof):
\begin{equation}\label{eq:subadditivity}
  S(AB)_{\psi} \leq S(A)_{\psi} + S(B)_{\psi},
\end{equation}
where $S(A)_{\psi}$ and $S(B)_{\psi}$ are the von Neumann entropies for the corresponding reduced density operators $\psi^A$ and $\psi^B$. Equality holds if and only if the state $\psi^{AB}$ can be written in the product form $\psi^{A} \otimes \psi^{B}$. Given a state $\psi^{AB}$ of a composite system $AB$, the \textit{conditional} von Neumann entropy $S(A|B)_{\psi}$ is defined as
\begin{equation*}
  S(A|B)_{\psi} = S(AB)_{\psi} - S(B)_{\psi}.
\end{equation*}
 Unlike the Shannon entropy $H(X|Y) = H(XY) - H(Y)$, the conditional von Neumann entropy can be negative. As an example, consider the singlet state $\ket{\Psi_{-}}^{AB}$. We have $S(AB)_{\psi}=0$ since the state is pure, and $S(B)_{\psi}=1$ since its reduced density operator is the maximally mixed state $I_B/2$. For a tripartite system $A \otimes B \otimes C$ in the state $\psi^{ABC}$, the conditional von Neumann entropy $S(A|BC)$ is bounded above by $S(A|B)$:
\begin{equation*}
   S(A|BC)_{\psi} \leq S(A|B)_{\psi}.
\end{equation*}
This is known as the strong subadditivity property of the von Neumann entropy. First conjectured by Lanford and Robinson \cite{StrongSub}, a proof of this inequality was obtained by Lieb and Ruskai in \cite{Lieb3}. A simple operational proof of this inequality also follows from quantum state merging \cite{merge,SW-Nature}.
\subsection{Quantum operations, instruments, LOCC}
A \textit{quantum channel} is a medium for carrying quantum information from one location to another. To motivate its mathematical description, let's consider a collection of linearly polarized photons, each prepared in some polarization state $\psi_i^A$. As part of a protocol implemented by two spatially separated parties (for instance, the BB84 cryptographic protocol of Bennett and Brassard \cite{BB84}), we need to send the photons through a fiber optic channel to another laboratory $B$. Fiber optics, unfortunately, are not a perfect medium for information transmission. Photons sent through a fiber are subject to attenuation, also known as transmission loss, and dispersion effects. These phenomena will have highly undesirable consequences on the polarization state of the photons, and may prevent the detection of photons by the receiver's apparatus. We can model the process of photons passing through a fiber as two systems which interact for some period of time. If we assume the photons and the fiber form a closed system, their interaction can be described by a unitary operator $U: AE \rightarrow BE$, where $E$ represents the fiber optic system (also called the environment). The exact specification of the unitary will depend on the characteristics of the fiber. We can assume, prior to transmission, that the $AE$ system is in a product state $\psi^A \otimes \braket{0}^E$. A fiber optic channel is then represented by a map ${\cal N}:A \rightarrow B$ such that
     \begin{equation}\label{eq:qop}
   \tilde{\psi}^B = {\cal N}(\psi^A) := \Tr_E (U (\psi^A \otimes \braket{0}^E) U^{\dag}).
   \end{equation}
This is known as the Stinespring form for the channel ${\cal N}$. We can re-express the previous formula by introducing linear operators $E_i : A \rightarrow B$, defined as:
\begin{equation*}
  E_i := \bra{i}^E U \ket{0}^E,
\end{equation*}
where $\{\ket{i}^E\}$ is a basis of the environment system $E$. Putting these into eq.~(\ref{eq:qop}), we have
\begin{equation*}
  \tilde{\psi}^B = {\cal N}(\psi^{A}) = \sum_i E_i \psi^A E^{\dag}_i.
\end{equation*}
Alternatively, one could start from a set of linear operators $E_i: A \rightarrow B$ which satisfy
\begin{equation}\label{eq:compl}
\sum_i E^{\dag}_i E_i \leq I^A,
\end{equation}
known as the completeness relation for the operators $\{E_i\}$, and define the \textit{quantum operation}
\begin{equation}\label{eq:osr}
  {\cal E}(\psi^A) := \sum_i E_i \psi^A E^{\dag}_i.
\end{equation}
That the output of this operation is a sub-normalized (i.e $\Tr \rho \leq 1$) density operator follows from the completeness relation. Quantum channels can be regarded as a quantum operation whose intent is to carry quantum information. As an example, consider the following channel ${\cal E}:A \rightarrow B$ for transmitting a qubit in the state $\ket{\psi}^{A}=\alpha_0 \ket{0}^A+\alpha_1\ket{1}^A$:
\begin{equation}\label{eq:photonloss}
{\cal E}(\braket{\psi}^A) := (1-p) \braket{\psi}^B + p\braket{2}^B, 
\end{equation}
where $0 \leq p \leq 1$ and $\braket{2}^B$ is a state orthogonal to $\ket{0}^B$ and $\ket{1}^B$. This is a simple model for photon loss. The transmitted photon is either perfectly detected with probability $1-p$ or replaced by some ``erasure'' state with probability $p$.  

Eq.~(\ref{eq:osr}) is known as the operator-sum representation of a quantum operation. It allows for more general forms of quantum operations than the previous formulation of quantum operations in terms of interacting systems. The operators $E_i$ are called \textit{Kraus} operators. A quantum operation which has a non-trace-preserving output corresponds to a process which occurs with probability $\Tr({\cal E}(\psi^A))$ (for instance, a specific measurement outcome).

For a composite system $AR$, a quantum operation acting on the density operator $\psi^{AR}$ should leave the composite system $AR$ in a density operator (up to some normalization) after the operation is performed. This is called the completely positive requirement:
\begin{equation*}
 \tilde{\psi}^{AR} := (I^R \otimes {\cal E})(\psi^{AR}) \geq 0
\end{equation*}
for any extra system $R$ of arbitrary dimension. Quantum operations defined via the operator-sum representation satisfy this property (again, see \cite{Nielsen} for a proof of this fact).

We can describe measurements as a set of non trace-preserving quantum operations $\{{\cal E}_i\}$. To illustrate this, let's consider a special kind of generalized measurement, called a projective measurement, described by a set of orthogonal projectors $\{P_i\}$ (i.e $P_i P_j = \delta_{i,j} P_i$) satisfying:
\begin{equation}\label{eq:proj}
   \sum_i P_i = I^A.
\end{equation} The probability of obtaining outcome $i$ for a system $A$ in the state $\psi^A$ is then given by $\Tr (P_i \psi^A)$. Alternatively, we could have described projective measurement as the set of quantum operations ${\cal E}_i(\psi^A) := P_i \psi^A P^{\dag}_i$. That these are valid operations follows from the completeness equation eq.~(\ref{eq:proj}). We obtain the measurement outcome $i$ with probability $\Tr ({\cal E}_i(\psi^A))$ since
\begin{equation*}
\begin{split}
\Tr ({\cal E}_i(\psi^A)) &= \Tr (P_i \psi^A P_i) \\
&= \Tr (P_i \psi^A ).
\end{split}
\end{equation*}
The normalized state after obtaining outcome $i$ is given by $\frac{{\cal E}_i(\psi^A)}{\Tr ({\cal E}_i(\psi^A))}$. Notice that each quantum operation ${\cal E}_i$ is described using only one Kraus operator and that the sum $\sum_i {\cal E}_i$ is a trace-preserving quantum operation.

Generalized measurements can be described similarly using a set of quantum operations $\{{\cal E}_i\}$, with ${\cal E}_i(\psi^A)=M_i \psi^A M^{\dag}_i$. We can generalize the previous examples by considering quantum operations with operator-sum representations containing more than one Kraus operator. We call an \textit{instrument} ${\cal I} := \{{\cal E}_i\}$ \cite{EBDavies} a set of completely positive maps (i.e non trace-preserving quantum operation) which sums to a completely positive and trace preserving map. The elements of the set $\{{\cal E}_i\}$ are the instrument components. Instruments can be used in protocols when one party needs to perform a measurement followed by an isometry conditioned on the classical outcome of the measurement (see Section \ref{sec:random-meas} in Chapter 3).

Suppose two parties share a bipartite system $AB$ in the state $\psi^{AB}$, but have access only to a classical communication channel (i.e., a channel which transmits only classical data). The parties can send information by performing a finite number of rounds of local measurements (or other local processing such as instruments) and classical communication of the outcomes between them. These types of operations are a special class of quantum operations known as \textit{LOCC} (Local Operations and Classical Communication). They can be written elegantly in the operator sum representation as:
\begin{equation}
  {\cal E}(\psi^{AB}) = \sum_i (X^A_i \otimes Y^B_i )(\psi^{AB})(X^A_i \otimes Y_i^B)^{\dag}. \label{eq:locc}
\end{equation}
Note, however, that the class of operations which can be written in the previous form includes operations which are not in the LOCC class. Quantum operations satisfying eq.~(\ref{eq:locc}) are called separable. Teleportation and distillation protocols are examples of tasks which are performed using LOCC operations.

\subsection{Distance measures}
Given two probability distributions $p(x)$ and $q(x)$ over the same index set ${\cal X}$, the total variation distance between $p(x)$ and $q(x)$ is defined as
\begin{equation*}
  D(p, q) = \frac{1}{2} \sum_{x \in {\cal X}} |p(x) - q(x)|,
\end{equation*}
where $|x|$ is the absolute value of $x$. The total variation distance $D(p,q)$ is a metric: it is non-negative for any distributions $p,q$, it is symmetric in its arguments ($D(p,q) = D(q,p)$), and it satisfies the triangle inequality:
\begin{equation*}
 D(p,q) \leq D(p,r) + D(r,q),
\end{equation*}
where $p,q,$ and $r$ are arbitrary probability distributions over the same index set.

The \textit{trace distance} between two density operators $\rho^A$ and $\sigma^A$ is given by
\begin{equation*}
 D(\rho^A, \sigma^A) := \frac{1}{2} \| \rho^A - \sigma^A \|_1,
\end{equation*}
with the trace norm $\|X\|_1$ of an operator $X$ defined as
\begin{equation*}
\|X\|_1 = \Tr \sqrt{X^{\dag}X}.
\end{equation*}
Here, the $\sqrt{X}$ function for a positive operator $X$ is defined via the spectral decomposition of $X$:
\begin{equation*}
\sqrt{X} := \sum_i \sqrt{\lambda_i} \braket{e_i}^A.
\end{equation*}
If $\rho^A = \sum_i p_i \braket{i}^A$ and $\sigma^A =\sum_i q_i \braket{i}^A$, it is easy to see that the trace distance $D(\rho^A, \sigma^A)$ reduces to the total variation distance $D(p,q)$. The trace distance $D(\rho^A, \sigma^A)$ extends the total variation distance by providing a measure of closeness for states which are not simultaneously diagonalizable. (Two states $\rho^A$ and $\sigma^A$ which are simultaneously diagonalizable can be written as $\rho^A =\sum_i \lambda_i \braket{e_i}^A$ and $\sigma^A = \sum_i \mu_i \braket{e_i}^A$ for a common set of eigenvectors $\{e_i\}.$) Symmetry and non-negativity of the trace distance follows easily from the definition of the trace norm. The triangle inequality
\begin{equation*}
D(\rho^A, \sigma^A) \leq D(\rho^A, \phi^A) + D(\phi^A, \sigma^A),
\end{equation*}
is also satisfied for any state $\phi^A$.

For any two orthogonal states $\rho^A$ and $\sigma^A$, the trace distance is maximized and is equal to $1$. The trace distance $D(\rho^A, \sigma^A)$ is equal to zero if and only if the states are the same. The trace distance $D(\rho^A,\sigma^A)$ is invariant under unitary operations performed on $\rho^A$ and $\sigma^A$:
\begin{equation*}
 D(\rho^A, \sigma^A) := D(U\rho^A U^{\dag}, U \sigma^A U^{\dag}).
\end{equation*}
The trace distance can only decrease under trace-preserving quantum operations (a property also known as monotonicity):
\begin{equation}\label{eq:distOp}
 D({\cal E}(\rho^A),{\cal E}(\sigma^A)) \leq D(\rho^A,\sigma^A).
\end{equation}

Another measure of closeness between two states $\rho^A$ and $\sigma^A$ is obtained via the \textit{fidelity} \cite{uhlmann:fid,jozsa:fid}:
\begin{equation*}
  F(\rho^A, \sigma^A) = \Tr \sqrt{ \sqrt{\rho^A} \sigma^A \sqrt{\rho^A}}.
\end{equation*}
This can be re-expressed using the trace norm as
\begin{equation*}
  F(\rho^A, \sigma^A) = \| \sqrt{\rho^A} \sqrt{\sigma^A} \|_1.
\end{equation*}
The fidelity between two states $\rho^A$ and $\sigma^A$ is equal to one if and only if the states are the same. It is always non-negative and is zero for any two orthogonal states $\rho^A$ and $\sigma^A$.  It is also invariant under unitary operations performed on $\rho^A$ and $\sigma^A$:
\begin{equation*}
  F(\rho^A, \sigma^A) = F(U\sigma^A U^{\dag}, U \sigma^A U^{\dag}).
\end{equation*}
The trace distance is bounded by the fidelity (see Fuchs and van de Graaf \cite{FuchsVandegraaf:fidelity} for a proof) in the following way:
\begin{equation}
 \label{Lemma:relation}
  1- F(\rho, \sigma) \leq D(\rho,\sigma) \leq \sqrt{1-F^2(\rho,\sigma)}.
\end{equation}
The fidelity is increasing under trace-preserving quantum operations:
\begin{equation*}
 F({\cal E}(\rho^A),{\cal E}(\sigma^A)) \geq F(\rho^A,\sigma^A).
\end{equation*}
An incredible theorem, known as Uhlmann's theorem, relates the fidelity $F(\rho^A,\sigma^A)$ to a maximization over purifications of $\rho^A$ and $\sigma^A$:
\begin{theorem}[Ulhmann's theorem \cite{uhlmann:fid}]
Let $\rho^A$ and $\sigma^A$ be states of a system $A$. Introduce a second system $R$ which is a ``copy'' of $A$. Then,
\begin{equation*}
  F(\rho^A, \sigma^A) = \max_{\ket{\psi},\ket{\phi}} |\langle \psi | \phi \rangle |,
\end{equation*}
where the maximization is over all purifications $\ket{\psi}^{AR}$ of $\rho^A$ and $\ket{\phi}^{AR}$ of $\sigma^A$.
\end{theorem}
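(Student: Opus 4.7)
The plan is to reduce the maximization over all purifications to a matrix-valued optimization over unitaries, and then evaluate that optimization using the polar decomposition. First I would fix an orthonormal basis $\{\ket{i}^A\}_{i=1}^{d_A}$ of $A$ and the corresponding basis $\{\ket{i}^R\}_{i=1}^{d_A}$ of $R$, and introduce the unnormalized maximally entangled vector $\ket{\Omega}^{AR}:=\sum_{i}\ket{i}^A\ket{i}^R$. A direct partial-trace calculation shows that the canonical vectors $\ket{\psi_\rho}^{AR}:=(\sqrt{\rho^A}\otimes I^R)\ket{\Omega}^{AR}$ and $\ket{\psi_\sigma}^{AR}:=(\sqrt{\sigma^A}\otimes I^R)\ket{\Omega}^{AR}$ purify $\rho^A$ and $\sigma^A$. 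By the purification freedom recalled in eq.~(\ref{eq:isometrylink}), extended from a partial isometry to a full unitary by acting arbitrarily on the orthogonal complement of the support (which the purification vector does not see), every purification of $\rho^A$ (resp.\ $\sigma^A$) on $AR$ has the form $(I^A\otimes U_1)\ket{\psi_\rho}^{AR}$ (resp.\ $(I^A\otimes U_2)\ket{\psi_\sigma}^{AR}$) for some unitary $U_1$ (resp.\ $U_2$) on $R$.

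Second, I would compute the inner product of two such arbitrary purifications using the transpose trick $(M\otimes I^R)\ket{\Omega}=(I^A\otimes M^T)\ket{\Omega}$ together with $\bra{\Omega}(I^A\otimes N)\ket{\Omega}=\Tr(N)$, both of which follow by a one-line expansion in the chosen basis. These yield
\begin{equation*}
\langle\psi|\phi\rangle=\bra{\Omega}\bigl(\sqrt{\rho^A}\sqrt{\sigma^A}\otimes U_1^\dag U_2\bigr)\ket{\Omega}=\Tr\bigl((\sqrt{\rho^A}\sqrt{\sigma^A})^T\,U_1^\dag U_2\bigr).
\end{equation*}
Since transposition preserves both the trace and the absolute value, $|\langle\psi|\phi\rangle|=|\Tr(\sqrt{\rho^A}\sqrt{\sigma^A}\,V)|$, where $V:=(U_1^\dag U_2)^T$ is a unitary that ranges over all unitaries on $R$ as $U_1,U_2$ vary independently.

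The hard part will be the matrix optimization $\max_V|\Tr(XV)|$ for $X:=\sqrt{\rho^A}\sqrt{\sigma^A}$. Using the left polar decomposition $X=W|X|$, with $|X|=\sqrt{X^\dag X}=\sqrt{\sqrt{\sigma^A}\rho^A\sqrt{\sigma^A}}$ and $W$ a partial isometry, cyclicity of the trace gives $\Tr(XV)=\Tr(|X|VW)$; a spectral expansion of $|X|$ combined with the contraction bound $|\bra{e_i}VW\ket{e_i}|\leq 1$ produces $|\Tr(XV)|\leq\Tr|X|=\|X\|_1$. Saturating the bound requires $VW=I$ on the support of $|X|$, which is always achievable by extending $W^\dag$ to a unitary on $R$; the delicate point is that when $X$ is rank-deficient $W$ is only a partial isometry, but the contribution to $\Tr(|X|VW)$ from $\ker|X|$ vanishes regardless of how $V$ is chosen there, so any unitary extension attains the bound. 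Putting everything together, the maximum over purifications equals $\|\sqrt{\rho^A}\sqrt{\sigma^A}\|_1=\Tr\sqrt{\sqrt{\sigma^A}\rho^A\sqrt{\sigma^A}}$, which coincides with $\Tr\sqrt{\sqrt{\rho^A}\sigma^A\sqrt{\rho^A}}=F(\rho^A,\sigma^A)$ via the general identity $\Tr\sqrt{Y^\dag Y}=\Tr\sqrt{YY^\dag}$ applied to $Y=\sqrt{\rho^A}\sqrt{\sigma^A}$.
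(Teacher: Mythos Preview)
Your proof is correct. The paper does not actually prove Uhlmann's theorem; it simply states the result and refers the reader to Nielsen and Chuang~\cite{Nielsen}. Your argument---canonical purifications via $(\sqrt{\rho}\otimes I)\ket{\Omega}$, the transpose trick to reduce the overlap to $\Tr(\sqrt{\rho}\sqrt{\sigma}\,V)$, and the polar-decomposition evaluation of $\max_V|\Tr(XV)|=\|X\|_1$---is precisely the standard textbook proof one finds in that reference, and all the steps (including the handling of the rank-deficient case and the identity $\Tr\sqrt{Y^\dag Y}=\Tr\sqrt{YY^\dag}$) are carried out correctly.
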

A proof of this theorem can be found in \cite{Nielsen}. We will use in the next chapter a very useful corollary to Ulhmann's theorem:
\begin{corollary}\label{cor:Ulhmann}
Let $\rho^A$ and $\sigma^A$ be states of a system $A$. Introduce a second system $R$ which is a ``copy'' of $A$. Then,
\begin{equation*}
  F(\rho^A, \sigma^A) = \max_{\ket{\phi}} |\langle \psi | \phi \rangle |,
\end{equation*}
where $\ket{\psi}^{AR}$ is any fixed purification of $\rho^A$, and the maximization is over all purifications $\ket{\phi}^{AR}$ of $\sigma^A$.
\end{corollary}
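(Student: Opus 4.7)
The plan is to reduce the corollary to the full version of Uhlmann's theorem by exploiting the freedom in choosing purifications. By the theorem just stated, there exist purifications $\ket{\psi'}^{AR}$ of $\rho^A$ and $\ket{\phi'}^{AR}$ of $\sigma^A$ that achieve the maximum, so $F(\rho^A,\sigma^A) = |\langle \psi' | \phi' \rangle|$. My job is to promote this particular achiever $\ket{\psi'}$ to our prescribed purification $\ket{\psi}$ without losing the value of the inner product.

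First, I would show that any two purifications of $\rho^A$ living in the same space $A\otimes R$ (with $d_R=d_A$) are related by a unitary acting on $R$. Using the Schmidt decomposition, write $\ket{\psi}^{AR} = \sum_i \sqrt{\lambda_i}\,\ket{e_i}^A\ket{f_i}^R$ and $\ket{\psi'}^{AR} = \sum_i \sqrt{\lambda_i}\,\ket{e_i}^A\ket{f'_i}^R$, where the $\lambda_i$ are the eigenvalues of $\rho^A$ and the $\ket{e_i}^A$ are corresponding eigenvectors. The map sending $\ket{f'_i}\mapsto\ket{f_i}$ on the support indexed by $\lambda_i>0$ is a partial isometry on $R$, exactly as guaranteed by equation~(\ref{eq:isometrylink}). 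Since $d_R=d_A$ is finite and both supports have the same dimension, this partial isometry extends to a unitary $U^R$ on $R$ by mapping any orthonormal basis of the orthogonal complement to an orthonormal basis of the corresponding complement on the other side. By construction, $(I^A\otimes U^R)\ket{\psi'}^{AR}=\ket{\psi}^{AR}$.

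Next, define the candidate purification
\[
\ket{\phi}^{AR} := (I^A\otimes U^R)\ket{\phi'}^{AR}.
\]
Because $U^R$ is unitary, the reduced state on $A$ is unchanged:
\[
\Tr_R\bigl(\braket{\phi}^{AR}\bigr) = \Tr_R\bigl((I^A\otimes U^R)\braket{\phi'}^{AR}(I^A\otimes U^{R\dagger})\bigr) = \Tr_R \braket{\phi'}^{AR} = \sigma^A,
\]
so $\ket{\phi}^{AR}$ is a legitimate purification of $\sigma^A$. Computing the inner product,
\[
\langle\psi|\phi\rangle = \bra{\psi'}(I^A\otimes U^{R\dagger})(I^A\otimes U^R)\ket{\phi'} = \langle\psi'|\phi'\rangle,
\]
which has modulus $F(\rho^A,\sigma^A)$. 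Thus the maximum on the right-hand side of the corollary is at least $F(\rho^A,\sigma^A)$.

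The matching upper bound is immediate: since the corollary's maximum is taken over a subset of the pairs considered in Uhlmann's theorem, it cannot exceed $F(\rho^A,\sigma^A)$. The only step that requires genuine care is the extension of the relating partial isometry to a unitary on $R$; once one is comfortable with the Schmidt decomposition and the fact that $d_R = d_A$, this is routine, so I do not anticipate any significant obstacle in carrying out the proof.
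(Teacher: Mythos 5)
Your argument is correct, and in fact the thesis states this corollary without proof (deferring to the textbook treatment), so there is nothing to compare it against beyond the standard route, which is exactly what you follow: invoke the full Uhlmann theorem, use the unitary freedom of purifications on $R$ to rotate the optimal purification of $\rho^A$ onto the prescribed $\ket{\psi}^{AR}$, carry the same unitary onto the optimal purification of $\sigma^A$, and note the reverse inequality is trivial because you maximize over a smaller set. The only point worth tightening is your claim that both purifications of $\rho^A$ admit Schmidt decompositions with the \emph{same} $A$-side vectors $\ket{e_i}^A$: when $\rho^A$ has degenerate eigenvalues the two decompositions may a priori use different eigenbases, but re-expanding one basis in terms of the other within each eigenspace (the change-of-basis matrix being block unitary, so the induced $R$-side vectors remain orthonormal) restores your form, after which the extension of the partial isometry to a unitary on $R$ goes through exactly as you describe.
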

Another important result we will frequently use is the Fannes inequality \cite{Fannes}, which bounds the difference in the von Neumann entropies of two states $\rho^A$ and $\sigma^A$ as a function of their trace distance.
\begin{Lemma}[Fannes Inequality]\label{lem:Fannes}
     Let $\rho^A$ and $\sigma^A$ be states on a $d$-dimensional Hilbert space $A$. Let $\epsilon > 0$ be such that $\|\rho^A - \sigma^A\|_1 \leq \epsilon$. Then \begin{equation*}
     \bigg |S(A)_{\rho} - S(A)_{\sigma} \bigg | \leq \eta(\epsilon) \log{d},
     \end{equation*}where $\eta(x) = x - x \log x$ for $x \leq \frac{1}{e}$. When $x > \frac{1}{e}$, we set $\eta(x)=x + \frac{\log(e)}{e}$.
\end{Lemma}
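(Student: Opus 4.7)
The plan is to reduce the quantum inequality to a classical statement comparing the Shannon entropies of two nearby probability distributions, and then to exploit the concavity and monotonicity of $h(x) := -x\log x$ to bound that difference. First, I would spectrally decompose $\rho^A = \sum_i r_i \braket{e_i}^A$ and $\sigma^A = \sum_j s_j \braket{f_j}^A$ and let $r_1^{\downarrow} \geq \cdots \geq r_d^{\downarrow}$ and $s_1^{\downarrow} \geq \cdots \geq s_d^{\downarrow}$ denote the eigenvalues arranged in non-increasing order, so that $S(A)_\rho = H(r^{\downarrow})$ and $S(A)_\sigma = H(s^{\downarrow})$. Mirsky's inequality, a consequence of the Ky Fan min--max characterization of Hermitian eigenvalues, then gives $\sum_i |r_i^{\downarrow} - s_i^{\downarrow}| \leq \|\rho^A - \sigma^A\|_1 \leq \epsilon$, which reduces the task to a purely classical statement: for probability vectors $p, q$ on $d$ letters with $\sum_i |p_i - q_i| \leq \epsilon$, prove $|H(p) - H(q)| \leq \eta(\epsilon)\log d$.

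Setting $\delta_i := |p_i - q_i|$ and using the triangle inequality, one has $|H(p) - H(q)| \leq \sum_i |h(p_i) - h(q_i)|$. The central technical step is the single-variable estimate that for $x, y \in [0,1]$ with $|x - y| \leq t$, the difference $|h(x) - h(y)|$ is bounded by $\eta(t)$. This encodes the fact that $h$ is increasing on $[0, 1/e]$ with maximum value $(\log e)/e$ attained at $x = 1/e$: for $t \leq 1/e$ the worst-case separation yields $h(t) = -t\log t$ plus a linear correction, matching $\eta(t) = t - t\log t$; for $t > 1/e$ the bound degrades to a linear function of $t$ plus the maximum value $(\log e)/e$ of $h$, matching $\eta(t) = t + (\log e)/e$. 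Summing and invoking Jensen's inequality on the concave function $h$ at the arguments $\delta_i$, one obtains $\sum_i h(\delta_i) \leq d \cdot h(\epsilon/d) = \epsilon\log d - \epsilon\log\epsilon$, which for $d \geq 2$ is dominated by $\eta(\epsilon)\log d$. The cases $d = 1$ and $\epsilon > 1$ are trivial, noting that $|S(A)_\rho - S(A)_\sigma| \leq \log d$ and checking $\eta(\epsilon) \geq 1$ in the latter regime.

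The main obstacle is the single-variable estimate matching the precise piecewise form of $\eta$ across the threshold $x = 1/e$, where the derivative of $h$ changes sign. The naive bound $|h(x) - h(y)| \leq h(|x - y|)$ is only sharp for small separations, and the statement of the lemma compensates for this by inflating $\eta$ beyond $1/e$ into a linear function; verifying that this particular inflation is exactly what the Jensen aggregation produces requires careful case analysis. Once that estimate is secured, the spectral reduction via Mirsky's inequality and the final concavity aggregation are routine, and the constants align as stated in the lemma.
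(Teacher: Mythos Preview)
The paper does not actually prove this lemma; it is stated in the preliminaries as a known result and attributed to Fannes. Your sketch follows the standard textbook argument (as in Nielsen--Chuang): reduce to ordered eigenvalue vectors via Mirsky's inequality, use the single-variable bound $|h(x)-h(y)|\le h(|x-y|)$, and aggregate by concavity of $h$ to obtain the usual Fannes form $\epsilon\log d + h(\epsilon)$, which for $d\ge 2$ is dominated by the paper's packaged form $\eta(\epsilon)\log d$.

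One small muddle in your exposition: the single-variable estimate you actually need (and then feed into the Jensen step) is $|h(x)-h(y)|\le h(|x-y|)$, valid for $|x-y|\le 1/2$, not $\le\eta(|x-y|)$. The threshold $1/e$ in the definition of $\eta$ is not the natural cutoff for that pointwise lemma; it is simply where $h$ attains its maximum value $(\log e)/e$, which is why $\eta$ is extended linearly beyond $1/e$ to remain monotone. Your final aggregation $\sum_i h(\delta_i)\le d\,h(\epsilon/d)=\epsilon\log d-\epsilon\log\epsilon$ and the domination by $\eta(\epsilon)\log d$ for $d\ge 2$ are correct.
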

\subsection{Typicality}

Suppose an information source emits a sequence of letters $x_1x_2x_3\ldots$ taken from an alphabet ${\cal X}$ according to some probability distribution $p(x)$. If the source is memoryless (i.e., each letter in the sequence is an i.i.d. random variable $X$ with probability distribution $p(x)$), we expect the frequency at which each letter $x$ will appear in the sequence to depend on the probability weight $p(x)$ associated with the letter $x$. For a sequence $x^n$ of $n$ letters $x_1x_2x_3\ldots x_n$, let $N(x|x^n)$ be the number of times the letter $x$ appears in the sequence $x^n$.  Given a probability distribution $P(x)$ over the alphabet $\cal X$, we say that the sequence $x^n:=x_1x_2x_3\ldots x_n$ is of type $P$ if $N(x|x^n) = nP(x)$ for every letter $x$. That is, the letter $x$ appears exactly $nP(x)$ times in the sequence $x^n$. For a number $\delta > 0$, the set ${\cal T}^n_{p,\delta}$ of $\delta$-\textit{typical sequences} of length $n$ for the probability distribution $p(x)$ is defined as
\begin{equation*}
   {\cal T}^n_{p,\delta} := \bigg \{ x^n : \forall x \in {\cal X}, \bigg | \frac{N(x|x^n)}{n} - p(x) \bigg | \leq \delta \bigg \}.
\end{equation*}
Typicality can be exploited to prove theoretical bounds on the achievable rates for discrete memoryless sources. (See, for example, the proofs of the source coding theorem and the noisy channel coding theorem \cite{Shannon,Cover}.) For any $\epsilon, \delta > 0$, we have for sufficiently large values of $n$:
\begin{eqnarray}\label{eq:typicality}
  p^n({\cal T}^n_{p,\delta}) &\geq& 1 - \epsilon \\
  2^{-n(H(X) + c\delta)} \leq p^n(x^n) &\leq& 2^{-n(H(X)-c\delta)} \quad \forall x^n \in {\cal T}^n_{p,\delta} \label{eq:typ1}\\
  (1-\epsilon)2^{n(H(X)-c\delta)} \leq |{\cal T}^n_{p,\delta}| &\leq& 2^{n(H(X)+c\delta)} \label{eq:typ2},
\end{eqnarray}
where $p^n(x^n) = \prod^n_{i=1} p(x_i)$ and $c$ is some positive constant. We refer to~\cite{InfTheory} for a proof of these statements.

The classical notion of typicality extends to the quantum setting by considering a memoryless quantum source emitting a sequence of unknown states $\ket{x_1}\ket{x_2}\ket{x_3}\ldots..$ with known density operator $\psi^A = \sum_x p_x \braket{x}^A$. The source transmits a state $\ket{x}^A$ with probability $p(x)$. We can alternatively think of the source as emitting many copies of the density operator $\psi^A$. For $n$ copies of the state $\psi^A$, its spectral decomposition is written as:
\begin{equation*}
 \psi^{\otimes n}_A := (\psi^A)^{\otimes n} = \sum_{x^n} p^x(x^n) \braket{x^n}.
\end{equation*}
If we perform a measurement in the basis $\ket{x^n}$, it follows from the classical notion of typicality that, for sufficiently large values of $n$, we will obtain with very high probability a measurement outcome $x^n$ belonging to the set of typical sequences ${\cal T}^n_{p,\delta}$. For $\delta > 0$, we define the $\delta-$\textit{typical subspace} $\tilde{A}^n_{\psi,\delta}$ for the density operator $\psi_A^{\otimes n}$ as
\begin{equation*}
  \tilde{A}^n_{\psi,\delta} := \mathrm{span} \bigg \{ \ket{x^n} \bigg | x^n \in {\cal T}^n_{p,\delta} \bigg \}.
\end{equation*}
The projector into the typical subspace $\tilde{A}^n_{\psi,\delta}$ is given by:
\begin{equation*}
  \Pi^n_{\psi,\delta} := \sum_{x^n \in {\cal T}^n_{p,\delta}} \braket{x^n}.
\end{equation*}
In later chapters, we will often abbreviate the typical projector $\Pi^n_{\psi,\delta}$ associated with the state $\psi^{\otimes n}_A$ as $\Pi_{\tilde{A}}$, where $\tilde{A}$ is shorthand notation for the typical subspace $\tilde{A}^n_{\psi,\delta}$.

The probability of obtaining a measurement outcome $x^n \in {\cal T}^n_{p,\delta}$ is equal to $\Tr (\psi_A^{\otimes n} \Pi^n_{\psi,\delta})$, and since $p^n({\cal T}^n_{p,\delta}) \geq 1-\epsilon$ for any $\delta, \epsilon > 0$ and sufficiently large $n$, we have
\begin{equation}\label{eq:unn}
  \Tr (\psi_A^{\otimes n} \Pi^n_{\psi,\delta}) \geq 1- \epsilon.
\end{equation}
We also have the following properties (see Abeyesinghe et al. \cite{Hayden001} for proofs of these facts), analogous to eqs.~(\ref{eq:typ1}) and ~(\ref{eq:typ2}), for any $\epsilon, \delta > 0$ and sufficiently large values of $n$:
\begin{eqnarray}\label{eq:typicXX}
  2^{-n(S(A)_{\psi} + c\delta)} \Pi^n_{\psi,\delta} &\leq& \Pi^n_{\psi,\delta}\psi_A^{\otimes n} \Pi^n_{\psi,\delta} \leq 2^{-n(S(A)_{\psi}-c\delta)} \Pi^n_{\psi,\delta} \label{eq:un}\\
  (1-\epsilon) 2^{n(S(A)_{\psi} -c\delta)} &\leq& \Tr(\Pi^n_{\psi,\delta}) \leq 2^{n(S(A)_{\psi}+c\delta)}\label{eq:deux} \\
  \Tr[\Psi^2_{\tilde{A}}] := \Tr[ (\Psi^{\tilde{A}})^2] &\leq& (1-\epsilon)^{-2} 2^{-n(S(A)_{\psi}-3c\delta)}, \label{eq:trois}
\end{eqnarray}
where $c$ is some constant and $\Psi^{\tilde{A}}$ is the normalized state obtained after projecting the state $\psi^{\otimes n}_A$ into the typical subspace $\Pi^n_{\psi,\delta}$:
\begin{equation*}
   \Psi^{\tilde{A}} := \frac{\Pi^n_{\psi,\delta} \psi_A^{\otimes n} \Pi^n_{\psi,\delta}}{\Tr (\Pi^n_{\psi,\delta} \psi_A^{\otimes n})}.
 \end{equation*}
The last line is obtained by combining eqs.~(\ref{eq:unn}), (\ref{eq:un}) and (\ref{eq:deux}) with the fact that $\Tr[A^2] \leq \Tr[B^2]$ for any two positive operators such that $A \leq B$. Typical subspaces are a helpful tool when performing asymptotic analysis of quantum protocols (see, for example, \cite{Coding,merge,Hayden001}). This is due, in part, to the following lemma:
\begin{Lemma}[Gentle Measurement Lemma \cite{Winter02}]
  Let $\rho^A$ be a sub-normalized state (i.e $\rho^A \geq 0$ and $\Tr[\rho^A] \leq 1$). For any operator $0 \leq X \leq I$ such that $\Tr[X \rho^A] \geq 1 -\epsilon$, we have
\[
  \bigg \| \sqrt{X} \rho^A \sqrt{X} - \rho^A \bigg \|_1 \leq 2 \sqrt{\epsilon}.
\]
\end{Lemma}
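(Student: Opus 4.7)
The plan is to reduce to the pure-state case via the spectral decomposition, bound the trace norm of a rank-two difference by splitting it into rank-one pieces, and then recombine using Cauchy--Schwarz.

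First I would write $\rho^A = \sum_i p_i \braket{\psi_i}^A$ in its spectral decomposition, so $\sum_i p_i = \Tr \rho^A \leq 1$. By linearity and the triangle inequality for $\|\cdot\|_1$,
\[
\bigl\| \sqrt{X} \rho^A \sqrt{X} - \rho^A \bigr\|_1 \;\leq\; \sum_i p_i \bigl\| \sqrt{X}\braket{\psi_i} \sqrt{X} - \braket{\psi_i} \bigr\|_1,
\]
so it suffices to prove, for an arbitrary unit vector $\ket{\psi}$, the pure-state estimate
\[
\bigl\| \sqrt{X}\braket{\psi}\sqrt{X} - \braket{\psi} \bigr\|_1 \;\leq\; 2\sqrt{1 - \bra{\psi} X \ket{\psi}}.
\]

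For the pure-state case I would write the difference as the sum of two rank-one operators,
\[
\sqrt{X}\braket{\psi}\sqrt{X} - \braket{\psi} \;=\; (\sqrt{X}\ket{\psi} - \ket{\psi})\bra{\psi}\sqrt{X} \;+\; \ket{\psi}(\bra{\psi}\sqrt{X} - \bra{\psi}),
\]
and use the fact that $\| \ket{u}\bra{v} \|_1 = \|u\|\,\|v\|$, together with $\|\sqrt{X}\ket{\psi}\| \leq 1$ and $\|\ket{\psi}\| = 1$. This reduces everything to controlling $\|\sqrt{X}\ket{\psi} - \ket{\psi}\|$. Expanding the square,
\[
\bigl\| \sqrt{X}\ket{\psi} - \ket{\psi} \bigr\|^{2} \;=\; \bra{\psi} X \ket{\psi} + 1 - 2\bra{\psi}\sqrt{X}\ket{\psi}.
\]
The key operator-monotonicity fact I would invoke is that $0 \leq X \leq I$ implies $\sqrt{X} \geq X$ (since $\sqrt{X} \leq I$ gives $X = \sqrt{X}\sqrt{X} \leq \sqrt{X}$). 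Therefore $\bra{\psi}\sqrt{X}\ket{\psi} \geq \bra{\psi} X \ket{\psi}$, and the squared norm above is bounded by $1 - \bra{\psi} X \ket{\psi}$. This yields the pure-state estimate.

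Finally, to assemble the mixed-state bound, I would combine the pure-state inequality with Cauchy--Schwarz:
\[
\sum_i p_i \sqrt{1 - \bra{\psi_i}X\ket{\psi_i}} \;\leq\; \sqrt{\textstyle\sum_i p_i}\,\sqrt{\textstyle\sum_i p_i\bigl(1 - \bra{\psi_i}X\ket{\psi_i}\bigr)} \;=\; \sqrt{\Tr \rho^A}\,\sqrt{\Tr \rho^A - \Tr(X\rho^A)}.
\]
Since $\Tr \rho^A \leq 1$ and $\Tr(X\rho^A) \geq 1 - \epsilon$, the factor under the second square root is at most $\epsilon$, yielding the desired bound $2\sqrt{\epsilon}$. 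The main obstacle is really the pure-state step: one needs the operator inequality $\sqrt{X} \geq X$ (a special case of operator monotonicity of the square root on $[0,1]$) to eliminate the cross term $\bra{\psi}\sqrt{X}\ket{\psi}$, which one otherwise has no direct control over; every other step is routine use of the triangle inequality and $\|\ket{u}\bra{v}\|_1 = \|u\|\|v\|$.
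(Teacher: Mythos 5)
Your proof is correct, but the paper itself does not prove the lemma: it simply cites Winter \cite{Winter02} for the statement and Ogawa--Nagaoka \cite{Ogawa} for the improved constant $2\sqrt{\epsilon}$, so the relevant comparison is with that standard argument, and your route is genuinely different from it. The standard proof works directly on the mixed state: one writes $\rho - \sqrt{X}\rho\sqrt{X} = (I-\sqrt{X})\rho + \sqrt{X}\rho(I-\sqrt{X})$, bounds each term via the Schatten--H\"older inequality $\|AB\|_1 \leq \|A\|_2\|B\|_2$ applied to $(I-\sqrt{X})\sqrt{\rho}\cdot\sqrt{\rho}$, and uses the scalar inequality $(1-\sqrt{x})^2 \leq 1-x$ on $[0,1]$ to get $\sqrt{\Tr[(I-X)\rho]} \leq \sqrt{\epsilon}$ for each piece. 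You instead reduce to pure states by spectral decomposition and convexity of the trace norm, prove the rank-one case by splitting $\sqrt{X}\braket{\psi}\sqrt{X}-\braket{\psi}$ into two rank-one operators and controlling $\|\sqrt{X}\ket{\psi}-\ket{\psi}\|$ via $\sqrt{X}\geq X$, and then recombine with Cauchy--Schwarz over the ensemble, which is exactly where the sub-normalization $\Tr\rho^A \leq 1$ enters. Both arguments are elementary and give the same constant; the standard one is slightly shorter and avoids any decomposition of $\rho$, while yours isolates a clean pure-state estimate $\|\sqrt{X}\braket{\psi}\sqrt{X}-\braket{\psi}\|_1 \leq 2\sqrt{1-\bra{\psi}X\ket{\psi}}$ that is reusable on its own. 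One small point of rigor: the justification ``$\sqrt{X}\leq I$ gives $X=\sqrt{X}\sqrt{X}\leq\sqrt{X}$'' should not be read as multiplying an operator inequality by $\sqrt{X}$, which is not valid in general; it is fine here because $X$ and $\sqrt{X}$ are simultaneously diagonalizable (or, equivalently, conjugate $\sqrt{X}\leq I$ by $X^{1/4}$), so $\sqrt{\lambda}\geq\lambda$ on $[0,1]$ gives the claim eigenvalue by eigenvalue.
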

For a proof of the Gentle Measurement Lemma, see \cite{Winter02}. The better constant obtained above is from Ogawa and Nagaoka \cite{Ogawa}.
An application of the previous lemma combined with the triangle inequality yields
\begin{equation*}
   \bigg \| \Psi^{\tilde{A}} - \psi_A^{\otimes n} \bigg \|_1 \leq 2\sqrt{\epsilon} + \epsilon \leq 4\sqrt{\epsilon}
\end{equation*}
for $0 \leq \epsilon \leq 1$. Projecting to the typical subspace preserves the information (up to an arbitrarily small loss) contained in the state $\psi^{\otimes n}_A$ and allows the analysis of the asymptotic behavior of most information-processing tasks to become much simpler to perform.

\section{(Smooth) min- and max-entropies}
For a discrete random variable $X$ taking values in a set $\{x_1, x_2, \ldots, x_n\}$ with probability $p(x_i)$, the \textit{R\'enyi entropy} \cite{Renyi} of order $\alpha$, where $\alpha \neq 1$, is defined as
\begin{equation*}
H_{\alpha}(X) = \frac{1}{1-\alpha}\log \biggl (\sum^n_{i=1} p(x_i)^{\alpha} \biggr ).
\end{equation*}
When $\alpha \rightarrow 1$, we recover the Shannon entropy $H(X) = -\sum^n_{i=1} p(x_i) \log p(x_i)$:
\begin{equation*}
\begin{split}
 \lim_{\alpha \rightarrow 1} H_{\alpha}(X) &= \lim_{\alpha \rightarrow 1} \frac{\frac{d}{d\alpha}\log \left(\sum^n_{i=1} p(x_i)^{\alpha}\right)}{\frac{d}{d\alpha}(1-\alpha)}  \\
 &= \lim_{\alpha \rightarrow 1} \frac{-\log(e)\sum^n_{i=1}\frac{d}{d\alpha}p(x_i)^{\alpha}}{\sum_{i=1}^n p(x_i)^{\alpha}} \\
 &= \lim_{\alpha \rightarrow 1} \frac{-\log(e)\sum^n_{i=1} p(x_i)^{\alpha}\ln p(x_i)}{\sum_{i=1}^n p(x_i)^{\alpha}} \\
 &= -\sum^n_{i=1} p(x_i) \log p(x_i).
\end{split}
\end{equation*}
The first line is l'H\^opital's rule and the third line follows from $\frac{d}{d\alpha} p(x_i)^{\alpha} = p(x_i)^{\alpha} \ln p(x_i)$.
Taking the limit of $H_{\alpha}(X)$ as $\alpha \rightarrow \infty$, we obtain the classical \textit{min-entropy}:
\begin{equation*}
H_{\min}(X) := \lim_{\alpha \rightarrow \infty}H_{\alpha}(X) = - \log \max_{i} p(x_i).
\end{equation*}
The R\'enyi entropies were introduced by R\'enyi in 1961 as alternatives to the
Shannon entropy as measures of information. The Shannon entropy, viewed more abstractly, is also the unique function of a probability distribution which satisfies a precise set of postulates (see \cite{Renyi} for a detailed description of these postulates). R\'enyi extended the notion of entropy to more general random variables, often called ``incomplete'' by R\'enyi because their observation could occur with probability less than one. By generalizing the postulates characterizing the Shannon entropy, other information-theoretic quantities were obtained, such as the family of R\'enyi entropies. Applications of the R\'enyi entropies abound in areas such as cryptography \cite{Renner05simpleand,Cachin97smoothentropy} and statistics \cite{Order,Diversity}.

Quantum min- and max-entropies are adaptations of the classical R\'enyi entropies of order $\alpha$ when $\alpha \rightarrow \infty$ and $\alpha=1/2$ respectively. Let ${\cal S}_{\leq}(AR)$ be the set of sub-normalized density operators (i.e $\Tr (\bar{\rho}^{AR}) \leq 1$) acting on the space $AR$.
The \textit{quantum min-entropy} \cite{Renner02} of an
operator $\rho^{AR} \in {\cal S}_{\leq}(AR)$ relative to a density operator $\sigma^{R}$ is given by
 \begin{equation*}
     H_{\min}(\rho^{AR}|\sigma^{R}):= -\log \lambda,
 \end{equation*}
where $\lambda$ is the minimum positive number such that $\lambda (I^A \otimes \sigma^{R}) - \rho^{AR}$ is a
positive operator. The \textit{conditional min-entropy} $H_{\min}(A|R)_{\rho}$ is obtained by maximizing the previous quantity over
the set ${\cal B}(R)$ of density operators $\sigma^{R}$ for the system $R$:
 \begin{equation*}
    H_{\min}(A|R)_{\rho} := \max_{\sigma^{R} \in {\cal B}(R)} H_{\min}(\rho^{AR}|\sigma^{R}).
  \end{equation*}

 For two sub-normalized states $\rho$ and $\bar{\rho}$, we define the \textit{purified distance} \cite{Renner01} between $\rho$ and $\bar{\rho}$ as
 \begin{equation*}
    P(\rho,\bar{\rho}) := \sqrt{1- \overline{F}(\rho,\bar{\rho})^2},
 \end{equation*} 
 where $\overline{F}(\rho,\bar{\rho})$ is the generalized fidelity between $\rho$ and $\bar{\rho}$:
 \begin{equation*}
   \overline{F}(\rho,\bar{\rho}) := F(\rho, \bar{\rho}) + \sqrt{(1-\Tr \rho)(1- \Tr\bar{\rho})}.
 \end{equation*}
The purified distance is related to the trace distance $D(\rho, \bar{\rho}) := \frac{1}{2}\| \rho - \bar{\rho} \|_1$ as follows
 \begin{equation}
    D(\rho, \bar{\rho}) \leq P(\rho, \bar{\rho}) \leq 2 \sqrt{D(\rho,\bar{\rho})}.\label{eq:purified}
 \end{equation}
A proof of this fact follows directly from Lemma 6 of \cite{Renner01}. (Lemma 6 actually relates the purified distance to the generalized distance $\bar{D}(\rho, \bar{\rho})$. However, $\bar{D}(\rho,\bar{\rho})$ is bounded above by $2D(\rho,\bar{\rho})$ and bounded below by $D(\rho,\bar{\rho})$.)

Using the purified distance as our measure of closeness, we obtain the family of \textit{smooth min-entropies} $\{H^{\epsilon}_{\min}(A|R)_{\rho}\}$ by optimizing over all sub-normalized density operators close to $\rho^{AB}$ with respect to $P(\bar{\rho},\rho)$:
 \begin{equation*}
    H^{\epsilon}_{\min}(A|R)_{\rho} := \max_{\bar{\rho}^{AR} \in {\cal S}_{\leq}(AR)} H_{\min}(A|R)_{\bar{\rho}},
 \end{equation*}
where the maximization is taken over all $\bar{\rho}^{AR}$ such that $P(\bar{\rho}^{AR},\rho^{AR}) \leq \epsilon$.
Given a purification $\rho^{ABR}$ of $\rho^{AR}$, with purifying
system $B$, the family of \textit{smooth max-entropies} $\{H^{\epsilon}_{\max}(A|B)_{\rho}\}$ is defined as
 \begin{equation}\label{eq:duality}
    H^{\epsilon}_{\max}(A|B)_{\rho} := - H^{\epsilon}_{\min}(A|R)_{\rho}
 \end{equation}
for any $\epsilon \geq 0$. 
When $\epsilon=0$, an alternative expression for the \textit{max-entropy} $H_{\max}(A|B)_{\rho}$ was obtained by Koenig et al. \cite{Renner03}:
\begin{equation}\label{eq:dualityy}
  H_{\max}(A|B)_{\rho} = \max_{\sigma^B \in {\cal B}(B)} \log F^2(\rho^{AB}, I^A \otimes \sigma^B),
\end{equation}
where the maximization is taken over all density operators $\sigma^B$ on the space $B$. The smooth max-entropy can also be expressed as
 \begin{equation}\label{eq:smoothmax}
    H^{\epsilon}_{\max}(A|B)_{\rho} = \min_{\bar{\rho}^{AB}\in {\cal S}_{\leq}(AB)} H_{\max}(A|B)_{\bar{\rho}},
 \end{equation}
where the minimum is taken over all sub-normalized $\bar{\rho}^{AB}$ such that $P(\bar{\rho}^{AB},\rho^{AB}) \leq \epsilon$. We refer to \cite{Renner01} for a proof of this fact. From eq.~(\ref{eq:dualityy}), the smooth max-entropy $H^{\epsilon}_{\max}(\rho^{A})$ of a sub-normalized operator $\rho^A \in {\cal S}_{\leq}(A)$ reduces to
\begin{equation}\label{eq:hmax}
  H^{\epsilon}_{\max}(\rho^{A}) = 2\log \sum_x \sqrt{\bar{r}_x},
\end{equation}
where $\bar{r}_x$ are the eigenvalues of the sub-normalized density operator $\bar{\rho}^A$ which optimizes the right hand side of eq.~(\ref{eq:smoothmax}).

The smooth min- and max-entropies are also known to satisfy other useful properties such as quantum data processing
inequalities and concavity of the max-entropy (see~\cite{Renner01}). These measures were introduced to characterize information-theoretic tasks which tolerate a small error on the desired outcome. In Chapter 4, we describe a protocol for the task of multiparty state merging and analyze the entanglement cost using smooth min-entropies.

We also need, for technical reasons (see eq.~(\ref{eq:trace}) found in Chapter 4), another entropic
quantity called the \textit{conditional collision entropy} \cite{Renner02}:
\begin{equation*}
   H_2(\rho^{AB}|\sigma^{B}):=  -\log \Tr \bigg [ \bigg ((I_A \otimes \sigma_{B}^{-1/4}) \rho^{AB} (I_A \otimes \sigma_B^{-1/4}) \bigg )^2 \bigg ],
\end{equation*}
 where $\rho^{AB}$ is a density operator for the system $AB$. It is a quantum adaptation of the classical conditional collision entropy. The following lemma, proven in \cite{Renner02}, relates the quantum min-entropy to the collision entropy:
\begin{Lemma}\cite{Renner02}\label{lem:rener4}
For density operators $\rho^{AB}$ and $\sigma^{B}$ with $\mathrm{supp}\{\Tr_A(\rho^{AB})\} \subseteq \mathrm{supp}\{\sigma^{B}\}$, we have
  \[ H_{\min}(\rho^{AB}|\sigma^{B}) \leq H_2(\rho^{AB}|\sigma^{B}). \]
\end{Lemma}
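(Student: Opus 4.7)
The plan is to work directly from the definitions, expressing $H_{\min}(\rho^{AB}|\sigma^B)$ as $-\log \lambda_{\min}$, where $\lambda_{\min}$ is the smallest positive real with $\rho^{AB} \leq \lambda_{\min}(I^A \otimes \sigma^B)$, and $H_2(\rho^{AB}|\sigma^B) = -\log \Tr[P^2]$ for the positive operator $P := (I^A \otimes \sigma^{-1/4})\rho^{AB}(I^A \otimes \sigma^{-1/4})$. Since $-\log$ is monotonically decreasing, the inequality $H_{\min} \leq H_2$ reduces to showing $\Tr[P^2] \leq \lambda_{\min}$.

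First I would conjugate the defining operator inequality $\rho^{AB} \leq \lambda_{\min}(I^A \otimes \sigma^B)$ on both sides by $I^A \otimes \sigma^{-1/4}$ (where $\sigma^{-1/4}$ denotes the generalized inverse on $\mathrm{supp}(\sigma^B)$, which is well-defined on the relevant subspace thanks to the support assumption $\mathrm{supp}(\Tr_A \rho^{AB}) \subseteq \mathrm{supp}(\sigma^B)$). Since conjugation by a Hermitian operator preserves the positive semidefinite ordering, this yields
\begin{equation*}
P \;\leq\; \lambda_{\min}\bigl(I^A \otimes \sigma^{1/2}\bigr).
\end{equation*}

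Next I would invoke the elementary fact that if $P \geq 0$ and $Q \leq R$ then $\Tr[PQ] \leq \Tr[PR]$ (proved by rewriting $\Tr[P(R-Q)] = \Tr[P^{1/2}(R-Q)P^{1/2}] \geq 0$). Applying this with the inequality above gives
\begin{equation*}
\Tr[P^2] \;=\; \Tr[P \cdot P] \;\leq\; \lambda_{\min}\,\Tr\bigl[P\,(I^A \otimes \sigma^{1/2})\bigr].
\end{equation*}
The remaining trace simplifies using the cyclicity of the trace and the identity $\sigma^{-1/4}\sigma^{1/2}\sigma^{-1/4} = \Pi_\sigma$ (the projector onto $\mathrm{supp}(\sigma^B)$):
\begin{equation*}
\Tr\bigl[P\,(I^A \otimes \sigma^{1/2})\bigr] \;=\; \Tr\bigl[\rho^{AB}(I^A \otimes \Pi_\sigma)\bigr] \;=\; \Tr[\rho^{AB}] \;\leq\; 1,
\end{equation*}
where the second equality uses the support condition, which guarantees that $\rho^{AB}$ is unaffected by applying $I^A \otimes \Pi_\sigma$.

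Combining these two steps gives $\Tr[P^2] \leq \lambda_{\min}$, and taking $-\log$ yields the claim. The only subtlety I anticipate is the bookkeeping when $\sigma^B$ is not full rank; the support hypothesis is precisely what ensures $\sigma^{-1/4}$ can be interpreted as a generalized inverse without losing information about $\rho^{AB}$, so all the algebraic manipulations above remain valid on $\mathrm{supp}(\sigma^B)$.
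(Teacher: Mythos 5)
Your proof is correct: conjugating the defining inequality $\rho^{AB}\leq\lambda_{\min}(I^A\otimes\sigma^B)$ by $I^A\otimes\sigma^{-1/4}$, using monotonicity of the trace against the positive operator $P$, and invoking the support condition to get $\Tr[P(I^A\otimes\sigma^{1/2})]=\Tr[\rho^{AB}]\leq 1$ yields exactly $\Tr[P^2]\leq\lambda_{\min}$, which is the claim. The thesis itself gives no proof here (it cites Renner), and your argument is essentially the standard one from that reference, so there is nothing to add.
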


The last two results we will need are the additivity of the min-entropy and the following lemma which relates the trace norm of an
hermitian operator $X$ to its Hilbert-Schmidt norm $\|X\|_2 := \sqrt{\Tr(X^{\dag}X)}$, with respect
to a positive operator $\sigma$:
 \begin{Lemma}\label{Lemma:TrSchmidt}
 Let $X$ be an hermitian operator acting on a space $A$ and $\sigma$ be a positive operator on $A$. We have
     \begin{equation*}
         \|X\|_1 \leq \sqrt{\Tr(\sigma)}  \|\sigma^{-1/4} X \sigma^{-1/4} \|_2.
     \end{equation*}
 \end{Lemma}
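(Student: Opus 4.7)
The plan is to reduce the trace norm to a Hilbert--Schmidt inner product featuring the operator $\sigma^{-1/4} X \sigma^{-1/4}$ and then apply the Cauchy--Schwarz inequality $|\Tr(A^{\dag} B)| \leq \|A\|_2 \|B\|_2$ twice. Throughout I assume $\sigma$ is invertible (on its support), otherwise the proof is carried out on $\mathrm{supp}(\sigma)$, which must contain $\mathrm{supp}(X)$ for the statement to be meaningful.

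First, I would use the polar decomposition of the Hermitian operator $X$: write $X = V|X|$ where $V$ is a self-adjoint partial isometry, so $\|X\|_1 = \Tr(V X)$. Inserting $I = \sigma^{1/4} \sigma^{-1/4}$ on both sides of $X$ and using cyclicity,
\begin{equation*}
\|X\|_1 \;=\; \Tr\bigl(V \,\sigma^{1/4}\,(\sigma^{-1/4} X \sigma^{-1/4})\,\sigma^{1/4}\bigr) \;=\; \Tr\!\left( \bigl(\sigma^{1/4} V \sigma^{1/4}\bigr)\, \bigl(\sigma^{-1/4} X \sigma^{-1/4}\bigr) \right).
\end{equation*}

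Next I apply Cauchy--Schwarz with $A = \sigma^{1/4} V \sigma^{1/4}$ (which is Hermitian, so $A^\dag = A$) and $B = \sigma^{-1/4} X \sigma^{-1/4}$, obtaining
\begin{equation*}
\|X\|_1 \;\leq\; \|\sigma^{1/4} V \sigma^{1/4}\|_2 \;\cdot\; \|\sigma^{-1/4} X \sigma^{-1/4}\|_2.
\end{equation*}
It remains to show $\|\sigma^{1/4} V \sigma^{1/4}\|_2 \leq \sqrt{\Tr(\sigma)}$. Expanding the Hilbert--Schmidt norm and using cyclicity,
\begin{equation*}
\|\sigma^{1/4} V \sigma^{1/4}\|_2^{2} \;=\; \Tr\bigl(\sigma^{1/4} V \sigma^{1/2} V \sigma^{1/4}\bigr) \;=\; \Tr\bigl(\sigma^{1/2}\,V \sigma^{1/2} V\bigr).
\end{equation*}
A second application of Cauchy--Schwarz, with $A = \sigma^{1/2}$ and $B = V \sigma^{1/2} V$, gives
\begin{equation*}
\Tr\bigl(\sigma^{1/2}\, V \sigma^{1/2} V\bigr) \;\leq\; \|\sigma^{1/2}\|_2\, \|V \sigma^{1/2} V\|_2 \;=\; \sqrt{\Tr(\sigma)}\,\sqrt{\Tr(V \sigma V)} \;\leq\; \Tr(\sigma),
\end{equation*}
where in the last step I used $V^{\dag}V \leq I$ (since $V$ is a partial isometry) together with the cyclicity identity $\Tr(V \sigma V) = \Tr(V^{\dag}V \sigma)$. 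Combining the two bounds yields the claim.

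The only real obstacle is the degeneracy issue: ensuring $\sigma^{-1/4}$ and the polar-decomposition isometry $V$ are well-defined simultaneously. This is handled by working on the subspace $\mathrm{supp}(\sigma)$ and noting that the statement of the lemma already presupposes $\sigma^{-1/4} X \sigma^{-1/4}$ makes sense, which forces $\mathrm{supp}(X) \subseteq \mathrm{supp}(\sigma)$; once this is in place, both applications of Cauchy--Schwarz go through verbatim and the result follows.
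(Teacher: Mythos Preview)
Your approach is the standard one and is essentially what the paper's cited Lemma~5.1.3 of Renner does; the paper itself gives no argument beyond rewriting $\|\sigma^{-1/4}X\sigma^{-1/4}\|_2^2 = \Tr(X\sigma^{-1/2}X\sigma^{-1/2})$ and invoking that reference, so your self-contained derivation is a genuine addition.

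There is, however, one incorrect equality. You write $\|V\sigma^{1/2}V\|_2 = \sqrt{\Tr(V\sigma V)}$. Expanding, $\|V\sigma^{1/2}V\|_2^2 = \Tr(V\sigma^{1/2}V^2\sigma^{1/2}V)$, and since $V$ is only a \emph{partial} isometry you have $V^2 = P$ (the projection onto $\mathrm{supp}\,X$), not $V^2 = I$. In general $\Tr(P\sigma^{1/2}P\sigma^{1/2}) \neq \Tr(P\sigma)$ when $P$ and $\sigma$ do not commute, so the equality fails. The cleanest repair is to bypass your second Cauchy--Schwarz and apply it instead to the splitting $\Tr(\sigma^{1/2}V\sigma^{1/2}V) = \Tr\bigl((V\sigma^{1/2})^\dag(\sigma^{1/2}V)\bigr)$, giving
\[
\Tr(\sigma^{1/2}V\sigma^{1/2}V) \;\leq\; \|V\sigma^{1/2}\|_2\,\|\sigma^{1/2}V\|_2 \;=\; \Tr(V^\dag V\,\sigma) \;\leq\; \Tr(\sigma),
\]
which is exactly what you need. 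Alternatively, simply extend $V$ to a self-adjoint \emph{unitary} (set it to the identity on $\ker X$); then $V^2 = I$ and your original computation goes through verbatim.
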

\begin{proof}
Rewrite the right hand side as $\sqrt{\Tr(\sigma)\Tr(X \sigma^{-1/2} X \sigma^{-1/2})}$ and apply Lemma 5.1.3 of \cite{Renner02}.
\end{proof}
\begin{Lemma}[Additivity]\label{lem:addi}
Let $\rho^{AB}$ and $\rho^{A'B'}$ be sub-normalized density operators for the systems $AB$ and $A'B'$ respectively. For density operators $\sigma^{B}$ and $\sigma^{B'}$, we have
\begin{equation*}
  H_{\min}(\rho^{AB} \otimes \rho^{A'B'}|\sigma^{B} \otimes \sigma^{B'}) = H_{\min}(\rho^{AB}|\sigma^{B}) + H_{\min}(\rho^{A'B'}|\sigma^{B'}).
\end{equation*}
\end{Lemma}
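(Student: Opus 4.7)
The plan is to prove the identity by establishing both inequalities. Denote $\lambda_1 := 2^{-H_{\min}(\rho^{AB}|\sigma^B)}$ and $\lambda_2 := 2^{-H_{\min}(\rho^{A'B'}|\sigma^{B'})}$, and let $\lambda^*$ be the smallest positive scalar with $\lambda^*(I^{AA'} \otimes \sigma^B \otimes \sigma^{B'}) - \rho^{AB} \otimes \rho^{A'B'} \geq 0$. The goal is to show $\lambda^* = \lambda_1 \lambda_2$, so that taking $-\log$ yields the claimed additivity.

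For the easy direction $\lambda^* \leq \lambda_1 \lambda_2$, I would start from the defining inequalities $\lambda_1(I^A \otimes \sigma^B) \geq \rho^{AB}$ and $\lambda_2(I^{A'} \otimes \sigma^{B'}) \geq \rho^{A'B'}$. Applying twice the elementary fact that $X \leq Y$ together with $Z \geq 0$ implies $X \otimes Z \leq Y \otimes Z$, I obtain first $\rho^{AB} \otimes \rho^{A'B'} \leq \lambda_1(I^A \otimes \sigma^B) \otimes \rho^{A'B'}$ and then $\lambda_1(I^A \otimes \sigma^B) \otimes \rho^{A'B'} \leq \lambda_1 \lambda_2 (I^A \otimes \sigma^B) \otimes (I^{A'} \otimes \sigma^{B'})$; chaining the two yields the bound.

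For the reverse direction $\lambda^* \geq \lambda_1 \lambda_2$, I would pass to a variational characterization. The operator inequality $\lambda(I^A \otimes \sigma^B) \geq \rho^{AB}$ is equivalent, via quadratic forms, to $\lambda \geq R_\rho(\psi) := \langle \psi | \rho^{AB} | \psi \rangle / \langle \psi | I^A \otimes \sigma^B | \psi \rangle$ for every $|\psi\rangle \in AB$ with nonzero denominator, so $\lambda_1 = \sup_\psi R_\rho(\psi)$ and similarly $\lambda_2 = \sup_\phi R_{\rho'}(\phi)$. Choosing near-optimizers $|\psi_1\rangle \in AB$ and $|\psi_2\rangle \in A'B'$, the product vector $|\psi_1\rangle \otimes |\psi_2\rangle$ makes both numerator and denominator of $R_{\rho\otimes\rho'}$ factor across the tensor product, giving $R_{\rho\otimes\rho'}(\psi_1 \otimes \psi_2) = R_\rho(\psi_1) R_{\rho'}(\psi_2)$, which can be made arbitrarily close to $\lambda_1 \lambda_2$. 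This forces $\lambda^* \geq \lambda_1 \lambda_2$, completing the proof.

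The only mild obstacle is the possibility that $\sigma^B$ or $\sigma^{B'}$ is singular, in which case $R_\rho$ is ill-defined on vectors outside the support of $I^A \otimes \sigma^B$. This is handled by restricting to the support of the conditioning operators, which is implicit in the hypothesis that both min-entropies are finite (equivalently, $\mathrm{supp}(\Tr_A \rho^{AB}) \subseteq \mathrm{supp}(\sigma^B)$ and similarly for the primed systems). With this convention the argument is entirely elementary and needs no result beyond basic positivity.
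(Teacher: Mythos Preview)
Your proof is correct and is precisely the kind of elementary unwinding of the definition that the paper has in mind: the paper's own ``proof'' consists of the single sentence ``Additivity follows straightforwardly from the definition of the quantum min-entropy,'' and your two inequalities (tensor monotonicity of $\leq$ for the upper bound, Rayleigh-quotient factorization on product vectors for the lower bound) are a natural way to make that sentence precise. One slightly slicker repackaging you might note is that $2^{-H_{\min}(\rho^{AB}|\sigma^B)} = \|(I^A\otimes\sigma^B)^{-1/2}\rho^{AB}(I^A\otimes\sigma^B)^{-1/2}\|_\infty$ on the support of $\sigma^B$, after which additivity is just multiplicativity of the operator norm under tensor products; but your argument is equivalent and equally valid.
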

Additivity follows straightforwardly from the definition of the quantum min-entropy.

\section{Previous distillation protocols}

\subsection{The Schmidt projection method}
 One of the first protocols for extracting pure entanglement was devised by Bennett et al. \cite{Concentrate} and works on a supply of partly entangled pure states $\psi^{AB}$ with \textit{entropy of entanglement}
 \begin{equation*}
 E(\psi^{AB}) := S(A)_{\psi}.
 \end{equation*} First, let's assume the states being shared are qubits. The extension of the protocol to higher dimensional systems will be straightforward. Using the Schmidt decomposition, we can write the pure state $\psi^{AB}$ as:
 \begin{equation*}
  \psi^{AB} = \cos(\theta) \ket{e_0 f_0 } + \sin(\theta) \ket{e_1f_1},
 \end{equation*}
 with $0 \leq \theta \leq \pi/2$ and $\sin{\theta}^2 + \cos{\theta}^2 = 1$. For the tensor product state $\psi^{\otimes n}_{AB}$, we have
 \begin{equation}\label{eq:tensor}
  \psi^{\otimes n}_{AB} = \bigotimes^n_{i=1} \bigg (\cos(\theta) \ket{e^i_0 f^i_0} + \sin(\theta)\ket{e^i_1f^i_1} \bigg ). \\
 \end{equation}
 By expanding the right hand side of eq.~(\ref{eq:tensor}), we get coefficients of the form \[ \lambda_k := \cos^{n-k}\theta \sin^{k}\theta, \] for $0 \leq k \leq n$. Let $P^A_k$ be the associated projector onto the subspace of dimension $\dbinom{n}{k}$ spanned by the vectors $\{\ket{e_{i_1}^1e^2_{i_2}\ldots e^n_{i_n}}\}$ having coefficient $\lambda_k$ in eq.~(\ref{eq:tensor}). Alice performs a projective measurement with projectors $\{P^A_k\}$, yielding the outcome $k$ with probability
 \begin{equation*}
    p_k = \dbinom{n}{k} (\cos^2\theta)^{n-k} (\sin^2\theta)^{k}.
 \end{equation*}
 By virtue of the original entanglement, Bob will obtain the same value for $k$ if he wants to perform his measurement. Hence, the measurement produces a maximally entangled state $\psi^{AB}_k$ in a $\dbinom{n}{k}^2$-dimensional subspace of the original $2^{2n}$ dimensional space.

The efficiency of the above procedure can be understood as follows. The expected entropy of entanglement $\sum_k p_k E(\psi^{AB}_k)$ for the residual states is non increasing under local operations (see \cite{Bennett} for a proof). Hence, we must have
\begin{equation}\label{eq:expectedE}
   \sum_k p_k S(A)_{\psi_k} \leq E(\psi_{AB}^{\otimes n}) = n E(\psi^{AB}) = n S(A)_{\psi}.
\end{equation}
The von Neumann entropy of the reduced state $\psi^{\otimes n}_{A}$ can only increase under projective measurements:
\begin{equation}\label{eq:Schmidtlocal}
 S(A)_{\psi^{\otimes n}} = nS(A)_{\psi} \leq S(A)_{\psi_{\mathrm{out}}},
\end{equation}
where $\psi^{AB}_{\mathrm{out}} := \sum_k p_k \psi^{AB}_k$.
Since we have (see \cite{Nielsen} for a proof)
\begin{equation*}
S(A)_{\psi_{\mathrm{out}}} \leq H(p_k) + \sum_k p_k S(A)_{\psi_k},
\end{equation*}
we combine with eqs.~(\ref{eq:expectedE}) and ~(\ref{eq:Schmidtlocal}) to obtain
\begin{equation*}
\sum_k p_k S(A)_{\psi_k} \leq n E(\psi^{AB}) \leq H(p_k) + \sum_k p_k S(A)_{\psi_k}.
\end{equation*}
But $H(p_k)$ is the entropy of a binomial distribution of $n$ trials with success probability $p = \sin^2 \theta$, which increases only logarithmically with $n$. Hence, as $n\rightarrow \infty$, the expected entropy of entanglement converges to $nE$, and so the original entanglement is preserved.

To convert the residual states into a standard form such as EPR pairs, however, we must use a \textit{double blocking} strategy: partitioning the original input into different subsets and applying the above procedure on each of those subsets. If we have $m$ batches of tensor states $\psi^{\otimes n}_{AB}$, we obtain a sequence of $k$ values $k_1,k_2,\ldots,k_m$ by applying the previous strategy of projective measurements on each batch. To obtain a desired number of singlet states, we need to have an adequate amount of batches $m$ at our disposal. Let
\begin{equation*}
 D_m = \dbinom{n}{k_1}\dbinom{n}{k_2}\ldots \dbinom{n}{k_m},
 \end{equation*}
be the product of the binomial combinations $\dbinom{n}{k}$ for the first $m$ batches and fix some $\epsilon > 0$. If $D_m$ lies between $2^l$ and $2^l(1+\epsilon)$, we have enough batches so that we can recover $l$ singlets by projecting (Alice) the residual states $\psi^A_{k_1}\otimes \psi^A_{k_2} \ldots \psi^A_{k_m}$ onto a large space of dimension $2^{l}$ (followed by some Pauli operators). Otherwise, we will need more supply of initial entanglement.

With probability greater than $1-\epsilon$, the projection onto the large space will succeed and, by virtue of the entanglement of the residual states, Alice and Bob will now share $l$ singlets. With probability less than $\epsilon$, the residual states are projected in a smaller subspace of dimension $D_m - 2^l < 2^l \epsilon$. In such case, a failure is declared and the protocol is aborted. As discussed in \cite{Concentrate}, the converting into product of singlets can also be shown to preserve the original entanglement in the limit of large $n$.

\subsection{The hashing method}\label{sec:hashing}
The first distillation protocols working on a supply of mixed states $\psi^{AB}$ appeared in \cite{Bennett}.  If Alice and Bob can communicate classical information, Bennett et al. constructed a strategy, called the hashing method, for producing a non-zero yield of pure entanglement if the pairs are drawn from an ensemble of Bell states with known density operator:
\begin{equation*}
\psi^{AB} = p_0 \braket{\Phi_{+}} + p_1 \braket{\Psi_{+}} + p_2 \braket{\Phi_{-}} + p_3 \braket{\Psi_{-}},
\end{equation*}
with von Neumann entropy $S(AB)_{\psi}$ given by the Shannon entropy $H(p)$ of the probability distribution $\{p_0,p_1,p_2,p_3\}$.
In this section, we give a brief description of this method. Each of the four Bell states can be encoded using 2 classical bits in the following way:
\begin{equation}\label{eq:Bell}
\begin{split}
\ket{\Phi_{+}} &=  00 \\
\ket{\Psi_{+}} &=  01 \\
\ket{\Phi_{-}} &=  10 \\
\ket{\Psi_{-}} &=  11.
\end{split}
\end{equation}
An unknown sequence of $n$ Bell states can then be represented as a bit string of length $2n$. For instance, the sequence $\Phi_{+}\Psi_{-}\Phi_{-}$ is encoded as $001110$. The parity of a subset $s$ of the bits in a string $x$ is equal to the modulo-2 sum of the bitwise AND between $s$ and $x$, or equivalently, the Boolean inner product $s\cdot x$. For instance, if we have $x=001110$ and $s=100100$, then the parity of the selected subset $000100$ of $x$ is equal to 1.

At the start of the protocol, Alice and Bob share an unknown sequence of $n$ Bell states. Let $x_0$ be the bit sequence of length $2n$ corresponding to this unknown sequence of Bell states. The hashing method consists of $n-m$ rounds of the following procedure: At the start of round $k+1$, $k=0,1,\ldots,n-m-1$, Alice chooses a random subset $s$ of the unknown bit sequence $x_k$ of length $2(n-k)$ and sends it to Bob. Alice and Bob then determine the parity of $s$ by performing local operations on their share of the remaining $n-k$ pairs. In \cite{Bennett}, it is shown how to obtain the parity of a subset $s$ by measuring a single pair of qubits, while preserving the Bell-state nature of the remaining pairs. The unmeasured pairs, represented by the classical string $x_{k+1}$, will change according to some deterministic hash function $x_{k+1}=f_s(x_k)$.  At the end of the round, Alice and Bob share an unknown sequence $x_{k+1}=f_s(x_k)$ of $n-k-1$ Bell states.

For any two possible candidates $y\neq z$ at any given round in the above protocol, the probability that they will agree on a random subset $s$ is equal to $1/2$. This can be seen from the fact that
\begin{equation*}
  (s \cdot y) \oplus (s \cdot z) = s \cdot (y \oplus z),
\end{equation*}
where $\oplus$ denotes addition modulo-2. Thus, after each parity measurement is performed, we can expect on average at least half of the remaining candidates to be eliminated. The remaining candidates will be mapped into a set of possible output sequences according to the hash function $f_s(x_k)$. After $r$ rounds, the probability that two distinct candidates $y$ and $z$ remain distinct is therefore bounded by $2^{-r}$, combining the previous two facts. From typicality, we know that with arbitrarily good probability, the unknown string $x$ will be in a set of $2^{n(S(AB)_{\psi}+\delta)}$ typical sequences, where $\delta > 0$ can be made arbitrarily small by choosing $n$ large enough. Hence, if we perform approximately $n-m=n(S(AB)_{\psi}+2\delta)$ rounds, we can expect to identify a single good candidate $x$ with high probability. Failure occurs if after $n-m$ rounds, we have more than one candidate or no candidate if we restrict our search within the set of typical sequences.

If a single candidate is found after $n-m=n(S(AB)_{\psi}+2\delta)$ rounds, Alice and Bob share $n (1 -(S(AB)_{\psi}+2\delta))$ copies of impure Bell states which can be transformed into maximally entangled states by applying suitable Pauli operators:
\begin{equation*}
X = \left( \begin{array}{cc}
0 & 1  \\
1 & 0  \\
 \end{array} \right)\quad
 Y = \left( \begin{array}{cc}
0 & -i  \\
i & 0  \\
 \end{array} \right)\quad
 Z = \left( \begin{array}{cc}
1 & 0  \\
0 & -1  \\
 \end{array} \right).
\end{equation*}
For instance, if the sequence of unknown pairs is $\ket{\Phi_{+}}\ket{\Psi_{-}}\ket{\Phi_{-}}$, Alice and Bob can recover three singlet states by applying the Pauli operators $Z\otimes I \otimes I$ on Alice's share and $X \otimes I \otimes X$ on Bob's share. As the number of copies $n$ grows larger, this strategy will produce a yield approaching $1-S(AB)_{\psi}$ maximally entangled pairs per copy of the input state $\psi^{AB}$.

A more recent protocol by Devetak and Winter \cite{DW}, relying on different ideas, extends the previous result to arbitrary mixed entangled states $\psi^{AB}$, and proves the following hashing inequality, named after the hashing method, for \textit{one-way distillable entanglement} $D_{\rightarrow}(\psi^{AB})$ (see \cite{DW} for a formal definition):
\begin{lemma}[Hashing inequality \cite{Bennett,DW}]
\label{hashinginequality} Let $\psi^{AB}$ be an arbitrary bipartite mixed state. Then,
\begin{equation}\label{eq:hashing}
D_{\rightarrow}(\psi^{AB}) \geq S(B)_{\psi} - S(AB)_{\psi} =: I(A\rangle B)_{\psi}.
\end{equation}
\end{lemma}
The quantity on the right hand side is known as the \textit{coherent information} from $A$ to $B$ of the state $\psi^{AB}$.
For the case of mixtures of Bell states, the coherent information evaluates to $1-S(AB)_{\psi}$, which is exactly the yield attained by the hashing method. Devetak and Winter showed how to achieve coherent information for arbitrary states~\cite{DW} by exploiting the connection between entanglement distillation and quantum data transmission, for which the coherent information had already been demonstrated to be achievable~\cite{Lloyd-cap,igor-cap,shor-cap}.

\subsection{Entanglement of assistance} \label{sec:task}

Suppose a memoryless quantum source emits an unknown sequence of Bell states $\ket{\Psi_{-}}$,$\ket{\Psi_{+}}$,$\ket{\Phi_{-}}$,$\ket{\Psi_{+}}$,$\ldots$ with known density operator $\psi^{AB}$ and gives one qubit of each pair to a laboratory $A$ (Alice), while the other halves are sent to another laboratory $B$ (Bob). Suppose each Bell state is produced with equal probability. From Alice and Bob's point of view, they share many copies of a maximally mixed state $I^{AB}/4$. The hashing inequality suggests that it is impossible to distill entanglement for this state. If Alice and Bob have no additional information about the source, this assumption is correct (see \cite{Bennett} for a proof). If a third party (Charlie) has access to the source, however, and tells Alice and Bob the exact sequence of states produced, they can recover a singlet state $\Psi_{-}$ for each pair of qubits shared by applying appropriate Pauli operators.

An equivalent view of the previous example is to assume that Alice, Bob and Charlie share the state
\begin{equation*}
\braket{\psi}^{ABC} = \sum_{i_1i_2\ldots i_n} \frac{1}{4^n} \braket{\Psi}_{i_1i_2\ldots i_n}^{AB} \otimes \braket{i_1i_2\ldots i_n}^C,
\end{equation*}
where each state $\Psi_{i_1i_2\ldots i_n}$ is a product of Bell states. Each index $i_j$ corresponds to one of the four Bell states using the encoding of eq.~(\ref{eq:Bell}). If Charlie measures his system in the basis $\{\ket{i_1i_2\ldots i_n}^C\}$, and sends the measurement outcome to Alice and Bob, they can apply Pauli operators to recover $n$ singlet states $\Psi_{-}$. This is in sharp contrast with the results of the hashing method, which gives a zero yield for the reduced state $\psi^{AB}$.

Motivated by this simple example, a natural question to ask is how much additional entanglement can be distilled from a tripartite state $\psi^{ABC}$ if third party assistance is available.  This has been studied and solved completely \cite{dfm, Laustsen, SVW, merge} when the parties share an arbitrary pure tripartite state $\psi^{ABC}$.  The \emph{entanglement of assistance} \cite{dfm} for the pure state $\psi^{ABC}$ is defined as:
\begin{equation}
\begin{split}
\label{eq:EofA} E_A(\psi^{AB}):=E_A(\psi^{ABC})
      &:= \max_{{\cal E}}  \sum_i p_i S(AB)_{\psi_i},
\end{split}
\end{equation}
where the maximum is over all decompositions ${\cal E} = \{p_i,
\psi_i^{AB}\}$ of $\psi^{AB}=\Tr_C \psi^{ABC}$ into
a convex combination of pure states . By applying a POVM on the system $C$, the
helper Charlie can effect any such pure state convex decomposition
$\rho^{AB} = \sum_i p_i\psi_i^{AB}$ for Alice and Bob's
state \cite{Hughston}, and so the quantity $E_A$
maximizes the average amount of entanglement that Alice and Bob can
distill with help from Charlie. Since $E_A$ is not, in general,
additive under tensor products \cite{dfm}, it will often be the
case that collective measurements performed by Charlie on the joint
state $\psi_{C}^{\otimes n}$ will be more beneficial to Alice
and Bob than individual measurements on individual copies of
$\psi^{C}$.

Define the quantity $E^{\infty}_A$ as the optimal EPR rate distillable between Alice and Bob with the help of Charlie under LOCC quantum operations. This was shown by Smolin et al. \cite{SVW} to be equal to the regularization of $E_A$:
\be \label{eq:EofAreg} E_A^{\infty}(\psi^{ABC})=\lim_{n \rightarrow \infty}
\frac{1}{n} E_A(\psi_{ABC}^{\otimes n}).
\ee
Furthermore, a nice simple expression for $E_A^{\infty}$ in terms of entropic quantities was also obtained:
\begin{equation}
\label{thm:EofA}
E_A^{\infty}(\psi^{ABC})=\min\{S(A)_\psi,S(B)_\psi\}.
\end{equation}
The proof of this result is based on a variant of the Holevo-Schumacher-Westmoreland (HSW) theorem \cite{Holevo,HSW}\footnote{The version of the HSW theorem relevant for proving eq.~(\ref{thm:EofA}) appears in \cite{DW}.}, and makes use of random coding in a very original way. Write the state $\psi_{ABC}$ in Schmidt form as $\sum_j \sqrt{q_j} \ket{\psi_j}^{AB} \ket{j}^C$ and consider $n$ copies of it:
 \begin{equation*}
  (\ket{\psi}^{ABC})^{\otimes n} = \sum_{J} \sqrt{q_J} \ket{\psi_J}^{A^nB^n} \ket{J}^{C^n},
 \end{equation*}
where $\ket{\psi_J}^{A^nB^n}=\ket{\psi_{j_1}}^{A^nB^n}\ket{\psi_{j_2}}^{A^nB^n}\ldots \ket{\psi_{j_n}}^{A^nB^n}$.
 After projecting the system $C^n$ into a subspace of constant type $P$, Charlie selects a random code ${\cal J}=(J^1, J^2, \ldots, J^N)$, where each $J^i$ is a sequence $j_1j_2\ldots j_n$ of type $P$, by performing an appropriate POVM $\{\smfrac{N}{c}\braket{t_{\cal J}(\alpha)}\}_{{\cal J},\alpha}$ on his system, where
\begin{equation*}
  \ket{t_{\cal J}(\alpha)} := \frac{1}{\sqrt{N}}\sum^{N}_{\beta=1} e^{2\pi i \alpha \beta} \ket{J^{\beta}}^{C^n}
\end{equation*}
and $c$ is a constant chosen so that $\sum_{{\cal J},\alpha} \smfrac{N}{c}\braket{t_{\cal J}(\alpha)} = \Pi_P$, where $\Pi_P$ is the projector onto the type $P$ subspace of $C^n$.  The HSW theorem guarantees that if the number of codewords $N$ is a bit less than $2^{n\chi}$, where $\chi$ is the Holevo information
  \begin{equation*}
  \chi = \chi(\{q_i,\psi^A_i\}) := S(A)_{\psi^A} - \sum_i q_i S(A)_{\psi^A_i}
  \end{equation*}
of the channel $i \rightarrow \psi^A_i$  (assuming w.l.o.g. that $S(A)_{\psi} \leq S(B)_{\psi}$), then the code will be good. That is, with very high probability, both Alice and Bob have good decoders for their respective channels $i \rightarrow \psi^A_i$ and $i \rightarrow \psi^B_i$. Once Alice and Bob know the code selected, they each apply the decoders associated with this code in a \textit{coherent} way: if $\{D^A_m\}^N_{m=1}$ is a decoder for Alice, she applies the isometry
\begin{equation*}
  V_A = \sum_m \sqrt{D^A_m} \otimes \ket{m}^{A'}
\end{equation*}
to her system and similarly for Bob. This will produce residual states $\nu^{AA'BB'}$ such that
\begin{equation*}
 \ket{\nu}^{AA'BB'} \approx \frac{1}{\sqrt{N}}\sum^N_{m=1} e^{-2\pi i \alpha m /N} \ket{\psi_{J^m}}^{A^nB^n}\ket{m}^{A'}\ket{m}^{B'},
\end{equation*}
where $e^{-2\pi i \alpha m /N}$ are phases introduced by Charlie's POVM. The entropy of entanglement $S(AA')_{\nu}$ of the residual states can be shown to be arbitrarily close to the rate of eq.~(\ref{thm:EofA}). A standard distillation protocol, within in a double blocking scheme, can then be applied to recover EPR pairs at this rate.

A simpler proof of eq.~(\ref{thm:EofA}) appears in \cite{merge} and uses state merging to reveal a property at the core of third party assisted distillation: If a third party holds the purifying system $C$ of a bipartite state $\psi^{AB}$, then he can always transfer his system to either Alice or Bob, whichever will result in the least bipartite entanglement. Both eqs.~(\ref{eq:EofAreg}) and (\ref{thm:EofA}) will also follow from our more general Theorems \ref{thm:equivalence} and \ref{thm:lowerbound} in Chapter 5.

The previous scenario can be generalized to include multiple helpers who will assist Alice and Bob in distilling entanglement. In the \textit{multipartite entanglement of assistance} problem \cite{SVW}, the task is to distill EPR pairs from a $(m+2)$-partite pure state $\inputstate$ shared between two recipients (Alice and Bob) and $m$ other helpers $C_1,C_2,\ldots,C_m$. If many copies of the input state are available, the optimal EPR rate was shown in \cite{merge} to be
equal to
\begin{equation}\label{eq:mincutX}
  E^{\infty}_A(\inputstate) := \min_{\cal T} S(A{\cal T})_{\psi} =: E_{min-cut}(\inputstate),
\end{equation}
where ${\cal T} \subseteq \{C_1,C_2,\ldots,C_m\}$ is a subset (i.e a bipartite cut) of the helpers. We
denote the complement by $\cT := \{C_1C_2\ldots C_m\} \setminus {\cal T}$. We call $\min_{\cal T}\{S(A{\cal T})_{\psi}\}$ the \textit{minimum cut entanglement} (min-cut entanglement) of the state $\inputstate$. We will recover eq.~(\ref{eq:mincutX}) from our more general Theorem \ref{thm:gen}. 
\chapter{Multiparty State Transfer}
\section{Introduction}
Suppose two parties share $n$ copies of a bipartite mixed state $\psi^{AB}$ and one of the parties (Alice) wants to transfer her system $A^n$
to the other party (Bob) using as little quantum communication as possible. Consider a purification $\psi^{ABR}$ of this state, where $R$ is the purifying (reference) system. An appropriate measure for the correlation between Alice's system and the purification system is the \textit{quantum mutual information}:
\begin{equation*}
 I(A;R)_{\psi} := S(A)_{\psi} + S(R)_{\psi} - S(AR)_{\psi}.
\end{equation*}
For the systems $A^n$ and $R^n$, we have $I(A^n;R^n)_{\psi^{\otimes n}}=nI(A;R)_{\psi}$ bits of correlation. Superdense coding suggests a strategy for transferring the system $A^n$ to Bob: encode $nI(A;R)_{\psi}$ bits of mutual information in a subsystem $A_1$ of dimension $\frac{n}{2}I(A;R)_{\psi}$ and send this subsystem to Bob. This will transfer the correlation between Alice and the reference to Bob, leaving Alice with a system $A_2$ \textit{decoupled} (i.e decorrelated) from the reference $R$. Using his knowledge of the density operator $\psi^{AB}$, Bob can then recover the entire system $A^n$ via local operations on the systems $A_1B^n$.

Abeyesinghe et al. \cite{Hayden001} showed that by applying a random (Haar distributed) unitary $U:\tilde{A} \rightarrow A_1A_2$ to a subspace $\tilde{A}$ of $A^n$, transmission rates arbitrarily close to $\frac{1}{2}I(A;R)_{\psi}$ are achievable. When Bob receives the $A_1$ system, he holds a purification of the system $A_2R^n$, which can be taken, by means of an isometry $V: A_1B^n \rightarrow \tilde{B} B^n$, to the original state $\psi^{\otimes n}_{ABR}$ with high fidelity. The decoding also distills entanglement, in the form of maximally entangled states shared between the sender and the receiver, at an ebit rate approaching $\frac{1}{2}I(A;B)_{\psi}$. This is known as the \textit{fully quantum Slepian-Wolf} (FQSW) protocol.

If a quantum channel is not available between Alice and Bob, but they share enough entanglement, they can substitute quantum data transmission with teleportation to achieve the previous task. This is known as \textit{quantum state merging} and is the original formulation of the state transfer problem. The main result obtained by Horodecki et al. \cite{SW-Nature,merge} is a proof of the existence of protocols achieving near optimal ebit rates, arbitrarily close to the conditional von Neumann entropy $S(A|B)_{\psi}$. This result is easily derived by modifying an FQSW protocol: teleport the qubits using $\frac{1}{2}I(A;R)_{\psi}$ ebits and recycle the distilled entanglement at the end of the protocol. In the limit of many copies of the state $\psi^{ABR}$, only a net rate of \[\frac{1}{2}I(A;R)_{\psi} - \frac{1}{2}I(A;B)_{\psi} = S(A|B)_{\psi}\] ebits is needed to transmit the system $A^n$ with high fidelity. If the conditional entropy is negative, the protocol returns ebits for future communication, but still requires catalytic entanglement to achieve the transfer.

Horodecki et al., however, took a more direct approach, similar to what is found in Abeyesinghe et al. \cite{Hayden001}, by considering random (Haar distributed) measurements on the $A^n$ system. The benefit of this approach is to remove the catalytic number of ebits needed when adapting an FQSW protocol to perform state merging. In other words, when the conditional von Neumann entropy is negative, the existence of protocols achieving a state transfer by LOCC only (no injected entanglement) was shown. This result has important consequences in the context of entanglement distillation. For positive coherent information, there exist state merging protocols which are also one-way entanglement distillation protocols. By preprocessing Alice's system to optimize the coherent information as much as possible, and applying a state merging protocol, near optimal rates are achievable (see Theorem 13 of Devetak and Winter \cite{DW}).
\begin{figure}[t]
  \centering
    \includegraphics{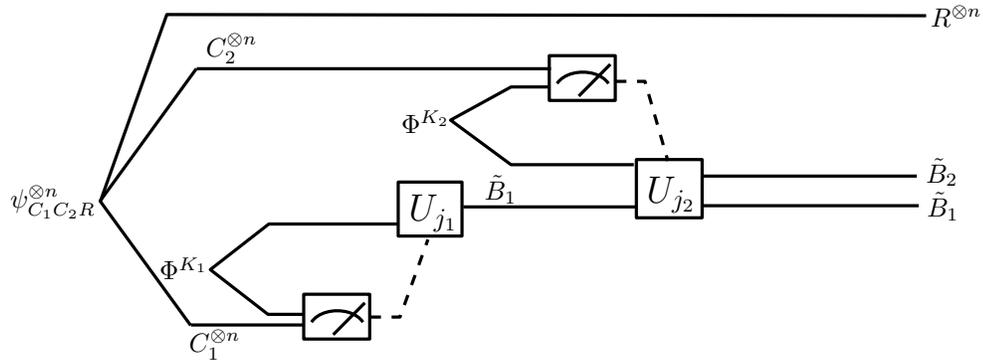}
\caption{Quantum circuit representing a distributed compression protocol involving two senders. Solid black lines indicate quantum information and dashed lines classical information. The protocol depicted involves the combination of two state merging protocols for recovering the systems $C^{\otimes n}_1$ and $C^{\otimes n}_2$ at the receiver's location.}\label{fig:distcompression}
\end{figure}

In this chapter, we analyze two extensions of the state transfer problem. First, we consider $m$ senders and a decoder/receiver sharing a state $\psi^{C_1C_2\ldots C_mBR}$, and look at the number of ebits needed for transferring the systems $C_1, C_2, \ldots, C_m$ to the receiver, sometimes equivalently referred to as ``merging'' the state $\psi^{C_1C_2\ldots C_mBR}$ to the receiver. This extension of the state merging task to a more general multiparty setting was also analyzed by Horodecki et al. \cite{merge} under the name of \textit{distributed compression}. A combination of state merging protocols, initiated by each of the $m$ senders, was shown to yield optimal ebit rates for distributed compression. For instance (see Figure \ref{fig:distcompression}), if many copies of a state $\psi^{C_1C_2R}$ are distributed to two parties $C_1$ and $C_2$, the sender $C_1$ can first transfer his system to the receiver at a compression rate arbitrarily close to the entropy $S(C_1)_{\psi}$. The second sender follows by merging his system with the receiver at an ebit rate approaching $S(C_2|C_1)_{\psi}$. If the second sender goes first instead, rates close to $S(C_2)_{\psi}$ and $S(C_1|C_2)_{\psi}$ are achieved instead.

 These distributed compression protocols, although optimal in the rates, require the use of \textit{time-sharing} for achieving rates which are not corner points of the rate region. Time-sharing consists of partitioning a large supply of states and applying different protocols to each subset. If a single copy of the state is available, this approach becomes impossible. We remedy this problem by showing the existence of multiparty merging protocols which work even if the parties share a single copy of the input state. That is, they don't require the use of time-sharing for performing the task of multiparty merging. This entails the existence of decoders of a more general form (see Figure \ref{fig:multipartymerging}). A side-effect of time-sharing for the distributed compression protocol is to restrict the decoding implemented by the receiver to have a composition form $U_{j_m}U_{j_{m-1}}\ldots U_{j_1}$.  A more general form $U_{j_1j_2\ldots j_m}$ for the decoder allows the distribution of the pre-shared entanglement to be different, while achieving the same rates. To illustrate this fact, we construct a specific example where our multiparty merging protocol allows a different distribution of the catalytic entanglement required for merging the state, compared to a distributed compression protocol as discussed in \cite{merge}. We prove that time-sharing is not required for the case of distributed compression involving two senders. For the more general task of multiparty state merging of $m$ senders with side information at the receiver, we need to extend some of the well-known properties of typicality to the multiparty setting in order to show that time-sharing is not required. We discuss some of the difficulties in proving such results. In the next chapter, we characterize the entanglement cost of multiparty merging when a single copy is available to the parties by using the relevant entropic quantities for this regime.
 \begin{figure}[t]
  \centering
    \includegraphics{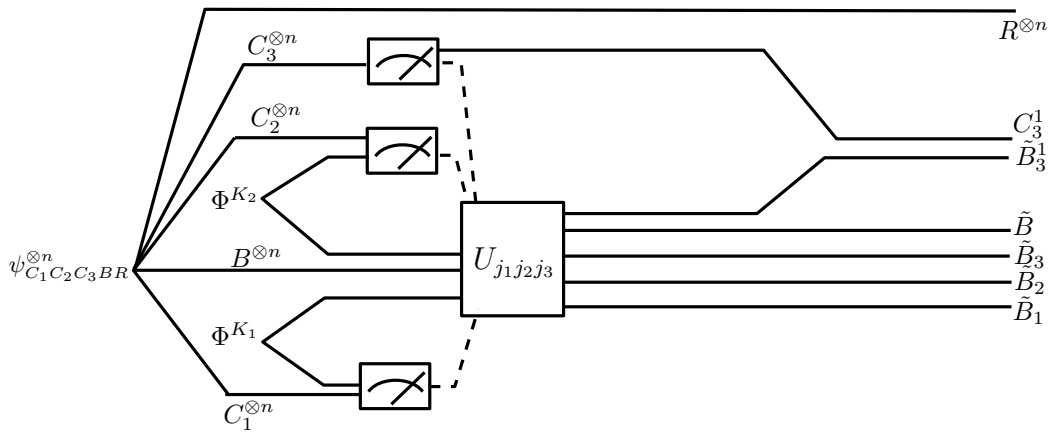}
\caption{Quantum circuit for a multiparty state merging protocol involving three senders. The senders perform simultaneous measurements and send their measurement outcomes to the receiver. Upon reception, the receiver applies a unitary to recover the sender's systems with good fidelity. In the situation depicted above, there is also additional entanglement distilled between the receiver and the sender $C_3$. }\label{fig:multipartymerging}
\end{figure}

The second part of this chapter describes a slightly more general kind of state transfer. We introduce a second receiver $A$, and for a bipartite partition ${\cal T} \subseteq \{1,2,\ldots,m\}$ of the senders, we consider the problem of redistributing the state $\psi^{C_1C_2\ldots C_m ABR}$ to the two receivers: The systems which are part of ${\cal T}$ are sent to the receiver $A$ while the systems which are part of $\overline{T}$ go to the receiver $B$. We call this task a \textit{split-transfer} for the state $\psi^{C_1C_2\ldots C_mABR}$. It has an important use in the context of assisted distillation: we show how to distill entanglement at the min-cut entanglement rate of eq.~(\ref{eq:mincutX}) by combining a split-transfer protocol with a standard distillation protocol.

Analyzing multiparty protocols requires the labeling of numerous systems, dimension quantities, measurement outcomes, etc\ldots To avoid losing the reader
with cumbersome notation, we make the following abbreviations: For a protocol involving $m$ senders, we denote by $C_M$ the composite system $C_1C_2\ldots C_m$. The capital letter $M$, written as a subscript next to a label, will always denote the composition of $m$ objects. For instance, if each of the $m$ senders have an extra system $C^0_i$, the label $C^0_M$ denotes the composite system $C^0_1C^0_2\ldots C^0_m$. The symbol ${\cal T}$ will denote either a subset of the $m$ senders or the composite system $\bigotimes_{i \in {\cal T}}C_i$. It will usually be clear which definition applies from the context. The complement of the set ${\cal T}$ is denoted by $\overline{{\cal T}}$, and may also denote the composite system $\bigotimes_{i \in {\overline{\cal T}}} C_i$.

\section{Multiparty state merging}
\subsection{Definitions and main theorem}
Let $\Lambda^m_{\rightarrow} : C_MC^0_M \otimes BB^0_M \rightarrow
C^1_M \otimes B^1_MBB_M$ be an LOCC quantum channel implemented by
$m$ senders $C_1,C_2,\ldots,C_m$ and a decoder/receiver $B$. Initially, each sender has a system $C_i$ and also
an \textit{ancilla} $C^0_i$: an extra system of dimension $K_i:=d_{C^0_i}$. The receiver also has
ancillas $B^0_M$, with $d_{B^0_i} =
d_{C^0_i}$, and $B^1_M$ with $d_{B^1_i} = d_{C^1_i}$. The systems $C^0_M$ and $B^0_M$ are in the maximally entangled state $\Phi^{K_1} \otimes \Phi^{K_2} \otimes \ldots \otimes \Phi^{K_m}$, with the state $\Phi^{K_i}$ consisting of $\log(K_i)$ ebits shared between the sender $C_i$ and the receiver $B$. After applying the channel, the senders have subsystems $C^1_1, C^1_2, \ldots C^1_m$ of $C_M$, and the receiver holds three systems:
$B$, $B^1_M$ and $B_M$, with $B_M$ being an ancillary system of dimension $d_{B_M} = d_{C_M}$.
\begin{figure}[t]
  \centering
    \includegraphics{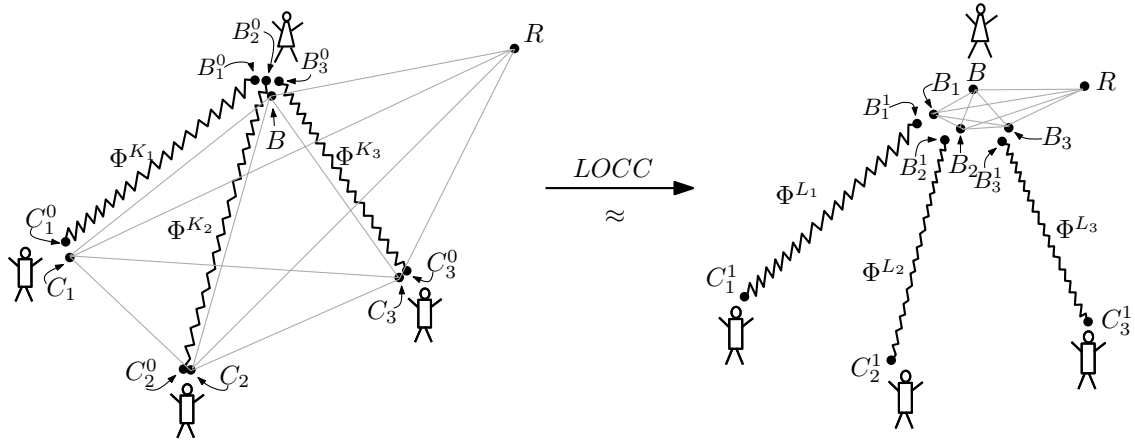}
\caption{Picture of the initial and final steps of a multiparty
state merging protocol involving three senders. The jagged lines indicate maximally entangled pairs shared between the receiver and the senders. The solid lines indicate correlation between the parties and the reference. At the end of the protocol, the systems $C_1, C_2$ and $C_3$ are transferred to the receiver. }\label{fig:statemerging}
\end{figure}

This channel implements multiparty state merging (see Figure \ref{fig:statemerging}) for the state $\psi^{C_MBR}$ if the output state $(\mathrm{id}_R \otimes \Lambda^m_{\rightarrow})(\psi^{C_MBR} \otimes \Phi^{K_M})$ is approximately a tensor product of the initial state $\initstate$ and maximally
entangled states $\Phi^{L_1} \otimes \Phi^{L_2} \otimes
\ldots \otimes \Phi^{L_m}$ shared between the senders and the
decoder. Each state $\Phi^{L_i}$ is $\log(L_i)$ ebits shared between the sender $C_i$ and the receiver. In more formal terms, we have the following definition:

\begin{definition}[$m$-Party Quantum State Merging]
Let $\Lambda^m_{\rightarrow}$ be defined as in the previous
paragraphs. We say that $\Lambda^m_{\rightarrow}$ is an
\emph{$m$-party state merging protocol} for the state $\initstate$ with error $\epsilon$ and
entanglement cost $\overrightarrow{E}:= (\log K_1 -\log L_1,\log K_2 -\log L_2, \ldots, \log K_m - \log L_m)$ if
\begin{equation*}
 \biggl \| (\mathrm{id}_{R} \otimes \Lambda^m_{\rightarrow})(\initstate \otimes \Phi^{K_M}) -
\mergestate \otimes \Phi^{L_M} \biggr \|_1 \leq
\epsilon,
\end{equation*} where the state $\mergestate$ corresponds to the initial state
$\initstate$ with the system $B_M$ substituted for $C_M$. If we
are given $n$ copies of the same state, $\psi = (\sigma)^{\otimes
n}$, the entanglement rate $\overrightarrow{R}(\sigma)$ is defined
as $\overrightarrow{R}(\sigma) :=
\frac{1}{n}\overrightarrow{E}(\psi)$.
\end{definition}

Before stating the main theorem, we need to define what it means
for a rate-tuple $\overrightarrow{R}$ to be achievable for
multiparty merging using LOCC operations.

\begin{definition}[The Rate Region]\label{def:rateregion}
We say that the rate-tuple $\overrightarrow{R}:=(R_1,R_2,\ldots,R_m)$ is \textit{achievable} for
multiparty merging of the state $\psi^{C_M BR}$ if,
for all $\epsilon > 0$, we can find an $N(\epsilon)$ such that for every $n \geq
N(\epsilon)$ there exists an $m$-party state merging
protocol $\Lambda^{m}_{n,\rightarrow}$ acting on $\psi^{\otimes n} \otimes \Phi^{K_M^n}$ with error
$\epsilon$ and entanglement rate approaching $\overrightarrow{R}$. We call the closure of the set of achievable rate-tuples the \textit{rate region}.
\end{definition}

For the task of distributed compression (i.e., no side information at the decoder), the rate region was characterized in \cite{merge} by the inequalities
\begin{equation}\label{eq:boundsRRR}
\sum_{i \in {\cal T}} R_i \geq S({\cal T}|\overline{{\cal T}})_{\psi} \phantom{==} \quad \text{for all nonempty subsets } {\cal T}\subseteq \{1,2,\ldots,m\}.
\end{equation}
If a rate $R_i$ is negative for an achievable rate-tuple $(R_1, R_2, \ldots, R_m)$, close to $-nR_i$ ebits shared between the sender $C_i$ and the receiver are also returned by the protocol. Allowing the receiver to have side information $B$ as well leads to a similar set of equations describing the rate region associated with the task of multiparty state merging.

\begin{Theorem}[$m$-Party Quantum State Merging \cite{merge}] \label{thm:statemerging}
Let $\initstate$ be a pure state shared between $m$ senders
$C_1,C_2,\ldots,C_m$ and a receiver $B$, with purifying system
$R$. Then, the rate-tuple $\overrightarrow{R}:=(R_1,R_2,\ldots,R_m)$ is part of the rate region for
multiparty merging if and only if the inequality \be \label{eq:Mainthm3} \sum_{i \in {\cal
T}} R_i \geq S({\cal T}|\overline{{\cal T}}B)_{\psi} \ee holds for
all non empty subsets ${\cal T} \subseteq \{1,2,...,m\}$.
\end{Theorem}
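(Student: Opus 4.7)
The plan is to prove both the converse (necessity of the inequalities) and the direct coding theorem (achievability) for the stated rate region.

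For the converse, fix any non-empty ${\cal T} \subseteq \{1,\ldots,m\}$ and consider the bipartite cut that places the senders' systems $\{C_i, C^0_i, C^1_i : i \in {\cal T}\}$ together with the reference $R$ on one side, and all other systems (senders in $\cT$ with their ancillas, and the receiver's systems $B, B^0_M, B^1_M, B_M$) on the other. An $m$-party LOCC protocol between the senders and the receiver restricts to an LOCC across this cut, so the entanglement across it can only decrease, up to an $O(n \epsilon \log d)$ correction controlled by Fannes' inequality (Lemma \ref{lem:Fannes}). The initial reduced entropy on the ${\cal T}R$ side is $S({\cal T}R)_\psi + \sum_{i \in {\cal T}} \log K_i$, which equals $S(\cT B)_\psi + \sum_{i \in {\cal T}} \log K_i$ by purity of $\initstate$. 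After the merge, the same side carries entropy $S(R)_\psi + \sum_{i \in {\cal T}} \log L_i = S({\cal T}\cT B)_\psi + \sum_{i \in {\cal T}} \log L_i$. Rearranging gives $\sum_{i \in {\cal T}} (\log K_i - \log L_i) \geq n S({\cal T}|\cT B)_\psi - O(n \epsilon \log d)$; dividing by $n$ and letting $\epsilon \to 0$ yields the desired lower bound on $\sum_{i \in {\cal T}} R_i$.

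For achievability, I would adapt the FQSW strategy of \cite{Hayden001} to many senders acting in parallel. On $n$ copies of the state, each sender $C_i$ first applies the typical projector onto $\tilde{C_i}$, then acts with a Haar-random unitary $U_i : \tilde{C_i} \to C^0_i \otimes C^1_i$ with $\log d_{C^1_i} = n(R_i - \delta)$, and measures $C^0_i$ in the computational basis, announcing the outcome $j_i$ to the receiver. The heart of the argument is a simultaneous decoupling estimate,
\begin{equation*}
\mathbb{E}_{U_1, \ldots, U_m} \bigl \| \omega^{C^1_M R^n} - \tau^{C^1_M} \otimes \psi^{R^n} \bigr \|_1 \longrightarrow 0,
\end{equation*}
bounding the expected distance between the post-measurement reduced state and the fully decoupled target. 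By Lemma \ref{Lemma:TrSchmidt} this reduces to evaluating Haar integrals of squared Hilbert--Schmidt quantities on each $U_i$; the resulting expectation decomposes into error terms indexed by subsets ${\cal T} \subseteq \{1,\ldots,m\}$, and the typical-projector estimates of eqs.~(\ref{eq:typicXX})-(\ref{eq:trois}) show that each such term vanishes provided $\sum_{i \in {\cal T}} R_i > S({\cal T}|\cT B)_\psi$. Fixing a good realization of the $U_i$ and invoking Uhlmann's theorem (Corollary \ref{cor:Ulhmann}) for each outcome tuple $(j_1,\ldots,j_m)$ yields an isometric decoder $V_{j_1 \ldots j_m} : B^n B^1_M \to B_M B$ that, up to vanishing error, delivers the target state $\mergestate \otimes \Phi^{L_M}$; a continuity argument then extends the result from the interior of the region to its closure.

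The main obstacle is the multiparty decoupling step for $m \geq 3$: cross-terms in the expected Hilbert--Schmidt norm involve reduced states on arbitrary subsets of senders, and their control requires a multiparty version of the standard typicality lemma, namely that the tensor product of the senders' individual typical projectors closely approximates the joint typical projector on every marginal. This is established in the excerpt for $m=2$, which settles distributed compression with two senders unconditionally. For general $m$, one can either assume this multiparty typicality conjecture, or fall back on the iterated two-party merging strategy of \cite{merge}: apply single-sender state merging to the senders one by one in any order, reaching a corner of the region, and then invoke time-sharing over different orderings to reach arbitrary rate-tuples satisfying eq.~(\ref{eq:Mainthm3}), at the cost of requiring many blocks of copies.
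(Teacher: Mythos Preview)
Your proposal is correct and matches the paper's approach closely. The paper likewise develops the simultaneous random-measurement decoupling estimate (Lemma~\ref{Lemma:rdmisometry} and Proposition~\ref{prop:isometry}), proves the two-sender case unconditionally, formulates the multiparty typicality obstacle as Conjecture~\ref{eq:notimesharing}, and defers the general $m$ case to the iterated merging plus time-sharing argument of \cite{merge}. Your converse is the direct Fannes-based LOCC-monotonicity argument; the paper's stated converse for Theorem~\ref{thm:statemerging} is a one-line reduction to the single-sender converse of \cite{merge}, but it spells out exactly your argument in the parallel split-transfer converse (Theorem~\ref{thm:splittransfer}), so the two are equivalent.
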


The theorem was proved in \cite{merge} by showing that the corner points of the region are achievable and then using time-sharing to interpolate between them. We will extend the previous result to the one-shot setting in Chapter 4. In Section \ref{sec:iid}, we show that for distributed compression of two senders, time-sharing is not required for transferring the systems $C^n_1$ and $C^n_2$ at any rate satisfying the inequalities of eq.~(\ref{eq:boundsRRR}).

\subsection{The ``fully'' decoupled approach} \label{sec:merging-many}
 \begin{figure}[t]
  \centering
    \includegraphics{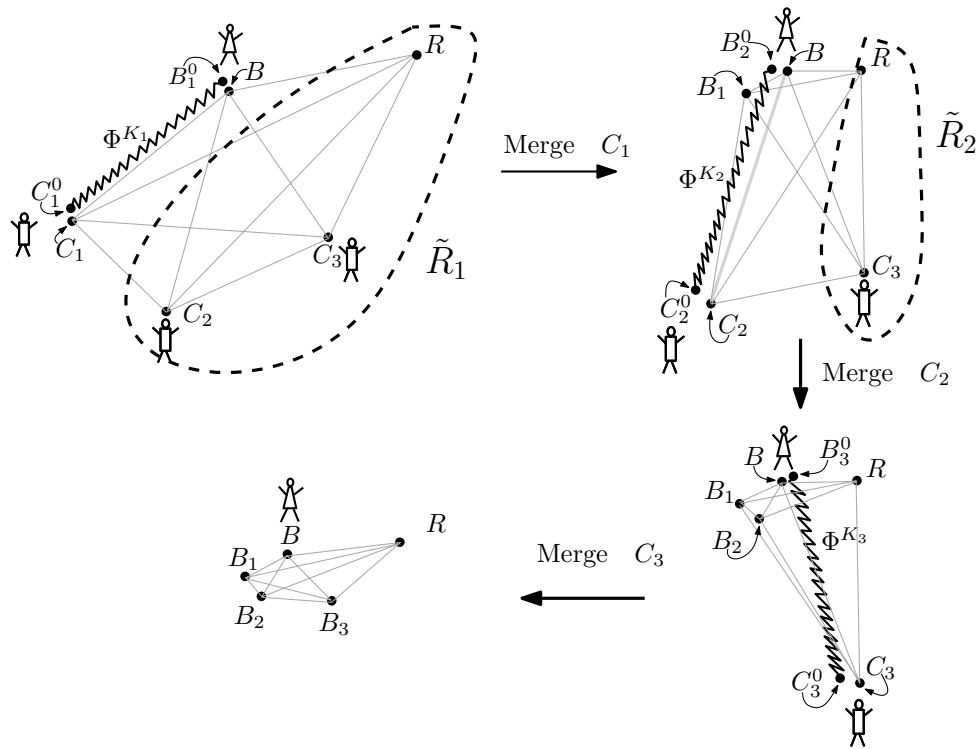}
\caption{The distributed compression protocol of \cite{merge} for the case of three senders. Systems part of a relative reference are found within the dotted region. For this scenario, we assume the senders consume just enough entanglement to transfer their systems. No maximally entangled states are returned by the protocol. }\label{fig:relativereference}
\end{figure}

 The distributed compression protocol described in \cite{merge} achieves multiparty merging by transferring the systems one at a time using the two-party state merging protocol. This has the consequence of decoupling the senders from the reference only at the end of the protocol. To better illustrate this, suppose the first $i-1$ senders have transferred their systems to the receiver. The original state $\psi^{C_MBR}$ at this point in the protocol can be written as $\psi^{C_{i} \tilde{B} \tilde{R}}$, where $\tilde{B}$ is a system of the same dimension as the composite system $BC_1C_2\ldots C_{i-1}$ and $\tilde{R}:=C_{i+2}\ldots C_mR$ is the \textit{relative} reference with respect to the sender $C_{i}$. To transfer the system $C_{i}$, the sender performs an \textit{incomplete measurement} on his composite system $C^0_i \otimes C_i$, destroying most of the correlation with the relative reference $\tilde{R}$:
 \begin{equation}
 \label{eq:decoupleR}
  \psi^{C^1_i \tilde{R}}_{j_i} \approx \tau^{C^1_i} \otimes \psi^{\tilde{R}},
 \end{equation}
 where $\psi^{C^1_i \tilde{R}}_{j_i}$ is the reduced state of $\psi^{C^1_i \tilde{B} \tilde{R}}_{j_i}$ for an outcome $j_i$. After this measurement, the senders are not entirely decoupled with the reference as correlation may exist between the senders $C_{i+1}, C_{i+2}, \ldots, C_{m}$ and the reference. Since $\Phi^{L_i} \otimes \psi^{B_i\tilde{B}\tilde{R}}$ purifies the state on the right hand side of eq.~(\ref{eq:decoupleR}), with $B_i$ being a system of the same dimension as $C_i$, there exists an isometry (Corollary \ref{cor:Ulhmann}) $V : \tilde{B}B^0_i \rightarrow B_i \tilde{B} B^1_i$ implementable by the receiver which allows the recovery of the system $C_i$ and $\log(L_i)$ ebits. The protocol continues in this fashion, with $C_{i+1}$ being the next sender to measure his systems (see Figure \ref{fig:relativereference}).

 To achieve multiparty state merging when the senders simultaneously measure their systems, as in Figure \ref{fig:multipartymerging}, we require the measurements to ``fully'' decouple each sender from the reference and all other senders. More precisely, suppose each of the senders $C_1, C_2, \ldots, C_m$ performs an incomplete measurement, described by Kraus operators $P^j_i: C^0_iC_i \rightarrow C^1_i$ mapping $C^0_iC_i$ to a subspace $C^1_i$. Then, the reduced post-measurement states $\psi^{C^1_MR}_{J_M}$, where $J_M:= (j_1, j_2, \ldots, j_m)$ are the measurement outcomes, must satisfy the stronger requirement that for all outcomes $J_M$
 \begin{equation*}
    \psi^{C^1_M R}_{J_M} \approx \tau^{C^1_1} \otimes \tau^{C^1_2} \otimes \ldots \otimes \tau^{C^1_m} \otimes \psi^{R},
 \end{equation*}
where $\tau^{C^1_i}$ is a maximally mixed state of dimension $L_i$.

Let's consider the case where the state $\psi^{C^1_MR}_{J_M}$ is exactly in the
product form $\tau^{C^1_M} \otimes \psi^{R}$.  The state
$\projstate$ purifies $\tau^{C^1_M} \otimes \psi^R$, with
purification systems $B^0_MB$. Another purification of
$\tau^{C^1_M} \otimes \psi^R$ is also given by $\Phi^{L_M} \otimes
\mergestate$, where the state $\mergestate$ corresponds to the
original state $\initstate$. It follows from the Schmidt decomposition (Theorem \ref{thm:schmidtdecomp}) that these two
purifications are related by a partial isometry $U_{J_M}: B^0_MB \rightarrow
B^1_MBB_M$ on the receiver's side such that
\begin{equation*}
(I^{C^1_MR} \otimes U_{J_M}) \projstate (I^{C^1_MR} \otimes U_{J_M})^{\dag} = \Phi^{L_M} \otimes \mergestate.
\end{equation*}
Hence, if each sender can perfectly decouple his system from the other senders and the reference, the sender's systems can be recovered at the receiver's location by applying a partial isometry $U_{J_M}$ on the systems of the receiver, which will also distill $\log(L_M)$ ebits.

The previous scenario was ideal, and in general, will not be feasible for most states $\initstate$. Hence, we relax our decoupling requirement and accept that the measurements performed by the senders will perturb the reference $\psi^{R}$ up to some tolerable disturbance, and that a small dose of correlation between the senders' shares might still be present. In more formal terms, a multiparty state merging protocol consists of the following steps (also depicted in Figure \ref{fig:multipartymerging}):

\begin{enumerate}
 \item Each sender $C_{i}$ applies a quantum instrument ${{\cal I}_i} :=\{\calE^i_j\}^X_{j=1}$ to his share of the state $\initstate \otimes \Phi^{K_M}$. The instrument components $\calE^i_j$ map the space $C_iC^0_i$ to a subspace $C^1_i$ of dimension $L_i$.
 \item The senders $C_1,C_2,\ldots,C_m$ send their classical outputs $J_M:=(j_1,j_2,\ldots,j_m)$ to the decoder $B$.
 \item The decoder uses his side information $\psi^{B}$, his share of the maximally entangled states $\Phi^{K_M}$, and the classical information $J_M$ to perform a decoding operation ${\cal D}_{J_M}:BB^0_M\rightarrow B^1_MBB_M$ (i.e a trace-preserving completely positive map (TP-CPM)) and recover the state $\initstate \otimes \Phi^{L_M}$.
\end{enumerate}
The state of the systems $C^1_MB^0_MBR$ after steps 1 and 2 are performed can be written as:
 \begin{equation}\label{eq:LOCC}
   \begin{split}
     \psi^{C^1_MB^0_MBR} &:= \sum_{J_M} (\mathrm{id}^{B^0_MBR} \otimes \calE_{J_M})(\initstate \otimes \Phi^{K_M}) \\
   &=\sum_{J_M} p_{J_M} \psi_{J_M}^{C^1_MB^0_MBR}\\
   \end{split}
 \end{equation} where $\calE_{J_M} := {\calE}^1_{j_1} \otimes {\calE}^2_{j_2} \otimes \ldots \otimes {\calE}^m_{j_m}$ and $\psi_{J_M}^{C^1_MB^0_MBR}$ is the normalized version of the state $(\mathrm{id}^{B^0_MBR} \otimes \calE_{J_M})(\initstate \otimes \Phi^{K_M})$. If we restrict the operators $\calE^i_j$ to consist of only one Kraus operator
  \begin{equation*}
   \calE^i_j(\rho)=A^i_j \rho (A^i_j)^{\dag} \quad \text{for all $i,j$}
   \end{equation*} and to satisfy
  \begin{equation*}
  \sum_j (A^i_j)^{\dag}A^i_j = I^{C_i},
   \end{equation*} the outcome states $\psi^{C^1_MB^0_MBR}_{J_M}$ are pure and are the result of performing $m$ incomplete measurements, one for each sender $C_i$.

\begin{Proposition}[Compare to Proposition 4 of \cite{merge}] \label{prop:mergeCond}
 Let $\psi_{J_M}^{C^1_MB^0_MBR}$ be defined as in eq.~(\ref{eq:LOCC}), with reduced density operator
 $\psi_{J_M}^{C^1_MR}$. Define the following quantity:
  \begin{equation*}\label{quanterror:eq}
    Q_{{\cal I}}(\initstate \otimes \Phi^{K_M}) := \sum_{J_M} p_{J_M}  \| \psi_{J_M}^{C^1_MR} -
    \tau^{C^1_M} \otimes \psi^{R} \|_1,
  \end{equation*}where $p_{J_M}$ is the probability of obtaining the state $\psi_{J_M}^{C^1_MB^0_MBR}$ after all the senders have performed their instruments.
If $Q_{\cal I}(\initstate \otimes \Phi^{K_M}) \leq \epsilon$, then there exists an $m$-party state merging protocol for the state $\initstate$ with error $2\sqrt{\epsilon}$ and entanglement cost $\overrightarrow{E} = (\log K_1-\log L_1, \log K_2 -\log L_2, \ldots, \log K_m - \log L_m)$, where $L_i := d_{C_i^1}$ for all $1\leq i \leq m$.
\end{Proposition}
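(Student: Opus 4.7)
The plan is to mimic the two-party state merging argument: use the decoupling bound $Q_{\cI}\leq\epsilon$ to produce, for each outcome $J_M$, a local partial isometry at the receiver that maps the post-measurement state close to the target $\mergestate\otimes\Phi^{L_M}$. The decoder $\cD_{J_M}$ will be that isometry (extended arbitrarily to a TP-CPM), applied conditionally on the classical string $J_M$ broadcast by the senders, so the entire protocol $\Lambda^m_{\rightarrow}$ is manifestly LOCC.

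For the construction, I may assume without loss of generality (by Stinespring-dilating each instrument component into a sender-held environment absorbed into an extended classical outcome label) that the outcome states $\ket{\psi_{J_M}}^{C^1_M B^0_M B R}$ are pure. They purify $\psi_{J_M}^{C^1_M R}$ with purifying system $B^0_M B$, whereas $\ket{\Phi^{L_M}}\otimes\ket{\psi}^{B_M B R}$ purifies the target reduced state $\tau^{C^1_M}\otimes\psi^R$ with purifying system $B^1_M B B_M$. By Corollary~\ref{cor:Ulhmann} applied with $\ket{\psi_{J_M}}$ as the fixed purification, there exists a partial isometry $U_{J_M}:B^0_M B\to B^1_M B B_M$ acting on receiver systems only and satisfying
\[
F\!\left(U_{J_M}\,\psi_{J_M}^{C^1_M B^0_M B R}\,U_{J_M}^{\dag},\ \Phi^{L_M}\otimes\mergestate\right) = F\!\left(\psi_{J_M}^{C^1_M R},\ \tau^{C^1_M}\otimes\psi^R\right).
\]

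For the error bound, I chain the Fuchs--van de Graaf inequalities of eq.~(\ref{Lemma:relation}) with the Uhlmann identity above. Using $D\leq\sqrt{1-F^2}$ on the purifications and then $1-F\leq D$ on the reduced states,
\[
\big\|U_{J_M}\psi_{J_M}U_{J_M}^{\dag}-\Phi^{L_M}\otimes\mergestate\big\|_1 \leq 2\sqrt{1-F^2} \leq 2\sqrt{\big\|\psi_{J_M}^{C^1_M R}-\tau^{C^1_M}\otimes\psi^R\big\|_1},
\]
where I have used $1-F^2=(1-F)(1+F)\leq 2(1-F)\leq\|\cdot\|_1$. Averaging over $J_M$ with weights $p_{J_M}$ and invoking concavity of $\sqrt{\,\cdot\,}$ (Jensen),
\[
\sum_{J_M} p_{J_M}\,\big\|U_{J_M}\psi_{J_M}U_{J_M}^{\dag}-\Phi^{L_M}\otimes\mergestate\big\|_1 \leq 2\sqrt{Q_{\cI}(\initstate\otimes\Phi^{K_M})} \leq 2\sqrt{\epsilon}.
\]
Since the output of $\Lambda^m_{\rightarrow}$ is the classical mixture $\sum_{J_M} p_{J_M}\,U_{J_M}\psi_{J_M}U_{J_M}^{\dag}$, the triangle inequality turns this average bound into the trace-norm bound on $(\mathrm{id}_R\otimes\Lambda^m_{\rightarrow})(\initstate\otimes\Phi^{K_M})-\mergestate\otimes\Phi^{L_M}$ demanded by the definition of $m$-party state merging, and the entanglement cost is $(\log K_i-\log L_i)_i$ by construction.

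The main point to get right, rather than an obstacle, is identifying the correct pair of purifications so that Uhlmann supplies an isometry acting only on $B^0_M B$; this is what keeps the decoding genuinely local. A harmless technicality is that when $d_{B^0_M B}\neq d_{B^1_M B B_M}$ the operator $U_{J_M}$ is only a partial isometry, but any TP-CPM extension of $\cD_{J_M}=U_{J_M}(\cdot)U_{J_M}^{\dag}$ off its support is acceptable, since the displayed bound is insensitive to the extension.
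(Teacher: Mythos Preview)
Your proof is correct and follows essentially the same approach as the paper: Uhlmann (Corollary~\ref{cor:Ulhmann}) to build the receiver isometry $U_{J_M}$, then Fuchs--van de Graaf to convert between fidelity and trace norm, landing on the same $2\sqrt{\epsilon}$ bound. The only cosmetic difference is the order of operations: the paper first averages the fidelities (using $1-F\leq D$ to get $\sum_{J_M}p_{J_M}F\geq 1-\epsilon/2$), then invokes concavity of $F$ in its first argument on the mixed output, and finally converts back to trace norm; you instead bound the per-outcome trace norm first, average with Jensen on $\sqrt{\cdot}$, and finish with the triangle inequality. Both routes use the same ingredients and yield the same constant.
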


\begin{proof} The proof of the above statement is very similar to the proof of Proposition 4 in \cite{merge}. We give the full proof here for completeness. Using the relation between trace distance and fidelity (see eq.~(\ref{Lemma:relation})), we have
\begin{equation*}
\sum_{J_M} p_{J_M} F(\reducstate,\decouplestate) \geq 1 - \frac{\epsilon}{2}.
\end{equation*}
From Corollary \ref{cor:Ulhmann} of Ulhmann's theorem, there exists a partial isometry (i.e., a
decoding) $U_{J_M}:B^0_MB \rightarrow B^1_MBB_M$ implementable by the receiver
such that
\begin{equation*}
F(\reducstate,\decouplestate) = F\biggl ((I^{C^1_MR} \otimes
U_{J_M})\projstate (I^{C^1_MR} \otimes U_{J_M})^{\dag},\Phi^{L_M} \otimes \mergestate \biggr ).
\end{equation*}
Using the concavity of $F$ (see \cite{Nielsen} for a proof) in its first argument, we have
\begin{equation*}
  \begin{split}
  F(&\psi_{out}^{C^1_MB^1_MB_MBR}, \Phi^{L_M} \otimes \mergestate)\\
  & \phantom{======} \geq  \sum_{J_M}p_{J_M} F\biggl ((I^{C^1_MR} \otimes
U_{J_M})\projstate (I^{C^1_MR} \otimes U_{J_M})^{\dag},\Phi^{L_M} \otimes \mergestate \biggr )  \\
  & \phantom{======} \geq 1- \frac{\epsilon}{2}, \\
   \end{split}
\end{equation*} where
  \begin{equation*}
   \psi_{out}^{C^1_MB^1_MBB_MR} := \sum_{J_M}p_{J_M} (I^{C^1_MR} \otimes U_{J_M})\projstatebraket (I^{C^1_MR} \otimes U_{J_M})^{\dag}
  \end{equation*}
is the output state of the protocol. Using the relation between fidelity and trace distance
once more, we arrive at
 \begin{equation*}
 \biggl \| \psi_{out}^{C^1_M B^1_MBB_MR} -  \Phi^{L_M} \otimes \mergestate \biggr \|_1 \leq 2\sqrt{\epsilon-\epsilon^2/4} \leq 2\sqrt{\epsilon}.
 \end{equation*}
\end{proof}

\subsection{Merging by random measurements} \label{sec:random-meas}

In this section, we give a bound on the \textit{decoupling} error when a measurement-based random coding strategy is used to achieve a multiparty state merging of the state $\psi^{C_MBR}$. The senders $C_i$ will simultaneously measure their systems $C_iC^0_i$ using $N_i = \lfloor \frac{d_{C_i}K_i}{L_i} \rfloor$ projectors of rank $L_i$, and a little remainder, followed by a unitary mapping the outcome state to a subspace $C^1_i$.

\begin{Proposition}[One-Shot Multiparty State Merging] \label{prop:isometry}
Let $\initstate \otimes \Phi^{K_M}$ be a multipartite state shared between $m$ senders and a receiver $B$. For each
sender $C_i$, there exists an instrument ${\cal I}_i=\{{\cal E}^i_j\}$ consisting of
$N_i := \lfloor \frac{d_{C_i}K_i}{L_i} \rfloor$ partial isometries of rank $L_i$ and one of rank $L_i'= d_{C_i}K_i - N_i L_i < L_i$
such that the overall decoupling error $Q_{{\cal I}}(\initstate \otimes \Phi^{K_M})$ is bounded
by
\begin{equation} \label{eq:upperbound}
  \begin{split}
  Q_{\cal I} &\leq 2 \sum_{\substack{{\cal T} \subseteq \{1,2,...,m\} \\ {\cal T} \notin \emptyset}} \prod_{i \in {\cal T}}\frac{L_i}{d_{C_i}K_i} + 2\sqrt{d_{R} \sum_{\substack{{\cal T} \subseteq \{1,2,\ldots,m\} \\ {\cal T}
\neq \emptyset}}  \prod_{i \in {\cal T}} \frac{L_i}{K_i} \Tr \bigg [ \psi^2_{R{\cal T}} \bigg ]} =: \Delta_{\cal I}, \\
  \end{split}
\end{equation}
and there is a merging protocol with error at most $2\sqrt{\Delta_{\cal I}}$.
\end{Proposition}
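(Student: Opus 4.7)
The plan is to exhibit a suitable instrument via random coding and then invoke Proposition \ref{prop:mergeCond}. For each sender $C_i$, fix a complete set of orthogonal projectors $\{P^{C^0_iC_i}_{j_i}\}_{j_i=0}^{N_i}$ on $C^0_iC_i$, where $P^{C^0_iC_i}_{j_i}$ has rank $L_i$ for $1 \leq j_i \leq N_i$ and the remaining block $P^{C^0_iC_i}_{0}$ has rank $L_i' = d_{C_i}K_i - N_iL_i < L_i$, together with a partial isometry $W_i$ identifying the image of each full-rank block with $C^1_i$. Draw Haar-random unitaries $V_1,\ldots,V_m$ on the spaces $C^0_iC_i$ independently and define the instrument components $\calE^i_{j_i}(\rho) := W_i P^{C^0_iC_i}_{j_i} V_i \rho V_i^{\dag} P^{C^0_iC_i}_{j_i} W_i^{\dag}$, each of which is a partial isometry of rank $L_i$ or $L_i'$. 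Since $Q_{\cal I}$ is now a random variable in $(V_1,\ldots,V_m)$, it suffices to show that $\mathbb{E}[Q_{\cal I}] \leq \Delta_{\cal I}$: some realization of the unitaries then achieves at most $\Delta_{\cal I}$, and Proposition \ref{prop:mergeCond} produces the merging protocol with error $2\sqrt{\Delta_{\cal I}}$.

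Split the outcomes $J_M = (j_1,\ldots,j_m)$ according to the ``bad'' subset ${\cal S}(J_M):=\{i: j_i=0\}$ of senders falling in the remainder block. For outcomes with ${\cal S}(J_M)={\cal S}\neq\emptyset$, the contribution to $Q_{\cal I}$ is at most $2$ times the total probability of such outcomes, and averaging over the Haar-random $V_i$ for $i\in{\cal S}$ shows this probability is $\prod_{i\in{\cal S}}L_i'/(d_{C_i}K_i)\leq \prod_{i\in{\cal S}}L_i/(d_{C_i}K_i)$. Summing over all non-empty subsets reproduces the first summand $2\sum_{{\cal T}\neq\emptyset}\prod_{i\in{\cal T}}L_i/(d_{C_i}K_i)$ of $\Delta_{\cal I}$. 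For the remaining ``clean'' outcomes (all $j_i\neq 0$), apply Lemma \ref{Lemma:TrSchmidt} with reference $\sigma = \tau^{C^1_M}\otimes \psi^R$ to upper bound each trace-norm decoupling error by a weighted Hilbert--Schmidt norm, then invoke Jensen's inequality for $\sqrt{\cdot}$ to move $\mathbb{E}$ and the weighted sum $\sum_{J_M}p_{J_M}$ under a single square root. Expanding the resulting Hilbert--Schmidt norm squared as a trace and carrying out the $m$ independent Haar integrals using the twirling identity proved in Appendix A yields a sum indexed by subsets ${\cal T}\subseteq\{1,\ldots,m\}$: the empty subset reproduces the cross term that cancels against $\sigma$, while each non-empty ${\cal T}$ contributes $\prod_{i\in{\cal T}}(L_i/K_i)\Tr[\psi^2_{R{\cal T}}]$ after combining the twirl factors with the dimensional prefactor $d_R\,d_{C^1_M}$.

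The main obstacle is the simultaneous Haar averaging over $m$ independent unitaries, which naively could produce intractable cross terms. What makes the calculation tractable is the product structure: the twirl on each sender is performed independently, and the resulting $2^m$ subset contributions assemble precisely into the reduced traces $\Tr[\psi^2_{R{\cal T}}]$ on the reference, matching the subset decomposition above. This is exactly the content of the twirling average of Appendix A; once it is in hand, combining the remainder and clean contributions yields $\mathbb{E}[Q_{\cal I}]\leq \Delta_{\cal I}$, and Proposition \ref{prop:mergeCond} completes the proof.
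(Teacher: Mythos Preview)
Your overall architecture matches the paper's: random instruments, separate the ``remainder'' outcomes and bound them by their total probability, and control the clean outcomes by a Hilbert--Schmidt estimate plus the twirling identity of Appendix~A. Two steps in the clean-outcome analysis do not go through as written, however.

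First, the normalization. You apply Lemma~\ref{Lemma:TrSchmidt} directly to the \emph{normalized} errors $\|\psi^{C^1_MR}_{J_M}-\tau^{C^1_M}\otimes\psi^R\|_1$ and then push $\sum_{J_M}p_{J_M}$ and $\mathbb E$ under the square root by Jensen. After that, the quantity you must Haar-average is $\sum_{J_M}p_{J_M}\|\cdots\|_2^2$, and since $p_{J_M}\psi_{J_M}=\omega_{J_M}$, this expands as $\sum_{J_M}\tfrac{1}{p_{J_M}}\Tr[(\tilde\omega_{J_M}-p_{J_M}\tilde\sigma)^2]$. The random factor $1/p_{J_M}$ depends on the $V_i$'s in a way that the swap trick and the twirling formula of Proposition~\ref{prop:twirl} cannot absorb; you no longer have a clean quadratic in $\omega_{J_M}$ on which to run the $(U\otimes\tilde U)$-average. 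The paper avoids this by working throughout with the \emph{sub-normalized} states: Lemma~\ref{Lemma:rdmisometry} bounds $\int\|\omega_{J_M}-\tfrac{L_M}{d_{C_M}K_M}\tau\otimes\psi^R\|_1\,dU_M$ directly (no $p_{J_M}$ anywhere), then one traces out to control $|p_{J_M}-L_M/(d_{C_M}K_M)|$ by the same quantity, and the triangle inequality combines the two. That last step is exactly where the factor $2$ in front of the square root in $\Delta_{\cal I}$ comes from---something your sketch never produces.

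Second, the choice $\sigma=\tau^{C^1_M}\otimes\psi^R$ in Lemma~\ref{Lemma:TrSchmidt} does not yield the terms $d_R\,\Tr[\psi^2_{R\cal T}]$. With that $\sigma$ one has $\Tr\sigma=1$ and the $\sigma^{-1/4}$-sandwich introduces $(\psi^R)^{-1/4}$ factors, so the twirl produces the conditional collision entropies $\Tr[\tilde\psi^2_{R\cal T}]=2^{-H_2(\psi^{\cal T R}|\psi^R)}$ (this is precisely the route taken later in Lemma~\ref{Lemma:oneshotdecouple}), not $\Tr[\psi^2_{R\cal T}]$. The ``dimensional prefactor $d_R\,d_{C^1_M}$'' you invoke only appears if one uses the cruder bound $\|X\|_1^2\leq L_Md_R\|X\|_2^2$ of Lemma~\ref{lem:relHS} (equivalently $\sigma=I$), which is exactly what the paper does inside Lemma~\ref{Lemma:rdmisometry}. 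So to recover the stated $\Delta_{\cal I}$ you should cite Lemma~\ref{lem:relHS} on the sub-normalized difference, then handle normalization via the trace-out/triangle-inequality step above.
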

\newcommand \omeg {\omega^{C^1_MR}(U_M)}
\newcommand \omegJ {\omega^{C^1_MR}_{J_M}(U_M)}

\newcommand \omegapos {\omega^{\tilde{C}^1_M\tilde{R}}(U_M)}
\newcommand \omegaposa {\omega^{\widetilde{C^1_MR}}(U_M)}
\newcommand \omegsquare {\omega^2_{C^1_MR}(U_M)}
\newcommand \decouple {\frac{L_M}{d_{C_M}} \tau^{C^1_M} \otimes \psi^{R}}
\newcommand \decoupled {\frac{L_M}{d_{C_M}K_M} \tau^{C^1_M} \otimes \psi^{R}}

\newcommand {\expec}[1]{E \bigg [ #1 \bigg ]}
\newcommand {\smexpec}[1]{E[#1]}

To prove this proposition, we show that the average decoupling error when the senders perform random instruments using the Haar measure is bounded from above by the right hand side of eq.~(\ref{eq:upperbound}). We will need the following technical
lemma, which generalizes Lemma 6 in \cite{merge} to the case of
$m$ senders. The proof follows a similar line of reasoning.
\begin{Lemma}[Compare to Lemma 6 in \cite{merge}] \label{Lemma:rdmisometry}
For each sender $C_i$, let $Q_i$ be a projector of dimension $L_i$ onto a subspace $C^1_i$ of $C_i$ and $U_i$ a unitary acting on $C_i$. Define the sub-normalized density operator
\begin{equation*}
 \omeg  := (Q_1U_1 \otimes Q_2U_2 \otimes \ldots \otimes Q_mU_m \otimes I^{R}) \psi^{C_MR}
(Q_1U_1 \otimes Q_2U_2  \otimes \ldots \otimes Q_mU_m \otimes I^{R})^{\dag},
\end{equation*}
where $U_M:=U_1 \otimes U_2 \otimes \ldots \otimes U_m$. We have
\begin{equation}\label{eq:decoupling}
\begin{split}
\displaystyle
\int_{\mathbb{U}(C_1)}\int_{\mathbb{U}(C_2)}\cdots\int_{\mathbb{U}(C_m)}
\bigg \| &\omeg  - \decouple \bigg \|_1 dU_M  \\
&\leq \frac{L_M}{d_{C_M}} \sqrt{d_{R} \sum_{\substack{{\cal T} \subseteq
\{1,2,\ldots,m\}\\ {\cal T} \neq \emptyset}}  \prod_{i \in {\cal
T}} L_i \Tr \bigg [ \psi^2_{R{\cal T}} \bigg ]},
\end{split}
\end{equation}
where the average is taken over the unitary groups $\mathbb{U}(C_1), \mathbb{U}(C_2), \ldots, \mathbb{U}(C_m)$ using the Haar measure. Here $dU_M:=dU_1dU_2\ldots dU_m$ and $\int_{\mathbb{U}(C_i)} dU_i=1$ for all $1 \leq i \leq m$.
\end{Lemma}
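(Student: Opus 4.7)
The plan is to follow the standard decoupling-by-$L^2$ approach, extended to handle $m$ independent Haar averages. I would first bound the trace norm by Hilbert--Schmidt: letting $X(U_M) := \omeg - \decouple$, since $X(U_M)$ is supported on the space $C^1_M R$ of dimension $L_M d_R$, Lemma~\ref{Lemma:TrSchmidt} with $\sigma = I^{C^1_M R}$ gives $\|X(U_M)\|_1 \leq \sqrt{L_M d_R}\,\|X(U_M)\|_2$. Jensen's inequality then pulls the Haar average inside a square root, reducing the task to showing
\begin{equation*}
\int \Tr\bigl[X(U_M)^2\bigr]\,dU_M \;\leq\; \frac{L_M}{d_{C_M}^2}\sum_{\substack{{\cal T}\subseteq\{1,\ldots,m\}\\ {\cal T}\neq\emptyset}}\prod_{i\in{\cal T}} L_i\,\Tr\bigl[\psi^2_{R{\cal T}}\bigr].
\end{equation*}

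Next I would expand the square into three pieces: $\Tr[\omegsquare]$, a cross term linear in $\omeg$, and a deterministic constant. A single-copy Haar integration on each $U_i$ yields $\int (Q_iU_i\otimes I^R)\rho^{C_iR}(U_i^\dag Q_i\otimes I^R)\,dU_i = (L_i/d_{C_i})\,\tau^{C^1_i}\otimes \Tr_{C_i}[\rho^{C_iR}]$, and iterating over all senders gives $\int\omeg\,dU_M = \decouple$. Using this together with $\Tr[(\tau^{C^1_M}\otimes\psi^R)^2] = \Tr[\psi_R^2]/L_M$, the cross term plus constant collapse to $-(L_M/d_{C_M}^2)\Tr[\psi_R^2]$, which is exactly the negative of the ${\cal T}=\emptyset$ summand in the target sum. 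It therefore suffices to bound $\int\Tr[\omegsquare]\,dU_M$ by the corresponding sum over \emph{all} subsets, including $\emptyset$.

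For $\int\Tr[\omegsquare]\,dU_M$ I would use the swap trick $\Tr[\omegsquare] = \Tr\bigl[(\omeg\otimes\omeg)\,\mathbb{F}_{C^1_M}\otimes\mathbb{F}_R\bigr]$, where $\mathbb{F}$ denotes the swap between two copies. Since each $U_i$ acts only on $C_i$ and the $U_i$ are integrated independently, the two-copy Haar integral factorizes across senders. The standard two-copy twirl, together with $Q_i^{\otimes 2}\mathbb{F}_{C_i}Q_i^{\otimes 2} = \mathbb{F}_{C^1_i}$, gives on each sender's factor a linear combination of $I^{C^1_i C^1_i}$ (coefficient of order $L_i^2/d_{C_i}^2$) and $\mathbb{F}_{C^1_i}$ (coefficient of order $L_i/d_{C_i}^2$). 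Expanding the $m$-fold tensor product produces $2^m$ terms indexed by subsets ${\cal T}$, where one picks the swap for $i\in{\cal T}$ and the identity for $i\notin{\cal T}$. Contracting such a term against $\psi\otimes\psi$ and $\mathbb{F}_R$ and then tracing out $\overline{{\cal T}}$ leaves precisely $\Tr[\psi^2_{R{\cal T}}]$ with overall coefficient $(L_M\prod_{i\in{\cal T}}L_i)/d_{C_M}^2$, giving the claimed sum.

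The main obstacle I anticipate is the clean bookkeeping of the two-copy Haar coefficients with the projectors $Q_i$ inserted---in particular, controlling the $1/(d_{C_i}^2-1)$ correction terms so that the final coefficients come out as the claimed $L_i^2/d_{C_i}^2$ and $L_i/d_{C_i}^2$ (upper bounds are enough). Verifying the exact matching of the ${\cal T}=\emptyset$ contribution against the cross-term residue is then straightforward but requires care. Once the per-sender twirl formula is in hand, the rest is multilinear algebra and partial-trace identities of the kind used in the single-sender decoupling arguments of \cite{Hayden001,merge}.
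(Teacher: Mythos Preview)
Your approach matches the paper's proof essentially step for step: reduce to $L^2$ via the dimension bound, identify $\int\omeg\,dU_M$ via iterated single-copy twirls (Lemma~\ref{lem:tensorlemma}), apply the swap trick and factorize the two-copy Haar integral using Proposition~\ref{prop:twirl}, expand into $2^m$ subset-indexed terms, and cancel the ${\cal T}=\emptyset$ contribution against the variance subtraction. One small slip to fix when you write it out: the twirl coefficients are reversed---the identity term carries $r_i\leq L_i/d_{C_i}^2$ and the swap term carries $s_i\leq L_i^2/d_{C_i}^2$ (and the result of the twirl lives on the full $C_i\tilde C_i$, not on $C^1_i\tilde C^1_i$)---though your stated overall coefficient $(L_M\prod_{i\in{\cal T}} L_i)/d_{C_M}^2$ is correct.
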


\newcommand{\inte}{\int_{\mathbb{U}(C_M)}}
\begin{proof} For the remainder of this proof, write the integral $\int_{\mathbb{U}(C_1)}\int_{\mathbb{U}(C_2)}\cdots\int_{\mathbb{U}(C_m)}dU_M$ as $\int_{\mathbb{U}(C_M)}dU_M$. First, we simplify the integral using Lemma \ref{lem:tensorlemma}, found in Appendix~A:
\begin{equation}
  \begin{split} \label{variance:eq}
 \inte \bigg \| &\omeg - \decouple \bigg \|^2_2 dU_M \\
  &=  \inte \bigg \| \omeg -\inte \omeg dU_M \bigg \|^2_2 dU_M \\
  &=  \inte \Tr\bigg [\omegsquare\bigg ] dU_M - \Tr\biggl [ \bigg (\inte \omeg dU_M\bigg )^2 \biggr ],
 \end{split}
\end{equation}
To evaluate the average of $\Tr[\omegsquare]$, we use the swap ``trick'' (Lemma \ref{lem:swaptrick}):
\begin{equation*}
\begin{split}
 \Tr[\omegsquare] &= \Tr \bigg [ \bigg (\omeg \otimes \omegaposa\bigg ) F^{C^1_MR,\widetilde{C^1_MR}}  \bigg ] \\
 &= \Tr \bigg [ \bigg(\omeg \otimes \omegapos\bigg ) F^{C^1_M\tilde{C}^1_M} \otimes F^{R\tilde{R}} \bigg]. \\
 \end{split}
\end{equation*}
The second line was obtained using Lemma \ref{lem:swaptensor}. By expanding the right hand side of this equality, the average of $\Tr[\omegsquare]$ becomes equal to:
\newcommand {\swapC} {F^{C^1_M\tilde{C}^1_M}}
\newcommand {\swapCs} {F^{\scriptscriptstyle C^1_M\tilde{C}^1_M}}
\newcommand {\swapRs} {F^{\scriptscriptstyle R\tilde{R}}}
\newcommand \swapR {F^{R\tilde{R}}}
\newcommand \UU {UU_{C_M\tilde{C}_M}}
\begin{equation}
  \begin{split}
 &\inte \Tr\bigg [\omegsquare \bigg ] dU_M \\
 &= \inte \Tr \bigg [\bigg (\omeg \otimes \omegapos \bigg )\swapC \otimes \swapR \bigg ] dU_M \\
&= \inte \Tr \bigg [(U_{\scriptscriptstyle M} \otimes \tilde{U}_{\scriptscriptstyle M}
\otimes I^{\scriptscriptstyle R\tilde{R}} ) ( \psi^{\scriptscriptstyle C_MR} \otimes
\psi^{\scriptscriptstyle \tilde{C}_M\tilde{R}}) (U_{\scriptscriptstyle M} \otimes \tilde{U}_{\scriptscriptstyle M} \otimes I^{\scriptscriptstyle R\tilde{R}})^{\dag}(\swapCs \otimes \swapRs) \bigg ] dU_M \\
&= \Tr \bigg [ (\psi^{C_MR} \otimes \psi^{\tilde{C}_M\tilde{R}} )\bigg (\inte (U_M \otimes \tilde{U}_M)^{\dag} \swapC (U_M \otimes
\tilde{U}_M)dU_M \bigg ) \otimes \swapR \bigg ], \label{avgomega:eq}
  \end{split}
\end{equation}
where the unitary $\tilde{U}_M$ is a ``copy'' of $U_{M}$ which
acts on $\tilde{C}_M$. Observe that the projections $Q_1, Q_2, \ldots, Q_m$ from the
state $\omega^{C^1_MR}$ were absorbed by the swap operators:
\begin{equation*}
F^{C^1_i\tilde{C}^1_i} = Q_i \otimes Q_i F^{C_i\tilde{C}_i} Q_i \otimes Q_i.
\end{equation*} The average $\inte (U_{M} \otimes \tilde{U}_M)^{\dag} \swapC(U_{M} \otimes
\tilde{U}_{M})dU_M$ is expanded using Lemma \ref{lem:swaptensor}:
\begin{equation} \label{eq:expand1}
 \inte (U_M \otimes \tilde{U}_M)^{\dag} (\swapC) (U_M \otimes \tilde{U}_M) dU_M =
   \bigotimes^m_{i=1} \inte (U_i \otimes \tilde{U}_i)^{\dag} F^{C^1_i\tilde{C}^1_i} (U_i \otimes \tilde{U}_i) dU_i. \\
\end{equation}
Using Proposition \ref{prop:twirl}, we have
\begin{equation}\label{eq:expand2}
\inte (U_i \otimes \tilde{U}_i)^{\dag} F^{C^1_i\tilde{C}^1_i} (U_i \otimes \tilde{U}_i) dU_i =r_i I^{C_i\tilde{C}_i} + s_i
F^{C_i\tilde{C}_i},
\end{equation}
where the coefficients $r_i$ and $s_i$ are given by
\begin{equation}
 \begin{split}\label{coeffs:eq}
   r_i = \frac{L_i(d_{C_i} - L_i)}{d_{C_i}(d^2_{C_i} - 1)} \leq \frac{L_i}{d^2_{C_i}}, \\
 s_i = \frac{L_i(L_id_{C_i} - 1)}{d_{C_i}(d^2_{C_i} -1)} \leq \frac{L^2_i}{d^2_{C_i}}. \\
\end{split}
\end{equation} Substituting eqs.~(\ref{eq:expand1}), (\ref{eq:expand2}) into eq.~(\ref{avgomega:eq}), we get
\begin{equation}
  \begin{split}\label{set:eq}
    \inte \Tr \bigg[ \omegsquare \bigg ]dU_M &= \Tr \bigg [ ( \psi^{C_MR} \otimes \psi^{\tilde{C}_M\tilde{R}}) \bigotimes_{i=1}^m
    \bigg (r_i I^{C_i\tilde{C}_i} + s_i F^{C_i\tilde{C}_i} \bigg ) \otimes F^{R\tilde{R}} \bigg ] \\
   &= \sum_{{\cal T} \subseteq \{1,2,\ldots,m\}} \prod_{i \notin {\cal T}} r_i \prod_{i \in {\cal T}} s_i \Tr \bigg
    [ \psi^2_{R{\cal T}} \bigg ], \\
  \end{split}
\end{equation}
where the symbol ${\cal T}$ appearing in $\psi^{R{\cal T}}$ denotes the
composite system $\otimes_{i \in {\cal T}}C_i$. When ${\cal T}$ is the empty
set, the second line of the previous equation reduces to
$\prod_{i=1}^m r_i \Tr[\psi^2_R]$. From eq.~(\ref{coeffs:eq}),
we can bound this quantity from above by:
\begin{equation*}
\begin{split}
\prod_{i=1}^m r_i \Tr[\psi^2_R] &\leq \frac{L_M}{d^2_{C_M}} \Tr[\psi^2_R] \\
& = \Tr \bigg [\frac{L_M^2}{d^2_{C_M}} \tau^2_{C^1_M} \otimes \psi^2_R \bigg ] \\
& = \Tr \bigg [ \bigg ( \frac{L_M}{d_{C_M}} \tau^{C^1_M} \otimes \psi^{R} \bigg )^2 \bigg ] \\
& = \Tr \biggl [ \bigg ( \inte \omeg dU_M \bigg )^2 \biggr ]. \\
\end{split}
\end{equation*}
Hence, using eqs.~(\ref{variance:eq}), (\ref{coeffs:eq}), (\ref{set:eq}) and the previous bound, we have
\begin{equation}\label{eq:decouple2}
  \inte \bigg \| \omeg - \decouple \bigg \|^2_2 dU_M \leq \sum_{\substack{{\cal T} \subseteq \{1,2,\ldots,m\}\\ {\cal T} \notin \emptyset}} \prod_{i \notin {\cal T}} \frac{L_i}{d^2_{C_i}} \prod_{i \in {\cal T}} \frac{L^2_i}{d^2_{C_i}} \Tr
  \bigg [ \psi^2_{R{\cal T}} \bigg ].
\end{equation}
To obtain a bound on the average of eq.~(\ref{eq:decoupling}), we use Lemma \ref{lem:relHS}:
\begin{equation*}
  \begin{split}
    \inte \bigg \| \omeg  - &\decouple \bigg \|^2_1 dU_M  \\
    & \leq L_M d_{R} \inte  \bigg \| \omeg - \decouple \bigg \|^2_2 dU_M \\ & \leq L_M d_{R} \sum_{\substack{{\cal T} \subseteq \{1,2,\ldots,m\}\\ {\cal T} \neq \emptyset}} \prod_{i \notin {\cal T}} \frac{L_i}{d^2_{C_i}} \prod_{i \in {\cal T}} \frac{L^2_i}{d^2_{C_i}} \Tr \bigg [ \psi^2_{R{\cal T}} \bigg ] \\ & \leq L_M^2 \frac{d_{R}}{d^2_{C_M}} \sum_{\substack{{\cal T} \subseteq \{1,2,\ldots,m\}\\ {\cal T} \neq \emptyset}}  \prod_{i \in {\cal T}} L_i  \Tr \bigg [ \psi^2_{R{\cal T}} \bigg ]. \\
   \end{split}
\end{equation*} Finally, using the concavity of the square root function, we have
\begin{equation*}
  \inte \bigg \| \omeg  - \decouple \bigg \|_1 dU_M \leq \frac{L_M}{d_{C_M}}
\sqrt{d_{R} \sum_{\substack{{\cal T} \subseteq \{1,2,\ldots,m\}\\
{\cal T} \neq \emptyset}}  \prod_{i \in {\cal T}} L_i \Tr \bigg [
\psi^2_{R{\cal T}} \bigg ]}.
\end{equation*}
\end{proof}
\begin{proof+}{of Proposition \ref{prop:isometry}} For each sender $C_i$, fix $N_i:= \lfloor \frac{d_{C_i}K_i}{L_i} \rfloor$ orthogonal subspaces $W^1_i, W^2_i, \ldots, W^{N_i}_i$ of dimensions $L_i$ and one subspace $W^0_i$ of dimension $L'_i = d_{C_i}K_i - N_iL_i < L_i$. For each subspace $W^j_i$, let $V^j_i$ be an isometry from the subspace $W^j_i$ to a subspace $C^1_i$ of dimension $L_i$. Let $Q^j_i := V^j_i P^j_i$ be partial isometries, where $P^j_i$ is the projector onto the subspace $W^j_i$. Note that $Q^0_i$ maps to a subspace of $C^1_i$ of dimension $L'_i < L_i$. Choose $m$ unitaries $U_1, U_2, \ldots, U_m$ using the Haar distribution, with $U_i$ acting on $C^0_iC_i$. Set the instrument ${\cal I}_i$ for the sender $C_i$ to have components ${\cal E}^j_i(\rho) := (Q^j_i U_i) \rho (Q^j_i U_i)^{\dag}$ for $0 \leq j \leq N_i$ (See Figure \ref{fig:instrument}).
 \begin{figure}[t]
  \centering
    \includegraphics{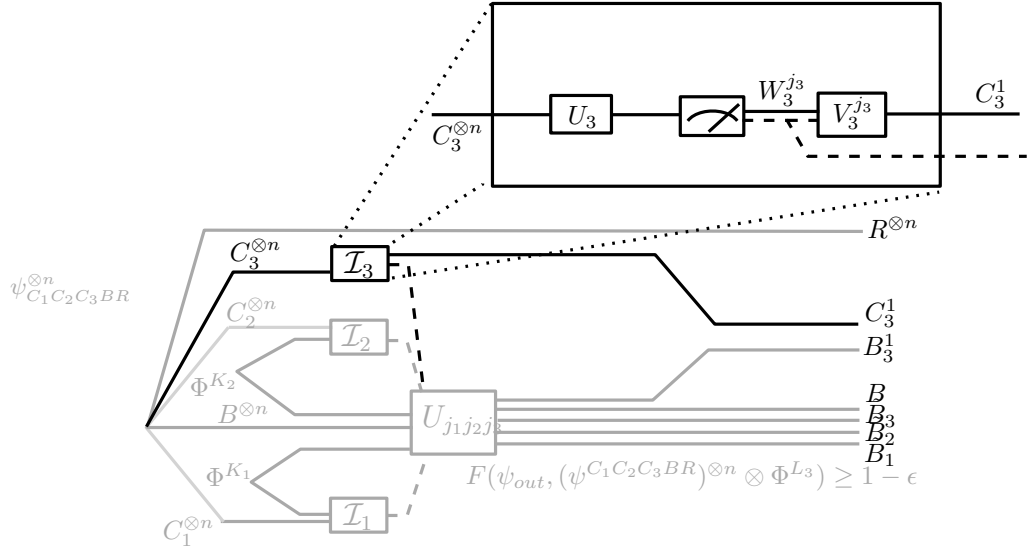}
\caption{Quantum circuit of the inner workings of an instrument ${\cal I}_3$ performed by the sender $C_3$ as described in the proof of Proposition \ref{prop:isometry}.}\label{fig:instrument}
\end{figure}

Define the state \[\omegJ := (Q^{j_1}_1U_1 \otimes \ldots \otimes Q^{j_m}_mU_m \otimes I^{R})\psi^{C_MR}\otimes \tau^{C^0_M}(Q^{j_1}_1U_1 \otimes \ldots \otimes Q^{j_m}_mU_m \otimes I^{R})^{\dag}.\] For an outcome $J_M=(j_1, j_2,\ldots, j_m)$, with $j_i \neq 0$ for all $1 \leq i \leq m$, the trace norm $\bigg \| \omegJ - \decoupled \bigg \|_1$ is bounded from above on average by:
\begin{equation}
\begin{split}
  \inte \sum_{j_{1}=1}^{N_1} \sum_{j_2=1}^{N_2} \cdots &\sum_{j_m=1}^{N_m} \bigg \| \omegJ
- \decoupled \bigg \|_1 dU_M \\
&\leq \bigg ( \prod_{i=1}^m N_i
\bigg ) \frac{L_M}{d_{C_M}K_M} \sqrt{d_{R} \sum_{\substack{{\cal T}
\subseteq \{1,2,...,m\} \\ {\cal T} \neq \emptyset}}  \prod_{i \in
{\cal T}} \frac{L_i}{K_i}  \Tr
\bigg [ \psi^2_{R{\cal T}} \bigg ]} \\
&\leq \sqrt{d_{R} \sum_{\substack{{\cal T} \subseteq \{1,2,...,m\}
\\ {\cal T} \neq \emptyset}}  \prod_{i \in {\cal T}}
\frac{L_i}{K_i}
\Tr \bigg [ \psi^2_{R{\cal T}} \bigg ]}, \\
\end{split}\label{eq:measurement}
\end{equation}
where we have used Lemma \ref{Lemma:rdmisometry} to get the bound, using the fact that
\begin{equation*}
\begin{split}
\bigg \| \omegJ - &\decoupled \bigg \|_1 \\
&= \\
\bigg \| (V_{J_M} \otimes I^R)\omega^{W}_{J_M}(U_M)(V_{J_M} \otimes I^R)^{\dag} - &(V_{J_M} \otimes I^R)(\frac{L_M}{d_{C_M}K_M}\tau^{W}_{J_M}\otimes \psi^R)(V_{J_M} \otimes I^R)^{\dag} \bigg \|_1, \\
\end{split}
\end{equation*}
where $V_{J_M} := V^{j_1}_1 \otimes V^{j_2}_2 \otimes \ldots \otimes V^{j_m}_m$ is the tensor product of isometries mapping the projected subspaces $W^{j_1}_1, W^{j_2}_2, \ldots, W^{j_m}_m$ to $C^1_M$, and the state $\omega^{W}_{J_M}$ is equal to:
\[
\omega^{W}_{J_M}(U_M) := (P^{j_1}_1U_1 \otimes \ldots \otimes P^{j_m}_mU_m \otimes I^{R})\psi^{C_MR}\otimes \tau^{C^0_M}(P^{j_1}_1U_1 \otimes \ldots \otimes P^{j_m}_mU_m \otimes I^{R})^{\dag}.\]
Taking the normalisation into account, with $p_{J_M}(U_M)=\Tr(\omega^{C^1_MR}_{J_M}(U_M))$ and $\psi^{C^1_MR}_{J_M}=\frac{1}{p_{J_M}}\omega^{C^1_MR}_{J_M}(U_M)$, we need to show that on average the $p_{J_M}(U_M)$ are close to $\frac{L_M}{d_{C_M}K_M}$. Looking at eq.~(\ref{eq:measurement}) and tracing out, we get
\begin{equation*}
  \inte \sum_{j_{1}=1}^{N_1} \sum_{j_2=1}^{N_2} \cdots \sum_{j_m=1}^{N_m} \bigg | p_{J_M}(U_M) -
\frac{L_M}{d_{C_M}K_M} \bigg | dU_M \leq \sqrt{d_{R}
\sum_{\substack{{\cal T} \subseteq \{1,2,\ldots,m\}
\\ {\cal T} \neq \emptyset}}  \prod_{i \in {\cal T}} \frac{L_i}{K_i} \Tr \bigg [ \psi^2_{R{\cal T}} \bigg ]}. \\
\end{equation*} Hence we obtain, using the triangle inequality,
\begin{equation*}
\begin{split}
 \inte \sum_{j_{1}=1}^{N_1} \sum_{j_2=1}^{N_2} \cdots \sum_{j_m=1}^{N_m}& p_{J_M}(U_M) \bigg \|
\psi^{C^1_MR}_{J_M} - \tau^{C^1_M} \otimes \psi^{R} \bigg \|_1 dU_M \\
&\leq 2\sqrt{d_{R} \sum_{\substack{{\cal T} \subseteq
\{1,2,\ldots,m\}
\\ {\cal T} \neq \emptyset}}  \prod_{i \in {\cal T}}
\frac{L_i}{K_i} \Tr \bigg [ \psi^2_{R{\cal T}} \bigg ]} =:
\Gamma_{\psi \otimes \Phi^{K_M}}.
\end{split}
\end{equation*}
Lastly, we need to consider what happens when at least one sender $C_i$ obtains a measurement outcome $j_i$ equal to 0. For an outcome
$J_M$, define the subset ${\cal T}(J_M) \subseteq \{1,2,...m\}$ such that $i \in {\cal T}(J_M)$ if and only if $j_i = 0$. Also,
define the set ${\cal Z} = \{J_M:|{\cal T}(J_M)| > 0\}$. Then, it is easy to show that the cardinality of the set ${\cal Z}$ is
 \begin{equation*}
 |{\cal Z}| = \sum_{\substack{{\cal T} \subseteq \{1,2,...,m\} \\ {\cal T} \neq \emptyset}} \prod_{i \notin {\cal T}} N_i.
 \end{equation*}
For an outcome $J_M \in {\cal Z}$, the average probability of the state $\omega^{C^1_MR}_{J_M}(U_M)$ is given by
\begin{equation*}
 \begin{split}
 \inte p_{J_M}(U_M) dU_M &= \Tr \bigg [ \inte \omega^{C^1_MR}_{J_M}(U_M) dU_M \bigg ]\\
 &= \Tr \bigg [ \bigotimes_{i \in {\cal T}(J_M)} Q^0_i \tau^{C_iC^0_i} (Q^0_i)^{\dag} \bigotimes_{i \notin {\cal T}(J_M)} Q^{j_i}_i \tau^{C_iC^0_i} (Q^{j_i}_i)^{\dag} \bigg ] \\
 &=\frac{\prod_{i \in {\cal T}(J_M)} L_i' \prod_{i \notin {\cal T}(J_M)} L_i}{d_{C_M}K_M}.
 \end{split}
\end{equation*}
With this formula in hand, and the fact that the trace norm between two states is at most two, we recover the left hand side of eq.~(\ref{eq:upperbound}) for the average value of the decoupling error $Q_{\cal I}(\initstate \otimes \Phi^{K_M})$:
\begin{equation} \label{eq:final}
  \begin{split}
   \inte \sum_{j_{1}=0}^{N_1} \sum_{j_2=0}^{N_2} \cdots \sum_{j_m=0}^{N_m} p_{J_M} \bigg \| \psi^{C^1_MR}_{J_M} - &\tau^{C^1_M} \otimes \psi^R \bigg \|_1 dU_M \\ &\leq 2\sum_{\substack{{{\cal T} \subseteq \{1,2,...,m\}} \\ {\cal T} \neq \emptyset}} \frac{\prod_{i \in {\cal T}} L_i' \prod_{i \notin {\cal T}} N_iL_i}{d_{C_M}K_M} + \Gamma_{\psi \otimes \Phi^{K_M}}\\
&\leq 2 \sum_{\substack{{\cal T} \subseteq \{1,2,...,m\} \\ {\cal T} \neq \emptyset}}\prod_{i \in {\cal T}} \frac{L_i'}{d_{C_i}K_i} + \Gamma_{\psi \otimes \Phi^{K_M}} \\
&\leq 2 \sum_{\substack{{\cal T} \subseteq \{1,2,...,m\} \\ {\cal T} \neq \emptyset}}\prod_{i \in {\cal T}} \frac{L_i}{d_{C_i}K_i} + \Gamma_{\psi \otimes \Phi^{K_M}}.
  \end{split}
\end{equation}
Hence, there exist instruments ${\cal I}_1, {\cal I}_2, \ldots,{\cal I}_m$ implementable by the senders with decoupling error at most the right hand side of this inequality. From Proposition \ref{prop:mergeCond}, we recover the second statement of the proposition, and so, we are done.
\end{proof+}

\subsection{Asymptotic analysis} \label{sec:iid}
In this section, we analyze the case where the parties have at
their disposal arbitrarily many copies of the state $\initstate$. Theorem \ref{thm:statemerging} was proved in \cite{merge} by relying on a time-sharing strategy. In this section, we give a proof that our protocol requires no time-sharing for the special case of distributed compression involving two senders. We then discuss the main difficulty when attempting to generalize our proof technique for the general task of multiparty state merging. We also give examples to illustrate the benefits of our protocol over the distributed compression protocol of~\cite{merge}.
\subsubsection{Proof of Theorem \ref{thm:statemerging} for two senders}

To prove the direct statement of the theorem for two senders and no side information at the receiver's location, we use Proposition \ref{prop:isometry} in combination with Schumacher compression \cite{Coding}.
For $n$ copies of the state $\psi^{C_1C_2R}$, consider the Schumacher
compressed state \be \ket{\Omega}^{\tilde{C}_1\tilde{C}_2\tilde{R}} := (\Pi_{\ti{C}_1} \otimes
\Pi_{\ti{C}_2} \otimes \Pi_{\ti{R}}) (\ket{\psi}^{C_1C_2R})^{\otimes n},\ee
and its normalized version $\ket{\Psi}^{\tilde{C}_1\tilde{C}_2\tilde{R}} := \frac{1}{\sqrt{\langle
\Omega | \Omega \rangle}}\ket{\Omega}^{\tilde{C}_1\tilde{C}_2\tilde{R}}$. The systems
$\ti{C}_1,\ti{C}_2$ and $\ti{R}$ are the $\delta$-typical subspaces of $C^n_1,C^n_2$ and $R^n$. The projectors onto these subspaces are denoted by $\Pi_{\ti{C}_1},\Pi_{\ti{C}_2}$ and $\Pi_{\ti{R}}$. For any $\epsilon > 0$ and $n$ large enough, we have
\begin{equation*}
\begin{split}
 \Tr(\psi^{\otimes n}_{R}\Pi_{\ti{R}}) \geq 1-\epsilon, \quad \Tr(\psi^{\otimes n}_{C_1}\Pi_{\ti{C}_1}) \geq 1-\epsilon \quad \Tr(\psi^{\otimes n}_{C_2}\Pi_{\ti{C}_2}) \geq 1-\epsilon.\\
\end{split}
\end{equation*}
Using Lemma \ref{Lem:operatorineq}, which generalizes the ``union'' bound of Abeyesinghe et al. \cite{Hayden001} to an arbitrary number of typical projectors, we can bound the norm of $\Omega^{\tilde{C}_1\tilde{C}_2\tilde{R}}$:
\begin{equation}
 \langle \Omega | \Omega \rangle = \bra{\psi}^{\otimes n} \Pi_{\ti{C}_1} \otimes \Pi_{\ti{C}_2} \otimes \Pi_{\ti{R}} \ket{\psi}^{\otimes n} \geq 1-3\epsilon.
\end{equation}
The properties for the typical projectors
$\Pi_{\ti{C}_1},\Pi_{\ti{C}_2}$ and $\Pi_{\ti{R}}$ allow us to
tightly bound the various dimensions and purities appearing in
Proposition \ref{prop:isometry} by appropriate ``entropic''
formulas. In particular, we have (see Chapter 2) for $n$ large enough and any \textit{single} system $F=C_1, C_2, R$:
  \begin{equation} \label{eq:typproperties}
        \begin{split}
         (1-\epsilon)2^{n(S(F)_{\psi} - \delta)} &\leq \Tr[\Pi_{\ti{F}}] \leq 2^{n(S(F)_{\psi}+\delta)} \\
              &\Tr [ \Psi_{\ti{F}}^2 ] \leq (1-\epsilon)^{-2}2^{-n(S(F)_{\psi} - 3\delta)}, \\
         \end{split}
  \end{equation}
where $\delta > 0$ is a typicality parameter. Let $\overrightarrow{R} = (R_1, R_2)$ be any rate-tuple which satisfies the inequalities
\begin{equation} \label{eq:constraints}
\begin{split}
R_1 &> S(C_1|C_2)_{\psi}\\
R_2 &> S(C_2|C_1)_{\psi}\\
R_1+R_2 &> S(C_1C_2)_{\psi}.
\end{split}
\end{equation}
We construct a family of multiparty merging protocols on the state $\Psi^{\tilde{C}_1\tilde{C}_2\tilde{R}}$ with vanishing error as follows: If $R_i \leq 0$, the sender $C_i$ performs a random instrument using projectors of rank $L_i := \lfloor 2^{-nR_i} \rfloor$ (and possibly one of rank $L'_i \leq L_i$). No maximally entangled state $\Phi^{K_i}$ is shared with the receiver (i.e $K_i := 1$). If $R_i > 0$, the sender $C_i$ shares a maximally entangled state $\Phi^{K_i}$ of rank $K_i := \lceil 2^{nR_i} \rceil$ with the receiver and performs a random instrument with rank one projectors ($L_i := 1$).
From Proposition \ref{prop:isometry}, the average decoupling error $Q_{\cal I}(\Psi^{\tilde{C}_1\tilde{C}_2\tilde{R}} \otimes
\Phi^{K_1}\otimes \Phi^{K_2})$ is then bounded from above by
\begin{equation} \label{eq:qerrorY}
 \begin{split}
\int_{\mathbb{U}(\tilde{C}_1)}\int_{\mathbb{U}(\tilde{C}_2)} &Q_{\cal I}(\Psi^{\tilde{C}_1\tilde{C}_2\tilde{R}} \otimes \Phi^{K_1}\otimes \Phi^{K_2}) dU_1dU_2 \\ &\leq 2\sum_{\substack{{\cal T} \subseteq \{1,2\} \\ {\cal T} \neq \emptyset}} \prod_{i \in {\cal T}} \frac{L_i}{d_{\ti{C}_i}K_i} + 2\sqrt{d_{\ti{R}} \sum_{\substack{{\cal T} \subseteq \{1,2\} \\ {\cal T} \neq \emptyset}}  \prod_{i \in {\cal T}} \frac{L_i}{K_i} \Tr \bigg [ \Psi^2_{\ti{R}\ti{{\cal T}}} \bigg ]} \end{split}
\end{equation}
Since the purities $\Tr[\Psi_{\tilde{R}\tilde{C}_1}^2], \Tr[\Psi_{\tilde{R}\tilde{C}_2}^2]$ and $\Tr[\Psi_{\tilde{R}\tilde{C}_1\tilde{C}_2}^2]$ are equal to $\Tr[\Psi_{\tilde{C}_2}^2], \Tr[\Psi_{\tilde{C}_1}^2]$ and one respectively, the previous inequality simplifies to:
\begin{equation}
\int_{\mathbb{U}(\tilde{C}_1)}\int_{\mathbb{U}(\tilde{C}_2)} Q_{\cal I}(\Psi^{\tilde{C}_1\tilde{C}_2\tilde{R}} \otimes \Phi^{K_1}\otimes \Phi^{K_2}) dU_1dU_2\\
\leq 2\Gamma_{\Psi} + \frac{2}{1-\epsilon} \Upsilon_{\Psi}, \\
\end{equation}
where
\begin{equation}
\begin{split}
\Gamma_{\Psi} &:= \frac{2^{-n(R_1 +S(C_1)_{\psi}-\delta)}}{(1-\epsilon)} + \frac{2^{-n(R_2 +S(C_2)_{\psi}-\delta)}}{(1-\epsilon)} + \frac{2^{-n(R_1 + R_2 + S(C_1)_{\psi} + S(C_2)_{\psi}-2\delta)}}{(1-\epsilon)^2} \\
\Upsilon_{\Psi}&:= \sqrt{2^{-n (R_1 + S(C_2)_{\psi} - S(R)_{\psi} - 4\delta)} +2^{-n (R_2 + S(C_1)_{\psi} -S(R)_{\psi} -4\delta)} + 2^{-n (R_1 + R_2-S(R)_{\psi} -2\delta)}}.
\end{split}
\end{equation}
From the rate constraints of eq.~(\ref{eq:constraints}) and the subadditivity of the von Neumann entropy, we can set the typicality parameter $\delta$ by choosing $n$ large enough that
\begin{equation}\label{eq:typical}
\begin{split}
  R_1 + S(C_2)_{\psi}-S(R)_{\psi} -4\delta &> 0 \\
  R_2 + S(C_1)_{\psi}-S(R)_{\psi} -4\delta &> 0 \\
  R_1 + R_2 -S(R)_{\psi} -2\delta &> 0 \\
\end{split}
\end{equation}
Hence, for these values of $\epsilon$ and $\delta$, the bound of eq.~(\ref{eq:qerrorY}) vanishes for large values of $n$. By the Gentle Measurement Lemma and the triangle inequality, we have
 \begin{equation} \bigg \| (\psi^{C_1C_2R})^{\otimes n} - \Psi^{\tilde{C}_1\tilde{C}_2\tilde{R}} \bigg \|_1 \leq 4 \sqrt{3\epsilon},
 \end{equation}
and so, if we apply the same protocol on the state $(\psi^{C_1C_2R})^{\otimes n}$, we get an error of $\textit{O}(\sqrt{\epsilon})+ \textit{O}(2^{-n\delta})$. Since $\epsilon$ can be made arbitrarily small, there exists for $n$ large enough, a family of multiparty merging protocols with arbitrarily small error and entanglement rate approaching $\overrightarrow{R}$. Hence, the rate-tuple $\overrightarrow{R}$ is achievable. To recover the full rate region, we take the closure of the set of rate-tuples satisfying the constraints of eq.~(\ref{eq:constraints}). This proves the direct part of Theorem \ref{thm:statemerging} for the case of distributed compression involving two senders.

The converse part of the theorem is easily established for an arbitrary number of senders by using the converse statement of the state merging theorem (i.e $m=1$) of \cite{merge}. Suppose the receiver has obtained the systems $\overline{{\cal T}}$ and a sender holds the entire remaining system ${\cal T}$ to be transferred. Then, an ebit rate of at least $S({\cal T}|\overline{{\cal T}}B)_{\psi}$ is required to transfer $\cK$ by the converse of the merging theorem of \cite{merge}. Obviously, if the system ${\cal T}$ is distributed across $|{\cal T}|$ senders and only LOCC operations are allowed, a total rate $\sum_{i \in {\cal T}}R_i$ of at least $S({\cal T}|\overline{{\cal T}}B)_{\psi}$ is also needed.  \qed\medskip

To understand why the previous approach fails to generalize to more than two senders, consider the Schumacher compressed state
\be\label{eq:statee} \ket{\Omega_1}^{\tilde{C}_M\tilde{B}\tilde{R}} := (\Pi_{\ti{C}_1} \otimes \Pi_{\ti{C}_2} \otimes \ldots \otimes \Pi_{\ti{C}_m} \otimes \Pi_{\tilde{B}} \otimes \Pi_{\ti{R}}) (\ket{\psi}^{C_MBR})^{\otimes n}.\ee
We would like to bound the various purities and dimensions for this state by appropriate ``entropic'' formulas as we did earlier for the case of two senders. However, it is most likely that the purities $\Tr[(\Omega_1^{\ti{R}\cK})^2]$ are not bounded by $2^{-n(S(R\cK)_{\psi}-3\delta)}$ for all non empty subsets $\cK \subseteq \{1,2,\ldots, m\}$, and so, we cannot conclude that the decoupling error appearing in eq.~(\ref{eq:qerrorY}) vanishes as $n \rightarrow \infty$. For $m=3$, a state of the form
\begin{equation*}
\ket{\Theta}^{\ti{C}_1\ti{C}_2\ti{C}_3}:= \Pi_{\ti{C}_1\ti{C}_2\ti{C}_3}\Pi_{\ti{C}_2\ti{C}_3}\Pi_{\ti{C}_1\ti{C}_3}\Pi_{\ti{C}_1\ti{C}_2}\Pi_{\ti{C}_1}\Pi_{\ti{C}_2}\Pi_{\ti{C}_3}(\ket{\psi}^{C_1C_2C_3})^{\otimes n}
\end{equation*}
is probably a better candidate for satisfying the typicality bounds on the purities of the various reduced states of $\Theta^{\ti{C}_1\ti{C}_2\ti{C}_3}$. Note that $\Pi_{\ti{C}_1}$ in the previous equation is a shorthand for $I^{\ti{C}_2\ti{C}_3}\otimes \Pi_{\ti{C}_1}$, where $\Pi_{\ti{C}_1}$ is the projector onto the $\delta-$typical subspace for $(\psi^{C_1})^{\otimes n}$. We think this is an issue that will come up often when performing asymptotic analysis of multiparty quantum communication protocols. 
We make the following conjecture about typicality in a multiparty scenario:
\begin{conjecture}[Multiparty typicality conjecture]\label{eq:notimesharing}
 Consider $n$ copies of an arbitrary multipartite state $\psi^{C_1C_2\ldots C_m}$. For any fixed $\epsilon > 0, \delta_{\cK} > 0$ and $n$ large enough, there exists a state $\Psi^{C_1C_2\ldots C_m}$ which satisfies
 \begin{equation*}
 \begin{split}
  \| \Psi - \psi^{\otimes n} \|_1 &\leq \nu(\epsilon) \\
  \Tr[ (\Psi^{{\cal T}})^2] &\leq (1-\mu(\epsilon))^{-2} 2^{-n(S({\cal T})_{\psi} - \delta_{{\cal T}})} \\
 \end{split}
  \end{equation*}
for all non empty subsets ${\cal T} \subseteq \{1,2,\ldots, m\}$. Here, $\nu(\epsilon)$ and $\mu(\epsilon)$ are functions of $\epsilon$ which vanish by choosing arbitrarily small values for $\epsilon$.
\end{conjecture}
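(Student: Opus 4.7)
My plan is a two-level construction with $n = Bk$: an inner sequential typical-projection within each of $B$ blocks of size $k$, followed by an outer-level projection that exploits the i.i.d.\ structure across blocks to achieve the per-subset purity bounds.

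At the inner level, fix any ordering ${\cal T}_1, \ldots, {\cal T}_{2^m-1}$ of the non-empty subsets of $\{1,\ldots,m\}$, and let $\Pi_{\tilde{{\cal T}}_j}^{(k)}$ be the projector onto the $\gamma$-typical subspace of $(\psi^{{\cal T}_j})^{\otimes k}$ (extended as identity on the complementary systems), for a small $\gamma = \gamma(k)$ chosen so that $\gamma \to 0$ while $k\gamma \to \infty$. Within one block, set $\ket{\Omega_k} := \Pi_{\tilde{{\cal T}}_{2^m-1}}^{(k)} \cdots \Pi_{\tilde{{\cal T}}_1}^{(k)}\ket{\psi}^{\otimes k}$ and normalize to $\ket{\Psi_k}$. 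By the generalized union bound (Lemma \ref{Lem:operatorineq}) and iterated Gentle Measurement, $\|\,\ket{\Psi_k}\bra{\Psi_k} - (\psi^{C_1\ldots C_m})^{\otimes k}\|_1 \leq \nu_k \to 0$, and by the Fannes inequality (Lemma \ref{lem:Fannes}) applied with $\nu_k$ vanishing fast enough, $S({\cal T})_{\Psi_k} = k\,S({\cal T})_\psi + o(k)$ for every ${\cal T}$. At the outer level, take $\ket{\Psi_k}^{\otimes B}$ across the $B$ blocks and apply a second sequential projection, this time onto the $\gamma'$-typical subspaces of $(\Psi_k^{{\cal T}_j})^{\otimes B}$, yielding the final state $\ket{\Psi}$. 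Trace-norm closeness $\|\,\ket{\Psi}\bra{\Psi} - (\psi^{C_1\ldots C_m})^{\otimes n}\|_1 \leq \nu(\epsilon)$ follows by combining the two layers via the triangle inequality, and the standard typicality estimates eq.~(\ref{eq:un})--(\ref{eq:trois}) at the outer level give
\[
\Tr\bigl[(\Psi^{{\cal T}})^2\bigr] \;\leq\; (1-\mu(\epsilon))^{-2}\, 2^{-B(S({\cal T})_{\Psi_k} - 3c\gamma')} \;\leq\; (1-\mu(\epsilon))^{-2}\, 2^{-n(S({\cal T})_\psi - \delta_{{\cal T}})},
\]
after absorbing the Fannes and typicality slacks into $\delta_{{\cal T}}$.

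The hard part will be showing that the outer-level projectors for different ${\cal T}$ can be applied together without spoiling each other's purity bound, since they again do not commute. The key technical ingredient---alluded to as the ``well-known inequality of probability theory'' in Appendix A.IV---is a Chebyshev/Hoeffding concentration applied to the empirical spectrum of $\Psi_k^{{\cal T}}$ across the $B$ genuinely i.i.d.\ outer blocks: typical deviations of the outer-level empirical marginal from its mean are $O(1/\sqrt{B})$, so a union bound over the $2^m-1$ subsets will succeed provided $B$ grows polynomially in $2^m$ while $k$ grows only slowly (say, $k \sim \log n$). Balancing the parameters $k$, $B$, $\gamma$, $\gamma'$ against the accumulated slacks $\nu_k$, $\mu(\epsilon)$, $\delta_{{\cal T}}$---carried out in full for $m=2$ in Appendix A.IV via a double-blocking scheme---is precisely what makes the general-$m$ case the genuinely open point flagged by the conjecture, and is where I would expect the main technical obstacle to live.
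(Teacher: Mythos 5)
You should first be clear that the statement you are proving is stated in the paper as a \emph{conjecture}: the paper itself only establishes the case $m=2$ (Proposition \ref{prop:mixedstate} in Appendix A), and explicitly identifies the general case as open. Your sketch does not close that gap, and moreover it departs from the $m=2$ proof in a way that discards the two mechanisms that actually make that proof work. In Proposition \ref{prop:mixedstate} the inner level applies only the \emph{single-party} projectors $\Pi_{\tilde{C}_1}\otimes\Pi_{\tilde{C}_2}$, which commute because they act on disjoint tensor factors, and the outer level applies only \emph{one} projector (onto the joint typical subspace of the block state). The purity bounds for the proper subsets are then not obtained from outer-level typicality at all: they are inherited from the inner level through the operator inequality $\sigma_2 \leq \sigma_1^{\otimes s_2}$, partial-trace monotonicity, and Lemma \ref{lem:mergeTyp}, giving $\Tr[(\sigma_2^{\tilde{C}_1})^2] \leq \Tr[(\sigma_1^{\tilde{C}_1})^2]^{s_2}$. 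For $m\geq 3$ the proper subsets overlap, their projectors no longer commute, and this inheritance argument is unavailable --- which is precisely why the conjecture is open.

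Your proposed replacement has two concrete failure points. First, at the inner level you control only the trace-norm distance and, via Lemma \ref{lem:Fannes}, the von Neumann entropies $S({\cal T})_{\Psi_k}$. But the conjecture demands a bound on $\Tr[(\Psi^{\cal T})^2]$, i.e.\ on a collision entropy, and entropy control does not imply purity control: a state within trace distance $\nu$ of $\psi^{\otimes k}$ with the correct von Neumann entropy can still carry an eigenvalue of order $\nu$ on its marginal, contributing $\nu^2$ to the purity, which dwarfs the target $2^{-kS({\cal T})_\psi}$. The paper's eq.~(\ref{eq:trois}) bound is a statement about operators sandwiched by typical projectors, not about states that are merely close in trace norm. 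Second, the step you yourself flag as hard --- applying the $2^m-1$ non-commuting outer projectors "together" --- is the entire content of the conjecture, and the proposed fix does not work: a Hoeffding-type concentration of the empirical marginal gives deviations of order $1/\sqrt{B}$ in trace norm, and a trace-norm perturbation of that size can inflate $\Tr[\rho^2]$ from $2^{-BS}$ to $\Theta(1/B)$, obliterating the exponential bound. (The "well-known inequality of probability theory" in Appendix A.IV is used for something else: to show the typical projections succeed with probability $1-c^{s_1}e^{-2s_2\delta_2^2}$, forcing $s_2$ to scale with $s_1$; it is not used to control purities of sequentially projected states.) To make progress you would need either a multiparty analogue of the $\sigma_2\le\sigma_1^{\otimes s_2}$ inheritance trick that survives non-commuting subset projectors, or a direct operator-level bound on $\Tr[(\Pi_{{\cal T}_r}\cdots\Pi_{{\cal T}_1}\rho\,\Pi_{{\cal T}_1}\cdots\Pi_{{\cal T}_r})^{\cal T}]^2$; neither appears in your sketch.
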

The conjecture is true for $m=2$. Readers interested will find the proof in the typicality section of Appendix A. (see Proposition \ref{prop:mixedstate}.)

\subsubsection{A simple example of distributed compression for two senders}
To illustrate some of the key differences between the protocols shown to exist by the proof of Theorem \ref{thm:statemerging} and distributed compression protocols as discussed in \cite{merge}, let's
consider again the task of distributed compression for two senders
sharing a state $\psi^{C_1C_2R}$, with purifying system $R$.
Recall that in distributed compression, the receiver has no prior information about the state. For the case of two senders, the rate region is described by the inequalities:
\begin{equation} \label{eq:condDistCompressV}
  \begin{split}
   R_1 &\geq S(C_1 | C_2)_{\psi} \\
   R_2 &\geq S(C_2 | C_1)_{\psi} \\
   R_1 + R_2 &\geq S(C_1C_2)_{\psi} \\
  \end{split}
\end{equation}
Let's consider a very simple state:
\begin{equation*}
 \ket{\psi}^{C_1C_2R}:=\ket{\psi^{C_1C_2^1C_2^2R}} := \ket{\Psi_{-}}^{C_1C_2^1} \otimes \ket{\Psi_{-}}^{C_2^2R},
\end{equation*}
where $\ket{\Psi_{-}}^{C_1C_2^1}$ is an EPR pair shared between the senders $C_1$ and $C_2$ and $\ket{\Psi_{-}}^{C_2^2R}$ is an EPR pair shared between the reference $R$ and $C_2$. Let's compute the entropies of eq.~(\ref{eq:condDistCompressV}) related to the rates $R_1$ and $R_2$:
\begin{equation*}
\begin{split}
S(C_1|C_2)_{\psi} &= -1 \\
S(C_2|C_1)_{\psi} &= 0 \\
S(C_1C_2)_{\psi} &= 1 \\
\end{split}
\end{equation*}
The total entanglement cost for merging is at least one ebit per copy of the input state, no matter which protocol is used to perform distributed compression. For achieving the rates $R_1 = 0$ and $R_2 = 1$, we can use our multiparty state merging protocol if we inject, prior to performing measurements on the senders, an ebit per copy of the input state between the sender $C_2$ and the receiver. A distributed compression protocol as in \cite{merge}, however, will need to inject 2 ebits per copy of the input state between the sender $C_2$ and the receiver if the system $C_2$ is first transferred, followed by the system $C_1$. If the system $C_1$ is transferred first instead, then one ebit per copy is needed between the sender $C_1$ and the receiver. Thus, the distribution of the catalytic entanglement for the distributed compression of \cite{merge} is more restricted than our multiparty state merging protocol. Time-sharing can be used to achieve the rate-tuple $(R_1, R_2)$ with $R_1=0$ and $R_2=1$, but it may require many more copies of the input state to achieve this.

\subsubsection{Distributed compression for three senders}
Our proof of Theorem \ref{thm:statemerging} is for the case of two senders.  We suspect that Conjecture~\ref{eq:notimesharing} will hold for the case of three senders. Under this assumption, we can show a real advantage to using our protocol over a distributed compression protocol relying on multiple applications of two-party state merging.
For the case of three senders, the rate region is described by the inequalities:
\begin{equation} \label{eq:condDistCompress}
  \begin{split}
   R_1 &\geq S(C_1 | C_2C_3)_{\psi} \\
   R_2 &\geq S(C_2 | C_1C_3)_{\psi} \\
   R_1 + R_2 &\geq S(C_1C_2|C_3)_{\psi} \\
   R_1 + R_3 &\geq S(C_1C_3|C_2)_{\psi} \\
   R_2 + R_3 &\geq S(C_2C_3|C_1)_{\psi} \\
   R_1 + R_2 + R_3 &\geq S(C_1C_2C_3)_{\psi} \\
   \end{split}
\end{equation}
Let's consider the state
\begin{equation*}
 \ket{\psi}^{C_1C_2C_3R}:=\ket{\psi^{C_1C_2^1C_2^2C_3^1C_3^2R}} := \ket{\Psi_{-}}^{C_1C_2^1} \otimes \ket{\Psi_{-}}^{C_3^1R} \otimes \ket{\phi}^{C_2^2C_3^2},
\end{equation*}
where $\ket{\phi}^{C^2_2C_3^2}:=\sqrt{\lambda} \ket{00}^{C^2_2C^2_3} + \sqrt{1-\lambda}\ket{11}^{C^2_2C^2_3}$, with $0 < \lambda < 1$, is a pure bipartite entangled state with entropy of entanglement:
\begin{equation*}
 E(\phi) = S(C^3_2)_{\phi} = -\lambda \log{\lambda} - (1-\lambda)\log(1-\lambda) > 0.
\end{equation*}
Let's compute the entropies of eq.~(\ref{eq:condDistCompress}) related to the rates $R_1$ and $R_2$ as a function of $S(C^3_2)_{\phi}$:
\begin{equation*}
\begin{split}
S(C_1|C_2C_3)_{\psi} &= -1 \\
S(C_2|C_1C_3)_{\psi} &= -S(C^2_3)_{\phi} -1 \\
S(C_1C_2|C_3)_{\psi} &= -S(C^2_3)_{\phi} \\
\end{split}
\end{equation*}
\begin{figure}[t!]
\centering
    \includegraphics[width=0.8\textwidth]{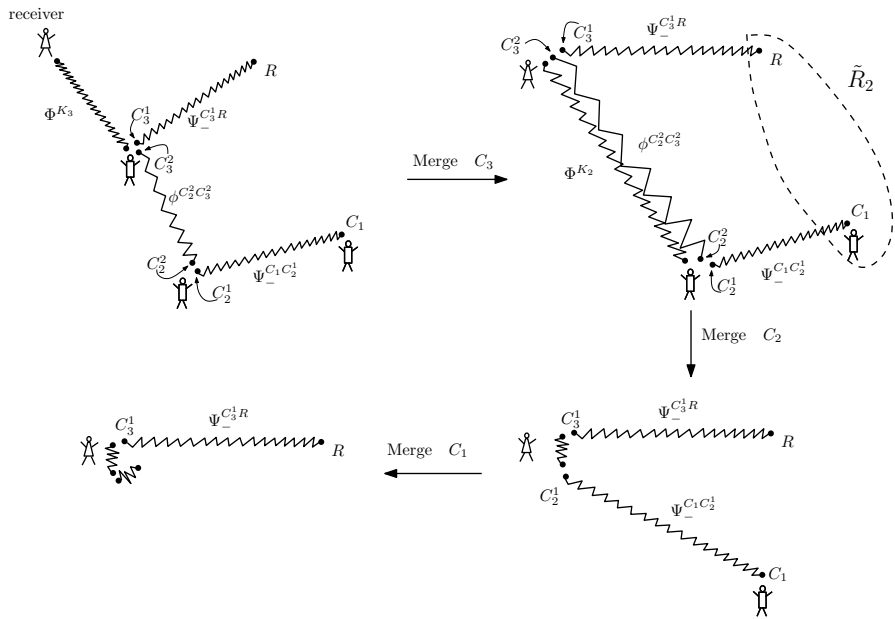}
\caption{Entanglement structure of the state $\psi^{C_1C_2C_3R}$. The need for catalytic entanglement using a distribution compression protocol is depicted.}\label{fig:rateregion}
\end{figure}

The state $\phi$ is entangled ($S(C^2_3)_{\phi} > 0$), hence there exists an achievable rate-tuple $(R_1, R_2, R_3)$, by choosing $R_3$ big enough, satisfying the inequalities of eq.~(\ref{eq:condDistCompress}) with $R_1 < 0, R_2 < 0$ and $R_1 + R_2 < 0$. Under our initial assumption, there exist multiparty merging protocols with arbitrarily small error achieving this rate-tuple if we inject $(\log K_3) \approx nR_3$ ebits between the sender $C_3$ and the receiver. The protocols will return around $-nR_1$ ebits between the sender $C_1$ and the receiver and approximately $-nR_2$ ebits between the sender $C_2$ and the receiver. No catalytic entanglement is required for both of these senders.

Consider any distributed compression protocol on the other hand. If no catalytic entanglement is used, the protocol must first merge the system $C_3$ to the receiver, even if time-sharing is used. Assuming the receiver has recovered perfectly the system $C_3$, the protocol can then merge either $C_1$ or $C_2$. A time-sharing strategy will choose to merge $C_2$ on a subset of the input copies, while merging $C_1$ on the other copies. The ebit rate must be at least $S(C_2|C_3)_{\psi}$ for merging $C_2$, and at least $S(C_1|C_3)_{\psi}$ for merging $C_1$. We have
\begin{equation*}
\begin{split}
 S(C_1|C_3)_{\psi} &= S(C_1)_{\Psi_{-}} = 1, \\
S(C_2|C_3)_{\psi} &= 1 - S(C^2_3)_{\phi}, \\
\end{split}
\end{equation*}
and so, as long as $\ket{\phi}^{C^2_2C^2_3}$ is not maximally entangled (i.e., $S(C^2_3)_{\phi} \neq 1$), both of these entropies are positive. Hence, transferring $C_1$ or $C_2$ to the receiver requires the injection of catalytic entanglement.
\begin{figure}[t!]
\centering
    \includegraphics[width=0.9\textwidth]{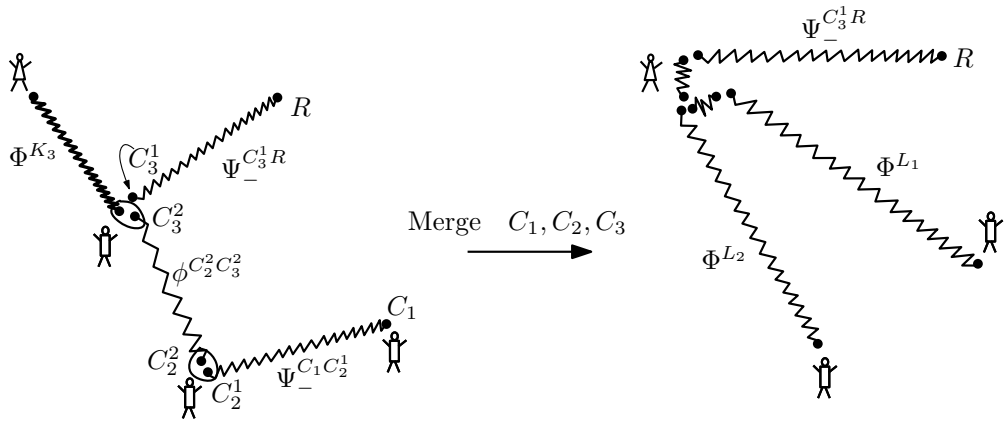}
\caption{Using an ``entanglement swapping'' trick, the multiparty merging protocol needs no catalytic entanglement for the senders $C_1$ and $C_2$ to transfer the state to the receiver. Some of the labels were removed for clarity. }\label{fig:rateregionVR}
\end{figure}

This example is better understood by looking at the entanglement structure of the state $\psi^{C_1C_2C_3R}$ (see Figure \ref{fig:rateregionVR}). After merging $C_3$, a distributed compression protocol will decouple either the system $C_2$ from its relative reference $\tilde{R}_2:=C_1R$ or $C_1$ from its relative reference $\tilde{R}_1:=C_2R$. There are 2 bits of mutual information between the system $C_2$ and the relative reference $\tilde{R}_2$. To transfer this correlation to the receiver using anything less than a perfectly entangled pair is impossible as superdense coding is optimal \cite{superdense}. The same reasoning applies for the system $C_1$.

To grasp why catalytic entanglement is not necessary for the senders $C_1$ and $C_2$, assuming Conjecture \ref{eq:notimesharing} holds, for our multiparty merging protocol, observe that the mutual information between the systems $C_1C_2C_3$ and the reference $R$ is entirely concentrated between the systems $C_3$ and $R$. If we boost the number of ebits shared between the receiver and the sender $C_3$, we can decouple the systems $C_1C_2C_3$ from the reference, and generate ebits between the receiver and the systems $C_1$ and $C_2$ through an ``entanglement swapping'' effect. This highlights a fundamental difference between protocols working on two parties and multipartite protocols: entanglement can be produced between two parties by other means than entanglement distillation \cite{Concentrate,Bennett,DW} or entanglement gambling \cite{vidal}. This example also exhibits a natural trade-off between injecting more entanglement than needed at one place and being able to produce entanglement or at the very least transfer other systems without requiring additional entanglement.

\section{Split transfer} \label{sec:split-transfer}
In the previous sections, we have analyzed and characterized the
entanglement cost for merging the state $\initstate$ to a single
receiver (Bob) in the asymptotic setting and in the one-shot regime.
Here, we modify our initial setup by introducing a second decoder
$A$ (Alice), who is spatially separated from Bob and also has side
information about the input state. That is, the senders $C_1,
C_2,\ldots, C_m$ and the two receivers Alice and Bob share a
global state $\psi^{C_1C_2\ldots C_mABR}$ and the objective is
then to redistribute the state $\initstateA$ to Alice and Bob. The motivation for this problem comes from the
multipartite entanglement of assistance problem \cite{SVW,merge},
where the task is to distill entanglement in the form of EPR pairs
from a $(m+2)$-partite pure state $\inputstate$ shared between two
recipients (Alice and Bob) and $m$ other \textit{helpers} $C_1,C_2,\ldots,C_m$. Recall the formula for the optimal assisted EPR rate:
\begin{equation}\label{eq:mincut1}
  E^{\infty}_A(\inputstate) := \min_{\cal T} S(A{\cal T})_{\psi} =: E_{\mathrm{min-cut}}(\inputstate, A:B),
\end{equation}
where ${\cal T} \subseteq \{C_1,C_2,\ldots,C_m\}$ is a subset (i.e a bipartite cut) of the helpers.

The proof that the rate given by eq.~(\ref{eq:mincut1}) is
achievable using LOCC operations (i.e., no pre-shared entanglement allowed) consists of showing that the
min-cut entanglement of the state $\inputstate$ is arbitrarily well preserved after each sender has finished
performing a random measurement on his system. The procedure
described in the proof of \cite{merge} makes use of a
\textit{multi-blocking} strategy. That is, given $n$ copies of the input
state $\inputstate$, the first helper will perform $d=n/r$ random
measurements, each acting on $r$ copies of $\inputstate$ and
generating a number of possible outcomes. For a sequence of measurement outcomes
$j_1,j_2,\ldots,j_d$, we group together the residual states corresponding to outcome $1$, then group the ones
corresponding to outcome $2$, etc... When this is done, the next
helper will perform random measurements for each of these groups in
the same way the first sender proceeded. That is, for each group,
you need to divide into blocks, and so on. Needless to say, this approach fails if few copies are available to the parties.

\begin{figure}[t!]
\centering
    \includegraphics[width=\textwidth]{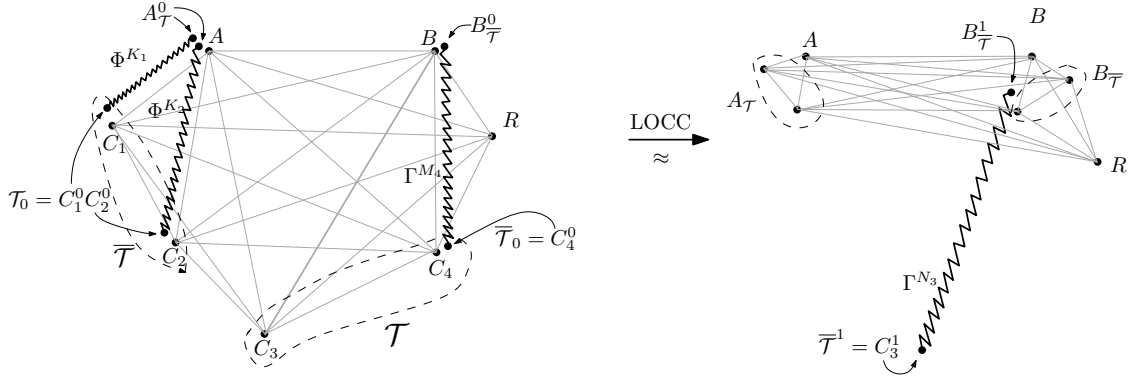}
\caption{Picture of the initial and final steps of a split-transfer protocol involving four senders. Jagged lines represent maximally entangled states shared between the receivers and the senders and solid lines represent correlation between the parties. The senders $C_3$ and $C_4$ are transferred to $A$, while the senders $C_1$ and $C_2$ are recovered by the receiver $B$. At the end of the split-transfer protocol, we have $\log(N_3)$ ebits shared between $C_3$ and $B$ for this particular example. }\label{fig:splittransfer}
\end{figure}

It was conjectured by Horodecki \textit{et al.} \cite{merge} that these layers of blocking
could be removed by letting all the helpers perform simultaneous
measurements on their respective typical subspaces. Such a
strategy would still produce states which preserve the min-cut
entanglement, thereby providing a way to prove
eq.~(\ref{eq:mincut1}) without the need for a recursive argument.
In this section, we show that if Conjecture \ref{eq:notimesharing} is true, there exists
an LOCC protocol acting on the pure state $\psi^{C_1C_2\ldots C_mAB}$ which will send a cut
${\cal T}_{\min}$ which minimizes $S(A\cK)_{\psi}$ to Alice and its complement to Bob. The protocol
consists of two parts: First, all the helpers will perform
random measurements on their typical subspaces and broadcast their
outcomes to both decoders. Then, Alice will use the classical
information coming from the helpers which are part of the cut
${\cal T}_{\min}$ and apply an isometry $U$, while Bob will apply
an isometry $V$ depending on the outcomes of the helpers belonging
to $\cT_{\min}$. This will redistribute the initial state to
Alice and Bob. Standard distillation protocols, as described in chapter 2,
on the recovered state will yield
EPR pairs at the min-cut rate of eq.~($\ref{eq:mincut1}$).

The following definition of a \textit{split-transfer} adapts the multiparty merging definition to the case of two receivers. We follow the notational convention established in the previous sections for labeling the various systems, dimensions, etc\ldots Figure \ref{fig:splittransfer} helps making sense of all the labels.
\begin{definition}[Split-Transfer]
Let $\psi^{{\cal T}A{\ov{\cal T}}BR}$ be an $(m+2)$-partite state,
and assume the senders and the decoders share maximally entangled states $\Phi^{K_{\cK}}:=\bigotimes_{i
\in \cal T} \Phi^{K_i}$ and $\Gamma^{M_{\cKbar}}:=\bigotimes_{i \in \cT}
\Gamma^{M_i}$. We call the LOCC operation ${\cal M}: {\cal T}\cK^0 \ov{\cal
T}\cTo \otimes AA^0_{{\cal T}} \otimes BB^0_{\ov{{\cal T}}}
\rightarrow {\cal T}^1A^1_{{\cal T}}AA_{{\cal T}} \otimes
{\ov{\cal T}}^1B^1_{\ov{{\cal T}}}BB_{\ov{{\cal T}}} $ a split transfer for the state $\inputGroupstate$ with error
$\epsilon$ and entanglement costs
$\overrightarrow{E_{\cK}}(\psi):=\bigoplus_{i \in \cK}(\log K_i
-\log L_i)$ and $\overrightarrow{E_{\cKbar}}(\psi):=\bigoplus_{i
\in \cKbar}(\log M_i - \log N_i)$ if
 \begin{equation}\label{eq:splittransfer}
 \bigg \| (\mathrm{id}_R \otimes {\cal M})(\inputGroupstate \otimes \Phi^{K_{\cK}}  \otimes \Gamma^{M_{\cKbar}}) -  \psi_{A_{{\cal T}}AB_{\ov{\cal T}}BR} \otimes \Phi^{L_{\cK}} \otimes \Gamma^{N_{\cKbar}} \bigg \|_1 \leq \epsilon,
\end{equation}
where $\Phi^{L_{\cK}}:=\bigotimes_{i \in \cK}\Phi^{L_i},\Gamma^{N_{\cKbar}}:=\bigotimes_{j \in \cKbar}\Gamma^{N_j}$ are maximally entangled states distributed appropriately between the senders and the receivers. The systems $A_{\cal T}$ and $B_{\cT}$ are ancillary systems of the same size as ${\cal T}$ and $\cKbar$ and are held by Alice and Bob respectively. For the state $\Psi:=(\inputGroupstate)^{\otimes n}$, the entanglement rates $\overrightarrow{R_{\cK}}(\psi)$ and $\overrightarrow{R_{\cKbar}}(\psi)$ are defined as $\frac{1}{n}\overrightarrow{E_{\cKbar}}(\Psi)$ and
$\frac{1}{n}\overrightarrow{E_{\cKbar}}(\Psi)$.
\end{definition}

In the above definition, we have denoted by $\bigoplus_{i \in \cK} (\log K_i - \log L_i)$ a vector of length $|\cK|$ whose components are given by $\log K_i -\log L_i$ for $i \in \cK$.

The rate region of a split-transfer for the state $\inputGroupstate$ is defined in a manner analogous to definition
\ref{def:rateregion}. We omit the details here, but whenever we
will say that a rate is achievable for a split-transfer of the
state $\inputGroupstate$, it means that it is strictly contained in the
rate region (i.e., not on the boundary).

\subsection{Decoupling relative references}
We saw in the previous sections how the distributed compression protocol of \cite{merge} achieves a multiparty merging for the state $\psi^{C_MBR}$ by decoupling each sender from its relative reference one at a time. We use this approach here to show the existence of good decoders for the receivers when the senders perform simultaneous measurements. The main technical difficulty is to formally prove that simultaneous measurements by the senders still produce a state allowing a good (i.e., high fidelity) recovery of the initial state by the receivers. We extend Proposition \ref{prop:mergeCond} to our present scenario by following a similar route to that of section \ref{sec:merging-many}. We begin by analyzing an ideal situation.

For a pure state $\psi^{\cK A \cKbar BR}$, suppose each sender performs
an incomplete measurement on their respective
shares of the state. For a measurement outcome $J_M := (j_1,j_2,\ldots,j_m)$, define the state
\begin{equation}
 \begin{split}
 \ket{\psi_{J_M}^{\cK^1 A \cTu BR}} &:= \frac{1}{\sqrt{p_{J_M}}}(P^1_{j_1} \otimes P^2_{j_2} \otimes \ldots \otimes P^m_{j_m} \otimes I^{ABR}) \ket{\psi^{\cK A \cKbar BR}} \\
   & =: \frac{1}{\sqrt{p_{J_M}}}(P^{\cK}_{j_{\cK}} \otimes P^{\cKbar}_{j_{\cKbar}} \otimes I^{ABR}) \ket{\psi^{\cK A \cKbar BR}}, \\
 \end{split}
\end{equation}
where the Kraus operators $P^{\cK}_{j_{\cK}} = \bigotimes_{i \in \cK} P^i_{j_i}$ and $P^{\cKbar}_{j_{\cKbar}} = \bigotimes_{i \in \cKbar} P^i_{j_i}$ map the spaces $\cK$ and $\cKbar$ to the subspaces $\cK^1$ and $\cTu$. Related to this state is the $R_{\cK}$-\textit{relative} state
\[
\ket{\varphi_{j_{\cK}}^{\cK^1 A R_{\cK}}} :=
\frac{1}{\sqrt{p_{j_{\cK}}}}(P^{\cK}_{j_{\cK}} \otimes I^{\cKbar
ABR})\ket{\psi^{\cK A \cKbar BR}}
,\] where $R_{\cK}:=\cKbar B R$ and $p_{j_{\cK}}$ is the
probability of getting the outcome $j_{\cK}$. If each sender in ${\cal T}$ perfectly decouples his system from the relative reference $R_{\cK}$ and the other senders in ${\cK}$, we have
\begin{equation}
 \varphi_{j_{\cK}}^{\cK^1 R_{\cK}} = \tau^{\cK^1} \otimes \psi^{R_{\cK}},
\end{equation}
where $\tau^{\cK^1} = \bigotimes_{i \in \cK} \tau^{C^1_i}$ is
the maximally mixed state of dimension $L_{\cK}:=\prod_{i \in \cK}d_{C^1_i}$ on the system $\cK^1$.
From the Schmidt decomposition, there exists an isometry
$U^A_{j_{\cK}}: A \rightarrow A^1_{\cK} A_{\cK} A$ which Alice can
implement such that
\begin{equation}
(I^{\cK^1 R_{\cK}} \otimes U^A_{j_{\cK}}) \ket{\varphi_{j_{\cK}}^{\cK^1A R_{\cK}}} = \ket{\Phi^{L_{\cK}}} \otimes \ket{\psi^{ A_{\cK} A R_{\cK}}},
\end{equation}
where the state $\ket{\psi^{A_{\cK} A R_{\cK}}}$ is the same as
the original state $\ket{\psi^{\cK A \cKbar BR}}$ with the
ancillary system $A_{\cK}$ substituted for $\cK$.

 After merging the systems ${\cal T}$ to the receiver $A$, the senders $\cKbar$ follow with their measurements. Define the $R_{\cKbar}$-\textit{relative} state
 \[\ket{\upsilon_{j_{\cKbar}}^{\cTu B R_{\cKbar}}}
:=\frac{1}{\sqrt{p_{j_{\cKbar}}}}(P^{\cKbar}_{j_{\cKbar}} \otimes
I_{A_{\cK}ABR})\ket{\psi^{A_{\cK}A \cKbar BR}},\]
where $R_{\cKbar}:= A_{\cK}AR$. Assume once again that each sender perfectly decouples his system from the relative reference and the other senders in $\cKbar$. We have
\begin{equation}
 \upsilon_{j_{\cKbar}}^{\cTu R_{\cKbar}} = \tau^{\cTu} \otimes \psi^{R_{\cKbar}},
\end{equation}
where $\tau^{\cTu} = \bigotimes_{i \in \cKbar} \tau^{C^1_i}$ is
the maximally mixed state of dimension $N_{\cKbar}:=\prod_{i\in \cKbar}d_{C^1_i}$ on the system $\cTu$.
From the Schmidt decomposition, there exists an isometry $V^B_{j_{\cKbar}}: B \rightarrow B^1_{\cKbar} B_{\cKbar}
B$ implementable by Bob such that
\begin{equation}
(I^{\cTu R_{\cKbar}} \otimes V^B_{j_{\cKbar}}) \ket{\upsilon_{j_{\cKbar}}^{\cTu B R_{\cKbar}}} = \ket{\Gamma^{N_{\cKbar}}} \otimes \ket{\psi^{A_{\cK} A B_{\cKbar} BR}},
\end{equation}
where the state $\ket{\psi^{A_{\cK} A B_{\cKbar} BR}}$ is the same
as the original state $\inputGroupstate$ with the ancillary
systems $A_{\cK}$ and $B_{\cKbar}$ substituted for $\cK$ and
$\cKbar$.

If the senders perform their measurements simultaneously instead, the initial state is recovered by applying the isometries $U^A_{j_{\cK}}$ and $V^B_{j_{\cKbar}}$ to the the outcome state $\ket{\psi_{J_M}^{\cK^1 A \cTu BR}}$:
\begin{equation}
 \begin{split}
   &(I^{\cK^1 \cTu R} \otimes U^A_{j_{\cK}} \otimes V^B_{j_{\cKbar}}) \ket{\psi_{J_M}^{\cK^1 A \cTu BR}}\\
   &=\frac{1}{\sqrt{p_{J_M}}}(I^{\cK^1 \cTu R} \otimes U^A_{j_{\cK}} \otimes V^B_{j_{\cKbar}})( P^{\cKbar}_{j_{\cKbar}} \otimes I^{\cK^1 ABR})(P^{\cK}_{j_{\cK}} \otimes I^{A R_{\cK}})\ket{\psi^{\cK A \cKbar B R}} \\
   &=\frac{1}{\sqrt{p_{J_M}}}(I^{\scriptscriptstyle \cK^1 A^1_{\cK} \cTu R_{\cKbar} } \otimes V^B_{j_{\cKbar}})(P^{\cKbar}_{j_{\cKbar}} \otimes I^{\scriptscriptstyle \cK^1 A^1_{\cK} B R_{\cKbar} } )(I^{\scriptscriptstyle \cK^1 BR} \otimes U^A_{j_{\cK}})(P^{\cK}_{j_{\cK}} \otimes I^{\scriptscriptstyle A R_{\cK}})\ket{\psi^{\cK A \cKbar B R}} \\
   &=\frac{1}{\sqrt{p_{J_M}}}(I^{\cK^1 A^1_{\cK} \cTu R_{\cKbar}} \otimes V^B_{j_{\cKbar}})(P^{\cKbar}_{j_{\cKbar}} \otimes I^{\cK^1 A^1_{\cK}BR_{\cKbar}} )(I^{\cK^1 BR} \otimes U^A_{j_{\cK}}) \sqrt{p_{j_{\cK}}}\ket{\varphi_{j_{\cK}}^{\cK^1 A R_{\cK} }}\\
   &=\sqrt{\frac{p_{j_{\cK}}}{p_{J_M}}}(I^{\cK^1 A^1_{\cK} \cTu R_{\cKbar}} \otimes V^B_{j_{\cKbar}})(P^{\cKbar}_{j_{\cKbar}} \otimes I^{\cK^1 A^1_{\cK}BR_{\cKbar}}) \ket{\Phi^{L_{\cK}}} \otimes \ket{\psi^{A_{\cK} A \cKbar B R}} \\
   &=\sqrt{\frac{p_{j_{\cK}}}{p_{J_M}}}(I^{\cK^1 A^1_{\cK} \cTu R_{\cKbar}} \otimes V^B_{j_{\cKbar}}) \ket{\Phi^{L_{\cK}}} \otimes \sqrt{p_{j_{\cKbar}}}\ket{\upsilon_{j_{\cKbar}}^{\cTu B R_{\cKbar} }} \\
   &=\sqrt{\frac{p_{j_{\cK}}p_{j_{\cKbar}}}{p_{J_M}}}\ket{\Phi^{L_{\cK}}} \otimes \ket{\Gamma^{N_{\cKbar}}} \otimes \ket{\psi^{A_{\cK} A B_{\cKbar} BR}}. \\
 \end{split}
\end{equation}
Since the states $(I^{\cK^1 \cTu R} \otimes U^A_{j_{\cK}} \otimes
V^B_{j_{\cKbar}}) \ket{\psi_{J_M}^{\cK^1 A \cTu BR}}$ and $\ket{\Phi^{L_{\cK}}}
\otimes \ket{\Gamma^{N_{\cKbar}}} \otimes \ket{\psi^{A_{\cK} A B_{\cKbar}
BR}}$ are both normalized, we have $p_{J_M} =
p_{j_{\cK}}p_{j_{\cKbar}}$. Hence, for this ideal scenario, the decodings implemented by the receivers for a split-transfer protocol which decouples the senders from their relative references one at a time can also be used to recover the initial state if the measurements performed by the senders are done at the same time. This agrees with our intuition that the choice of the decoder applied by the receiver $A$ (resp. $B$) should not depend on the measurement outcomes of the systems ${\cKbar}$ (resp. $\cK$). We resume the previous ideas in the following adaptation of Proposition \ref{prop:mergeCond}:

\begin{Proposition}[Conditions for a Split-Transfer] \label{prop:mergeConds}
Let $\inputGroupstate$ be a multipartite state shared between $m$ senders and two receivers. Suppose the senders simultaneously perform incomplete measurements on their systems, yielding a state $\ket{\psi_{J_M}^{\cK^1 A \cTu B R}}:=\frac{1}{\sqrt{p_{J_M}}}(P^{\cK}_{j_{\cK}} \otimes P^{\cKbar}_{j_{\cKbar}} \otimes I^{ABR})\ket{\psi^{\cK A \cKbar BR}}$ for an outcome $J_M$.

Using the notation of the previous paragraphs, define the decoupling errors $Q^1_{\cal I}(\inputRelativeA)$ and $Q^2_{\cal I}(\inputRelativeB)$:
 \begin{equation}
    \begin{split}
        Q^1_{\cal I}(\inputRelativeA) &:= \sum_{j_{\cK}} p_{j_{\cK}}\|\varphi_{j_{\cK}}^{\cK^1 R_{\cK} } - \tau^{\cK^1} \otimes \psi^{R_{\cK} }\|_1, \\
        Q^2_{\cal I}(\inputRelativeB) &:= \sum_{j_{\cKbar}} p_{j_{\cKbar}}\|\upsilon_{j_{\cKbar}}^{\cTu R_{\cKbar}} - \tau^{\cTu} \otimes \psi^{R_{\cKbar}}\|_1.
    \end{split}
 \end{equation} If $Q^1_{\cal I} \leq \epsilon$ and $Q^2_{\cal I} \leq \epsilon'$, then there exists a split-transfer for the state $\inputGroupstate$ with error $2\sqrt{\epsilon}+2\sqrt{\epsilon'}$ and entanglement costs $\overrightarrow{E_{\cK}}=\bigoplus_{i \in \cK}(-\log L_i), \overrightarrow{E_{\cKbar}}=\bigoplus_{i \in \cKbar}(-\log N_i)$.
 \end{Proposition}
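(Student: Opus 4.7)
\begin{proof+}{Proposal.}
The plan is to mirror the structure of Proposition~\ref{prop:mergeCond} but iterate the Uhlmann--Schmidt argument twice, once for Alice and once for Bob. Since the Kraus operators $P^{\cK}_{j_{\cK}}$ and $P^{\cKbar}_{j_{\cKbar}}$ act on disjoint systems, the ``simultaneous'' measurement realizing $\psi_{J_M}^{\cK^1 A \cTu BR}$ can equivalently be viewed as a sequence: first measure $\cK$, then measure $\cKbar$. Likewise, the decoding ${\cal M}$ we will exhibit takes the form ``Alice applies $U^A_{j_{\cK}}$, then Bob applies $V^B_{j_{\cKbar}}$,'' and since these isometries act on disjoint systems the order does not matter. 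This lets me insert an intermediate ``mid-target'' state and peel off the two error contributions one at a time using the triangle inequality and monotonicity of trace distance under CPTP maps.

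First I would convert the hypothesis $Q^1_{\cal I} \leq \epsilon$ into a fidelity statement via $F \geq 1 - D$, and apply Corollary~\ref{cor:Ulhmann} to the purification $\ket{\varphi_{j_{\cK}}^{\cK^1 A R_{\cK}}}$ against any fixed purification of $\tau^{\cK^1}\otimes\psi^{R_{\cK}}$, namely $\ket{\Phi^{L_{\cK}}}\otimes\ket{\psi^{A_{\cK}AR_{\cK}}}$. This yields isometries $U^A_{j_{\cK}}: A \to A^1_{\cK} A_{\cK} A$ such that, averaging over $j_{\cK}$ with weights $p_{j_{\cK}}$ and using concavity of fidelity, the ``post-Alice'' state $\rho_{\mathrm{mid}} := \sum_{j_{\cK}} p_{j_{\cK}} (I\otimes U^A_{j_{\cK}})\varphi_{j_{\cK}}^{\cK^1 A \cKbar B R}(I\otimes U^A_{j_{\cK}})^{\dag}$ satisfies
\begin{equation*}
\bigl\| \rho_{\mathrm{mid}} - \braket{\Phi^{L_{\cK}}}\otimes \psi^{A_{\cK}A\cKbar BR}\bigr\|_1 \leq 2\sqrt{\epsilon}.
\end{equation*}
This is exactly the computation already done in the proof of Proposition~\ref{prop:mergeCond}, just transplanted to the $\cK$-side.

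Next I would define Bob's operation ${\cal B}$ as the CPTP map consisting of measuring $\cKbar$ with the Kraus operators $\{P^{\cKbar}_{j_{\cKbar}}\}$ followed by applying a conditional isometry $V^B_{j_{\cKbar}}$. Applying ${\cal B}$ to $\rho_{\mathrm{mid}}$ produces the actual protocol output (using that Alice's isometry on $A$ commutes with Bob's operation on $\cKbar B$). Running the same Uhlmann argument as above, now with the hypothesis $Q^2_{\cal I} \leq \epsilon'$ on $\upsilon_{j_{\cKbar}}^{\cTu R_{\cKbar}}$, produces isometries $V^B_{j_{\cKbar}}: B \to B^1_{\cKbar} B_{\cKbar} B$ such that
\begin{equation*}
\bigl\| {\cal B}\bigl(\braket{\Phi^{L_{\cK}}}\otimes \psi^{A_{\cK}A\cKbar BR}\bigr) - \braket{\Phi^{L_{\cK}}}\otimes\braket{\Gamma^{N_{\cKbar}}}\otimes \psi^{A_{\cK}AB_{\cKbar}BR}\bigr\|_1 \leq 2\sqrt{\epsilon'}.
\end{equation*}
Combining the two bounds via the triangle inequality and the monotonicity $\|{\cal B}(\sigma) - {\cal B}(\sigma')\|_1 \leq \|\sigma-\sigma'\|_1$ yields the claimed error $2\sqrt{\epsilon}+2\sqrt{\epsilon'}$.

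The only conceptually delicate step is justifying that the isometry $V^B_{j_{\cKbar}}$ obtained from Uhlmann applied to $\upsilon$ actually produces the correct ``joint'' target when pre-composed with Alice's $U^A_{j_{\cK}}$; this is precisely what the commutativity calculation already worked out in the excerpt (the one establishing $p_{J_M} = p_{j_{\cK}}p_{j_{\cKbar}}$ in the ideal case) guarantees, since the $\upsilon$-state is defined on the post-Alice-decoding reference $R_{\cKbar}=A_{\cK}AR$ and the senders in $\cKbar$ act on systems disjoint from $A$. Everything else is a routine repackaging of the single-receiver proof.
\end{proof+}
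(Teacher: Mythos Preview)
Your proposal is correct and follows essentially the same approach as the paper: both apply the Uhlmann/Proposition~\ref{prop:mergeCond} argument twice to obtain the decoders $U^A_{j_{\cK}}$ and $V^B_{j_{\cKbar}}$, introduce the intermediate ``mid-target'' state (your $\rho_{\mathrm{mid}}$ is the paper's $\zeta$, your ${\cal B}$ is the paper's ${\cal M}$), and then combine the two error bounds via the triangle inequality together with monotonicity of trace distance under the CPTP map ${\cal B}$. The paper invokes Proposition~\ref{prop:mergeCond} as a black box for each side rather than re-deriving the Uhlmann step, but the content is identical.
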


\begin{proof}
Apply Proposition \ref{prop:mergeCond} for each of the decoupling errors $Q^1_{\cal I}$ and $Q^2_{\cal I}$. Since these are bounded by $\epsilon$ and $\epsilon'$, there exist isometries $U^A_{j_{\cK}}$ and $V^B_{j_{\cKbar}}$ such that
 \begin{eqnarray}
     \bigg \| \sum_{j_{\cK}} p_{\scriptscriptstyle j_{\cK}} (I^{\scriptscriptstyle R_{\cK}} \otimes U^{\scriptscriptstyle A}_{\scriptscriptstyle j_{\cK}}) \varphi^{\scriptscriptstyle \cK^1 A R_{\cK} }_{j_{\cK}}(I^{\scriptscriptstyle R_{\cK} } \otimes U^{\scriptscriptstyle A}_{\scriptscriptstyle j_{\cK}})^{\dag}-\psi^{\scriptscriptstyle A_{\cK} A \cKbar B R} \otimes \Phi^{\scriptscriptstyle L_{\cK}} \bigg \|_1\leq2\sqrt{\epsilon}&& \label{eq:qerror1}\\
     \bigg \| \sum_{\scriptscriptstyle j_{\cKbar}} p_{\scriptscriptstyle j_{\cKbar}} (I^{\scriptscriptstyle R_{\cKbar} } \otimes V^{\scriptscriptstyle B}_{\scriptscriptstyle j_{\cKbar}}) \upsilon^{\scriptscriptstyle \cTu B R_{\cKbar}}_{j_{\cKbar}}(I^{\scriptscriptstyle R_{\cKbar} } \otimes V^{\scriptscriptstyle B}_{j_{\cKbar}})^{\dag} - \psi^{\scriptscriptstyle A_{\cK} A B_{\cKbar} B R} \otimes \Gamma^{\scriptscriptstyle N_{\cKbar}} \bigg \|_1\leq2\sqrt{\epsilon'}.&& \label{eq:qerror}
 \end{eqnarray}
  If we apply the isometries $U^A_{j_{\cK}}$ and $V^B_{j_{\cKbar}}$ to the outcome state $\ket{\psi^{\cK^1 A \cTu BR}_{J_M}}$, we have
 \begin{equation*}
  \begin{split}
    &\rho := \sum_{J_M} p_{J_M} \bigg ( (I^{\cK^1 \cTu R} \otimes \UA \otimes \VB) \psi^{\cK^1 A \cTu BR}_{J_M} (I^{\cK^1 \cTu R} \otimes \UA \otimes \VB)^{\dag} \bigg )\\
     &= \sum_{J_M} p_{j_{\cK}}\bigg ((I \otimes \VB)(P^{\cKbar}_{\jKbar}\otimes I)(I \otimes \UA)\varphi_{j_{\cK}}^{\cK^1 A R_{\cK}}(I \otimes \UA)^{\dag}(P^{\cKbar}_{\jKbar} \otimes I)^{\dag}(I \otimes \VB)^{\dag} \bigg )\\
     &=\sum_{\jKbar} (I \otimes \VB)(P^{\cKbar}_{\jKbar}\otimes I) \zeta (P^{\cKbar}_{\jKbar} \otimes I)^{\dag}(I \otimes \VB)^{\dag}  \\
     &=: {\cal M}(\zeta)\\
  \end{split}
 \end{equation*}
where $\zeta := \sum_{\jK} p_{j_{\cK}} (I \otimes \UA) \varphi_{j_{\cK}}^{\cK^1 A R_{\cK} }(I \otimes \UA)^{\dag}$. It can be seen as the output state we would get if only the senders in $\cK$ wanted to transfer their systems to the receiver~$A$. The map ${\cal M}$, as defined above, is a trace-preserving LOCC  operation (i.e., measurements by the senders in ${\cKbar}$ followed by an isometry on $B$). 
Note that we removed some of the superscript notation for the sake of clarity.

To bound the trace distance between the output state $\rho$ and the state $\psi^{A_{\cK} A B_{\cKbar} BR} \otimes \Phi^{L_{\cK}}
\otimes \Gamma^{N_{\cKbar}}$, we introduce the following intermediate state
 \begin{equation}
  \begin{split}
   \sigma &:= \sum_{\jKbar}(I \otimes \VB)(P^{\cKbar}_{\jKbar} \otimes I)(\psi^{A_{\cK} A \cKbar B R} \otimes \Phi^{L_{\cK}})(P^{\cKbar}_{\jKbar} \otimes I)^{\dag}(I \otimes \VB)^{\dag} \\
   &= {\cal M}(\psi^{A_{\cK} A \cKbar BR} \otimes \Phi^{L_{\cK}}) \\
   & = \Phi^{L_{\cK}} \otimes \sum_{\jKbar} p_{\scriptscriptstyle j_{\cKbar}} (I^{\scriptscriptstyle R_{\cKbar} } \otimes V^{\scriptscriptstyle B}_{\scriptscriptstyle j_{\cKbar}}) \upsilon^{\scriptscriptstyle \cTu B R_{\cKbar}}_{j_{\cKbar}}(I^{\scriptscriptstyle R_{\cKbar} } \otimes V^{\scriptscriptstyle B}_{j_{\cKbar}})^{\dag}, \\
  \end{split}
 \end{equation}
and apply the triangle inequality
\begin{equation}
 \begin{split}
   \bigg \| \rho - &\psi^{A_{\cK} A B_{\cKbar} BR} \otimes \Phi^{L_{\cK}} \otimes \Gamma^{N_{\cKbar}} \bigg \|_1 \leq \bigg \|\rho - \sigma \bigg \|_1 +\bigg \| \sigma - \psi^{A_{\cK} A B_{\cKbar} BR} \otimes \Phi^{L_{\cK}} \otimes \Gamma^{N_{\cKbar}}\bigg\|_1. \\
 \end{split}
\end{equation}
From eq.~(\ref{eq:qerror}), the trace norm $\bigg \| \sigma - \psi^{A_{\cK} A B_{\cKbar} BR}
\otimes \Phi^{L_{\cK}} \otimes \Gamma^{N_{\cKbar}} \bigg \|_1$ is bounded from above by $2\sqrt{\epsilon'}$. To bound $\|\rho - \sigma \|_1$, we have
 \begin{equation}
  \begin{split}
  \| \rho - \sigma \|_1 &= \bigg \| {\cal M}(\zeta) - {\cal M}(\psi^{A_{\cK} A \cKbar BR} \otimes \Phi^{L_{\cK}}) \bigg \|_1 \\
    &\leq  \bigg \| \zeta - \psi^{A_{\cK} A \cKbar BR} \otimes \Phi^{L_{\cK}} \bigg \|_1 \\
    &\leq 2\sqrt{\epsilon}.
  \end{split}
 \end{equation}
The first inequality holds since the trace distance is non-increasing under quantum operations, and the second inequality is just eq.~(\ref{eq:qerror1}).
Thus, we have a split-transfer for the state $\initstateA$ with error $2\sqrt{\epsilon}+2\sqrt{\epsilon'}$.
\end{proof}

\subsection{Split-transfer by random measurements}
With this result in hand, a one-shot split-transfer protocol for the state $\inputGroupstate$ is obtained by two independent applications of Proposition \ref{prop:isometry}, followed by an application of Proposition \ref{prop:mergeConds}. We state the result here.

\begin{Proposition}[One-Shot Split-Transfer] \label{prop:rdnSplit}
Let $\inputGroupstate$ be a multipartite state for $m$ senders and two receivers.  For each sender $C_i$ in the cut $\cK$, there exists an instrument ${\cal I}_i=\{\calE^i_j\}_{j=0}^{F_i}$ consisting of $F_i
= \lfloor \frac{d_{C_i}K_i}{L_i} \rfloor$ partial isometries of rank $L_i$ and one of rank $L_i'= d_{C_i}K_i - F_i
L_i < L_i$ such that the decoupling error $Q^1_{{\cal I}}(\inputRelativeA \otimes \Phi^{K_{\cK}})$ is bounded by
\begin{equation} \label{eq:upperbound1}
  \begin{split}
   Q^1_{{\cal I}}(\inputRelativeA \otimes \Phi^{K_{\cK}}) &\leq 2 \sum_{\substack{{\cal X} \subseteq \cK \\ {\cal X} \neq \emptyset}} \prod_{i \in {\cal X}}\frac{L_i}{d_{C_i}K_i} + 2\sqrt{d_{R_{\cK}} \sum_{\substack{{\cal X} \subseteq \cK \\ {\cal X} \neq \emptyset}}  \prod_{i \in {\cal X}} \frac{L_i}{K_i} \Tr \bigg [ \psi^2_{R_{\cK}{\cal X} } \bigg ]} =: \Delta^1_{\cal I}. \\
  \end{split}
\end{equation}
Similarly, for each sender $C_i$ in the cut $\cKbar$, there exists an instrument ${\cal I}_i=\{\calE^i_j\}_{j=0}^{G_i}$ consisting of $G_i
= \lfloor \frac{d_{C_i}M_i}{N_i} \rfloor$ partial isometries of rank $N_i$ and one of rank $N_i'= d_{C_i}M_i - G_i
N_i < N_i$ such that the decoupling error $Q^2_{{\cal I}}(\inputRelativeB \otimes \Gamma^{M_{\cKbar}})$ is bounded by
\begin{equation} \label{eq:upperbound2}
  \begin{split}
   Q^2_{{\cal I}}(\inputRelativeB \otimes \Gamma^{M_{\cKbar}}) &\leq 2 \sum_{\substack{{\cal Y} \subseteq \cKbar \\ {\cal Y} \neq \emptyset}} \prod_{i \in {\cal Y}}\frac{N_i}{d_{C_i}M_i} + 2\sqrt{d_{R_{\cKbar}}\sum_{\substack{{\cal Y} \subseteq \cKbar \\ {\cal Y} \neq \emptyset}}  \prod_{i \in {\cal Y}} \frac{N_i}{M_i} \Tr \bigg [ \psi^2_{R_{\cKbar} {\cal Y}} \bigg ]} =: \Delta^2_{\cal I}. \\
  \end{split}
\end{equation}
Finally, there exists a split-transfer for the state $\inputGroupstate$
with error $ 2\sqrt{\Delta^1_{\cal I}}+2\sqrt{\Delta^2_{\cal I}}$.
The left hand sides of eqs.~(\ref{eq:upperbound1}) and (\ref{eq:upperbound2}) are bounded
from above on average by their right hand sides if we perform random
instruments on all the senders using the Haar measure.
\end{Proposition}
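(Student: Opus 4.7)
The plan is to obtain Proposition \ref{prop:rdnSplit} as a modular combination of Proposition \ref{prop:isometry} (One-Shot Multiparty State Merging), applied twice with two different choices of reference system, together with Proposition \ref{prop:mergeConds} (Conditions for a Split-Transfer). No new decoupling estimate is needed; the content is in correctly identifying the relative references and then invoking the previous two propositions.

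First, I would regard the state $\inputGroupstate \otimes \Phi^{K_{\cK}}$ as an instance of the multiparty merging setup in which the senders are the helpers in $\cK$, the receiver is $A$, and the composite system $R_{\cK} := \cKbar B R$ plays the role of the reference. Applying Proposition \ref{prop:isometry} verbatim produces, for each sender $C_i \in \cK$, an instrument ${\cal I}_i = \{\calE^i_j\}_{j=0}^{F_i}$ consisting of $F_i = \lfloor d_{C_i}K_i / L_i \rfloor$ partial isometries of rank $L_i$ and one of rank $L_i' = d_{C_i}K_i - F_i L_i < L_i$, together with the bound $Q^1_{\cal I}(\inputRelativeA \otimes \Phi^{K_{\cK}}) \le \Delta^1_{\cal I}$. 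Symmetrically, viewing the state with the senders in $\cKbar$ merging to $B$ and treating $R_{\cKbar}$ as the reference, a second application of Proposition \ref{prop:isometry} yields instruments ${\cal I}_i$ for each $C_i \in \cKbar$ with the claimed structure and with $Q^2_{\cal I}(\inputRelativeB \otimes \Gamma^{M_{\cKbar}}) \le \Delta^2_{\cal I}$.

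The point that deserves a sentence of care is why the two applications are genuinely independent. The Kraus operators $P^{\cK}_{j_{\cK}} = \bigotimes_{i \in \cK} P^i_{j_i}$ and $P^{\cKbar}_{j_{\cKbar}} = \bigotimes_{i \in \cKbar} P^i_{j_i}$ act on disjoint tensor factors, so the joint instrument on all $m$ senders factorises as a product. Consequently the reduced post-measurement states $\varphi_{j_{\cK}}^{\cK^1 R_{\cK}}$ and $\upsilon_{j_{\cKbar}}^{\cTu R_{\cKbar}}$ appearing in Proposition \ref{prop:mergeConds} coincide exactly with the reduced states controlled by the two separate applications of Proposition \ref{prop:isometry}, and the average-over-Haar-measure bound applies separately to each group: $\varphi_{j_{\cK}}$ depends only on the unitaries drawn from $\mathbb{U}(C_i)$ for $i \in \cK$, and $\upsilon_{j_{\cKbar}}$ only on those for $i \in \cKbar$. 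The last statement of the proposition (bounds holding on average under random Haar instruments) then follows by integrating the two independent ensembles.

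To finish, I would feed the two instrument families into Proposition \ref{prop:mergeConds}. That proposition requires only that the individual decoupling errors be bounded, say $Q^1_{\cal I} \le \epsilon$ and $Q^2_{\cal I} \le \epsilon'$, and produces isometries $\UA$ and $\VB$ implementing a split-transfer with total error $2\sqrt{\epsilon} + 2\sqrt{\epsilon'}$. Setting $\epsilon := \Delta^1_{\cal I}$ and $\epsilon' := \Delta^2_{\cal I}$ immediately gives the promised split-transfer error of $2\sqrt{\Delta^1_{\cal I}} + 2\sqrt{\Delta^2_{\cal I}}$. There is no real obstacle in this argument; the only conceptual subtlety is the choice of the relative references $R_{\cK}$ and $R_{\cKbar}$, which must include not only the purifying system $R$ but also the ``other receiver'' and the ``other group of senders'' so that Proposition \ref{prop:isometry} applies unchanged.
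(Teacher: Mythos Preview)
Your proposal is correct and follows essentially the same approach as the paper: two independent applications of Proposition~\ref{prop:isometry} (with the appropriate relative references $R_{\cK}$ and $R_{\cKbar}$) followed by Proposition~\ref{prop:mergeConds}. In fact you spell out more than the paper does; its proof merely states that the two bounds are obtained by independent applications of Proposition~\ref{prop:isometry}, leaves the details to the reader, and notes that the partial isometries in Proposition~\ref{prop:mergeConds} now act on the enlarged spaces $\cK\cK^0$ and $\cKbar\cTo$ because of the added maximally entangled ancillas.
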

\begin{proof}
The bounds on the decoupling errors $Q_{\cal I}^1$ and $Q_{\cal I}^2$ are obtained by two independent applications of Proposition
\ref{prop:isometry}. We leave the details to the
reader. The existence of a split-transfer with error
$2\sqrt{\Delta^1_{\cal I}}+2\sqrt{\Delta^2_{\cal I}}$ follows from Proposition \ref{prop:mergeConds}. Note here that
since the senders have additional entanglement at their disposal,
the partial isometries $P^{\cK}_{\jK}$ and $P^{\cKbar}_{\jKbar}$
in Proposition \ref{prop:mergeConds} are replaced by
$P^{\cK\cK^0}_{\jK}$ and $P^{\cKbar\cTo}_{\jKbar}$. These will act
on the spaces $\cK\cK^0$ and $\cKbar\cTo$ respectively, with
output spaces corresponding to $\cK^1$ and $\cTu$.
\end{proof}

\subsection{Asymptotic analysis} \label{sec:mergeprotocol}
The asymptotic analysis for the split-transfer problem is done using the approach of Section~\ref{sec:iid} by treating each decoupling error separately. We arrive at a variation on Theorem \ref{thm:statemerging}:

\begin{Theorem} \label{thm:splittransfer}
Let $\inputGroupstate$ be a multipartite state shared between $m$ senders and two receivers. The rates $\overrightarrow{R_{\cK}}(\psi):=\bigoplus_{i \in \cK}(R_i)$ and $\overrightarrow{R_{\cKbar}}(\psi):=\bigoplus_{i \in \cKbar}(R_i)$ are in the rate region for a split-transfer of the state $\inputGroupstate$ iff the inequalities
\begin{align}
   \label{eq:splitcond1}\sum_{i \in {\cal X}} R_i &\geq S({\cal X}|{\overline{\cal X}}A)_{\psi}\\
   \label{eq:splitcond2}\sum_{i \in {\cal Y}} R_i &\geq S({\cal Y}|{\overline{\cal Y}}B)_{\psi}
\end{align}
hold for all non-empty subsets ${\cal X} \subseteq {\cK}$ and ${\cal Y} \subseteq {\cKbar}$. \end{Theorem}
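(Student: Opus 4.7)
The plan is to split the theorem into its converse and direct parts and handle them separately, using the machinery built up earlier in the chapter.

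For the converse, I would imitate the single-sender argument used in the converse of Theorem \ref{thm:statemerging}. Fix a non-empty ${\cal X}\subseteq\cK$ and imagine a ``free'' scenario in which the receiver~$A$ is handed every system outside ${\cal X}$ that is destined for her, i.e.\ $\overline{\cal X}\cap\cK$, and in which all senders in $\overline{\cal X}\cap\cK$ collude with the single sender holding ${\cal X}$. Then an ordinary merging of ${\cal X}$ to $A$, with side information $\overline{\cal X}A$, must be performed by LOCC; by the converse of two-party merging in \cite{merge}, this requires an ebit rate of at least $S({\cal X}\mid\overline{\cal X}A)_{\psi}$, so any achievable split-transfer must obey $\sum_{i\in{\cal X}}R_i\geq S({\cal X}\mid\overline{\cal X}A)_{\psi}$. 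The symmetric argument on Bob's side yields eq.~(\ref{eq:splitcond2}).

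For the direct part I would take $n$ copies of $\inputGroupPurestate$ (dropping $R$ for a moment since the state can be purified) and work on the Schumacher-compressed state $\Psi^{\tilde\cK\tilde A\tilde\cKbar\tilde B\tilde R}$ supported on the typical subspaces of all marginals of interest. Given a rate-tuple $(\overrightarrow{R_{\cK}},\overrightarrow{R_{\cKbar}})$ strictly inside the region defined by eqs.~(\ref{eq:splitcond1}) and (\ref{eq:splitcond2}), I would choose the instrument parameters $(K_i,L_i)_{i\in\cK}$ and $(M_i,N_i)_{i\in\cKbar}$ exactly as in the two-party case of Section~\ref{sec:iid}: whenever a rate $R_i$ is positive set $L_i=1$ and $\log K_i\approx nR_i$, and when $R_i\leq0$ set $K_i=1$ and $\log L_i\approx -nR_i$, and similarly on the $\cKbar$ side. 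Proposition~\ref{prop:rdnSplit} then produces two independent decoupling errors $\Delta^1_{\cal I}$ and $\Delta^2_{\cal I}$, one for Alice's side of the cut (reference $R_{\cK}=\cKbar BR$) and one for Bob's side (reference $R_{\cKbar}=\cK AR$), and Proposition~\ref{prop:mergeConds} stitches the two halves together into a split-transfer with overall error $2\sqrt{\Delta^1_{\cal I}}+2\sqrt{\Delta^2_{\cal I}}$.

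The remaining work is to show that each of the two decoupling errors vanishes. Focusing on $\Delta^1_{\cal I}$, the first term of eq.~(\ref{eq:upperbound1}) is $O(2^{-n\eta})$ for some $\eta>0$ by the choice of $L_i/(d_{\tilde C_i}K_i)$. For the second term one needs, for every non-empty ${\cal X}\subseteq\cK$, a bound of the form
\begin{equation*}
d_{\tilde R_{\cK}}\prod_{i\in{\cal X}}\frac{L_i}{K_i}\Tr\bigl[\Psi^{2}_{\tilde R_{\cK}\tilde{\cal X}}\bigr]\leq 2^{-n\eta_{\cal X}}
\end{equation*}
with $\eta_{\cal X}>0$; this rearranges, via the typicality bounds on $d_{\tilde R_{\cK}}$ and $\Tr[\Psi^{2}_{\tilde R_{\cK}\tilde{\cal X}}]$, to exactly $\sum_{i\in{\cal X}}R_i>S(\cal X|\overline{\cal X}A)_{\psi}$ after using purity of $\psi$ to identify $S(R_{\cK}{\cal X})_{\psi}-S(R_{\cK})_{\psi}=S({\cal X}|\overline{\cal X}A)_{\psi}$. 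The same analysis on the $\cKbar$ side handles $\Delta^2_{\cal I}$.

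The main obstacle, as already flagged in Section~\ref{sec:iid}, is that when $|\cK|\geq 3$ (respectively $|\cKbar|\geq 3$) the Schumacher-compressed state is not automatically known to satisfy the purity estimates $\Tr[\Psi^{2}_{\tilde R_{\cK}\tilde{\cal X}}]\leq(1-\epsilon)^{-2}2^{-n(S(R_{\cK}{\cal X})_{\psi}-\delta)}$ simultaneously for all subsets ${\cal X}$; one needs Conjecture~\ref{eq:notimesharing} to replace $\Psi$ by a nearby state for which all the required marginal purities are controlled at once. Assuming that conjecture (proved in Appendix~A for two systems), the argument above closes; by the Gentle Measurement Lemma the same protocol applied to $(\inputGroupPurestate)^{\otimes n}$ has error $O(\sqrt{\epsilon})+O(2^{-n\delta})$, and taking the closure of the set of rate-tuples we strictly satisfy recovers the full rate region.
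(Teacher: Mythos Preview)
Your proposal has a genuine gap in the direct part. You correctly recognise that your decoupling argument via Proposition~\ref{prop:rdnSplit} is conditional on Conjecture~\ref{eq:notimesharing}; but the theorem is stated unconditionally, so as written you have not proved it. The paper sidesteps this entirely: it does \emph{not} attempt to bound the decoupling errors directly in the i.i.d.\ regime. Instead, for the direct part it views $\inputGroupstate$ once as a multiparty merging instance with senders~$\cK$, receiver~$A$, reference $R_{\cK}=\cKbar BR$, and once with senders~$\cKbar$, receiver~$B$, reference $R_{\cKbar}=\cK AR$, invokes Theorem~\ref{thm:statemerging} (whose proof in~\cite{merge} already uses time-sharing and is unconditional) on each side to get vanishing $Q^1_{\cal I}$ and $Q^2_{\cal I}$, and then applies Proposition~\ref{prop:mergeConds} at corner points; all other rates are recovered by time-sharing. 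So the paper trades your ``one-shot, all-subsets-at-once'' typicality requirement for time-sharing, which is exactly the trade-off discussed at length in Section~\ref{sec:iid}. (A minor correction: in the split-transfer setting the obstruction already appears at $|\cK|\geq 2$, not $\geq 3$, because by purity $\Tr[\Psi^2_{R_{\cK}{\cal X}}]=\Tr[\Psi^2_{A\,(\cK\setminus{\cal X})}]$, and the receiver~$A$ contributes an extra system to the marginal whose purity you must control.)

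Your converse is correct in spirit and mirrors the converse of Theorem~\ref{thm:statemerging}, but the paper gives a more explicit argument: it tracks the entropy of entanglement across the bipartite cut $A\overline{\cal X}$ vs.\ ${\cal X}\cKbar BR$ directly, uses monotonicity of the average pure-state entanglement under LOCC, and then applies the fidelity--trace-distance relation together with the Fannes inequality to compare the output ensemble to $\psi^{\otimes n}_{AA_{\cK}BB_{\cKbar}R}\otimes\Phi^{L_{\cK}}\otimes\Gamma^{N_{\cKbar}}$. Your reduction to the single-sender merging converse would work, but you should be explicit that the parties $\cKbar$ and $B$ sit entirely on the ``reference'' side of that cut (so their ebits do not cross it), otherwise the accounting for which entanglement straddles the cut is ambiguous.
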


\begin{proof}
By Proposition \ref{prop:mergeConds}, the task of split-transfer is equivalent to performing simultaneously two multiparty state merging protocols, each one involving different parties. If we view the state $\inputGroupstate$ as a multipartite state consisting of senders $C_i$ belonging to the set $\cK$, a receiver $A$ and a reference $RB \cKbar$, we can apply
Theorem \ref{thm:statemerging} to show the existence of multiparty merging protocols with arbitrarily small error and an entanglement rate satisfying the inequalities of eq.~(\ref{eq:splitcond1}). Viewing instead the state $\inputGroupstate$ as a multipartite state consisting of senders $C_i$ belonging to the set $\cKbar$, a receiver $B$ and a reference $RA\cK$, Theorem \ref{thm:statemerging} tells us of the existence of multiparty merging protocols with arbitrarily small error and an entanglement rate satisfying eq.~(\ref{eq:splitcond1}). For corner points, we can applying these protocols simultaneously, and we have, by Proposition \ref{prop:mergeConds}, a family of split-transfer protocols with vanishing error as $n$ grows larger. The other rates of the regions described by eqs.~(\ref{eq:splitcond1}) and (\ref{eq:splitcond2}) are achieved by using a time-sharing strategy. This proves the direct part.

To prove the converse, consider any cut ${\cal X}$ of the senders in $\cK$ and look at the preservation of the entanglement
across the cut $A\cX$ vs $\X \cKbar B R$.  Assume, for technical reasons, that $L_i \leq 2^{O(n)}$ for all $i \in \cK$. The initial entropy of entanglement across the cut $A\cX$ vs $\X \cKbar B R$ is
 \begin{equation}\label{eq:Ein}
    E_{in} := n S(A\cX)_{\psi} + \sum_{i \in \X} \log K_i.
 \end{equation}
At the end of any LOCC operation on the state $(\inputGroupstate)^{\otimes n}$, the output state is an ensemble $\{q_k,\psi^k_{\cK^1A^1_{\cK}A^nA^n_{\cK}\cKbar^1B^1_{\cKbar}B^nB^n_{\cKbar}R^n}\}$ of pure states. Using monotonicity of the entropy of entanglement under LOCC \cite{Bennett}, we have
\begin{equation}\label{eq:EqIn}
n S(A\cX)_{\psi} + \sum_{i \in \X} \log K_i \geq \sum_k q_k S(\cX^1A^1_{\cK} A^n A^n_{\cK})_{\psi^k},
\end{equation}
where $\cX^1:=\bigotimes_{i \in \cX} C^1_i$. For any split-transfer of the state $(\inputGroupstate)^{\otimes n}$ with error $\epsilon$, we have
\begin{equation}
 \sum_k q_k F^2(\psi^k_{\cK^1A^1_{\cK}A^nA^n_{\cK}\cKbar^1B^1_{\cKbar}B^nB^n_{\cKbar}R^n}, \psi^{\otimes n}_{AA_{{\cal T}}BB_{\ov{\cal T}}R} \otimes \Phi^{L_{\cK}} \otimes \Gamma^{N_{\cKbar}}) \geq (1-\epsilon/2)^2.
\end{equation}
This follows from the definition of a split-transfer (eq.~(\ref{eq:splittransfer})) and the fact that $F^2$ is linear when one argument is pure. Using the relation between trace distance and fidelity (eq.~(\ref{Lemma:relation})), and the convexity of the $x^2$ function, we rewrite this as
\begin{equation}
 \sum_k q_k \bigg \|\psi^k_{\cK^1A^1_{\cK}A^nA^n_{\cK}\cKbar^1B^1_{\cKbar}B^nB^n_{\cKbar}R^n} - \psi^{\otimes n}_{AA_{{\cal T}}BB_{\ov{\cal T}}R} \otimes \Phi^{L_{\cK}} \otimes \Gamma^{N_{\cKbar}} \bigg \| \leq 2\sqrt{\epsilon(1-\epsilon/4)}.
\end{equation}
By monotonicity of the trace norm under partial tracing, we get
\begin{equation}
 \sum_k q_k \bigg \|\psi^k_{\cX^1A^1_{\cK}A^nA^n_{\cK}} - \psi^{\otimes n}_{AA_{{\cal T}}} \otimes \tau^{A_{\X}^1} \otimes \bigotimes_{i \in \cX} \Phi^{L_i} \bigg \| \leq 2\sqrt{\epsilon(1-\epsilon/4)}.
\end{equation}
Using the Fannes inequality (Lemma \ref{lem:Fannes}) and the concavity of the $\eta$-function, we have
\begin{equation*}
 \begin{split}
 \sum_k q_k \bigg |S(\cX^1A^1_{\cK}A^nA^n_{\cK})_{\psi^k} - \sum_{i \in \X} &\log L_i - n S(A\X\cX)_{\psi} \bigg |  \\ &\leq (2\sum_{i \in \cK}\log L_i + n\log d_A + n\log d_{A_{\cK}}) \eta(2\sqrt{\epsilon(1-\epsilon/4)})\\  &\leq O(n)\eta(2\sqrt{\epsilon(1-\epsilon/4)}).\\
 \end{split}
\end{equation*}
Finally, using eq.~(\ref{eq:EqIn}), we have
\begin{equation}
\sum_{i \in \X} R_i = \sum_{i \in \X} \frac{1}{n}(\log K_i - \log L_i) \geq S(\X|\cX A)_{\psi} - O(1)\eta(2\sqrt{\epsilon(1-\epsilon/4)})
\end{equation}
for any non empty subset $\X \subseteq \cK$. Using a similar argumentation, we can show that
\begin{equation}
\sum_{i \in Y} R_i = \sum_{i \in \Y} \frac{1}{n}(\log M_i - \log N_i) \geq S(\Y|\cY B)_{\psi} - O(1)\eta(2\sqrt{\epsilon(1-\epsilon/4)})
\end{equation}
holds for any non empty subset $\Y \subseteq \cKbar$. By letting $n \rightarrow \infty$ and $\epsilon \rightarrow 0$, we get the converse.
\end{proof}

\subsection{An application: entanglement of assistance} \label{sec:appl}
We are now ready to give a protocol for distilling entanglement at the min-cut rate $E_{min-cut}(\inputstate)$. If the multiparty typicality conjecture holds, which we recall is a statement on the extension of well-known typicality properties to a multiparty scenario, we can answer the conjecture of \cite{merge}: \textit{With high probability, the min-cut entanglement $E_{min-cut}(\inputstate)$ of the state $\inputstate$ is arbitrarily well-preserved after the helpers $C_1,C_2,\ldots,C_m$ perform simultaneous random measurements on their typical subspaces.}

\begin{Theorem}\label{cor:LOCC}
Let $\inputstate$ be a multipartite pure state with $m$ helpers and two receivers. Given arbitrarily many copies of $\inputstate$, there exists an LOCC protocol achieving the optimal ``assisted'' EPR rate:
\begin{equation}\label{eq:mincut}
  E^{\infty}_A(\psi,A:B) = \min_{\cK} \{S(A\cK)_{\psi}  \}
\end{equation}
\end{Theorem}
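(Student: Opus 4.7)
The plan is to reduce assisted distillation to a split-transfer followed by bipartite distillation on the recovered pure state. First I would select a cut $\cK^{\star} \subseteq \{C_1,\ldots,C_m\}$ attaining $E_{\mathrm{min-cut}}(\psi,A\!:\!B) = \min_{\cK} S(A\cK)_{\psi}$, and set $E^{\star}:=S(A\cK^{\star})_{\psi}$. Applying Theorem \ref{thm:splittransfer} to $\psi^{\otimes n}$ with the partition $(\cK^{\star},\overline{\cK^{\star}})$ of the helpers, I pick a rate vector $(R_i)_{i=1}^{m}$ strictly inside the rate region defined by eqs.~(\ref{eq:splitcond1})--(\ref{eq:splitcond2}); for $n$ large enough there is then a split-transfer of arbitrarily small error sending the systems in $\cK^{\star}$ to Alice and those in $\overline{\cK^{\star}}$ to Bob.

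The output of this split-transfer is, up to vanishing trace distance, $n$ copies of the pure state $\psi^{AA_{\cK^{\star}} B B_{\overline{\cK^{\star}}}}$ held jointly by Alice and Bob, together with some helper--receiver entanglement accounting for the positive/negative $R_i$. Since this state is pure across the Alice-versus-Bob cut and its marginal on Alice has von Neumann entropy $S(AA_{\cK^{\star}})_{\psi}=S(A\cK^{\star})_{\psi}=E^{\star}$, Alice and Bob can now invoke a standard bipartite distillation protocol (for instance the Schmidt projection method or the hashing method reviewed in Chapter~2) on the recovered state to extract EPR pairs at a rate arbitrarily close to $E^{\star}$ per input copy of $\psi$. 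To upgrade this to a genuine LOCC procedure that consumes no pre-shared entanglement, I would use a double-blocking bootstrap: reserve a sublinear batch of $M=o(N)$ copies of $\psi$ to manufacture the modest amount of helper--Alice and helper--Bob ebits demanded as catalysis by the split-transfer on the remaining $N-M$ copies (bootstrapping those ebits themselves via iterated split-transfers on exponentially smaller blocks, or as a fallback via the sequential state-merging strategy of \cite{merge}). Since the overhead is $o(N)$, the amortised rate still approaches $E^{\star}$.

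For the converse, along any cut $\cK$ the pure state $\psi$ has bipartite entanglement $S(A\cK)_{\psi}$ across $A\cK$ versus $B\overline{\cK}$; by monotonicity of the entropy of entanglement under LOCC, the distillable EPR rate between Alice and Bob is bounded above by $S(A\cK)_{\psi}$ for \emph{every} cut $\cK$, hence by $\min_{\cK} S(A\cK)_{\psi}=E^{\star}$.

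The main obstacle is the split-transfer step itself, which rests on Theorem \ref{thm:splittransfer} and therefore on the multiparty typicality conjecture (Conjecture \ref{eq:notimesharing}): the technical heart is showing that \emph{simultaneous} random measurements by all helpers on their typical subspaces decouple each side from its relative reference strongly enough that both Alice and Bob can implement their Uhlmann isometries with high fidelity, which is exactly what the typicality conjecture is designed to deliver. A secondary nuisance is bookkeeping the catalytic entanglement in the bootstrap phase so that it is provably $o(N)$ while still covering the positive-rate coordinates of the chosen point in the split-transfer rate region.
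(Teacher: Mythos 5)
Your overall architecture coincides with the paper's: pick a minimizing cut, run the split-transfer of Theorem \ref{thm:splittransfer} (hence inherit the dependence on Conjecture \ref{eq:notimesharing}), then let Alice and Bob distill the recovered pure state at rate $S(A\cK_{\min})_{\psi}$, with optimality following from LOCC monotonicity of bipartite entanglement. The gap is in how you discharge the catalytic entanglement, which is precisely the point of the theorem (the protocol must be LOCC, with no pre-shared ebits). You pick an arbitrary rate tuple strictly inside the split-transfer region and then propose to manufacture the required helper--receiver ebits from a reserved batch of $M=o(N)$ copies. But for a generic interior point with some positive coordinates the entanglement cost is $\Theta(N)$ ebits between that helper and its receiver, and $o(N)$ copies of $\psi$ can never supply a linear amount of entanglement; your bootstrap only makes sense if you have already shown that the needed catalysis is sublinear, which you never do. The missing idea is the structural computation the paper makes first: choosing $\cK_{\min}$ to be a minimizing cut of \emph{smallest size}, one gets $S(\X|\cX A)_{\psi}<0$ strictly for every nonempty $\X\subseteq\cK_{\min}$ and $S(\Y|\cY B)_{\psi}\leq 0$ for every nonempty $\Y\subseteq\cKbar_{\min}$, so that (granting the conjecture) the split-transfer can be run with negative rates on both sides, i.e.\ with rank-$L_i$ random instruments and \emph{no} injected entanglement at all. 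Without this observation your reduction does not yield an LOCC protocol.

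There is also a boundary case you would still have to handle even after fixing the above: when some $S(\Y|\cY B)_{\psi}=0$ exactly, strictly negative rates are not available for all helpers in $\cKbar_{\min}$, and the paper deals with this either by injecting a genuinely sublinear number of singlets across the $\Y$ vs $B\cY$ cut or, to stay strictly LOCC, by first merging $\cK_{\min}$ to Alice and having her perform an entanglement-of-assistance--type measurement on $AA_{\cK_{\min}}$ (using a subset of the copies) to create entanglement across $\Y$ vs $B\cY$, which then makes negative rates achievable for the remaining helpers; the degenerate subcase $S(\Y)_{\psi}=0$ is resolved by local preparation. Your "fallback via the sequential state-merging strategy of \cite{merge}" does not fill this hole: that strategy produces Alice--Bob entanglement, not the helper--receiver ebits your chosen rate point would demand, and invoking it wholesale would anyway defeat the purpose of giving a non-recursive proof.
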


\begin{proof}
Denote by $\cK_{\min} \subseteq \{1,2,\ldots, m\}$ the min-cut of the smallest possible size such that:
\be
\forall \cK \subseteq \{C_1,C_2,\ldots,C_m\}:
S(A\cK_{\min})_{\psi} \leq S(A\cK)_{\psi}.
\ee
When $\cK_{\min}$ is not the empty set, we have, for any non-empty subset ${\X} \subseteq {\cal T}_{min}$:
 \begin{equation}\label{eq:zero}
\begin{split}
S(\X|\cX A)_{\psi} &= S(\cK_{\min}A)_{\psi} - S(\X \cKbar_{\min}B)_{\psi} \\
& < S(\cX A)_{\psi} - S(\X \cKbar_{\min} B)_{\psi}\\
&= S(\X \cKbar_{\min} B)_{\psi} - S(\X \cKbar_{\min} B)_{\psi} \\
&= 0,\\
\end{split}
\end{equation}
where in the second line we have used the fact that $S(A \cK_{\min})_{\psi} < S(A \cK)$ when $\cK$ is a smaller cut than $\cK_{\min}$. If $\cK_{\min} \neq \{1,2,\ldots, m\}$, we have, for any non-empty subset $\Y \subseteq \cKbar_{\min}$:
\begin{equation}
  \begin{split}
S(\Y|\cY B)_{\psi} &= S(\cKbar_{\min}B)_{\psi} - S(\Y \cK_{\min} A)_{\psi} \\
&= S(\cK_{\min}A)_{\psi} - S(\Y \cK_{\min} A)_{\psi} \\
&\leq 0. \\
  \end{split}
\end{equation}

Since $S(\X|\cX A)$ is negative for all non empty subsets $\X \subseteq \cK_{\min}$, there are negative rates (i.e $\overrightarrow{R_{\cK_{\min}}}=\bigoplus_{i \in \cK_{\min}}(R_i)$ where $R_i < 0$ for all $i \in \cK_{\min}$) achievable for the helpers in $\cK_{\min}$ by Theorem \ref{thm:splittransfer}. Hence, by Conjecture \ref{eq:notimesharing}, there is no need for additional injection of entanglement for these helpers. To distill entanglement, we can set their random instruments with projectors of rank $L_i = \lfloor 2^{-nR_i} \rfloor$\footnote{The previous protocol of \cite{merge} for the multipartite entanglement of assistance restricted the helpers to measurements with rank one projectors.}. If the conditional entropies $S(\Y|\cY B)_{\psi}$ are negative for all non empty subsets $\Y \subseteq \cK_{\min}$, we proceed similarly for the helpers in $\cK_{\min}$. Then, the average decoupling errors $Q^1_{\cal I}$ and $Q^2_{\cal I}$ will vanish as $n$ grows larger, and this without injecting additional entanglement. Using Markov's inequality, we have, for any $\epsilon_1, \epsilon_2 > 0$:
\begin{equation*}
\begin{split}
 \int P_{J_{\cK}}( \| \varphi_{J_{\cK}}^{\cK^1_{\min} R_{\cK_{\min}}} - \tau^{\cK^1_{\min}} \otimes \psi^{R_{\cK_{\min}}} \|_1\geq \epsilon_1)dU_{\cK_{\min}}  &\leq  \frac{ \int Q^1_{\cal I}(\psi^{\cK_{\min}A R_{\cK_{\min}}})dU_{\cK_{\min}}}{\epsilon_1} \\
 \int P_{J_{\cKbar}}( \| \upsilon_{J_{\cKbar}}^{\cKbar^1_{\min} R_{\cKbar^1_{\min}}} - \tau^{\cKbar_{\min}} \otimes \psi^{R_{\cKbar_{\min}}} \|_1 \geq \epsilon_2) dU_{\cKbar_{\min}} &\leq \frac{\int Q^2_{\cal I}(\psi^{\cKbar_{\min}B R_{\cKbar_{\min}}})dU_{\cKbar_{\min}}}{\epsilon_2}, \\
\end{split}
\end{equation*}
where the averages are respectively taken over the unitary groups $\mathbb{U}(\tilde{C}_i), i \in \cK_{\min}$ and $\mathbb{U}(\tilde{C}_i), i \in \cKbar_{\min}$. The states  $\varphi_{J_{\cK}}^{\cK^1_{\min} R_{\cK_{\min}}}$ and $\upsilon_{J_{\cKbar}}^{\cKbar^1_{\min} R_{\cKbar_{\min}}}$ are the outcome states following instruments performed by the helpers. Since the decoupling errors can be made arbitrarily small as $n$ grows larger, these probabilities will also vanish on average. Hence, there exist instruments performed by the senders which allow, with arbitrarily high probability, the redistribution of the original state to the two receivers, with $\cK_{\min}$ going to $A$ and its complement to $B$. This in turn implies that the min-cut entanglement must be arbitrarily well-preserved when the senders perform random instruments on their typical subspaces.

When some of the conditional entropies $S(\Y|\cY B)_{\psi}$ are zero, there is no split-transfer protocol achieving negative rates for all the helpers in $\cKbar_{\min}$. We leave the conjecture open for these cases. However, it is still possible to redistribute the original state and preserve the min-cut entanglement by injecting an arbitrarily small number of singlets between the cut $\Y$ vs $B \cY$. Another alternative is to produce entanglement using a \textit{subset} of the initial number of copies available to the parties. The procedure works as follows: First, the helpers in $\cK_{\min}$ transfer their systems to the receiver $A$ using a multiparty merging protocol. As the conditional entropies $S(\X|\cX A)_{\psi}$ are negative for the minimum cut, this can be achieved using only LOCC operations. The receiver $A$ then performs a measurement on $AA_{\cK_{\min}}$ of the kind described in the entanglement of assistance protocol of \cite{SVW} (see Chapter 2 also). The helper is $AA_{\cK_{\min}}$ for our setting and the recipients of the entanglement are $\Y$ and $B\cY$. Since $\Y$ and $B\cY$ are not individual parties, the decoding part of the protocol cannot be implemented. The amount of entropy of entanglement for an outcome state $\psi_{j}$, however, should be arbitrarily close to
 \[ E(\psi_j) \approx \min\{ S(\Y)_{\psi} , S(B\cY)_{\psi} \} \]
by the formula of eq.~(\ref{thm:EofA}). If the minimum is zero, the systems in $\Y$ are in a known pure state $\psi^{\Y}$ (i.e $S(Y)_{\psi} = 0$) as, by hypothesis, the conditional entropy $S(\Y | \cY B)_{\psi}$ is zero. The receiver can then locally prepare a system $A_{\Y}$ in the state $\psi^{\Y}$.

If the entropy of entanglement is positive for the outcome state, the helpers part of $\cKbar_{\min}$ can exploit this entanglement to merge their state by LOCC. An entangled state shared between $\Y$ and $B\cY$, with entropy of entanglement $E(\psi):=\Delta$, contributes $-\Delta$ to $S(\Y|\cY B)_{\psi}$, making negative rates achievable for the helpers in $\cKbar_{\min}$. Thus, the overall protocol still uses only LOCC operations and produces a state $\varphi^{A^n_{\cK_{\min}}A^n B^n_{\cKbar_{\min}}B^n}$ such that
\begin{equation}\label{eq:dist}
 \bigg \| \varphi^{A^n_{\cK_{\min}}A^n} - (\psi^{A_{\cK_{\min}}A})^{\otimes n} \bigg \|_1 \leq \bigg \| \varphi^{A^n_{\cK_{\min}}A^n B^n_{\cKbar_{\min}}B^n} - (\psi^{A_{\cK_{\min}}A B_{\cKbar_{\min}}B})^{\otimes n} \bigg \|_1 \leq \epsilon,
\end{equation}
where $\psi^{A_{\cK_{\min}}AB_{\cKbar_{\min}}B}$ is the original
state $\psi^{\cK_{\min}A \cKbar_{\min}B}$ with the systems
$A_{\cK_{\min}}$ and $B_{\cKbar_{\min}}$ substituted for the
systems $\cK_{\min}$ and $\cKbar_{\min}$. Applying the Fannes
inequality, we have
 \begin{equation}
   \bigg | S(A^n_{\cK_{\min}}A^n)_{\varphi} - nS(A_{\cK_{\min}}A^n)_{\psi} \bigg | \leq n \log (d_{A_{\cK_{\min}}}d_A) \eta(\epsilon)
 \end{equation}
which implies that
 \begin{equation}
   S(A^n_{\cK_{\min}}A^n)_{\varphi} = n(S(A_{\cK_{\min}}A)_{\psi} \pm \delta) = n(S(A\cK_{\min})_{\psi} \pm \delta),
 \end{equation}
 where $\delta$ can be made arbitrarily small by letting $\epsilon \rightarrow 0$.  Alice and Bob can distill arbitrarily close to the min-cut rate by applying an entanglement distillation protocol, as in \cite{DW}, on the state $\varphi^{A^n_{\cK_{\min}} A^n B^n_{\cKbar_{\min}} B^n}$.
\end{proof}

\chapter{Entanglement Cost of Multiparty State Transfer}

Most information processing tasks are analyzed under the assumptions of independence and repeatability. For instance, the noiseless channel coding theorem of Shannon starts from the premise that a source emits an infinite sequence of i.i.d. random variables according to a known distribution $p(x)$. In the quantum regime, Schumacher compression assumes that the quantum source is producing many identical copies of a state $\rho$. For such scenarios, the relevant measure of information is the von Neumann entropy. If we drop either of these two assumptions, the well-known formulas for operational quantities such as the minimum compression length of a message or the capacity of a noisy channel are not applicable anymore, and other measures of information such as the spectral entropy rates of \cite{spectral} and the smooth min- and max-entropies of Renner \cite{Renner02} become more appropriate.

When a single copy of a state $\psi^{ABR}$ is available, it was found in \cite{Berta} that the smooth max-entropy is the information theoretic measure which characterizes the entanglement cost associated with the task of state merging. The minimal entanglement cost $\log K_1 - \log L_1$ is bounded from below in the one-shot regime by
 \begin{equation}\label{eq:lowerboundCost}
    \log K_1 - \log L_1 \geq \hat{H}^{\sqrt{\epsilon}}_{\max}(A|B)_{\psi},
 \end{equation}
where $\hat{H}^{\epsilon}_{\max}(A|B)_{\psi}$ is an alternative version of the smooth max-entropy, which optimizes the max-entropy over all density operators $\bar{\psi}^{AB}$ close in the trace distance (i.e $D(\bar{\psi}^{AB},\psi^{AB}) \leq \epsilon$). Whenever the entanglement cost\footnote{The numbers $K_1, L_1$ are natural numbers, and so we must choose values for $K_1$ and $L_1$ such that $\log K_1 - \log L_1$ is minimal, but greater or equal than the right hand side of eq.~(\ref{eq:berta}).} satisfies
 \begin{equation}\label{eq:berta}
   \log K_1 - \log L_1 \geq \hat{H}_{\max}^{\frac{\epsilon^2}{64}}(A|B)_{\psi} + 4\log \left(\frac{1}{\epsilon} \right ) + 12,
 \end{equation}
there exists a state merging protocol for the state $\psi^{ABR}$ with error $\epsilon$. This result was derived by re-expressing the upper bound of Proposition \ref{prop:isometry} (when $m=1$) as a function of the
smooth min-entropy.

In this chapter, we extend some of the results of Berta \cite{Berta} to the multiparty setting, where $m$ senders and a receiver $B$ share a state $\initstate$, with purifying system $R$. We give a partial description of the entanglement cost region for multiparty merging of the state $\initstate$. Our main contribution is to characterize a subset of the entanglement cost region where multiparty merging for the state $\initstate$ is achievable with error $\epsilon$. For any entanglement cost tuple $(\log K_1-\log L_1, \log K_2-\log L_2, \ldots, \log K_m - \log L_m)$ satisfying
 \begin{equation}\label{eq:cost1}
  \begin{split}
   \log K_{\cK} - \log L_{\cK} := \sum_{i \in \cK} \log \left ( \frac{K_i}{L_i} \right ) &\geq -H_{\min}(\psi^{\cK R}|\psi^R) + 4\log \left(\frac{1}{\epsilon} \right )+ 2m + 8,  \\
  \end{split}
 \end{equation}
for all subsets $\cK \subseteq \{1,2,\ldots, m\}$, we show the existence of multiparty merging protocols with error $\epsilon$, where the senders measure their systems simultaneously and the decoding is not restricted to a composition form $U_{j_m}U_{j_{m-1}}\ldots U_1$. We can also show a similar result when we have $m$ senders and two receivers (split-transfer).

We also consider a different protocol where each sender merges his system one at a time as in the distributed compression protocol of \cite{merge}. The advantage in this approach is that it allows us to use the better entanglement costs proven in Dupuis et al. \cite{decouplingBerta} for one-shot state merging. We can achieve multiparty merging of the state $\initstate$ with error $\epsilon$ whenever the entanglement costs satisfy
\begin{equation}\label{eq:xxxx}
\log \bigg (\frac{K_{i}}{L_{i}} \bigg ) \geq  -H^{\frac{\epsilon^2}{52m^2}}_{\min}(C_i|\tilde{R}_{\pi^{-1}(i)})_{\psi} + 4 \log \left(\frac{2m}{\epsilon} \right ) + 2\log(13) \quad \mbox{ for all }1\leq i \leq m,
\end{equation}
where $\tilde{R}_{\pi^{-1}(i)}$ is the relative reference for the sender $C_i$ with respect to an ordering $\pi: \{1,2, \ldots, m\}$ of the senders.

The last part of this chapter is devoted to examples for one-shot distributed compression. We compute bounds on the entanglement cost for our two protocols and discuss some of the shortcomings to using one-shot two-party state merging protocols for achieving the task of multiparty state merging. The last example considers a family of states for which smoothing has little effect on the min-entropies appearing in eq.~(\ref{eq:xxxx}). The entanglement contained in such states does not reduce to bipartite entanglement between some of the subsystems, making it harder to analyze the entanglement costs of merging. We can nonetheless get interesting bounds on the entanglement cost by using the Gershgorin Circle Theorem \cite{Horn,wiki}, a standard result in matrix theory.
\section{Achieving one-shot multiparty state transfer}
Our first result is a reformulation of Lemma \ref{Lemma:rdmisometry} in Chapter 3 as a function of min-entropies:
\begin{Lemma}[Compare to Lemma 4.5 of \cite{Berta}] \label{Lemma:oneshotdecouple}
For each sender $C_i$, let $P_i: C_i \rightarrow C_i^1$ be a projector of dimension $L_i$ onto
a subspace $C_i^1$ of $C_i$ and $U_i$ a unitary acting on $C_i$. Define the sub-normalized state
 \begin{equation*}
   \omeg := (P_1U_1 \otimes P_2U_2 \otimes \ldots
   \otimes P_mU_m \otimes I^R) \initstate (P_1U_1 \otimes P_2U_2
   \otimes \ldots \otimes P_mU_m \otimes I^R)^{\dag}.
 \end{equation*}
If $U_1, U_2,\ldots, U_m$ are Haar distributed unitaries, then for any state $\sigma^R$ of the system~$R$, we have
 \begin{equation}\label{eq:weak}
   \int_{\mathbb{U}(C_M)} \biggl \|  \omeg -
   \frac{L_M}{d_{C_M}} \tau^{C^1_M} \otimes \psi^{R} \biggr \|_1 dU_M \leq
   \frac{L_M}{d_{C_M}}\sqrt{\sum_{\substack{\cK \subseteq \{1,2,\ldots,m\} \\ \cK
   \neq \emptyset}}
   2^{-(H_{\mathrm{min}}(\psi^{\cK R}|\sigma^R) -
   \log L_{\cK})}}.
 \end{equation}
\end{Lemma}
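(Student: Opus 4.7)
\begin{proof+}{Plan}
The strategy is to mimic the proof of Lemma~\ref{Lemma:rdmisometry}, but to replace the crude inequality $\| X \|_1 \leq \sqrt{d_{C^1_M R}} \,\| X \|_2$ (which produced the $d_R$ factor) with the weighted Hilbert--Schmidt bound of Lemma~\ref{Lemma:TrSchmidt} applied to the positive operator $\tau^{C^1_M} \otimes \sigma^R$. Since $\Tr(\tau^{C^1_M} \otimes \sigma^R) = 1$, this yields
\begin{equation*}
\bigg \| \omeg - \decouple \bigg \|_1^2 \leq L_M \,\Tr\!\bigl[ X\, (I^{C^1_M} \otimes (\sigma^R)^{-1/2})\, X\, (I^{C^1_M} \otimes (\sigma^R)^{-1/2}) \bigr],
\end{equation*}
where $X := \omeg - \decouple$ and the factor $L_M$ comes from $(\tau^{C^1_M})^{-1/2} = \sqrt{L_M}\,I^{C^1_M}$.

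Next, I would average both sides over $U_M$ and exploit the fact that the Haar averaging only touches the $C_M \tilde C_M$ factors, leaving the reference side untouched. Writing $M := I^{C^1_M} \otimes (\sigma^R)^{-1/2}$ and using the variance identity
\begin{equation*}
\int \Tr[X M X M]\, dU_M \;=\; \int \Tr[\omega M \omega M]\, dU_M \;-\; \Tr\!\bigl[\,\smexpec{\omega}\, M\, \smexpec{\omega}\, M\,\bigr],
\end{equation*}
I would repeat the swap-trick computation of Lemma~\ref{Lemma:rdmisometry} verbatim, noting that the only difference is that the $R\tilde R$-trace now picks up the weights $M \otimes M$. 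This replaces each factor $\Tr[\psi^2_{R\cK}]$ in eq.~(\ref{set:eq}) by
\begin{equation*}
\Tr\!\bigl[\,\psi^{\cK R}\, (I^{\cK} \otimes (\sigma^R)^{-1/2})\,\psi^{\cK R}\,(I^{\cK} \otimes (\sigma^R)^{-1/2})\bigr] \;=\; 2^{-H_2(\psi^{\cK R}|\sigma^R)},
\end{equation*}
which by Lemma~\ref{lem:rener4} is bounded above by $2^{-H_{\min}(\psi^{\cK R}|\sigma^R)}$.

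For the $\cK = \emptyset$ term, a short calculation shows that $\prod_i r_i \cdot 2^{-H_2(\psi^R|\sigma^R)} \leq \Tr[\smexpec{\omega}\,M\,\smexpec{\omega}\,M]$, exactly as in the original argument, so the two $\emptyset$-contributions cancel out and only the subsets $\cK \neq \emptyset$ survive. Combined with the bounds $\prod_{i \notin \cK} r_i \prod_{i \in \cK} s_i \leq (L_M L_\cK)/d_{C_M}^2$ from eq.~(\ref{coeffs:eq}), this gives
\begin{equation*}
\int \bigg \| \omeg - \decouple \bigg \|_1^2 dU_M \;\leq\; \frac{L_M^2}{d_{C_M}^2} \sum_{\substack{\cK \subseteq \{1,\dots,m\} \\ \cK \neq \emptyset}} 2^{-(H_{\min}(\psi^{\cK R}|\sigma^R) - \log L_{\cK})}.
\end{equation*}
Applying Jensen's inequality to pull the square root out of the average then delivers eq.~(\ref{eq:weak}). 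The only place where any real care is needed is the verification that the $\cK = \emptyset$ cancellation still goes through with the weighted inner product; everything else is a mechanical transcription of the swap-trick computation already carried out for Lemma~\ref{Lemma:rdmisometry}.
\end{proof+}
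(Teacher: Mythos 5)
Your proposal is correct and follows essentially the same route as the paper's own proof: the weighted bound from Lemma~\ref{Lemma:TrSchmidt} (your choice $\tau^{C^1_M}\otimes\sigma^R$ of trace one gives exactly the same inequality as the paper's $I^{C^1_M}\otimes\sigma^R$), the Haar-average/swap-trick computation from Lemma~\ref{Lemma:rdmisometry} with the reference weighted by $\sigma_R^{-1/4}$, the identification of the resulting purities with $2^{-H_2(\psi^{\cK R}|\sigma^R)}$, Lemma~\ref{lem:rener4}, and Jensen's inequality. The only cosmetic difference is that the paper packages the weights into ``tilded'' states and cites eq.~(\ref{eq:decouple2}) directly, whereas you keep the weight operator explicit and re-derive the variance identity, including the correct handling of the $\cK=\emptyset$ term.
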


\begin{proof} By Lemma \ref{Lemma:TrSchmidt}, we have, for
any state $\sigma^R$ of the reference $R$,
\begin{equation}
\begin{split}
\label{eq:A.6} \biggl \| \omeg -
&\frac{L_M}{d_{C_M}}\tau^{C^1_M} \otimes \psi^{R} \biggr \|_1 \\ &\leq \sqrt{L_M} \biggl \|
(I^{C^1_M} \otimes \sigma_R^{-\frac{1}{4}})(
\omeg - \frac{L_M}{d_{C_M}}\tau^{C^1_M} \otimes \psi^R)(I^{C^1_M} \otimes
\sigma_R^{-\frac{1}{4}}) \biggr \|_2.
\end{split}
\end{equation}
\newcommand{\tomeg}{\tilde{\omega}^{C^1_MR}(U_M)}
Define the states
\begin{equation*}
\begin{split}
 \tilde{\psi}^{C_MR} &:= (I^{C_M} \otimes \sigma_R^{-\frac{1}{4}})
 \psi^{C_MR}(I^{C_M} \otimes \sigma_R^{-\frac{1}{4}}) \\
 \tomeg &:= (P_1U_1 \otimes \ldots
 \otimes P_mU_m \otimes I^R) \tilde{\psi}^{C_MR} (P_1U_1 \otimes \ldots \otimes P_mU_m \otimes I^R)^{\dag} \\
 \end{split}
\end{equation*}
and rewrite eq.~(\ref{eq:A.6}) as
\begin{equation*}
\biggl \| \omeg -
\frac{L_M}{d_{C_M}}\tau^{C^1_M} \otimes \psi^{R} \biggr \|_1 \leq \sqrt{L_M}\biggl \|\tomeg -
\frac{L_M}{d_{C_M}}\tau^{C^1_M} \otimes \tilde{\psi}^R \biggr \|_2.
\end{equation*}
Using eq.~(\ref{eq:decouple2}) in the proof of Lemma \ref{Lemma:rdmisometry}, we have
\begin{equation}\label{eq:tilde}
 \begin{split}
\int_{\mathbb{U}(C_M)} \biggl \| \tomeg -
\frac{L_M}{d_{C_M}}\tau^{C^1_M} \otimes \tilde{\psi}^R \biggr \|_2^2
dU_M & \leq \sum_{\substack{ \cK \subseteq \{1,2,\ldots,m\} \\
\cK \neq \emptyset}} \prod_{i \notin \cK} \frac{L_i}{d^2_{C_i}}
\prod_{i \in \cK} \frac{L_i^2}{d^2_{C_i}}
\Tr \biggl [ \tilde{\psi}_{\cK R}^2 \biggr ] \\
 & \leq \frac{L_M}{d^2_{C_M}} \sum_{\substack{ \cK \subseteq
\{1,2,\ldots,m\} \\ \cK \neq \emptyset}}L_{\cK} \Tr
\biggl [ \tilde{\psi}_{\cK R}^2 \biggr ]. \\
 \end{split}
\end{equation}
The quantity $\Tr[\tilde{\psi}_{\cK R}^2]$ is rewritten
as:
\begin{equation}\label{eq:trace}
 \begin{split}
   \Tr [\tilde{\psi}_{\cK R}^2] & = \Tr \biggl [ \biggl (
   \Tr_{\overline{\cK}}[\tilde{\psi}^{C_MR}] \biggr)^2 \biggr]  \\
    &= \Tr \biggl [ \biggl ( (I^{\cK} \otimes
    \sigma_R^{-\frac{1}{4}})\psi^{\cK R}(I^{\cK} \otimes
    \sigma_R^{-\frac{1}{4}}) \biggr )^2 \biggr ] \\
   &= 2^{-H_2(\psi^{\cK R} | \sigma^R)}, \\
 \end{split}\
\end{equation}
where $H_2(\psi^{\cK R}|\sigma^R)$ is the conditional collision
entropy of $\psi^{\cK R}$ relative to $\sigma^R$. Combining
eqs.~(\ref{eq:A.6}), (\ref{eq:tilde}), (\ref{eq:trace})
and using the fact that $H_{\mathrm{min}}(\psi^{\cK
R}|\sigma^{R}) \leq H_2(\psi^{\cK R}|\sigma^R)$ (Lemma \ref{lem:rener4}) and the square root function is concave, we have
\begin{equation}
 \begin{split}
   \int_{\mathbb{U}(C_M)} \biggl \| \omeg -
   \frac{L_M}{d_{C_M}}\tau^{C^1_M} \otimes \psi^{R} \biggr \|_1 dU_M & \leq \frac{L_M}{d_{C_M}}
   \sqrt{ \sum_{\substack{ \cK \subseteq
\{1,2,\ldots,m\} \\ \cK \neq \emptyset}}L_{\cK}
2^{-H_2(\psi^{\cK R}|\sigma^R)}} \\
 & \leq \frac{L_M}{d_{C_M}} \sqrt{\sum_{\substack{\cK \subseteq \{1,2,\ldots,m\} \\ \cK
   \neq \emptyset}}
   2^{-(H_{\mathrm{min}}(\psi^{\cK R}|\sigma^R) -
   \log L_{\cK})}}, \\
 \end{split}
\end{equation}
and so we are done.
\end{proof}
Using this result, we can rephrase Proposition \ref{prop:isometry} in the language of min-entropies. This result extends Lemma 4.6 in \cite{Berta} to the multiparty setting.

\begin{Theorem}[Compare to Lemma 4.6 of \cite{Berta}]\label{thm:cost1}
Let $\initstate$ be a multipartite state shared by $m$ senders and a receiver, with purifying system $R$. If, for any $\epsilon > 0$, the entanglement cost
$\overrightarrow{E} = (\log K_1 - \log L_1,\log K_2 - \log
L_2,\ldots, \log K_m - \log L_m)$ satisfies
 \begin{equation}\label{eq:cost1}
  \begin{split}
   \log K_{\cK} - \log L_{\cK} := \sum_{i \in \cK} \log \left ( \frac{K_i}{L_i} \right ) &\geq -H_{\min}(\psi^{\cK R}|\psi^R) + 4\log \left(\frac{1}{\epsilon} \right )+ 2m + 8  \\
  \end{split}
 \end{equation}
for all non-empty subsets $\cK \subseteq \{1,2,\ldots, m\}$, there
exists a state merging protocol for the state $\initstate$ with error
$\epsilon$.
\end{Theorem}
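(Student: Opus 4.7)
The plan is to adapt the proof of Lemma~4.6 in Berta~\cite{Berta} to the multiparty setting. All of the essential machinery is already available: the random measurement scheme of Proposition~\ref{prop:isometry}, the min-entropy reformulation of its decoupling bound given by Lemma~\ref{Lemma:oneshotdecouple}, and the criterion converting decoupling into merging from Proposition~\ref{prop:mergeCond}. The proof amounts to stitching these three ingredients together and inserting the hypothesized lower bounds.

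First I would instantiate the random instruments of Proposition~\ref{prop:isometry}: each sender $C_i$ picks $N_i = \lfloor d_{C_i}K_i/L_i\rfloor$ orthogonal rank-$L_i$ projectors (plus one smaller projector of rank $L_i'<L_i$) on $\tilde{C}_i := C_iC^0_i$ and precedes them with a Haar-distributed unitary $U_i$ on $\tilde{C}_i$. For the decoupling of $C^1_M$ from the reference $R$, only the reduced effective input $\psi^{C_MR}\otimes\tau^{C^0_M}$ matters, since the receiver's halves $B^0_M$ of the pre-shared entanglement can be traced out without affecting the reduced state on $C^1_MR$.

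Next I would apply Lemma~\ref{Lemma:oneshotdecouple} to this effective state with $\sigma^R=\psi^R$, viewing $\tilde{C}_i$ as the sender system of dimension $d_{C_i}K_i$. Additivity of the min-entropy (Lemma~\ref{lem:addi}) together with $H_{\min}(\tau^{C^0_\cK})=\log K_\cK$ gives
\[
H_{\min}\!\bigl(\psi^{\cK R}\otimes\tau^{C^0_\cK}\,\big|\,\psi^R\bigr)=H_{\min}(\psi^{\cK R}|\psi^R)+\log K_\cK.
\]
Combined with the bookkeeping of the ``zero-outcome'' events from the end of the proof of Proposition~\ref{prop:isometry}, this yields the averaged decoupling bound
\[
\mathbb{E}[Q_{\cal I}] \;\leq\; 2\sum_{\emptyset\neq\cK}\prod_{i\in\cK}\frac{L_i}{d_{C_i}K_i}\;+\;2\sqrt{\sum_{\emptyset\neq\cK}2^{-(H_{\min}(\psi^{\cK R}|\psi^R)+\log(K_\cK/L_\cK))}}.
\]

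Finally I would substitute the hypothesized lower bound on $\log(K_\cK/L_\cK)$: each exponent in the square-root sum exceeds $4\log(1/\epsilon)+2m+8$, so there are at most $2^m$ terms each bounded by $\epsilon^4/2^{2m+8}$, and the whole square-root term is at most $\epsilon^2/8$. For the first sum I would invoke the standard inequality $H_{\min}(C_i|R)_{\psi}\le\log d_{C_i}$, which converts the singleton hypothesis into $\log(d_{C_i}K_i/L_i)\ge 4\log(1/\epsilon)+2m+8$, controlling $\prod_{i\in\cK}L_i/(d_{C_i}K_i)$ and bounding the first sum by $\epsilon^2/128$. Hence $\mathbb{E}[Q_{\cal I}]\le\epsilon^2/4$; some realization of the random instruments attains this bound, and Proposition~\ref{prop:mergeCond} then delivers a multiparty merging protocol with error at most $2\sqrt{\epsilon^2/4}=\epsilon$.

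The main subtlety, and the reason the stated hypothesis contains the extra $2m+8$ rather than something more parsimonious, is the treatment of the exceptional ``zero-outcome'' events that the averaging argument underlying Lemma~\ref{Lemma:oneshotdecouple} does not address directly. Their contribution is of the dimensional form $\prod_{i\in\cK}L_i/(d_{C_i}K_i)$ and must be controlled using only min-entropy data on $\psi^{\cK R}$; bridging this gap relies on the dimensional upper bound $H_{\min}(C_i|R)_{\psi}\le\log d_{C_i}$, and the multiplicity $2^m$ of nonempty subsets forces the $2m$ term in the hypothesis.
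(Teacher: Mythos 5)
Your proposal is correct and follows essentially the same route as the paper's proof: random instruments from Proposition~\ref{prop:isometry}, Lemma~\ref{Lemma:oneshotdecouple} with $\sigma^R=\psi^R$ plus additivity of the min-entropy, substitution of the hypothesized costs, and Proposition~\ref{prop:mergeCond} to convert the $\epsilon^2/4$ decoupling error into a merging error of $\epsilon$. The only cosmetic difference is in bounding the zero-outcome remainder: you use the singleton constraints with $H_{\min}(C_i|R)_{\psi}\le\log d_{C_i}$, while the paper uses the subset constraints together with $H_{\min}(\psi^{\cK R}|\psi^R)\le H_{\min}(\psi^{\cK})\le\log d_{C_{\cK}}$ — both suffice.
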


\begin{proof}
We fix a random instrument for each sender $C_i$ as in Proposition $\ref{prop:isometry}$:
each sender $C_i$ has an instrument with $N_i:=\lfloor
\frac{d_{C_i}K_i}{L_i} \rfloor$ partial isometries $P^j_i =
Q^j_iU_i$, where $U_i$ is a Haar distributed unitary acting
on the system $C_iC^0_i$. Recall that $Q^j_i$ is a partial isometry mapping to a subspace $C_i^1$ of $C_iC^0_i$. If $d_{C_i}K_i > N_i L_i$, there is an
extra partial isometry $P^0_i$ of rank $L'_i:=d_{C_i}K_i -
N_iL_i < L_i$. Applying Lemma \ref{Lemma:oneshotdecouple} for the
state $\psi^{C_MR} \otimes \tau^{K_M}$, with $\sigma^R = \psi^R$, and using additivity of the min-entropy (Lemma \ref{lem:addi}), we have
 \begin{equation}\label{eq:deriv}
   \begin{split}
  \int_{\mathbb{U}(C_M)}\sum_{j_{1}=1}^{N_1} \sum_{j_2=1}^{N_2} \cdots \sum_{j_m=1}^{N_m} \bigg \| &\omegJ
- \decouple \bigg \|_1 dU_M \\ &\leq \frac{\prod_{i=1}^m N_i
L_i}{d_{C_M}K_M} \sqrt{\sum_{\substack{\cK \subseteq
\{1,2,\ldots,m\} \\ \cK \neq \emptyset}}
2^{-(H_{\mathrm{min}}(\psi^{\cK R}|\psi^R) + \log K_{\cK} -
   \log L_{\cK})}} \\
  &\leq \sqrt{\sum_{\substack{\cK \subseteq \{1,2,\ldots,m\} \\ \cK
   \neq \emptyset}}
   2^{-(H_{\mathrm{min}}(\psi^{\cK R}|\psi^R) + \log K_{\cK} -
   \log L_{\cK})}}.
\end{split}
\end{equation}
Using the constraint of eq.~(\ref{eq:cost1}) for the entanglement cost $\log{K_{\cK}} - \log{L_{\cK}}$, we simplify the previous
inequality to
 \begin{equation}\label{eq:lhs}
   \begin{split}
  \int_{\mathbb{U}(C_M)}\sum_{j_{1}=1}^{N_1} \sum_{j_2=1}^{N_2} \cdots \sum_{j_m=1}^{N_m} \bigg \| &\omegJ
- \decouple \bigg \|_1 dU_M \\ &\leq \sqrt{\sum_{\substack{\cK
\subseteq \{1,2,\ldots,m\} \\ \cK
   \neq \emptyset}}
   2^{-(H_{\mathrm{min}}(\psi^{\cK R}|\psi^R) + \log K_{\cK} -
   \log L_{\cK})}} \\
   &\leq \frac{\epsilon^2}{2^{\frac{m+8}{2}}} \leq \frac{\epsilon^2}{16}.
\end{split}
\end{equation}
Taking normalisation into account, with $p_{J_M}(U_M) =
\Tr(\omega_{J_M}^{C^1_MR}(U_M))$ and $\psi_{J_M}^{C^1_MR} =
\frac{\omega_{J_M}^{C^1_MR}(U_M)}{p_{J_M}(U_M)}$, we trace out the left hand
side of eq.~(\ref{eq:lhs}) and obtain
\begin{equation*}
\int_{\mathbb{U}(C_M)}\sum_{j_{1}=1}^{N_1} \sum_{j_2=1}^{N_2} \cdots \sum_{j_m=1}^{N_m} \bigg | p_{J_M}(U_M) - \frac{L_M}{d_{C_M}} \bigg | dU_M \leq \frac{\epsilon^2}{16}. \\
\end{equation*}
Applying the triangle inequality, we have
\begin{equation*}
  \int_{\mathbb{U}(C_M)}\sum_{j_{1}=1}^{N_1} \sum_{j_2=1}^{N_2} \cdots \sum_{j_m=1}^{N_m} p_{J_M}(U_M) \bigg \| \psi^{C^1_MR}_{J_M}
- \tau^{C^1_M} \otimes \psi^R \bigg \|_1 dU_M \leq \frac{\epsilon^2}{8}.
\end{equation*}
Combining this with eq.~(\ref{eq:final}), found in the proof of Proposition \ref{prop:isometry}, we have an upper bound to the decoupling
error $Q_{\cal I}(\psi^{C_MBR} \otimes \Phi^{K_M})$ as a function of the parameter~$\epsilon$:
\begin{equation*}
\begin{split}
 &\int_{\mathbb{U}(C_M)}\sum_{j_{1}=0}^{N_1} \sum_{j_2=0}^{N_2} \cdots \sum_{j_m=0}^{N_m} p_{J_M}(U_M) \bigg \| \psi^{C^1_MR}_{J_M} - \tau^{C^1_M} \otimes \psi^R \bigg \|_1 dU_M \\
&\leq 2 \sum_{\substack{{\cal T} \subseteq \{1,2,...,m\} \\ {\cal
T} \neq \emptyset}}\prod_{i \in {\cal T}} \frac{L_i}{d_{C_i}K_i} +
\int_{\mathbb{U}(C_M)} \sum_{j_{1}=1}^{N_1}\cdots
\sum_{j_m=1}^{N_m} p_{J_M} \bigg \| \psi^{C^1_MR}_{J_M}
- \tau^{C^1_M} \otimes \psi^R \bigg \|_1 dU_M \\
 &\leq  \sum_{\substack{{\cal T} \subseteq \{1,2,...,m\}}} \frac{2\epsilon^4 2^{H_{\min}(\psi^{\cK R}|\psi^R)}}{2^{2m+8}d_{C_{\cK}}} + \frac{\epsilon^2}{8} \\
 &\leq  \sum_{\substack{{\cal T} \subseteq \{1,2,...,m\}}} \frac{2\epsilon^4  2^{H_{\min}(\psi^{\cK})}}{2^{2m+8}d_{C_{\cK}}} + \frac{\epsilon^2}{8}\\
&\leq  \frac{\epsilon^4}{2^{m+7}} + \frac{\epsilon^2}{8} \leq \frac{\epsilon^2}{4}.
   \end{split}
 \end{equation*}
The fourth line holds by the strong subadditivity of the min-entropy \cite{Renner02}:
\[ H_{\min}(\psi^{\cK R}|\psi^R) \leq H_{\min}(\psi^{\cK}), \]
and the last line holds since
\[H_{\min}(\psi^{\cK}) = -\log{\lambda_{\max}(\psi^{\cK})} \leq \log{d_{C_{\cK}}}.\]
By Proposition \ref{prop:mergeCond}, there exists a state merging protocol for the state $\initstate$ with error
$2\sqrt{\epsilon^2/4}=\epsilon$, and so we are done.
\end{proof}

Lemma 4.6 of Berta \cite{Berta} suggest an improvement of our previous Theorem: replacing the min-entropies $H_{\min}(\psi^{\cK R}|\psi^R)$ appearing in eq.~(\ref{eq:cost1}) by their conditional versions $H_{\min}(\cK|R)_{\psi}$. The theorem would remain valid for these weaker constraints on the entanglement cost provided we can prove a more general version of Lemma~\ref{Lemma:oneshotdecouple}:
\begin{equation}\label{eq:x1}
\int_{\mathbb{U}(C_M)} \biggl \|  \omeg -
   \frac{L_M}{d_{C_M}} \tau^{C^1_M} \otimes \psi^{R} \biggr \|_1 dU_M \leq
   \frac{L_M}{d_{C_M}}\sqrt{\sum_{\substack{\cK \subseteq \{1,2,\ldots,m\} \\ \cK
   \neq \emptyset}}
   2^{-(H_{\mathrm{min}}(\psi^{\cK R}|\sigma_{\cK}^R) -
   \log L_{\cK})}},
\end{equation}
where $\sigma_{\cK}^R$ are $2^m - 1$ possibly different density operators. Assuming this inequality to be true, we can adapt the previous proof by setting $\sigma_{\cK}^R := \bar{\sigma}^R_{\cK}$, where $H_{\min}(\psi^{\cK
R}|\bar{\sigma}^R_{\cK})=H_{\min}(\cK|R)_{\psi}$, and $\log K_{\cK} -
\log L_{\cK} \geq -H_{\min}(\cK|R)_{\psi} +
4\log \left(\frac{1}{\epsilon} \right ) + 2m +8$.

The results of Berta also suggest another improvement of our Theorem \ref{thm:cost1}: smoothing the min-entropies $H_{\min}(\cK|R)_{\psi}$ around sub-normalized density operators $\bar{\psi}^{\cK R}$ close in distance to the state $\psi^{\cK R}$. To satisfy these looser requirements on the entanglement cost, we would need to adjust the proof of Theorem \ref{thm:cost1} by updating eq.~(\ref{eq:x1}) to an even stronger version, where
\begin{equation}\label{eq:x}
\int_{\mathbb{U}(C_M)} \biggl \|  \omeg -
   \frac{L_M}{d_{C_M}} \tau^{C^1_M} \otimes \psi^{R} \biggr \|_1 dU_M \leq
   \frac{L_M}{d_{C_M}}\sqrt{\sum_{\substack{\cK \subseteq \{1,2,\ldots,m\} \\ \cK
   \neq \emptyset}}
   2^{-(H^{\epsilon}_{\mathrm{min}}(\cK|R)_{\psi} -
   \log L_{\cK})}},
\end{equation}
and this inequality holds for any fixed $\epsilon \geq 0$. Presently, it is unclear if these improvements of Lemma $\ref{Lemma:oneshotdecouple}$ hold. We leave it as an interesting open problem:

\begin{conjecture}\label{conj:cost1}
Let $\psi^{C_MBR}$ be a multipartite state shared between $m$ senders and a receiver $B$, with purifying system $R$. For
any $\epsilon > 0$, there exist multiparty state merging protocols for the state $\initstate$ with
error $\epsilon$ whenever the entanglement cost
$\overrightarrow{E}:=(\log K_1 - \log L_1, \log K_2 - \log L_2,
\ldots , \log K_m - \log L_m)$ satisfies
\begin{equation}\label{eq:cost2}
   \log K_{\cK} - \log L_{\cK} := \sum_{i \in \cK} (\log K_i - \log L_i) \geq   H^{\epsilon}_{\max}(\cKbar B|B)_{\psi} +
   O(\log1 / \epsilon)+O(m)
 \end{equation}
for all non-empty subsets $\cK \subseteq \{1,2,\ldots, m\}$.
\end{conjecture}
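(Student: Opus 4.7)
The plan is to combine three ingredients: a strengthening of Lemma~\ref{Lemma:oneshotdecouple} that permits a distinct auxiliary operator $\sigma^R_\cK$ for each non-empty subset $\cK$; a simultaneous-smoothing step that replaces $\psi^{C_MR}$ by a nearby sub-normalized state $\bar{\psi}^{C_MR}$ whose ordinary conditional min-entropies approximate the smooth ones for every $\cK$ at once; and finally Proposition~\ref{prop:mergeCond} together with the Gentle Measurement Lemma to convert the resulting decoupling bound on $\bar\psi$ into a merging protocol for the true state $\psi$.

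First I would upgrade Lemma~\ref{Lemma:oneshotdecouple} to allow a separate $\sigma^R_\cK$ for each non-empty $\cK\subseteq\{1,\ldots,m\}$, yielding
\begin{equation*}
  \int_{\mathbb{U}(C_M)} \Bigl \|\omeg - \tfrac{L_M}{d_{C_M}}\tau^{C^1_M}\otimes\psi^R\Bigr \|_1 dU_M \leq \tfrac{L_M}{d_{C_M}}\sqrt{\sum_{\cK\neq\emptyset} 2^{-(H_{\min}(\psi^{\cK R}|\sigma^R_\cK)-\log L_\cK)}},
\end{equation*}
and hence, on optimising each $\sigma^R_\cK$ independently, a bound in terms of $H_{\min}(\cK|R)_\psi$. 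The derivation would reopen the proof of Lemma~\ref{Lemma:rdmisometry}: the sum over subsets in eq.~(\ref{set:eq}) separates naturally into one contribution per $\cK$, and to each contribution I would apply Lemma~\ref{Lemma:TrSchmidt} with its own weight operator $I^{\cK^1}\otimes\sigma^R_\cK$ \emph{before} taking expectations, rather than globally with a single $\sigma^R$ as in the existing proof. The swap-trick calculation is multilinear in the marginals, so each subset term can be re-expressed as $2^{-H_2(\psi^{\cK R}|\sigma^R_\cK)}$ after which Lemma~\ref{lem:rener4} converts collision entropies to min-entropies.

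Next comes the smoothing. For each $\cK$, let $\bar\psi_\cK^{\cK R}$ be a sub-normalised optimiser of $H^{\epsilon_0}_{\min}(\cK|R)_\psi$ with $\epsilon_0\approx\epsilon/(2^m-1)$. I would attempt to glue these $2^m-1$ marginal optimisers into a single sub-normalised state $\bar\psi^{C_MR}$ with $P(\bar\psi,\psi)\leq\epsilon$ whose marginals approximate every $\bar\psi_\cK^{\cK R}$ in purified distance. Applying the upgraded decoupling lemma to $\bar\psi\otimes\Phi^{K_M}$, averaging over Haar-random partial isometries exactly as in Proposition~\ref{prop:isometry}, and invoking Proposition~\ref{prop:mergeCond}, an $O(\epsilon^2)$ decoupling error would yield an $O(\epsilon)$ merging protocol on $\bar\psi$; the Gentle Measurement Lemma and the triangle inequality then transfer the protocol to $\psi$ at the cost of an extra $O(\epsilon)$ error. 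The duality $H^\epsilon_{\max}(\cK|\cKbar B)_\psi=-H^\epsilon_{\min}(\cK|R)_\psi$ (eq.~(\ref{eq:duality})) finally recasts the resulting condition $\log K_\cK-\log L_\cK\gtrsim -H^\epsilon_{\min}(\cK|R)_\psi+O(\log1/\epsilon)+O(m)$ in the max-entropy form of eq.~(\ref{eq:cost2}), the $O(m)$ term absorbing the $\log(2^m-1)$ loss of the union-bounded smoothing.

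The hard part will be the simultaneous smoothing. For a single subset the optimiser of $H^\epsilon_{\min}$ is explicit, but producing a common $\bar\psi$ whose $2^m-1$ marginal smooth min-entropies are \emph{jointly} within $O(1)$ of their individual optima is an open problem in one-shot quantum Shannon theory. The union-bound workaround above loses a factor $2^m-1$ in the smoothing parameter, which is precisely what drives the $O(m)$ additive correction in eq.~(\ref{eq:cost2}); any sharper simultaneous-smoothing lemma would immediately tighten the conclusion. A secondary subtlety is that the subset-by-subset version of Lemma~\ref{Lemma:oneshotdecouple} requires every summand in eq.~(\ref{set:eq}) to remain individually non-negative after Haar averaging, so that the triangle inequality may be applied termwise with independent weights; I would verify this by redoing the swap-trick calculation summand-by-summand rather than factoring through the single weighted state $\tilde{\psi}^{C_MR}$ used in the proof of Lemma~\ref{Lemma:oneshotdecouple}.
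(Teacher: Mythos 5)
The statement you are trying to prove is not proved in the paper at all: it is stated there precisely as an open conjecture, and the surrounding discussion identifies the obstruction. Your proposal essentially reproduces the strategy the authors themselves sketch (replace the single operator $\sigma^R$ in Lemma~\ref{Lemma:oneshotdecouple} by $2^m-1$ operators $\sigma^R_{\cK}$, then smooth), and the two steps you flag as ``the hard part'' are exactly the points where that strategy is not known to work, so the proposal does not close the gap.

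Concretely, the first gap is your claimed upgrade of Lemma~\ref{Lemma:oneshotdecouple} to a separate weight $\sigma^R_{\cK}$ per subset. In the actual proof, the passage from the trace norm to the Hilbert--Schmidt norm via Lemma~\ref{Lemma:TrSchmidt} is performed \emph{once}, on the single operator $\omega^{C^1_MR}(U_M) - \frac{L_M}{d_{C_M}}\tau^{C^1_M}\otimes\psi^R$, with one fixed $\sigma^R$; only after Haar-averaging the resulting squared $2$-norm does the sum over subsets appear (eq.~(\ref{set:eq})), and every summand then inherits the same $\sigma^R$ through eq.~(\ref{eq:trace}). Your plan to ``apply Lemma~\ref{Lemma:TrSchmidt} with its own weight operator per contribution before taking expectations'' has no starting point: before the average there is no subset-indexed decomposition of the $1$-norm to weight term by term, and after the average the subset terms are not individual trace norms to which the lemma could be applied. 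Making the weights $\cK$-dependent therefore requires a genuinely new argument, not a rearrangement of the existing one; the paper explicitly says it is unclear whether the resulting inequality even holds. The second gap is the simultaneous smoothing. Choosing optimizers $\bar\psi^{\cK R}_{\cK}$ of $H^{\epsilon_0}_{\min}(\cK|R)$ for each $\cK$ and hoping to ``glue'' them into one sub-normalized $\bar\psi^{C_MR}$ whose marginals track all of them is the well-known (and unresolved) simultaneous smoothing problem; a union bound with $\epsilon_0\approx\epsilon/(2^m-1)$ only guarantees that the marginals of $\bar\psi^{C_MR}$ are close to those of $\psi^{C_MR}$, not to the individual optimizers, so it does not yield $H_{\min}(\bar\psi^{\cK R}|\cdot)\gtrsim H^{\epsilon}_{\min}(\cK|R)_{\psi}$ for all $\cK$ at once. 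Relatedly, the loss from shrinking the smoothing parameter shows up inside the superscript of the smooth entropy, not as an additive $\log(2^m-1)$ that the $O(m)$ term could absorb. As the paper notes, a common smoothing state for all terms is straightforward, but term-by-term smoothing is what the conjecture requires, and that is exactly what your outline does not supply.
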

The main difficulty in proving the conjecture is that it allows independent smoothing of each of the min-entropies. It is straightforward to modify our proof to allow smoothing using a common state for all the min-entropies, but the
monolithic nature of the protocol does not naturally permit tailoring
the smoothing state term-by-term. We can, however, give a partial characterization of the entanglement cost in terms of smooth min-entropies if we apply the single-shot state merging protocol of Berta \cite{Berta} on one sender at a time. We will actually use a more recent result by Dupuis et al. \cite{decouplingBerta}, which characterizes the entanglement cost of merging using the current definition of the smooth max-entropies in terms of the purified distance.

\begin{Proposition}\label{prop:epsilonCost}
For a multipartite state $\initstate$ shared between $m$ senders and a receiver $B$, let $\pi:\{1,2,\ldots,m\}\rightarrow \{1,2,\ldots,m\}$ be any ordering of the $m$ senders $C_1,C_2,\ldots,C_m$. For any entanglement cost $\overrightarrow{E}=(\log K_1 - \log L_1, \ldots , \log K_m - \log
L_m)$ satisfying
 \begin{equation} \label{eq:entanglementcost}
     \log \bigg (\frac{K_{i}}{L_{i}} \bigg ) \geq  -H^{\frac{\epsilon^2}{52m^2}}_{\min}(C_i|\tilde{R}_{\pi^{-1}(i)})_{\psi} + 4 \log \left(\frac{2m}{\epsilon} \right ) + 2\log(13) \quad \mbox{ for all }1\leq i \leq m,
 \end{equation}
where $\tilde{R}_i:=R \bigotimes^m_{j=i+1}C_{\pi(j)}$ is the relative reference for the sender $C_{\pi(i)}$, there exists a multiparty state merging protocol for the state $\initstate$ with error $\epsilon$.
\end{Proposition}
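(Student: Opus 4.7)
The plan is to apply the single-shot two-party state merging theorem of Dupuis et al.\ \cite{decouplingBerta} iteratively, one sender at a time in the order prescribed by $\pi$. I will treat the sender $C_{\pi(i)}$ as the ``Alice'' of a two-party merging at step $i$, with the receiver's accumulated side information $B C_{\pi(1)} C_{\pi(2)} \ldots C_{\pi(i-1)}$ playing the role of ``Bob'', and everything else, namely $\tilde{R}_i = R \bigotimes_{j=i+1}^{m} C_{\pi(j)}$, acting as the purifying reference. Since the initial global state $\psi^{C_M B R}$ is pure, after conditioning on the receiver having perfectly recovered the first $i-1$ senders, $\tilde{R}_i$ purifies the joint system of $C_{\pi(i)}$ and the receiver, which is exactly the setup in which Berta/Dupuis applies.

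Concretely, at step $i$ the one-shot state merging result guarantees the existence of an LOCC protocol using $\log K_{\pi(i)} - \log L_{\pi(i)}$ ebits of net entanglement with error at most $\epsilon/m$, provided
\[
\log(K_{\pi(i)}/L_{\pi(i)}) \geq -H^{\epsilon^2/(52m^2)}_{\min}(C_{\pi(i)}|\tilde{R}_i)_{\psi} + 4\log(2m/\epsilon) + 2\log(13),
\]
which is exactly the hypothesis of the proposition (using $\pi^{-1}(\pi(i)) = i$ and the max-min duality $H^\delta_{\max}(C|B) = -H^\delta_{\min}(C|R)$ on the pure state $\psi^{C_M B R}$). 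The smoothing parameter $\epsilon^2/(52m^2)$ and the additive constants arise from plugging $\delta = \epsilon/(2m)$ into the Dupuis et al.\ single-shot bound of the form $-H^{\delta^2/13}_{\min} + 4\log(1/\delta) + 2\log(13)$, with the factor of $2m$ chosen so that the cumulative triangle-inequality error from both the per-step merging error and the propagation of the approximation from earlier steps stays below $\epsilon$.

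The iteration works as follows: write $\rho_i$ for the actual state of all systems after step $i$ and $\sigma_i$ for the ideal state in which senders $C_{\pi(1)}, \ldots, C_{\pi(i)}$ have been perfectly teleported to the receiver (with the corresponding catalytic ebits present as prescribed). I will show inductively that $\|\rho_i - \sigma_i\|_1 \leq i\cdot \epsilon/m$. Applying the step-$(i+1)$ merging protocol $\Lambda_{i+1}$ to $\rho_i$ instead of $\sigma_i$ costs at most $\|\rho_i - \sigma_i\|_1$ in trace distance by monotonicity, and $\Lambda_{i+1}$ itself contributes another $\epsilon/m$ on $\sigma_i$ by the single-shot guarantee. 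Combining via the triangle inequality gives $\|\rho_{i+1} - \sigma_{i+1}\|_1 \leq (i+1)\epsilon/m$, and after $m$ steps we obtain total error at most $\epsilon$.

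The main subtlety — and the only step where care is required beyond bookkeeping — is verifying that the per-step entanglement cost is correctly expressed in terms of $H^{\cdot}_{\min}(C_{\pi(i)}|\tilde{R}_i)_{\psi}$ evaluated on the \emph{original} state $\psi$, rather than on the (approximate) intermediate state at step $i$. This is legitimate because the ideal state $\sigma_{i-1}$ factorizes as $\psi^{C_{\pi(i)} C_{\pi(i+1)} \ldots C_{\pi(m)} B_{\pi(1)}\ldots B_{\pi(i-1)} B R}$ tensored with fresh maximally entangled pairs, so the relevant reduced state on $C_{\pi(i)}\tilde{R}_i$ is exactly $\psi^{C_{\pi(i)}\tilde{R}_i}$; the receiver's additional merged registers and fresh ebit halves only increase the conditional min-entropy via strong subadditivity/additivity, so the bound stated on the \emph{original} reduced state is sufficient. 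Once this is observed, the proof reduces to summing $m$ applications of the single-shot theorem and a clean triangle-inequality chain.
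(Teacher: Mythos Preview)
Your proposal is correct and follows essentially the same approach as the paper's proof: iterate the single-shot two-party merging result of Dupuis et al.\ \cite{decouplingBerta} along the ordering $\pi$, with per-step error $\epsilon/m$, and accumulate errors via monotonicity of the trace distance plus the triangle inequality to get total error at most $\epsilon$. Your inductive bookkeeping with $\rho_i$ and $\sigma_i$ is a clean restatement of the paper's step-by-step comparison of the actual output with the ideal intermediate state, and your explicit observation that the relevant reduced state on $C_{\pi(i)}\tilde{R}_i$ is exactly $\psi^{C_{\pi(i)}\tilde{R}_i}$ (because the ideal intermediate state factorizes) is a point the paper leaves implicit.
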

\begin{proof}
Our multiparty state merging protocol for the state
$\initstate$ consists of transferring the sender's systems one at a time according to the
ordering $\pi$: The sender $C_{\pi(1)}$ merges his part of the
state first, followed by $C_{\pi(2)}$, $C_{\pi(3)}$, etc. Write
the input state $\psi^{C_M BR}$ as $\psi^{C_{\pi(1)}\tilde{R}_1B}$, where $\tilde{R}_1 = R C_{\pi(2)}C_{\pi(3)}\ldots
C_{\pi(m)}$ is the relative reference for the sender $C_{\pi(1)}$. By Theorem 5.2 of \cite{decouplingBerta}, there exists a
state merging protocol of error $\epsilon/m$ and entanglement cost\footnote{The merging error in \cite{decouplingBerta} is defined in terms of the purified distance, which is lower bounded by the trace distance. We adjusted the entanglement cost proved in \cite{decouplingBerta} to meet our definition of merging, which is expressed in terms of the trace norm.}
\begin{equation}
 \begin{split} \log K'_1 - \log L'_1 &:=
-H^{\frac{\epsilon^2}{52m^2}}_{\min}(C_{\pi(1)}|\tilde{R}_1)_{\psi} +
4 \log \left(\frac{2m}{\epsilon} \right ) + 2\log(13) \\
 &\leq \log K_{\pi(1)} - \log L_{\pi(1)}, \\
  \end{split}
\end{equation} producing an output
state $\rho^{C^1_{\pi(1)}B^1_{\pi(1)}B_{\pi(1)}B\tilde{R}_1}$ satisfying
\begin{equation}
  \bigg \| \rho_1^{C^1_{\pi(1)}B^1_{\pi(1)}B_{\pi(1)}B \tilde{R}_1} - \psi^{B_{\pi(1)}B\tilde{R}_1} \otimes \Phi^{L_1} \bigg \|_1 \leq \frac{\epsilon}{m},
\end{equation}
where the system $B_{\pi(1)}$ is substituted for the system
$C_{\pi(1)}$.

After $C_{\pi(1)}$ has merged his share, the next
sender $C_{\pi(2)}$ performs a random instrument on his systems
and send the measurement outcome to the receiver.
Assume the parties share the state $\psi^{B_{\pi(1)}B\tilde{R}_1} \otimes \Phi^{L_1}$ instead of the output
state $\rho_1$. Write the state $\psi^{B_{\pi(1)}B\tilde{R}_1}$ as $\psi^{C_{\pi(2)}B_2\tilde{R}_2}$, with
$B_2:=B_{\pi(1)}B$ and $\tilde{R}_2:=RC_{\pi(3)}C_{\pi(4)}\ldots
C_{\pi(m)}$. Using Theorem 5.2 of Dupuis et al. \cite{decouplingBerta} for the second time, there exists
a state merging protocol of error $\epsilon/m$ and entanglement
cost \begin{equation}
\begin{split}
 \log K'_2 - \log L'_2 &=
-H^{\frac{\epsilon^2}{52m^2}}_{\min}(C_{\pi(2)}|\tilde{R}_2)_{\psi} +
4\log \left(\frac{2m}{\epsilon} \right )+2\log(13)  \\
& \leq \log K_{\pi(2)} - \log L_{\pi(2)}, \\
\end{split}
\end{equation}
producing the output state $\rho_2^{C^1_{\pi(1)}C^1_{\pi(2)}B^1_{\pi(1)}B^1_{\pi(2)}B_{\pi(2)}B_2 \tilde{R}_2}$, which satisfies
\begin{equation}
  \bigg \| \rho_2^{C^1_{\pi(1)}C^1_{\pi(2)}B^1_{\pi(1)}B^1_{\pi(2)}B_{\pi(2)}B_2 \tilde{R}_2} - \psi^{B_{\pi(2)}B_2\tilde{R}_2} \otimes \Phi^{L_1} \otimes \Phi^{L_2} \bigg \|_1 \leq \frac{\epsilon}{m},
\end{equation}
where the system $B_{\pi(2)}$ is substituted for the system
$C_{\pi(2)}$. If we apply the same protocol on the state
$\rho_1^{C^1_{\pi(1)}B^1_{\pi(1)}B_{\pi(1)}B \tilde{R}_1}$, we have
an output state $\rho_3$ satisfying
\begin{equation}
 \begin{split}
 \bigg \|& \rho_3^{C^1_{\pi(1)}C^1_{\pi(2)}B^1_{\pi(1)}B^1_{\pi(2)}B_{\pi(2)}B_2 \tilde{R}_2} - \psi^{B_{\pi(1)}B_{\pi(2)}B \tilde{R}_2} \otimes \Phi^{L_1} \otimes \Phi^{L_2} \bigg \|_1 \\ &\leq \|\rho_3 - \rho_2 \|_1 + \|\rho_2 - \psi^{B_{\pi(1)}B_{\pi(2)}B \tilde{R}_2} \otimes \Phi^{L_1} \otimes \Phi^{L_2} \|_1 \\
   &\leq \| \rho_1 - \psi^{B_{\pi(1)}B\tilde{R}_1} \otimes \Phi^{L_1} \|_1 + \|\rho_2 - \psi^{B_{\pi(1)}B_{\pi(2)}B \tilde{R}_2} \otimes \Phi^{L_1} \otimes \Phi^{L_2} \|_1 \\
  &\leq \frac{2\epsilon}{m}. \\
  \end{split}
\end{equation}
The second line is an application of the triangle inequality and the third line was obtained using monotonicity of the
trace distance under quantum operations.
\begin{figure}[t]
\centering
    \includegraphics[width=0.7\textwidth]{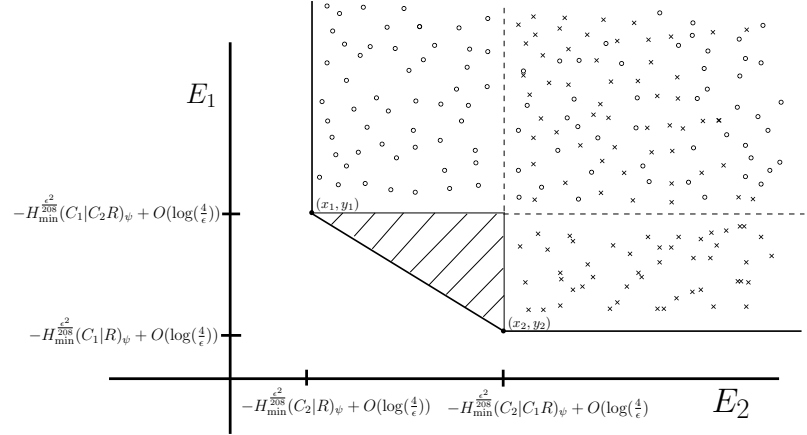}
\caption{Entanglement cost region prescribed by Proposition~\ref{prop:epsilonCost} for multiparty state merging for the case of two senders. The axes correspond to the entanglement cost
$E_1:=\log K_1-\log L_1$ and $E_2:=\log K_2 - \log L_2$. We have two possible orderings for the senders, and
according to Proposition~\ref{prop:epsilonCost}, two intersecting
regions where existence of a $2$-party state
merging of error $\epsilon$ can be shown. The region covered with circles (resp. crosses) is the entanglement cost region associated with merging the sender $C_1$ (resp. $C_2$) first. }\label{fig:rateregion1}
\end{figure}
The analysis for the other senders $C_{\pi(3)},C_{\pi(4)},\ldots,C_{\pi(m)}$ is performed similarly, and so, the final output state $\rho_{m}$ satisfies
  \begin{equation}
  \bigg \| \rho_m - \psi^{B_{\pi(1)}B_{\pi(2)}\ldots B_{\pi(m)} B R} \otimes \Phi^{L_1} \otimes \Phi^{L_2} \otimes \ldots \otimes \Phi^{L_m}\bigg \|_1 \leq \epsilon.
  \end{equation}
Hence, there exists a multiparty state
merging protocol of error $\epsilon$ for the state $\initstate$, with entanglement cost $\overrightarrow{E}:=(\log K_1 - \log L_1, \log K_2 - \log L_2, \ldots, \log K_m - \log L_m)$ satisfying
eq.~(\ref{eq:entanglementcost}).
\end{proof}

Figure \ref{fig:rateregion1} depicts the entanglement cost regions prescribed by Proposition \ref{prop:epsilonCost} for multiparty state merging for the case of two senders. Note that the hatched area is not part of the cost region characterized by  eq.~(\ref{eq:cost1}).

If only a single copy of $\inputGroupstate$ is available to the involved parties, we can adapt the argument of Theorem \ref{thm:cost1} and prove the following result concerning the existence of split-transfer protocols with error $\epsilon$:
\begin{Proposition}
\label{prop:1ShotSplit}
Given a partition $\cK \subseteq \{1,2,\ldots,m\}$ of the senders $C_1, C_2, \ldots, C_m$, let $\inputGroupstate$ be a multipartite state shared between $m$ senders and two receivers $A$ and $B$, with purifying system $R$. For any $\epsilon_1, \epsilon_2> 0$, if the entanglement costs
$\overrightarrow{E_{\cK}} = \bigoplus_{i \in \cK}(\log K_i - \log L_i)$ and $\overrightarrow{E_{\cKbar}} = \bigoplus_{i \in \cKbar} (\log M_i - \log N_i)$ satisfy
 \begin{equation*}
   \begin{split}
   \sum_{i \in {\cal X}} (\log{K_i}- \log{L_i}) &\geq  -H_{\min}(\psi^{{\cal X} R_{\cK}}|\psi^{R_{\cK}}) + 4\log \left (\frac{1}{\epsilon_1} \right) + 2|\cK| + 8  \\
   \sum_{i \in {\cal Y}}( \log{M_i}-\log{N_i} ) &\geq  -H_{\min}(\psi^{{\cal Y} R_{\cKbar}}|\psi^{R_{\cKbar}}) + 4\log \left(\frac{1}{\epsilon_2} \right) + 2|\cKbar| + 8 \\
     \end{split}
 \end{equation*}
for all non-empty subsets ${\cal X} \subseteq \cK$ and ${\cal Y} \subseteq \cKbar$, there
exists a split-transfer protocol for the state $\inputGroupstate$ with error
$\epsilon_1+\epsilon_2$.
\end{Proposition}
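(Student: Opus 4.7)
The plan is to mimic the proof of Theorem \ref{thm:cost1} twice in parallel, once for each side of the bipartite cut, and then glue the two halves together via Proposition \ref{prop:mergeConds}. Specifically, I would view the senders in $\cK$ as running a multiparty merging to receiver $A$ against the ``relative reference'' $R_{\cK} := \cKbar B R$, and independently view the senders in $\cKbar$ as running a multiparty merging to receiver $B$ against the relative reference $R_{\cKbar} := \cK A R$. The random instruments of the two groups act on disjoint systems, so Haar-averaging over the unitaries $U_{\cK}$ and $U_{\cKbar}$ is independent.

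For the first group, I would apply Lemma \ref{Lemma:oneshotdecouple} with the state $\psi^{\cK A R_{\cK}} \otimes \Phi^{K_{\cK}}$, taking $\sigma^{R_{\cK}} = \psi^{R_{\cK}}$. Following the same manipulations as in the proof of Theorem \ref{thm:cost1} (absorb the $K_i$'s via additivity of the min-entropy, substitute the first cost constraint for each $\cX \subseteq \cK$, and handle the at-most-$L_i$-dimensional remainder partial isometry $P^0_i$ exactly as in eq.~(\ref{eq:final})), I would obtain on Haar-average
\[
E_{U_{\cK}}\bigl[\,Q^1_{\cal I}(\psi^{\cK A R_{\cK}} \otimes \Phi^{K_{\cK}})\bigr] \leq \frac{\epsilon_1^2}{4}.
\]
Running the same argument for $\cKbar$, with the second cost constraint playing the analogous role, gives
\[
E_{U_{\cKbar}}\bigl[\,Q^2_{\cal I}(\psi^{\cKbar B R_{\cKbar}} \otimes \Gamma^{M_{\cKbar}})\bigr] \leq \frac{\epsilon_2^2}{4}.
\]

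Because the two expectations are taken over independent sets of Haar-distributed unitaries, the joint average of $Q^1_{\cal I} + Q^2_{\cal I}$ is at most $\epsilon_1^2/4 + \epsilon_2^2/4$, so by a standard averaging argument there exists a choice of instruments $\{{\cal I}_i\}_{i \in \cK}$ and $\{{\cal I}_i\}_{i \in \cKbar}$ for which $Q^1_{\cal I} \leq \epsilon_1^2/4$ and $Q^2_{\cal I} \leq \epsilon_2^2/4$ hold simultaneously. Feeding these two decoupling bounds into Proposition \ref{prop:mergeConds} yields a split-transfer protocol with error $2\sqrt{\epsilon_1^2/4} + 2\sqrt{\epsilon_2^2/4} = \epsilon_1 + \epsilon_2$, which is exactly the claim.

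The only nontrivial step is the simultaneous existence of good instruments for the two halves; but since the random unitaries on $\cK$ and $\cKbar$ are independent and each expectation is already small, this costs nothing beyond Markov / averaging. Every other ingredient — the min-entropy bound of Lemma \ref{Lemma:oneshotdecouple}, the reduction of normalization factors via Proposition \ref{prop:isometry}'s eq.~(\ref{eq:final}), and the lifting from decoupling to recovery via Proposition \ref{prop:mergeConds} — has already been established. So I expect the proof to read essentially as ``apply Theorem \ref{thm:cost1}'s argument to each side of the cut with reference $R_{\cK}$ and $R_{\cKbar}$ respectively, then invoke Proposition \ref{prop:mergeConds}.''
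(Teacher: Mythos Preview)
Your proposal is correct and follows essentially the same route as the paper's proof: apply the argument of Theorem \ref{thm:cost1} separately to each side of the cut (senders in $\cK$ merging toward $A$ with relative reference $R_{\cK}$, senders in $\cKbar$ merging toward $B$ with relative reference $R_{\cKbar}$), bound the Haar-averaged decoupling errors $Q^1_{\cal I}$ and $Q^2_{\cal I}$ by $\epsilon_1^2/4$ and $\epsilon_2^2/4$ respectively, and then invoke Proposition \ref{prop:mergeConds}. The only refinement worth noting is that the simultaneous existence of good instruments does not come from bounding the sum $Q^1_{\cal I}+Q^2_{\cal I}$ but, as you yourself observe in the next sentence, from the fact that $Q^1_{\cal I}$ depends only on $U_{\cK}$ and $Q^2_{\cal I}$ only on $U_{\cKbar}$, so each can be fixed independently to beat its own average.
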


\begin{proof}
The proof is very similar to the proof of Theorem \ref{thm:cost1}. First, we fix random instruments for each helper $C_i$ in a manner analogous to Proposition \ref{prop:isometry}. For each helper $C_i$ in $\cK$, we have $F_i = \lfloor \frac{d_{C_i}K_i}{L_i}\rfloor$ partial isometries $Q^j_i U_i$ of rank $L_i$, where $Q^j_i$ is defined as in Proposition \ref{prop:isometry} and $U_i$ is a Haar distributed unitary acting on $C_iC^0_i$. If $F_iL_i < d_{C_i}K_i$, we also have a partial isometry of rank $L'_i < L_i$. Similarly, for each helper $C_i$ in $\cKbar$, we have $G_i = \lfloor \frac{d_{C_i}M_i}{N_i}\rfloor$ partial isometries $Q^j_i U_i$ of rank $N_i$, and one of rank $N'_i$ if $G_iN_i < d_{C_i} M_i$. For a measurement outcome $J_M:=(j_1,j_2,\ldots, j_m)$, let $J_{\cK} = \bigoplus_{i \in \cK} j_i$ be the vector of length $t=|\cK|$ whose components correspond to the measurement outcomes for the senders belonging to the cut $\cK$. The $i$-th element of $J_{\cK}$ will be denoted by $j_{{\cK}(i)}$. Define
 \begin{equation}
   \omega_{J_{\cK}}^{\cK^1 R_{\cK}} := (Q_{\cK}^{J_{\cK}}U_{\cK} \otimes I_{R_{\cK}}) \psi^{\cK R_{\cK}} (Q_J^{\cK}U_{\cK} \otimes I_{R_{\cK}})^{\dag},
 \end{equation}
where $Q^{J_{\cK}}_{\cK} := \bigotimes_{i \in \cK} Q_i^{j_i}$. Applying Lemma \ref{Lemma:oneshotdecouple} to the state $\psi^{\cK R_{\cK}} \otimes \tau^{K_{\cK}}$, we have
 \begin{equation}\label{eq:deriv2}
   \begin{split}
  \int_{\mathbb{U}(C_{\cK})}\sum_{j_{{\cK}(1)}=1}^{F_1} \sum_{j_{{\cK}(2)}=1}^{F_2} \cdots \sum_{j_{{\cK}(t)}}^{F_t} \bigg \| &\omega^{\cK^1 R_{\cK}}_{J_{\cK}}
- \frac{L_{\cK}}{d_{C_{\cK}}} \tau^{\cK^1} \otimes \psi^{R_{\cK}} \bigg \|_1 dU_{\cK} \\ &\leq \frac{\prod_{i \in \cK} F_i
L_i}{d_{C_{\cK}}K_{\cK}} \sqrt{\sum_{\substack{{\cal X} \subseteq
\cK \\ {\cal X} \neq \emptyset}}
2^{-(H_{\mathrm{min}}(\psi^{{\cal X} R_{\cK}}|\psi^{R_{\cK}}) + \log K_{{\cal X}} -
   \log L_{{\cal X}})}} \\
  &\leq \sqrt{\sum_{\substack{{\cal X} \subseteq \cK \\ {\cal X}
   \neq \emptyset}}
   2^{-(H_{\mathrm{min}}(\psi^{{\cal X} R_{\cK}}|\psi^{R_{\cK}}) + \log K_{{\cal X}} -
   \log L_{{\cal X}})}}, \\
\end{split}
\end{equation}
where $K_{{\cal X}} := \prod_{i \in {\cal X}} K_i$. By hypothesis, we have
 \[
 \sum_{i \in \X} (\log K_{i} - \log L_{i}) \geq -H_{\min}(\psi^{{\cal X} R_{\cK}}|\psi^{R_{\cK}}) + 4\log \left(\frac{1}{\epsilon_1} \right ) + 2|\cK| + 8
 \]
 for all non empty subsets $\X \subseteq \cK$. By the proof of Theorem \ref{thm:cost1}, we have the following bound on the average quantum error $Q^1_{\cal I}(\psi^{\cK A R_{\cK}} \otimes \Phi^{K_{\cK}})$:
\begin{equation*}
\begin{split}
 &\int_{\mathbb{U}(C_{\cK})}\sum_{j_{{\cK}(1)}=0}^{F_1} \sum_{j_{{\cK}(2)}=0}^{F_2} \cdots \sum_{j_{{\cK}(t)}=0}^{F_t} p_{J_{\cK}} \bigg \| \psi^{\cK^1 R_{\cK}}_{J_{\cK}}
- \tau^{\cK^1} \otimes \psi^{R_{\cK}} \bigg \|_1 dU_{\cK} \\
&\leq 2 \sum_{\substack{{\cal X} \subseteq \cK \\ {\cal
X} \neq \emptyset}}\prod_{i \in {\cal X}} \frac{L_i}{d_{C_i}K_i} +
{\sum_{j_{{\cK}(1)}=1}^{F_1} \sum_{j_{{\cK}(2)}=1}^{F_2} \cdots
\sum_{j_{{\cK}(t)}=1}^{F_t} \int_{\mathbb{U}(C_{\cK})}p_{J_{\cK}} \bigg \| \psi^{\cK^1 R_{\cK}}_{J_{\cK}}
- \tau^{\cK^1} \otimes \psi^{R_{\cK}} \bigg \|_1 } dU_{\cK}\\
 &\leq  \sum_{\substack{{\cal X} \subseteq \cK \\ {\cal
X} \neq \emptyset}}\frac{2\epsilon_1^4 2^{H_{\min}(\psi^{\cal X})}}{2^{2t+8}d_{C_{{\cal X}}}} + \frac{\epsilon^2_1}{8}\\
& \leq  \frac{\epsilon_1^4}{2^{t+7}} + \frac{\epsilon^2_1}{8} \leq \frac{\epsilon_1^2}{4}, \\
 \end{split}
 \end{equation*}
 where $t=|\cK|, p_{J_{\cK}} = \Tr(\omega^{\cK^1 R_{\cK}}_{J_{\cK}})$ and $\psi^{\cK^1 R_{\cK}}_{J_{\cK}} = \frac{1}{p_{J_{\cK}}}\omega^{\cK^1 R_{\cK}}_{J_{\cK}}$. In a similar way, we bound the average quantum error $Q^2_{\cal I}(\psi^{\cKbar B R_{\cKbar}} \otimes \Gamma^{M_{\cKbar}})$ as follows:
\begin{equation*}
\begin{split}
 &\int_{\mathbb{U}(C_{\cKbar})}\sum_{j_{{\cKbar}(1)}=0}^{G_1} \sum_{j_{{\cKbar}(2)}=0}^{G_2} \cdots \sum_{j_{\cKbar(m-t)}=0}^{G_{m-t}} p_{J_{\cKbar}} \bigg \| \psi^{\cKbar^1 R_{\cKbar}}_{J_{\cKbar}}
- \tau^{\cKbar^1} \otimes \psi^{R_{\cKbar}} \bigg \|_1 dU_{\cKbar} \\
 &\leq  \sum_{\substack{{\cal Y} \subseteq \cKbar \\ {\cal
Y} \neq \emptyset}}\frac{2\epsilon_2^4 2^{H_{\min}(\psi^{{\cal Y}})}}{2^{2(m-t)+8}d_{C_{{\cal Y}}}} + \frac{\epsilon^2_2}{8}\\
 &\leq  \frac{\epsilon^4_2}{2^{m-t+7}} + \frac{\epsilon^2_2}{8} \leq \frac{\epsilon^2_2}{4}. \\
 \end{split}
 \end{equation*}
By Proposition \ref{prop:mergeConds}, there exists a split-transfer protocol of error $\epsilon_1+\epsilon_2$, and so we are done. \end{proof}

\renewcommand{\min}{{\operatorname{min}}}
\renewcommand{\max}{{\operatorname{max}}}

\section{One-shot distributed compression} \label{sec:example}
Both Theorem \ref{thm:cost1} and Proposition \ref{prop:epsilonCost} describe entanglement cost regions where multiparty merging is achievable for any fixed $\epsilon > 0$. The proof of Theorem \ref{thm:cost1} is significantly more complicated than that of Proposition \ref{prop:epsilonCost}. To illustrate the benefits accruing from the additional effort, we modify our second example of Chapter~3 for distributed compression and show that Theorem \ref{thm:cost1} ``beats'' Proposition \ref{prop:epsilonCost}. That is, it shows the existence of protocols which allow the senders $C_1$ and $C_2$ to transfer their systems for free, a task impossible, for this example, for protocols of the kind described in the proof of Proposition \ref{prop:epsilonCost}. Our last example considers a family of states for which smoothing has little effect on the min-entropies appearing in eq.~(\ref{eq:entanglementcost}).

\subsection{Example I}
Recall the state $\psi^{C_1C_2C_3R}$ for the second example of Section \ref{sec:iid}:
\begin{equation*}
\ket{\psi}^{C_1C_2C_3R}:=\ket{\psi^{C_1C_2^1C_2^2C_3^1C_3^2R}} := \ket{\Psi_{-}}^{C_1C_2^1} \otimes \ket{\Psi_{-}}^{C_3^1R} \otimes \ket{\phi}^{C_2^2C_3^2},
\end{equation*}
where $\ket{\phi}^{C^2_2C_3^2}:=\sqrt{\lambda} \ket{00}^{C^2_2C^2_3} + \sqrt{1-\lambda}\ket{11}^{C^2_2C^2_3}$ is a pure bipartite entangled state with entropy of entanglement:
\begin{equation*}
 E(\phi) = S(C^2_2)_{\phi} = -\lambda \log{\lambda} - (1-\lambda)\log(1-\lambda) > 0.
\end{equation*}
 Let's replace the EPR pairs $\ket{\Psi_{-}}^{C_1C_2^1}$ and $\ket{\Psi_{-}}^{C_3^1R}$ by maximally entangled states of dimension $d$ and the state $\phi$ by a $d_1$-dimensional pure bipartite entangled state $\ket{\vartheta}$:
\begin{equation*}
\ket{\psi}^{C_1C_2C_3R}:=\ket{\psi^{C_1C_2^1C_2^2C_3^1C_3^2R}} := \ket{\Phi^{d}}^{C_1C_2^1} \otimes \ket{\Phi^d}^{C_3^1R} \otimes \ket{\vartheta}^{C_2^2C_3^2}.
\end{equation*}
where $\ket{\vartheta}^{C^2_2C_3^2}:=\sum^{d_1}_{i=1} \sqrt{\lambda_i} \ket{ii}^{C^2_2C^2_3}$, with $\lambda_i \geq \lambda_{i+1}$ and entropy of entanglement:
\begin{equation*}
 E(\vartheta) = S(C^2_2)_{\vartheta} = -\sum^{d_1}_{i=1}\lambda_i \log{\lambda_i} > 0.
\end{equation*}
\newcommand{\errort}{4\log(1/\epsilon) + 14}

Let $(E_1, E_2, E_3)$ be an entanglement cost-tuple satisfying the requirements of Theorem \ref{thm:cost1}:
\begin{equation} \label{eq:condDistCompress2}
  \begin{split}
   E_1 &\geq -H_{\min}(\psi^{C_1R}|\psi^R) +  \errort \\
   E_2 &\geq -H_{\min}(\psi^{C_2R}|\psi^R) + \errort  \\
   E_1 + E_2 &\geq -H_{\min}(\psi^{C_1C_2R}|\psi^R) + \errort\\
   E_1 + E_3 &\geq -H_{\min}(\psi^{C_1C_3R}|\psi^R) + \errort \\
   E_2 + E_3 &\geq -H_{\min}(\psi^{C_2C_3R}|\psi^R) + \errort \\
   E_1 + E_2 + E_3 &\geq -H_{\min}(\psi^{C_1C_2C_3R}|\psi^R) + \errort.
   \end{split}
\end{equation}
As in our previous analysis of such states in Section \ref{sec:iid}, we focus on the costs $E_1$ and $E_2$ sufficient for merging with fixed error $\epsilon$. By additivity of the min-entropy, we can simplify the min-entropies appearing in eq.~(\ref{eq:condDistCompress2}) for the costs $E_1$ and $E_2$:
\begin{equation*}
\begin{split}
 H_{\min}(\psi^{C_1R}|\psi^R)  &= H_{\min}(\psi^{C_1}) = \log(d) \\
 H_{\min}(\psi^{C_2R}|\psi^R) &= H_{\min}(\psi^{C_2}) = \log(d) -\log \lambda_1(\vartheta^{C_2^2}) \\
 H_{\min}(\psi^{C_1C_2R}|\psi^{R}) &= H_{\min}(\psi^{C_2^2}) = -\log \lambda_1(\vartheta^{C_2^2}).
 \end{split}
\end{equation*}
The constraints on the costs $(E_1, E_2)$ become
\begin{equation} \label{eq:condDistCompress3}
  \begin{split}
   E_1 &\geq -\log(d) +  \errort \\
   E_2 &\geq \log \lambda_1(\vartheta^{C_2^2}) -\log(d)  + \errort  \\
  E_1 + E_2 &\geq \log \lambda_1(\vartheta^{C_2^2}) + \errort.
  \end{split}
\end{equation}
For a fixed error $\epsilon$, the right hand side of the cost sum $E_1 + E_2$ will dominate the coefficients $\errort$ if we adjust the min-entropy of the reduced state $\vartheta^{C^2_2}$ to be an increasing function of $d$. For the purpose of showing the superiority of Theorem \ref{thm:cost1} over Proposition \ref{prop:epsilonCost}, set $\vartheta^{C^2_2C^3_2}$ to be the maximally entangled state of dimension $d^{\epsilon}$. This gives the following constraint on the cost sum $E_1 + E_2$:
\begin{equation*}
  E_1 + E_2 \geq -\epsilon \log(d)  + \errort,
\end{equation*}
which is negative for large $d$ and any fixed $\epsilon > 0$. Therefore, by boosting $E_3$ enough to satisfy the other constraints of eq.~(\ref{eq:condDistCompress2}), there exists a multiparty state merging protocol with error $\epsilon$ and entanglement cost-tuple $(E_1,E_2,E_3)$ such that both $E_1$ and $E_2$ are negative.

Turning to the entanglement costs provided by Proposition \ref{prop:epsilonCost}, let's attempt to achieve negative costs $E'_1$ and $E'_2$ for the senders $C_1$ and $C_2$. This obviously requires the sender $C_3$ to be the first to transfer his system. Thus, it restricts the possible ordering of the senders to either $\{C_3, C_2, C_1\}$ or $\{C_3, C_1, C_2\}$. Assuming $C_2$ is the next sender to transfer his system, the cost $E'_2$ must be at least
\newcommand{\errotA}{4\log(6/\epsilon)+ 2\log(13)}
\begin{equation*}
E'_2 \geq -H^{\delta}_{\min}(C_2|C_1R)_{\psi} + \errotA,
\end{equation*}
where $\delta := \epsilon^2/468$. By Lemma 5 of Renes et al. \cite{Renes02}, which bounds the smooth min-entropy by the conditional von Neumann entropy and a function of the smoothing parameter $\delta$, we have
\begin{equation*}
\begin{split}
H^{\delta}_{\min}(C_2|C_1R)_{\psi} &\leq S(C_2|C_1R)_{\psi} + 8\delta(\epsilon+1)\log(d) + 2h_2(2\delta) \\
&= -\log(d) + \epsilon \log(d) + 8\delta(\epsilon+1) \log(d) + 2h_2(2\delta) \\
\end{split}
\end{equation*}
where $h_2(x)$ is the binary entropy function $ h_2(x)= -x\log(x) -(1-x)\log(1-x)$. Hence, the entanglement cost $E'_2$ is at least
\begin{equation*}
\begin{split}
E'_2  &\geq  (1-\frac{8\epsilon^3}{468} - \frac{8\epsilon^2}{468} - \epsilon)\log(d)  - 2h_2(2\delta) + \errotA, \\
&\geq (1-\frac{4\epsilon^2}{117}-\epsilon)\log(d) + 4\log(6/\epsilon) + 5
\end{split}
\end{equation*}
which is positive for any $\epsilon \leq 0.9$.

If instead the sender $C_1$ is the next to transfer his system to the receiver, the cost $E'_1$ must be at least
\begin{equation*}
E'_1 \geq  -H^{\delta}_{\min}(C_1|C_2R)_{\psi}  + \errotA.
\end{equation*}
By strong subadditivity of the smooth min entropy, we have
\begin{equation*}
\begin{split}
  H^{\delta}_{\min}(C_1|C_2R)_{\psi} &= H^{\delta}_{\min}(C_1|C_2^1C_2^2R)_{\psi} \\
  &\leq H^{\delta}_{\min}(C_1|C^1_2)_{\psi} \\
  &=-H^{\delta}_{\max}(\tau_d^{C_1}),
\end{split}
\end{equation*}
where $\tau_d^{C_1}$ is the maximally mixed state of dimension $d$ and the last line is obtained by using the duality between the smooth min and max-entropies, eq.~(\ref{eq:duality}). Using Lemma \ref{lem:hmax-smoothing} with $k = \lceil d(1- 2\delta) \rceil$, we have
\begin{equation*}
 \begin{split}
 H^{\delta}_{\max}(\psi^{C_1}) &\geq 2\log \frac{k-1}{\sqrt{d}} \\
 & = 2\log (k-1) - \log(d) \\
 & \geq \log(d) - 2.
\end{split}
\end{equation*}
The last line follows from this very weak lower bound on $k-1$, which is nonetheless sufficient for our purposes:
\begin{equation*}
  \begin{split}
    k-1&= \lceil d(1 - 2\delta)\rceil-1  \\
  &= \lfloor d(1 - 2\delta) \rfloor \\
  &= \lfloor d(1 - \frac{\epsilon^2}{234}) \rfloor \\
    &\geq \frac{d}{2}
   \end{split}
\end{equation*}
for any $\epsilon \leq 2$.  Hence, the cost $E'_1$ for merging $C_1$ is bounded by
\begin{equation}\label{eq:weakd}
\begin{split}
E'_1 &\geq  -H^{\delta}_{\min}(C_1|C_2R)_{\psi}  + \errotA\\
&\geq H^{\delta}_{\max}(\tau_d^{C_1}) + \errotA \\
&\geq \log(d) + 4\log(6/\epsilon) + 5,
\end{split}
\end{equation}
which is positive for any $d$ and $\epsilon > 0$. Thus, Proposition \ref{prop:epsilonCost} provides no entanglement cost tuple $(E'_1, E'_2, E'_3)$ allowing negative costs for the senders $C_1$ and $C_2$. The interpretation from Section \ref{sec:iid} remains valid in the one-shot regime for the multiparty merging protocols provided by Theorem \ref{thm:cost1}: the excess of entanglement distributed between the receiver and the sender $C_3$ is used by the decoder to transfer the other shares $C_1$ and $C_2$ for free, with possible extra entanglement shared with the receiver.
\subsection{Example II}
Our second example for illustrating the benefits of Theorem \ref{thm:cost1} over Proposition \ref{prop:epsilonCost} considers the same kind of state $\psi^{C_1C_2C_3R}$, but with a less artificial state $\vartheta$ which does not depend on the ``merging'' error $\epsilon$:
\begin{equation*}
\ket{\vartheta^{C_2^2C_3^2}} := \frac{1}{\sqrt{H_d}} \sum^d_{j=1}\frac{1}{\sqrt{j}} \ket{jj}^{C^2_2C^2_3},
\end{equation*}
where $H_d = \sum_{j=1}^d 1/j$ is the $d$th harmonic number. These states are known as \textit{embezzled} states and were introduced in van Dam et al. \cite{Hayden04}. They are useful resources for channel simulation and other tasks~\cite{Harrow,Berta02,Reverse}.

First, to simplify calculations, we use the following bound on the $d$th harmonic number:
\[ \ln(d+1) \leq H_d \leq \ln(d) + 1.\]
Let's compute a bound on the von Neumman entropy of the reduced density operator~$\vartheta^{C_2^2}$:
\begin{equation}\label{eq:bounden}
\begin{split}
  S(C_2^2)_{\vartheta} &= -\sum^d_{j=1} \frac{1}{jH_d} \log \left (\frac{1}{jH_d} \right ) \\
  &= \frac{1}{H_d}\sum^d_{j=2} \frac{\log(j)}{j} + \log(H_d) \\
  & \leq \frac{1}{H_d}\int^d_{j=1} \frac{\log(x)}{x} dx  + \log(H_d) \\
  &= \frac{1}{H_d} \frac{\log(d)}{2} \ln(d) + \log(H_d) \\
  &\leq \frac{\log(d)\ln(d)}{2 \ln(d+1)} + \log \left (\frac{4\log d}{5} \right ) \\
  &\leq \frac{\log(d)}{2} + \log\log(d).
\end{split}
\end{equation}
The fifth line was obtained using the bound $\ln(d)+1 \leq \frac{4 \log(d)}{5}$, holding for sufficiently large $d$. The maximum eigenvalue for the reduced state $\vartheta^{C_2^2}$ is $1/H_d$, giving us the following constraint for the cost sum $E_1 + E_2$:
\begin{equation*}
E_1+E_2 \geq -\log(H_d) + \errort \\
\end{equation*}
which is satisfied by choosing $E_1$ and $E_2$ such that
\begin{equation*}
 \begin{split}
 E_1 &\geq -\log(d) +  4\log(1/\epsilon) + 14 \\
 E_2 &\geq -\log\log(d) -\log(d)  + 4\log(1/\epsilon) + 15  \\
 E_1+E_2 &\geq -\log\log(d) + 4\log(1/\epsilon) + 15. \\
\end{split}
\end{equation*}
Thus, although $\log \log (d)$ grows very slowly with $d$, we can nonetheless choose costs $E_1 < 0$ and $E_2 < 0$ for sufficiently large $d$ and any fixed value of $\epsilon > 0$.

Now, for the protocols of Proposition \ref{prop:epsilonCost}, with the ordering $\{C_3, C_2, C_1\}$ selected for the senders, we can substitute the entropy $S(C^2_2)_{\vartheta}$ by the bound of eq.~(\ref{eq:bounden}), and obtain a lower bound on the cost $E'_2$:
\begin{equation*}
\begin{split}
E'_2 &\geq  (1-16\delta)\log(d) -S(C^2_2)_{\vartheta}  - 2h_2(2\delta) + \errotA, \\
& \geq (1/2 -\frac{4\epsilon^2}{117})\log(d) - \log \log(d) + 4\log(6/\epsilon) + 5.
\end{split}
\end{equation*}
which is positive for any $\epsilon \leq 2$. For the other possible ordering $\{C_3, C_1, C_2\}$ of the senders, the $E'_1$ cost is independent of the entropy of the state $\vartheta^{C_2^2}$ (see eq.~(\ref{eq:weakd})). So once again, we have an example where the protocols of Theorem \ref{thm:cost1} allow the senders $C_1$ and $C_2$ the possibility to send their systems for free, although the dimension $d$ required for achieving this is unrealistic for this example. The protocols provided by Proposition \ref{prop:epsilonCost}, on the other hand, allow no such feat for any value of~$d$.

\subsection{Example III}
Suppose two senders $C_1$ and $C_2$ have systems $C_1$ and $C_2$ of dimensions $d$ in a state of the form
\begin{equation*}
|\psi\rangle^{C_1 C_2 R} := \frac{1}{\sqrt{H_d}} \sum_{j=1}^d \frac{1}{\sqrt{j}} |j\rangle^{C_1} |\psi_j\rangle^{C_2} |j\rangle^{R}.
\end{equation*}
These states are close relatives of the embezzling states introduced in the previous example. They make interesting examples because they have sufficient variation in their Schmidt coefficients that the i.i.d. state merging rates of Theorem \ref{thm:statemerging} are not achievable in the one-shot regime.
Simple teleportation of the two systems $C_1$ and $C_2$ to the receiver would require $\log{d}$ EPR pairs by sender \cite{GenTel}. Our protocols will yield nontrivial one-shot rates that are significantly better than teleportation.

We assume that $| \langle \psi_i | \psi_j \rangle | \leq \alpha$ for $i \neq j$ and try to express the lower bounds of eqs.~(\ref{eq:cost1}) and (\ref{eq:entanglementcost}) in terms of $\alpha$.

\subsubsection{Protocols from Theorem \ref{thm:cost1}}
Let $(E_1,E_2)$ be a pair of entanglement costs achievable according to Theorem~\ref{thm:cost1}. The only constraints on the costs (aside from needing to be the logarithms of integers) are
\begin{eqnarray}
E_1 &\geq& -H_{\min}(\psi^{C_1 R} | \psi^R ) + 4 \log ( 1 / \epsilon ) + 12 \label{eqn:ach-e1} \\
E_2 &\geq&  -H_{\min}(\psi^{C_2 R} | \psi^R ) + 4 \log ( 1 / \epsilon ) + 12 \label{eqn:ach-e2} \\
E_1 + E_2  &\geq&  -H_\min(\psi^{C_1 C_2 R} | \psi^R ) + 4 \log ( 1 / \epsilon ) + 12. \label{eqn:ach-e1+e2}
\end{eqnarray}
To begin, we will find a sufficient condition for the $E_1$ constraint to be satisfied,
so we need to evaluate $H_\min(\psi^{C_1R} | \psi^R )$. Let $\lambda_{\min}$ be the smallest real number such that $\lambda_{\min}( I^{C_1} \otimes \psi^R ) - \psi^{C_1 R} \geq 0$. Expanding the operators, the condition is the same as
\begin{equation} \label{eqn:cond-ent-operator}
\sum_{ij} \frac{\lambda_{\min} - \delta_{ij}}{j} |ij\rangle\langle ij|^{C_1 R}
	-  \sum_i \sum_{ j \neq i} \frac{1}{\sqrt{ij}} |ii\rangle\langle jj|^{C_1 R}
			\langle \psi_j | \psi_i \rangle \geq 0,
\end{equation}
where $\delta_{ij}$ is the Kronecker delta function. Let $\lambda > 0$ be any real number. By the Gershgorin Circle Theorem \cite{Horn,wiki}, the operator $\lambda( I^{C_1} \otimes \psi^R ) - \psi^{C_1 R}$ is positive if each diagonal entry dominates the sum of the absolute values of the off-diagonal entries in the corresponding row. That condition reduces to
\begin{equation}
\frac{\lambda - 1}{i}
	\geq
	\sum_{j \neq i} \frac{1}{\sqrt{ij}} | \langle \psi_j | \psi_i \rangle |
\end{equation}
holding for all $i$, which is true provided
$\lambda - 1 \geq \alpha \sum_{j=1}^d \sqrt{d/j}$.
But
\begin{equation}
\sum_{j=1}^d \frac{1}{\sqrt{j}}
	\leq
	\int_0^{d} \frac{1}{\sqrt{x}} \, dx
	= 2 \sqrt{d}.
\end{equation}
Therefore, if $\lambda \geq 2 \alpha d + 1$, the operator $\lambda( I^{C_1} \otimes \psi^R ) - \psi^{C_1 R}$ is positive. Hence, we have
\begin{equation*}
-H_{\min}(\psi^{C_1R}|\psi^R) = \log{\lambda_{\min}} \leq \log(2\alpha d + 1) \leq \log(\alpha d) + 2,
\end{equation*}
provided $\alpha \geq \frac{1}{2d}$.
The lower bound of eq.~(\ref{eqn:ach-e1}) is satisfied if we set
\begin{equation*}
E_1 \geq \log( \alpha d ) + 4\log(1/\epsilon) + 14.
\end{equation*}
The interpretation is that if the states $\{ |\psi_j\rangle \}$ are indistinguishable, then $C_1$ holds the whole purification of $R$ and must therefore be responsible for the full cost of merging. As the states $\{ | \psi_j \rangle \}$ become more distinguishable, the purification of $R$ becomes shared between the systems $C_1$ and $C_2$, allowing the entanglement cost to be more distributed between the two senders. Indeed, if $\alpha=O(1/d)$, the lower bound on $E_1$ becomes a constant, independent of the size of the input state $|\psi\rangle^{C_1C_2R}$.

Moving on to the $E_2$ constraint, eq.~(\ref{eqn:ach-e2}), the state $\psi^{C_2R}$ is a classical quantum state with classical system $R$:
\begin{equation*}
 \psi^{C_2R} = \sum^d_j \frac{1}{j H_d } \braket{j}^R \otimes \braket{\psi_j}^{C_2}.
\end{equation*}
Since the state $\braket{\psi_j}^{C_2}$ is pure for all $j$, conditioning on the classical system reduces the min-entropy to zero. A formal proof of this fact is obtained by applying Lemma 3.1.8 of Renner \cite{Renner02}. Thus, the lower bound of eq.~(\ref{eqn:ach-e2}) is satisfied if we set
\begin{equation*}
E_2 \geq 4\log(1/\epsilon) + 12.
\end{equation*}
For the sum rate $E_1 + E_2$, it is necessary to evaluate $H_\min(\psi^{C_1 C_2 R} | \psi^R )$. Since the state $\psi^{C_1C_2R}$ is pure, we can apply Proposition 3.11 in Berta \cite{Berta} and obtain
\begin{equation*}
 -H_{\min}(\psi^{C_1C_2R} | \psi^{R}) = H_0(\psi^{C_1C_2}) := \log \mathrm{rank} (\psi^{C_1C_2}) = \log d.
\end{equation*}
 Hence, by Theorem \ref{thm:cost1}, there exists a multiparty state merging protocol for the state $\psi^{C_1C_2R}$, with error $\epsilon$ and entanglement cost pair $(E_1,E_2)$ satisfying
\begin{eqnarray}
E_1 &\geq&  \log( \alpha d ) + 4\log(1/\epsilon) + 14 \\
E_2 &\geq& 4\log(1/\epsilon) + 12 \\
E_1 + E_2 &\geq& \log(d) + 4 \log(1/\epsilon) + 12. \label{eqn:cor2}
\end{eqnarray}
The total entanglement cost $E_1 + E_2$ must be at least $\log d$ plus terms independent of the dimensions of the systems, and we can distribute that cost between the senders $C_1$ and $C_2$. The lower bound on $E_2$ varies independently with $d$ and can be regarded as a small ``overhead'' for the protocol. There \emph{is} a minimal $d$-dependent cost for $E_1$, however, which encodes the fact that if the sender $C_2$ does not carry enough of the purification of $R$ by virtue of the nonorthogonality of the $\{|\psi_j\rangle\}$, then more of the burden will fall to the sender $C_1$.

\subsubsection{Protocols from Proposition \ref{prop:epsilonCost}}
Now let us consider the lower bound of eq.~(\ref{eq:entanglementcost}) (Proposition \ref{prop:epsilonCost}). For fixed $\epsilon$, the proposition provides two cost pairs, plus others that are simply degraded versions of those two arising from the wasteful consumption of unnecessary entanglement. Proposition \ref{prop:epsilonCost} does not permit interpolation between the two points, as compared to Theorem \ref{thm:cost1}.
It might be the case, however, that Proposition \ref{prop:epsilonCost}'s freedom to smooth the entropy and vary the operator being conditioned upon could result in those two cost pairs being much better than any of those provided by Theorem \ref{thm:cost1}. On the contrary, for the states of the example, the improvement achieved with the extra freedom is minimal.

Let $(E_1',E_2')$ be a cost pair achievable by Proposition \ref{prop:epsilonCost}. For the purposes of illustration, consider the point with the smallest possible value of $E_2'$. Letting $\delta = \epsilon^2/208$, that point will satisfy
\begin{eqnarray}
E_1' &\geq&  -H_\min^\delta (\psi^{C_1 C_2 R} | C_2 R)
		+ 4 \log\left( 4 / \epsilon \right) + 2\log(13)  \\
E_2' &\geq&  -H_\min^\delta (\psi^{C_2 R} | R)
		+   4 \log\left( 4 / \epsilon \right) + 2\log(13).
\end{eqnarray}
First, to bound the entanglement cost $E_2$, we use Lemma 5 of Renes \textit{et al.} \cite{Renes02}:
\begin{equation*}
 \begin{split}
-H_{\min}^{\delta}(\psi^{C_2R}|R) &\geq -S(C_2|R)_{\psi} - 8\delta \log(d) - 2h_2(2\delta) \\
&=-8\delta \log(d) - 2h_2(2\delta),
\end{split}
\end{equation*}
where $S(C_2|R)_{\psi}=0$ for the state $\psi^{C_2R}$. Thus, we have
\begin{equation*}
E_2' \geq  -8\delta \log(d) -2h_2(2\delta) +   4 \log\left( 4 / \epsilon \right) + 2\log(13).
\end{equation*}
Before introducing the extra complication of smoothing, consider first $H_\min(\psi^{C_1 C_2 R} | C_2 R )$. By duality of the smooth min and max-entropies, eq.~(\ref{eq:duality}), we have
\begin{eqnarray*}
   -H_{\min}(\psi^{C_1C_2R}|C_2R) &=& H_{\max}(\psi^{C_1}) \\
    &=& 2\log \sum^d_{j=1} \frac{1}{\sqrt{j H_d}} \\
	&\geq& \log \left( \frac{1}{\sqrt{H_d}} \int_1^d \frac{1}{\sqrt{x}} \, dx \right)^2 \\
	&\geq& \log \frac{4d}{H_d} \left( 1 - O\left(\frac{1}{\sqrt{d}} \right) \right)  \\
    &\geq& \log \left (\frac{5d}{\log d}\right ) \\
	&=& \log d - \log \log d+ \log 5,
\end{eqnarray*}
where $4/(\ln d + 1) \geq 5.7 / \log d$ for sufficiently large $d$.
Therefore, ignoring smoothing, the total entanglement cost for Proposition \ref{prop:epsilonCost} satisfies
\begin{equation}
E_1' + E_2'
	\geq   (1-\frac{\epsilon^2}{26})\log(d) - \log \log(d) + 14 + 8 \log\left( \frac{4}{\epsilon} \right).
\end{equation}
for sufficiently large $d$, which has worse constants than the sum cost (\ref{eqn:cor2}) for Theorem \ref{thm:cost1}.
Now let us introduce some smoothing. By duality of the min- and max- entropies,
\begin{equation}
-H^\delta_\min(\psi^{C_1 C_2 R} | R C_2 ) = H^\delta_\max(\psi^{C_1}).
\end{equation}
Lemma \ref{lem:hmax-smoothing} of Appendix B gives that
\begin{equation}
H^\delta_\max(\psi^{C_1})
	\geq 2 \log \min \left\{ \sum_{j=1}^{k-1} \frac{1}{\sqrt{j \cdot H_d}}: k \mbox{ such that }
		\sum_{j=k+1}^d \frac{1}{j \cdot H_d} \leq 2\delta \right\}.
\end{equation}
Getting a lower bound on this expression requires finding large $k$ that nonetheless fail to satisfy the tail condition. That restriction on $k$ is equivalent to $1-H_k/H_d \leq 2\delta$, which will not be met by any $k$ small enough to obey
\begin{equation} \label{eqn:kbound}
k \leq (d+1)^{1-2\delta} / e
\end{equation}
for sufficiently large $d$.
\begin{eqnarray}
2 \log \sum_{j=1}^{k-1} \frac{1}{\sqrt{j\cdot H_d}}
	&\geq& \log \left( \frac{1}{\sqrt{H_d}} \int_1^k \frac{1}{\sqrt{x}} \, dx \right)^2 \\
	&\geq& \log \frac{4k}{H_d} \left( 1 - O\left(\frac{1}{\sqrt{k}} \right) \right)  \\
	&\geq& \log k - \log \log d+ \log 5
\end{eqnarray}
for sufficiently large $k$.
Substituting in the largest possible $k$ consistent with eq.~(\ref{eqn:kbound}) and $\delta = \epsilon^2/208$ gives
\begin{equation}
E_1' + E_2' \geq \left( 1 - \frac{\epsilon^2}{26} - \frac{\epsilon^2}{104} \right) \log(d) -\log \log d + 12 + 8 \log\left(\frac{4}{\epsilon}\right),
\end{equation}
for sufficiently large $d$. The additional savings from smoothing are only about $\frac{\epsilon^2 \log(d)}{104}$ ebits, which is insignificant for small $\epsilon$. These tiny savings also come at the expense of being able to interpolate between achievable costs. To be fair, these states were chosen specifically because they are known to maintain their essential character even after smoothing, as was observed in \cite{Hayden05}. The freedom to smooth is certainly more beneficial for some other classes of states, most notably i.i.d. states. Indeed, since $S(C_1 C_2)_{\psi} = (\log d)/2 + O( \log \log d)$, merging many copies of $|\psi\rangle^{C_1 C_2 R}$ can be done at a rate roughly half the cost required for one-shot merging.

\chapter{Assisted Entanglement Distillation}
\section{Introduction}
 The protocols discussed in the previous chapters are based on a random coding strategy: the senders apply randomly chosen unitaries on their systems and perform projective measurements in a fixed basis. The decoder, conditioned on the measurement outcomes, applies an isometry on his systems and recovers the original state with arbitrarily good fidelity. A similar approach is used in \cite{merge} for solving the multipartite entanglement of assistance problem when the state shared between the parties is pure. Recall that the formula is given by the min-cut entanglement of the state $\psi^{C_M AB}$:
\begin{equation}
\label{eq:mincutEofA}
D_A^{\infty}(\psi^{C_MAB}) = \min_{{\cal T}} \{ S(A{\cal T})_{\psi}\},
\end{equation}
where the minimum is taken over all bipartite cuts $\cal{T}$. (Recall that a bipartite cut consists of a partition of the helpers ${C_1,\ldots, C_{m}}$
into a set $\cal{T}$ and its complement $\overline{\cal T }=\{C_1,\ldots, C_{m}\} \backslash \cal{T}$.)

Consider a one-dimensional chain with $m$ repeater stations $C_1, C_2, \ldots, C_m$ separating the two endpoints (Alice and Bob). If many copies of the state $\ket{\psi}=\sqrt{\lambda_1} \ket{00} + \sqrt{\lambda_2}\ket{11}$ are prepared and distributed across the network, so that the global state of the network is given by
 \begin{equation*}
  \begin{split}
  (\Psi^{AC_1C_2\ldots C_m B})^{\otimes n} &:= (\psi_{AC_1^1} \otimes \psi_{C_1^2C_2^1} \otimes \ldots \otimes \psi_{C_m^2 B})^{\otimes n} \\
  &=\psi^{\otimes n}_{AC_1^1} \otimes \psi^{\otimes n}_{C_1^2C_2^1} \otimes \ldots \otimes \psi^{\otimes n}_{C_m^2 B}, \\
  \end{split}
 \end{equation*}
then the previous formula applied to the state $\Psi^{AC_1C_2\ldots C_mB}$ reduces to the entropy of entanglement $S(A)_{\psi}$ of the state $\ket{\psi}$. If the fiber optic transmitting the quantum information is perfect up to distances of roughly 100 kilometers, then we can establish close to $nS(A)_{\psi}$ ebits between Alice and Bob no matter how far they are to each other by introducing repeater stations at approximately every 100 kilometers. Of course, as the number of repeater stations increases between Alice and Bob, more copies of the state $\ket{\psi}$ must be distributed between the nodes of the network for the assisted distillation protocol to continue producing high quality entanglement at the rate $S(A)_{\psi}$.

As discussed in the introductory chapter, an implementation of the previous strategy will have to deal with the accumulation of errors during the various phases of preparation, distribution, storage and local operations of the quantum information. In this chapter, we continue the theoretical analysis of this problem by extending the models previously studied in \cite{entangleDistr,cirac} to allow for an arbitrary mixed state between adjacent nodes. This is an initial step towards handling more complex and realistic situations. First, we will consider a network consisting of two receiving nodes (Alice and Bob), separated by a repeater node (Charlie), whose global state is a mixed state $\psi^{ABC}$. This is a realistic assumption, as we recall imperfections in local operations \cite{repeaters} and decoherence in the quantum memories will most likely introduce noise in the stored qubits. We study the optimal distillable rate achievable for Alice and Bob when assistance from Charlie is available. This problem reduces to the two-way distillable entanglement for states in a product form $\psi^{C} \otimes \psi^{AB}$. There is currently no simple formula for computing the two-way distillable entanglement of a bipartite state $\psi^{AB}$, which has been studied extensively by Bennett et al. and others in \cite{Bennett,PurNoisy,interpolation,adaptive}. We do not attempt to solve this problem here, and turn our attention instead to good computable lower bounds for assisted distillation of mixed states. We provide a bound which exceeds the hashing inequality for states $\psi^{ABC}$ which do not saturate the strong subadditivity of the von Neumann entropy and allow the recovery of the $C$ system if Alice and Bob can perform joint operations on their systems.

\section{Assisted distillation for mixed states}

 \subsection{The task}
In this section, we extend the entanglement of assistance to the case of a general mixed state $\psi^{ABC}$: a measurement of Charlie's system followed by an entanglement distillation protocol between Alice and Bob. The problem is illustrated in Figure~\ref{fig:b1s}.

\begin{problem}[\textbf{Broadcast, Assisted Distillation}] \label{pbl:oneway}
Given many copies of a tripartite \textit{mixed} state $\psi^{ABC}$ shared between two recipients (Alice and Bob) and a helper (Charlie), find the optimal distillable rate between Alice and Bob with the help of Charlie if no feedback communication is allowed: Charlie performs a POVM and broadcasts the measurement outcome to Alice and Bob. The optimal rate is denoted by $D_A^{\infty}(\psi^{ABC})$. It is the asymptotic entanglement of assistance.
\end{problem}
We call a protocol which satisfies the constraint of Problem \ref{pbl:oneway} a \emph{broadcast assisted distillation protocol}. More
formally, it consists of
\newcommand{\LOCC}{{\cal V}_x}
\begin{packed_enum}
 \item A POVM $E=(E_x)_{x=1}^X$ for Charlie. Without loss of generality, we can assume that the operators $E_x$ are all of rank one.
 \item For each $x$, an LOCC operation $\LOCC: A^nB^n \rightarrow A_1B_1$, where $A_1$ and $B_1$ are subspaces of $A^n$ and $B^n$ of equal dimensions, implemented by Alice and Bob.
\end{packed_enum}
\begin{figure}
\begin{center}
\includegraphics{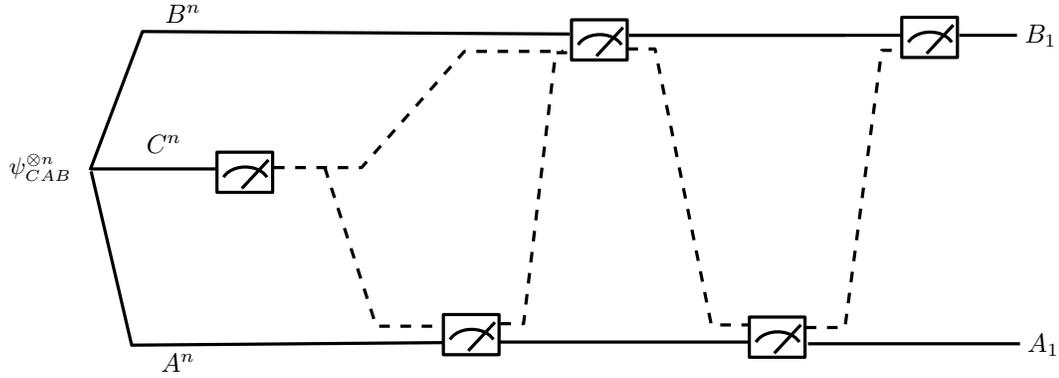}
\end{center}
\caption{Quantum circuit representing a broadcast assisted entanglement distillation protocol. Solid lines indicate quantum information and dashed lines classical information. Charlie first performs a measurement, sending copies of the classical outcome to Alice and Bob. Alice and Bob then implement an LOCC operation, conditioned on that classical outcome. }  \label{fig:b1s}
\end{figure}
We refer to a broadcast assisted protocol as an $(n,\epsilon)$-protocol if it acts on
$n$ copies of the state $\psi^{ABC}$ and produces a maximally
entangled state of dimension $M_n:=d_{A_1}$ \begin{equation*}
\ket{\Phi^{M_n}} = \frac{1}{\sqrt{{M_n}}}
\sum_{m=1}^{M_n} \ket{m}^{A_1} \otimes\ket{m}^{B_1} \end{equation*} up to fidelity $1-
\epsilon$: \begin{equation*} F^2 \biggr (\Phi^{M_n}, \sum_{x=1}^X p(x) \LOCC(\psi_x^{A^nB^n}) \biggl )
\geq 1-\epsilon, \end{equation*} where
\begin{equation*}
\psi_x^{A^nB^n} = \frac{1}{\Tr_{C^n}[E_x(\psi^C)^{\otimes
n}]}\Tr_{C^n} \biggr [ (E_x \otimes I^{AB}) (\psi^{ABC})^{\otimes
n}  \biggl ].
\end{equation*}

A real number $R \geq 0$ is said to be an achievable rate if there exists, for every $n$ sufficiently large, an
$(n,\epsilon)$-protocol with $\epsilon \rightarrow 0$ and $\frac{1}{n} \log{M_n} \rightarrow R$ as $n \rightarrow
\infty$. Lastly, we have
\begin{equation*} D_A^{\infty}(\psi^{ABC}) := \sup \{R: R \text{ is achievable} \}.
\end{equation*}

The restriction to POVMs with rank one operators in the preceding definition can be justified as follows: any POVM $F$ containing positive operators with rank higher than one that Charlie would wish to perform can be simulated by a POVM $E$ with rank one operators on Charlie's system followed by some processing by Alice and Bob. More precisely, suppose Charlie wants to perform a POVM $F=\{F_x\}$ on his state with some operators having rank greater than one. Consider the spectral decomposition of each operator: \[ F_x = \sum_i \lambda^x_i \braket{\alpha^x_i}, \] where $\{\ket{\alpha^x_i}\}$ are eigenvectors of $F_x$ with eigenvalues $\{\lambda^x_i\}$. Then $E = \{\lambda^x_i\braket{\alpha^x_i}\}_{x,i}$ is a POVM with rank one operators. Instead of performing the POVM $F$, Charlie does a measurement corresponding to the POVM $E$. After Alice and Bob receive the measurement outcome, the state is given by \[ \psi^{AA_1A_2BB_1B_2} = \sum_{x,i} q_{x,i} \psi^{AB}_{x,i} \otimes \braket{xx}^{A_1B_1} \otimes \braket{ii}^{A_2B_2}. \] To simulate $F$ being performed by Charlie, Alice and Bob can trace out the $A_2$ and $B_2$ systems. The state becomes
\[ \psi^{AA_1BB_1} = \sum_{x} p_x \psi^{AB}_{x} \otimes \braket{xx}^{A_1B_1}, \] with $\psi^{AB}_x := \smfrac{1}{p_x}\sum_i q_{x,i} \psi^{AB}_{x,i}$ and $p_x = \sum_i q_{x,i}$. Observe that this preprocessing can be embedded within the LOCC operation $\LOCC$. Hence, there is no loss of generality in assuming POVMs with rank one operators in step 1 of the protocol.

For pure states, $D^{\infty}_A(\psi)$ reduces to the asymptotic entanglement of assistance $E^{\infty}_A(\psi)$. For product states of the form $\psi^{C} \otimes \psi^{AB}$, $D^{\infty}_A(\psi)$ is equivalent to the two-way distillable entanglement $D(\psi^{AB})$. A formula is known for the two-way distillable entanglement (see Theorem 15 in Devetak and Winter \cite{DW}), but its calculation is intractable for most states. We will instead use the hashing bound to the one-way distillable entanglement $D_{\rightarrow}(\psi^{AB})$ \cite{DW}, which is much easier to evaluate. We remind the reader of the result for convenience:

\begin{lemma}[Hashing inequality \cite{Bennett,DW}]
\label{hashinginequality} Let $\psi^{AB}$ be an arbitrary bipartite mixed state. Then,
\begin{equation}\label{eq:hashing}
D_{\rightarrow}(\psi^{AB}) \geq S(B)_{\psi} - S(AB)_{\psi} =: I(A\rangle B)_{\psi}.
\end{equation}
\end{lemma}

\subsection{Entanglement of assistance} \label{sec:EoA}

As mentioned before, it was shown in \cite{SVW} that for pure states, the operationally defined quantity $E_A^{\infty}$ corresponds to the regularization of the one-shot entanglement of assistance $E_A$. In a similar fashion, we define the
\emph{one-shot entanglement of assistance} $D_A(\psi^{ABC})$ of a tripartite mixed state $\psi^{ABC}$ and show
that its regularization is equal to $D^{\infty}_A(\psi^{ABC})$. We then look at some of the properties of $D_A(\psi^{ABC})$.

\begin{definition}  \label{def:A1D}
For an arbitrary state $\psi^{ABC}$, define
\begin{equation}
 \begin{split}\label{eq:A1D}
  D_A(\psi^{ABC}) &:= \sup_{\substack{E=\{E_x\}}} \left\{ \sum_x p_x D(\psi_x^{AB}) \bigg | \psi_x^{AB} = \frac{1}{p_x} \mathrm{Tr}_C[(E_x \otimes I_{AB})\psi^{ABC}] \right\},
 \end{split}
\end{equation}
where $p_x = \mathrm{Tr}[E_x \psi^C]$ and the supremum is taken over all POVMs $E=\{E_x\}$ with rank one operators on Charlie's system $C$.
\end{definition}

The quantity $D_A(\psi^{ABC})$ can also be characterized using a maximization over all pure state decompositions $\{p_i,
\psi_i^{ABR}\}$ of the purified state $\psi^{ABCR}$:
\begin{proposition}\label{prop:equiv}
Let $\psi^{ABC}$ be an arbitrary state, with purification $\psi^{ABCR}$, then
  \begin{equation}
   \label{eq:equiv}
     D_A(\psi^{ABC}) = \sup_{ \{p_i, \psi_i^{ABR}\}}   \sum_i p_i D(\psi_i^{AB}),
 \end{equation}
where the supremum is taken over all ensembles of pure states
$\{p_i, \psi_i^{ABR}\}$ satisfying $\sum_i p_i \psi_i^{ABR} =
\mathrm{Tr}_C \psi^{ABCR}$.
\end{proposition}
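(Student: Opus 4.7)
The plan is to exhibit a bijection (up to coarse-graining) between rank-one POVMs on Charlie's system and pure-state decompositions of $\mathrm{Tr}_C \psi^{ABCR}$, and then observe that the objective function $\sum_i p_i D(\psi_i^{AB})$ is invariant under this bijection. The relevant tool is the Hughston--Jozsa--Wootters classification of ensembles (cited in the excerpt as \cite{Hughston}): since $\psi^{ABCR}$ is a purification of $\mathrm{Tr}_C \psi^{ABCR}$ with purifying system $C$, every pure-state ensemble $\{p_i,\psi_i^{ABR}\}$ realizing $\mathrm{Tr}_C \psi^{ABCR}$ arises from a rank-one POVM on $C$, and conversely every rank-one POVM on $C$ induces such an ensemble.

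In more detail, first I would fix a rank-one POVM $E=\{E_x\}$ on $C$ and set
\begin{equation*}
\psi_x^{ABR} := \frac{1}{p_x}\,\mathrm{Tr}_C\bigl[(E_x \otimes I^{ABR})\,\psi^{ABCR}\bigr],\qquad p_x = \mathrm{Tr}[E_x \psi^C].
\end{equation*}
Because $E_x$ has rank one, a short calculation (writing $E_x = \ket{\varphi_x}\bra{\varphi_x}$ in a basis that diagonalizes $\psi^C$, or just appealing to Hughston--Jozsa--Wootters) shows that $\psi_x^{ABR}$ is pure, and by construction $\sum_x p_x \psi_x^{ABR} = \mathrm{Tr}_C \psi^{ABCR}$. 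Tracing out $R$ yields exactly the state $\psi_x^{AB}$ of Definition~\ref{def:A1D}, so the objective value $\sum_x p_x D(\psi_x^{AB})$ is the same under either interpretation. This gives the inequality ``$\leq$'' in eq.~(\ref{eq:equiv}).

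For the reverse inequality, I would start from an arbitrary ensemble $\{p_i,\psi_i^{ABR}\}$ of pure states with $\sum_i p_i \psi_i^{ABR}=\mathrm{Tr}_C\psi^{ABCR}$ and invoke Hughston--Jozsa--Wootters in the other direction to produce a rank-one POVM $\{E_i\}$ on $C$ that induces exactly this ensemble; the corresponding reduced states $\psi_i^{AB} = \mathrm{Tr}_R \psi_i^{ABR}$ again match those appearing in Definition~\ref{def:A1D}, so that $\sum_i p_i D(\psi_i^{AB})$ is achieved by some admissible POVM and therefore bounded above by $D_A(\psi^{ABC})$. Taking suprema on both sides finishes the argument.

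The main technical subtlety I anticipate is the continuous/discrete nature of the ensembles and POVMs: Hughston--Jozsa--Wootters is normally stated for finite ensembles, while the supremum in Definition~\ref{def:A1D} is over all rank-one POVMs, which in principle may have uncountably many outcomes. This is easily handled by either restricting to finite POVMs without loss of generality (since $\mathcal{H}_C$ is finite-dimensional, Carath\'eodory-type arguments bound the number of outcomes needed to approximate any supremum) or by working with a general measure-theoretic version of Hughston--Jozsa--Wootters. Apart from that, the proof is a direct unpacking of definitions together with the standard fact that pure decompositions of a density operator on $ABR$ correspond precisely to rank-one POVMs on any fixed purifying system.
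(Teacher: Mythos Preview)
Your proposal is correct and follows essentially the same approach as the paper: the paper's proof is a two-sentence invocation of Hughston--Jozsa--Wootters, noting that rank-one POVMs on $C$ are in bijection with pure-state ensembles for $\psi^{ABR}$, and that this immediately yields the result. Your version simply unpacks this correspondence in more detail and flags the finite/infinite-outcome subtlety, which the paper does not address explicitly.
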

\begin{proof} Any rank one POVM on $C$ induces an ensemble of pure states on $ABR$ with average state $\psi^{ABR}$ and for every such ensemble there exists a corresponding POVM~\cite{Hughston}. Applying this observation to the definition of the one-shot entanglement of assistance yields the result. \end{proof}

We can interpret eq.~(\ref{eq:equiv}) as follows: by varying a
POVM on his state, Charlie can collapse the purified state
$\psi^{ABCR}$ into any pure state ensemble decomposition
$\{p_i,\psi_i^{ABR}\}$ for the $A$,$B$, and $R$ systems. Since we
don't have access to the purifying system $R$, the quantity $D_A(\psi^{ABC})$
maximizes the average amount of distillable entanglement
between Alice and Bob. The next result shows that the regularized
version of $D_A$ is in fact equal to the asymptotic entanglement of assistance $D_A^{\infty}(\psi^{ABC})$.

\subsection{Basic properties} \label{subsec:properties}

\begin{theorem}[Equivalence]
\label{thm:equivalence} Let $\psi^{ABC}$ be an arbitrary tripartite state. Then the following equality holds:
\begin{equation}\label{eq:ensEq}
D_A^{\infty}(\psi^{ABC}) = \lim_{n \rightarrow \infty} \frac{1}{n} D_A\biggl ((\psi^{ABC})^{\otimes n} \biggr ).
\end{equation}
\end{theorem}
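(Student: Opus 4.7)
The plan is to establish the equality by proving the two inequalities separately, with the main technical care going into how the asymptotic (doubly-asymptotic, in fact) nature of the distillable entanglement $D(\cdot)$ interacts with the one-shot quantity $D_A(\cdot)$.

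For the direction $D_A^{\infty}(\psi^{ABC}) \geq \lim_{n \to \infty} \tfrac{1}{n} D_A((\psi^{ABC})^{\otimes n})$, I would use a double-blocking strategy. Fix $n$ and consider the POVM $E^{(n)} = \{E_x^{(n)}\}$ on $C^{\otimes n}$ that comes within $\delta$ of the supremum defining $D_A((\psi^{ABC})^{\otimes n})$, producing the ensemble $\{p_x,\psi_x^{A^nB^n}\}$. Given $kn$ copies of $\psi^{ABC}$, Charlie partitions them into $k$ blocks of $n$ copies, applies $E^{(n)}$ independently to each block, and broadcasts the vector of outcomes $(x_1,\ldots,x_k)$. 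Alice and Bob then group together, for each label $x$, the $\approx k p_x$ blocks giving that outcome and run a two-party distillation protocol achieving a rate arbitrarily close to $D(\psi_x^{A^nB^n})$ ebits per block, which is possible for $k$ large by the operational definition of $D$. The total rate per copy of $\psi^{ABC}$ is then at least $\tfrac{1}{n}\sum_x p_x D(\psi_x^{A^nB^n}) - \delta' \geq \tfrac{1}{n} D_A((\psi^{ABC})^{\otimes n}) - \delta - \delta'$, and letting $n\to\infty$ followed by $\delta,\delta'\to 0$ gives achievability.

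For the converse $D_A^{\infty}(\psi^{ABC}) \leq \lim_{n \to \infty} \tfrac{1}{n} D_A((\psi^{ABC})^{\otimes n})$, I would argue that any $(n,\epsilon)$-broadcast protocol fits the template of Definition~\ref{def:A1D} applied to $(\psi^{ABC})^{\otimes n}$. Indeed, such a protocol consists of a rank-one POVM $\{E_x\}$ on $C^{\otimes n}$ followed by LOCC maps $\mathcal{V}_x$ producing states $\sigma_x^{A_1B_1}$ with $\sum_x p_x F^2(\sigma_x,\Phi^{M_n}) \geq 1-\epsilon$. For each outcome, $D(\psi_x^{A^nB^n}) \geq \log M_n - f(\epsilon_x) \log M_n$ where $f$ is a continuity/fidelity correction (via Fannes-type arguments applied after further blocking of the $\sigma_x$), and averaging yields $D_A((\psi^{ABC})^{\otimes n}) \geq \log M_n - \eta(\epsilon)\log M_n$ for some $\eta(\epsilon) \to 0$. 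Dividing by $n$ and taking $n\to\infty$, $\epsilon\to 0$ gives the desired inequality.

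Existence of the limit itself follows from superadditivity of $D_A$ under tensor products: by taking independent optimal POVMs on the two tensor factors and using that the distillable entanglement is additive on independent ensembles (or at least superadditive), one obtains $D_A(\psi^{\otimes(n+m)}) \geq D_A(\psi^{\otimes n}) + D_A(\psi^{\otimes m})$, so Fekete's lemma gives $\lim_n \tfrac{1}{n} D_A(\psi^{\otimes n}) = \sup_n \tfrac{1}{n} D_A(\psi^{\otimes n})$.

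The main obstacle will be the careful handling of the continuity step in the converse: since $D(\rho)$ is defined as an asymptotic rate with vanishing error, one cannot directly invoke it on a \emph{single} state $\sigma_x$ of fidelity $1-\epsilon_x$ to $\Phi^{M_n}$. I would address this by embedding the broadcast protocol in yet another outer block, so that the achievability of $\log M_n/n$ ebits (in the sense of $D_A^{\infty}$) is translated into a lower bound on $D(\psi_x^{(nk)})$ averaged over $x$, which is exactly what the definition of $D_A$ applied to $(\psi^{ABC})^{\otimes nk}$ requires. The Fannes inequality and the relation between trace distance and fidelity (\ref{Lemma:relation}) then control the loss, yielding the $\eta(\epsilon)$ correction that vanishes in the limit.
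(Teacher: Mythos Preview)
Your achievability direction and the Fekete superadditivity argument are correct and match the paper (the paper states only the single-copy version explicitly, but the blocking you describe is exactly what is needed to pass to the regularization).

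For the converse, you correctly identify the obstacle but your proposed fix does not resolve it. An outer layer of blocking produces, for outcome $\vec{x}=(x_1,\dots,x_k)$, the product state $\bigotimes_i \sigma_{x_i}$, and you are left with the identical question of lower-bounding the distillable entanglement of a state that is merely close to $\Phi^{M_n^k}$; the scale changes but the problem does not, and no amount of further blocking makes the single-shot closeness into an asymptotic rate. The missing ingredient is the hashing inequality (Lemma~\ref{hashinginequality}): it gives $D(\Omega_x^{A_1B_1}) \geq I(A_1\rangle B_1)_{\Omega_x}$ \emph{exactly}, and the coherent information is just a difference of entropies, hence continuous by Fannes. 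This is the paper's route: apply Fannes to the \emph{average} output $\Omega=\sum_x p_x\Omega_x$ to obtain $\log M_n \leq I(A_1\rangle B_1)_\Omega + 3\log(M_n)\,\eta(2\sqrt{\epsilon})$, use convexity of the coherent information to pass inside the sum, invoke hashing to replace $I(A_1\rangle B_1)_{\Omega_x}$ by $D(\Omega_x)$, apply LOCC monotonicity of $D$ to replace $\Omega_x$ by $\psi_x^{A^nB^n}$, and bound by the definition of $D_A$. No continuity of $D$ and no extra blocking is required. If you prefer your per-outcome formulation, the same chain works there too: $D(\psi_x^{A^nB^n}) \geq D(\sigma_x) \geq I(A_1\rangle B_1)_{\sigma_x}$, with Fannes controlling the last term and concavity of $\eta$ handling the average; the outer block then plays no role.
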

\begin{proof} We demonstrate the ``$\leq$'' first. Consider any achievable rate $R$ for a broadcast assisted protocol. By definition, there exists, for every $n$ sufficiently large, an $(n, \epsilon)$-protocol with $\epsilon \rightarrow 0$ and $\smfrac{1}{n}\log(M_n)\rightarrow R$ as $n \rightarrow \infty$. For a protocol working on $n$ copies of the state $\psi^{ABC}$, denote Charlie's POVM by $E=(E_x)_{x=1}^X$, and for each outcome $x$, the LOCC operation implemented by Alice and Bob by $\LOCC$. Write  \begin{equation*} \begin{split}
\Omega^{A_1B_1} &:= \sum_{x=1}^X p_x \LOCC (\psi_x^{A^nB^n}) \\
&= \sum_{x=1}^X p_x \Omega_x^{A_1B_1}, \\
\end{split}
\end{equation*} where $p_x =\mathrm{Tr}[E_x (\psi^{C})^{\otimes n}]$ and $\psi_x^{A^nB^n} =\frac{1}{p_x} \Tr_{C^n}[ (E_x \otimes I_{AB})\psi_{ABC}^{\otimes n}]$. The state $\Omega_x^{A_1B_1}$ is the output state of $\LOCC(\psi_x^{A^nB^n})$. By hypothesis, we have \beu F^2(\Phi^{M_n}, \Omega^{A_1B_1}) \geq 1-\epsilon, \eeu which, shifting to the trace norm, implies \be \label{eq:dist1} \biggr \|
\Phi^{M_n} - \Omega^{A_1B_1} \biggl \|_1 \leq 2\sqrt{\epsilon} := \epsilon'. \ee
The trace distance is non-increasing under the partial trace, and so tracing out the $A_1$ system, we have \be \label{eq:dist2} \biggr \|
\Phi^{M_n}_{B_1} - \Omega^{B_1} \biggl \|_1 \leq \epsilon', \ee
where $\Phi^{M_n}_{B_1} = \frac{1}{M_n}\sum^{M_n}_{m=1} \braket{m}^{B_1}$.

We can apply the Fannes inequality (Lemma \ref{lem:Fannes}) on eqs. (\ref{eq:dist1}) and (\ref{eq:dist2}) to get a bound on $\log(M_n)$ in terms of the coherent information of the state $\Omega_{A_1B_1}$:
  \begin{equation*}
    \begin{split}
      \log{M_n} &\leq S(B_1)_{\Omega} - S(A_1B_1)_{\Omega} + 3\log(M_n)\eta(\epsilon')\\
      &= I(A_1 \rangle B_1)_{\Omega} +3\log(M_n)\eta(\epsilon'), \\
    \end{split}
  \end{equation*}
where $\eta(\epsilon')$ is a function which converges to zero for sufficiently small $\epsilon'$. (The definition of $\eta(\epsilon')$ can be found in Lemma $\ref{lem:Fannes}$.)
Using the convexity of the coherent information~\cite{bns}, the hashing inequality, and the definitions of $D$ and $D_A$, we get the following series of inequalities:
\begin{equation*}
    \begin{split}
      \log{M_n}&\leq I(A_1\rangle B_1)_{\Omega} + 3\log(M_n)\eta(\epsilon')\\
      &\leq \sum_x p_x I(A_1\rangle B_1)_{\Omega_x} + 3\log(M_n)\eta(\epsilon') \\
      &\leq \sum_x p_x D(\Omega^{A_1B_1}_x) + 3\log(M_n)\eta(\epsilon') \\
      &\leq \sum_x p_x D(\psi_x^{A^nB^n}) +3\log(M_n)\eta(\epsilon')  \\
      &\leq D_A((\psi^{ABC})^{\otimes n}) + 3n\log(d_A)\eta(\epsilon').
    \end{split}
  \end{equation*}
Since $\epsilon \rightarrow 0$ and $\smfrac{1}{n}\log(M_n) \rightarrow R$ as $n\rightarrow \infty$, the achievable rate $R$ is at most $\lim_{n \rightarrow \infty} \smfrac{1}{n}D_A((\psi^{ABC})^{\otimes n})$, which proves the ``$\leq$'' part since $R$ was arbitrarily chosen.

To show the ``$\geq$'' part, suppose Charlie performs any POVM $E=(E_x)$ on one copy of the state $\psi^{ABC}$ and broadcasts the
result to Alice and Bob. They now share the state \beu \tilde{\psi}^{A'ABB'} = \sum_x p_x \braket{x}^{A'} \otimes
\psi_x^{AB} \otimes \braket{x}^{B'}. \eeu Since Alice and Bob know the outcome of Charlie's POVM, the distillable entanglement of $\tilde{\psi}^{A'ABB'}$ is at least \beu D(\tilde{\psi}^{A'ABB'}) \geq \sum_x p_x
D(\psi_x^{AB}). \eeu To see this, consider many copies of ${\psi}^{A'ABB'}$ and let Alice and Bob perform projective measurements on the systems $A'$ and $B'$ for each copy of the state. Group the outcome states into blocks, where each block corresponds to a specific measurement outcome. For each of these blocks, there exist LOCC operations $\LOCC$ which will distill arbitrarily close to the rate $D(\psi^{AB}_x)$. Thus, there is a protocol achieving the rate $\sum_x p_x D(\psi_x^{AB})$, which proves the ``$\geq$'' part.
\end{proof}

Finding a formula for the one-shot quantity $D_A(\psi^{ABC})$ appears to be a difficult problem, and so we look for upper bounds which are attained for a subset of all possible states. For the remainder of this section, we look at two upper bounds and give examples of states attaining them. 

\begin{proposition}
\label{thm:upperbound} Let $\psi^{ABC}$ be an arbitrary tripartite state. We have the following upper bound for
$D_A(\psi^{ABC})$:
\begin{equation*}
D_A(\psi^{ABC}) \leq \inf_{\cal E} \sum_i p_i E_A (\psi^{ABC}_i) ,
\end{equation*}
where the infimum is taken over all ensembles of pure states
$\{p_i,\psi^{ABC}_i\}$ such that $\psi^{ABC} = \sum_i p_i
\psi_i^{ABC} $.
\end{proposition}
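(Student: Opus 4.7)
The plan is to start from the supremum definition of $D_A$ in Definition~\ref{def:A1D}. Fix an arbitrary pure state ensemble $\{p_i, \psi_i^{ABC}\}$ with $\psi^{ABC}=\sum_i p_i \psi_i^{ABC}$, and fix an arbitrary rank-one POVM $E=\{E_x\}$ on $C$. The central observation is that when a rank-one POVM element acts on a \emph{pure} tripartite state $\psi_i^{ABC}$, the resulting post-measurement state on $AB$ is pure. Concretely, setting $p_{x|i}=\Tr[E_x\psi_i^C]$ and
\[
\psi_{i,x}^{AB}=\tfrac{1}{p_{x|i}}\Tr_C[(E_x\otimes I^{AB})\psi_i^{ABC}],
\]
each $\psi_{i,x}^{AB}$ is pure, and the post-measurement state $\psi_x^{AB}$ arising from $\psi^{ABC}$ admits the convex decomposition $\psi_x^{AB}=\sum_i q_{i|x}\,\psi_{i,x}^{AB}$ with $q_{i|x}=p_i p_{x|i}/p_x$ and $p_x=\sum_i p_i p_{x|i}$.

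The next step is to bound $D(\psi_x^{AB})$ using this specific pure-state decomposition. Here I would invoke two standard facts: distillable entanglement is bounded above by entanglement of formation, $D(\rho)\le E_F(\rho)$, and $E_F$ is the infimum of $\sum q_i S(A)_{\sigma_i}$ over pure decompositions $\{q_i,\sigma_i\}$. Applying the second fact to the decomposition $\{q_{i|x},\psi_{i,x}^{AB}\}$ gives
\[
D(\psi_x^{AB})\;\le\;E_F(\psi_x^{AB})\;\le\;\sum_i q_{i|x}\, S(A)_{\psi_{i,x}^{AB}}.
\]
Taking the $p_x$-weighted sum over $x$ and rearranging via $p_x q_{i|x}=p_i p_{x|i}$ yields
\[
\sum_x p_x D(\psi_x^{AB})\;\le\;\sum_i p_i \sum_x p_{x|i}\, S(A)_{\psi_{i,x}^{AB}}.
\]

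Now I would recognize the inner sum as an average arising from a pure-state decomposition of $\psi_i^{AB}=\sum_x p_{x|i}\psi_{i,x}^{AB}$, implemented by the POVM $E$ applied to the purifying system $C$ of the pure tripartite state $\psi_i^{ABC}$. By the definition of the entanglement of assistance (eq.~(\ref{eq:EofA})), the inner sum is bounded above by $E_A(\psi_i^{ABC})$, so
\[
\sum_x p_x D(\psi_x^{AB})\;\le\;\sum_i p_i\, E_A(\psi_i^{ABC}).
\]
Taking the supremum over all rank-one POVMs $E$ on the left gives $D_A(\psi^{ABC})\le\sum_i p_i E_A(\psi_i^{ABC})$, and finally taking the infimum over ensembles $\{p_i,\psi_i^{ABC}\}$ produces the stated bound.

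The only nonroutine ingredient is the convexity step $D\le E_F$ combined with the pure-decomposition bound for $E_F$; the main bookkeeping obstacle is verifying that the weights $p_x$, $p_{x|i}$, $q_{i|x}$ remain consistent with both the decomposition of $\psi^{AB}$ induced by $E$ and the decomposition of each $\psi_i^{AB}$, so that the inner averages really match the form required by the definition of $E_A$. This is what Proposition~\ref{prop:equiv} (already proved above via~\cite{Hughston}) guarantees, so no further obstruction arises.
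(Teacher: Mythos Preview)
Your proof is correct. It is in fact essentially the same argument as the paper's, just organized differently. The paper factors the argument into two pieces: it first introduces the flagged state $\phi^{ABCX}=\sum_i p_i\psi_i^{ABC}\otimes|i\rangle\langle i|^X$ and observes $D_A(\psi^{ABC})\le D_A(\phi^{ABCX})$, then invokes a separately stated convexity result (Proposition~\ref{lemma:convex}) to get $D_A(\phi^{ABCX})\le\sum_i p_i E_A(\psi_i^{ABC})$. But the proof of that convexity proposition in the appendix is exactly your computation: decompose $\psi_x^{AB}$ as $\sum_i q_{i|x}\psi_{i,x}^{AB}$ with pure $\psi_{i,x}^{AB}$, bound $D\le E_F\le\sum_i q_{i|x}S(A)_{\psi_{i,x}}$, swap the sums, and recognize the inner sum as bounded by $E_A(\psi_i^{ABC})$. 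Your version simply inlines the convexity lemma and skips the flagging step, which is a cleaner and more self-contained presentation of the same idea.
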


%

\begin{proof} Let $\psi^{ABC} = \sum_i p_i \psi_i^{ABC}$, where the states $\psi_i^{ABC}$ are pure. Consider the following classical-quantum
state $\phi^{ABCX} = \sum_i p_i \psi_i^{ABC} \otimes \braket{i}^X$. If Charlie is in possession of the $X$
system, then
\beu D_A (\psi^{ABC}) = D_A(\sum_i p_i \psi_i^{ABC}) \leq D_A(\phi^{ABCX}) \eeu
by the definition of $D_A$. By the convexity of $D_A(\phi^{ABCX})$ on the ensemble $\{p_i, \psi^{ABC}_i \otimes \braket{i}^X \}$ (see Proposition \ref{lemma:convex}) and the fact that $D_A(\psi_i^{ABC} \otimes \braket{i}^X) = E_A(\psi^{ABC}_i)$, we have
\begin{equation}
D_A(\psi^{ABC}) \leq D_A(\phi^{ABCX}) \leq \sum_i p_i E_A(\psi_i^{ABC}).
\end{equation}
Since this holds for any pure state ensemble $\{p_i, \psi_i^{ABC}\}$, we arrive at the statement of the proposition.
\end{proof}


With this result in hand, we now exhibit a set of states for which we can compute the value of $D_A$ exactly.
\begin{example}
Consider the following family of classical-quantum states, with classical system $C$:
\[\psi^{ABC} = \sum_{i=1}^{d_C} p_i \psi_i^{AB} \otimes \ket{i} \bra{i}^C,\] where
$\psi_i^{AB}$ are pure states. Since $D_A$ is convex on pure ensembles $\{p_i,\psi^{ABC}_i\}$, the quantity
$D_A(\psi^{ABC})$ is upper bounded by $\sum_i p_i D_A(\psi_i^{AB} \otimes \braket{i}^C)$. Since assistance is not helpful for a product state $\psi^{AB} \otimes \phi^{C}$, we have that $D_A(\psi_i^{AB}
\otimes \braket{i}^C) = D(\psi_i^{AB}) = S(A)_{\psi_i}$. By considering the POVM $E=\{\braket{i}^C\}_{i=1}^{d_C}$, we also have $D_A(\psi^{ABC}) \geq \sum_i p_i D(\psi_i^{AB}) = \sum_i p_i S(A)_{\psi^i}$. Hence, for this special class of classical-quantum states, the upper bound is attained and $D_A$ is just the average entropy of the $A$ system for the ensemble $\{p_i,\psi_i^{AB}\}$.
\end{example}

\begin{proposition}
\label{thm:upperboundEofA} Let $\psi^{ABC}$ be an arbitrary tripartite state. Then
 \begin{equation*}
   D_A(\psi^{ABC}) \leq E_A(\psi^{AB}).
 \end{equation*}
\end{proposition}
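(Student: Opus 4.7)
The plan is to bound each $D(\psi^{AB}_x)$ arising from Charlie's POVM by the entanglement of formation $E_F(\psi^{AB}_x)$, and then stitch the optimal pure decompositions of the $\psi^{AB}_x$ into a single pure decomposition of $\psi^{AB}$ against which $E_A$ is maximized.

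First I would fix an arbitrary rank-one POVM $\{E_x\}$ on Charlie's system, producing for each outcome $x$ the post-measurement state $\psi^{AB}_x = \frac{1}{p_x}\Tr_C[(E_x \otimes I^{AB})\psi^{ABC}]$ with probability $p_x = \Tr(E_x \psi^C)$, so that $\sum_x p_x \psi^{AB}_x = \psi^{AB}$. The first key step is the standard bound $D(\rho^{AB}) \leq E_F(\rho^{AB})$, where $E_F$ denotes the entanglement of formation, which follows from the fact that one cannot distill (under LOCC) more ebits than were consumed to prepare the state. Applying this yields
\begin{equation*}
\sum_x p_x D(\psi^{AB}_x) \;\leq\; \sum_x p_x E_F(\psi^{AB}_x).
\end{equation*}

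Next, for each $x$, I would choose a pure-state decomposition $\{r_{x,y}, \phi^{AB}_{x,y}\}$ of $\psi^{AB}_x$ attaining the infimum defining $E_F$, so that $E_F(\psi^{AB}_x) = \sum_y r_{x,y} S(A)_{\phi_{x,y}}$. Combining,
\begin{equation*}
\sum_x p_x D(\psi^{AB}_x) \;\leq\; \sum_{x,y} p_x r_{x,y}\, S(A)_{\phi_{x,y}}.
\end{equation*}
But $\{p_x r_{x,y}, \phi^{AB}_{x,y}\}$ is itself a pure-state decomposition of $\psi^{AB}$, since $\sum_{x,y} p_x r_{x,y} \phi^{AB}_{x,y} = \sum_x p_x \psi^{AB}_x = \psi^{AB}$. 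By the definition of $E_A$ as the maximum of $\sum_i q_i S(A)_{\phi_i}$ over all pure-state decompositions of $\psi^{AB}$, the right-hand side is at most $E_A(\psi^{AB})$. Taking the supremum over all rank-one POVMs $\{E_x\}$ yields $D_A(\psi^{ABC}) \leq E_A(\psi^{AB})$.

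There is no serious obstacle here; the only subtle point is recognizing that $E_F$ enters as a one-shot upper bound to $D$ (the asymptotic inequality $D \leq E_F$ is what actually matters, but since $E_F$ is by definition larger than its regularization, the one-shot inequality suffices), and that the resulting mixture of optimal pure decompositions for the $\psi^{AB}_x$ is automatically a legitimate pure decomposition of $\psi^{AB}$ against which $E_A$ is optimizing. Consequently the result follows directly from the variational characterization of $E_A$ in Proposition~\ref{prop:equiv}, without requiring any random-coding or asymptotic argument.
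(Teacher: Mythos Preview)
Your argument is correct, and it is a genuinely different and more direct route than the one in the paper. The paper first establishes the intermediate bound $D_A(\psi^{ABC}) \leq \inf_{\cal E} \sum_i p_i E_A(\psi_i^{ABC})$ (Proposition~\ref{thm:upperbound}), which in turn rests on the convexity of $D_A$ on pure ensembles (Proposition~\ref{lemma:convex}); it then invokes the concavity of $E_A$ (cited from \cite{dfm}) to collapse the average to $E_A(\psi^{AB})$. Your proof bypasses all of this: you use $D \leq E_F$ once, patch together the $E_F$-optimal pure decompositions of the $\psi^{AB}_x$ into a single pure decomposition of $\psi^{AB}$, and read off the bound directly from the variational definition of $E_A$. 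What you lose is the intermediate Proposition~\ref{thm:upperbound}, which the paper presents as of independent interest; what you gain is a self-contained two-line argument that needs neither the convexity lemma nor the concavity of $E_A$. The only minor point to tidy is that the infimum defining $E_F$ is attained in finite dimensions (by compactness), so your ``choose a decomposition attaining the infimum'' step is justified; alternatively an $\epsilon$-approximation would do.
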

\begin{proof} From Proposition \ref{thm:upperbound} and the concavity of the entanglement of assistance quantity $E_A$ (see \cite{dfm} for a proof), we have
\begin{equation*}
  \begin{split}
  D_A(\psi^{ABC}) &\leq \inf_{{\cal E}} \sum_i p_i E_A(\psi_i^{ABC}) \\
    & = \inf_{{\cal E}} \sum_i p_i E_A(\psi_i^{AB}) \\
     & \leq \inf_{{\cal E}} E_A(\sum_i p_i \psi_i^{AB}) \\
     & = E_A(\psi^{AB})
  \end{split}
\end{equation*}
where the minimization is taken over all pure state ensembles $\{p_i,
\psi_i^{ABC}\}$ of the state $\psi^{ABC}$.
\end{proof}

The previous bound on $D_A$ is better understood by imagining the following scenario. The $A'$ system of a pure state $\psi^{AA'}$ is sent to a receiver (i.e Bob) via a noisy channel ${\cal N}$, which can be expressed in its Stinespring form as ${\cal N}(\psi) = \Tr_E U\rho U^{\dag}$, where $U: A' \rightarrow BE$ is an isometry. Another player, Charlie,  tries to help Alice and Bob by measuring the environment and sending its measurement outcome to Alice and Bob. Two cases can occur. If Charlie has complete access to the environment, the best rate Alice and Bob can achieve is given by the entanglement of assistance $E_A(\psi^{ABC})$. More likely, however, is the case where Charlie will only be able to measure a subsystem $C_1$ of the environment $E=C_1C_2$. In this situation, the optimal rate is given by the one-shot entanglement of assistance $D_A(\psi^{ABC_1})$, where $\psi^{ABC_1} = \Tr_{C_2} \psi^{ABE}$. Since this case is more restrictive to Charlie in terms of measuring possibilities, it makes sense that $D_A(\psi^{ABC}) \leq E_A(\psi^{ABC})$ for any tripartite mixed state $\psi^{ABC}$.  This bound will be attained for all pure states $\psi^{ABC}$ since $D_A$ reduces to $E_A$ in this case.

\section{Achievable rates for assisted distillation} \label{sec:coding}

In this section, we find the rates achieved by a random coding
strategy for assisted entanglement distillation. The helper Charlie will simply perform
a random measurement in his typical subspace. In light
of the equivalence demonstrated in the previous section, eq.~(\ref{eq:ensEq}), we will prove a lower bound on the asymptotic entanglement of assistance by bounding the regularized entanglement of assistance quantity. We will use a much simpler form of Proposition \ref{prop:isometry}:
\begin{proposition}\cite{merge} \label{thm:random} Suppose we have $n$ copies of a tripartite pure state $\psi^{CBR}$, where $S(R)_{\psi} < S(B)_{\psi}$.
Let $\psi^{\tilde{C}\tilde{B}\tilde{R}}$  be the normalized state obtained by projecting
$C^n, B^n, R^n$ into their respective typical subspaces $\tilde{C},\tilde{B},\tilde{R}$. Charlie performs a projective measurement using an orthonormal basis $\{\ket{e_i}^{\tilde{C}}\}$ of $\tilde{C}$ chosen at random according to the Haar measure. Denote by $p_i$ the probability of obtaining outcome $i$. Then, for any $\epsilon > 0$, and large enough $n$, we have
\begin{equation*}
\int_{\mathbb{U}(\tilde{C})}  \sum_i p_i \bigl \| \psi^{\tilde{R}}_i - \psi^{\tilde{R}} \bigr \|_1 dU \leq \epsilon,
\end{equation*}
where $\psi^{\tilde{R}}_i$ is the state of the system $R$ upon
obtaining outcome $i$. The average is taken over the unitary group $\mathbb{U}(\tilde{C})$ using the Haar measure.
\end{proposition}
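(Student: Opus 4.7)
The plan is to follow a decoupling-by-random-measurement argument in the spirit of Lemma~\ref{Lemma:rdmisometry}, but specialized to the single-helper case with rank-one projectors. Parametrize the Haar-random basis as $\ket{e_i}^{\tilde{C}} = U \ket{i}^{\tilde{C}}$ for a fixed basis $\{\ket{i}\}$ and a Haar-random $U \in \mathbb{U}(\tilde{C})$, and introduce the sub-normalized post-measurement states
\[
\omega_i^{\tilde{R}}(U) := \bra{i}^{\tilde{C}} (U^\dag \otimes I^{\tilde{R}}) \psi^{\tilde{C}\tilde{R}} (U \otimes I^{\tilde{R}}) \ket{i}^{\tilde{C}} = p_i \,\psi_i^{\tilde{R}},
\]
so that $\sum_i \omega_i^{\tilde{R}}(U) = \psi^{\tilde{R}}$ for every $U$. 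Since $p_i = \Tr \omega_i^{\tilde{R}}(U)$, the triangle inequality gives
\[
p_i \, \bigl\| \psi_i^{\tilde{R}} - \psi^{\tilde{R}} \bigr\|_1
\;\leq\; \bigg\| \omega_i^{\tilde{R}}(U) - \tfrac{1}{d_{\tilde{C}}} \psi^{\tilde{R}} \bigg\|_1 + \bigg| p_i - \tfrac{1}{d_{\tilde{C}}} \bigg| \; \|\psi^{\tilde{R}}\|_1
\;\leq\; 2\, \bigg\| \omega_i^{\tilde{R}}(U) - \tfrac{1}{d_{\tilde{C}}} \psi^{\tilde{R}} \bigg\|_1,
\]
using that the trace distance dominates the difference of traces. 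Thus it suffices to bound $\mathbb{E}_U \sum_i \| \omega_i^{\tilde{R}}(U) - \frac{1}{d_{\tilde{C}}}\psi^{\tilde{R}} \|_1$ by $\epsilon/2$.

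Next I would pass to the Hilbert--Schmidt norm using Lemma~\ref{Lemma:TrSchmidt} with $\sigma = I^{\tilde{R}}/d_{\tilde{R}}$, giving $\| X \|_1 \leq \sqrt{d_{\tilde{R}}} \, \| X \|_2$, then apply Cauchy--Schwarz over the index $i$ and Jensen's inequality for the square root, yielding
\[
\mathbb{E}_U \sum_i \bigg\| \omega_i^{\tilde{R}}(U) - \tfrac{1}{d_{\tilde{C}}}\psi^{\tilde{R}} \bigg\|_1
\;\leq\; \sqrt{d_{\tilde{C}} d_{\tilde{R}}} \; \sqrt{\, \mathbb{E}_U \sum_i \bigg\| \omega_i^{\tilde{R}}(U) - \tfrac{1}{d_{\tilde{C}}}\psi^{\tilde{R}} \bigg\|_2^2 \, }.
\]
Expanding the squared Hilbert--Schmidt norm and using the swap trick (Lemma~\ref{lem:swaptrick}) on the quadratic term yields $\sum_i \Tr[\omega_i(U)^2] = \Tr[(U\otimes U) P (U\otimes U)^\dag \otimes F^{\tilde{R}\tilde{R}'} \, \psi^{\otimes 2}]$ where $P = \sum_i \ketbra{ii}{ii}^{\tilde{C}\tilde{C}'}$. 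A standard Haar integral (the same kind used to derive Proposition~\ref{prop:twirl}) gives $\mathbb{E}_U (U \otimes U) P (U \otimes U)^\dag = \frac{1}{d_{\tilde{C}}+1}(I^{\tilde{C}\tilde{C}'} + F^{\tilde{C}\tilde{C}'})$, and after the remaining cross term cancels with the $\frac{1}{d_{\tilde{C}}}\psi^{\tilde{R}}$ subtraction, one arrives at
\[
\mathbb{E}_U \sum_i \bigg\| \omega_i^{\tilde{R}}(U) - \tfrac{1}{d_{\tilde{C}}}\psi^{\tilde{R}} \bigg\|_2^2 \;\leq\; \frac{1}{d_{\tilde{C}}+1} \, \Tr\bigl[(\psi^{\tilde{C}\tilde{R}})^2 \bigr].
\]

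Finally I would invoke the typicality estimates recorded in eqs.~(\ref{eq:deux})--(\ref{eq:trois}): for any $\delta > 0$ and $n$ sufficiently large, $d_{\tilde{R}} \leq 2^{n(S(R)_\psi + \delta)}$ and $\Tr[(\psi^{\tilde{C}\tilde{R}})^2] \leq (1-\epsilon_0)^{-2} \, 2^{-n(S(CR)_\psi - 3\delta)}$. Since $\psi^{CBR}$ is pure we have $S(CR)_\psi = S(B)_\psi$, so combining gives
\[
\mathbb{E}_U \sum_i p_i \bigl\| \psi_i^{\tilde{R}} - \psi^{\tilde{R}} \bigr\|_1 \;\leq\; \frac{2}{1-\epsilon_0} \, 2^{-n(S(B)_\psi - S(R)_\psi - 4\delta)/2},
\]
which vanishes in $n$ provided $\delta$ is chosen smaller than $(S(B)_\psi - S(R)_\psi)/4$, which is positive by hypothesis. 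The only delicate point is the bookkeeping on the typicality projectors when setting up $\psi^{\tilde{C}\tilde{B}\tilde{R}}$ as a normalized state: I expect the main obstacle to be justifying that the replacement of $\psi^{\otimes n}$ by its projected normalized version costs at most $O(\sqrt{\epsilon_0})$ in the trace norm (via the Gentle Measurement Lemma combined with the union-type bound of Lemma~\ref{Lem:operatorineq}), which then gets absorbed into the overall $\epsilon$ via the triangle inequality.
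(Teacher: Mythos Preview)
Your proof is correct and follows essentially the same route as the paper: the paper simply invokes Proposition~\ref{prop:isometry} with $m=1$, $L=1$, $K=1$ as a black box to obtain $\int \sum_i p_i \|\psi_i^{\tilde{R}} - \psi^{\tilde{R}}\|_1\,dU \leq 2/d_{\tilde{C}} + 2\sqrt{d_{\tilde{R}}\Tr[\psi_{\tilde{R}\tilde{C}}^2]}$, then uses purity ($\Tr[\psi_{\tilde{R}\tilde{C}}^2] = \Tr[\psi_{\tilde{B}}^2]$) and typicality to finish, whereas you reprove that special case directly via the swap trick and the Haar average of $\sum_i |ii\rangle\langle ii|$. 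One minor point: your final ``delicate point'' about replacing $\psi^{\otimes n}$ by its projected normalized version is unnecessary here, since the proposition is already stated for the projected state $\psi^{\tilde{C}\tilde{B}\tilde{R}}$; the only typicality inputs you need are the dimension and purity bounds of eqs.~(\ref{eq:deux})--(\ref{eq:trois}), exactly as in the paper's proof.
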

\begin{proof}
To choose a random orthonormal basis $\{\ket{e_i}^{\tilde{C}}\}^{d_{\tilde{C}}}_{i=1}$, let $\ket{e_i}^{\tilde{C}} = U\ket{i}^{\tilde{C}}$, where $U$ is a Haar distributed unitary on $\mathbb{U}(\tilde{C})$ and $\{\ket{i}\}^{d_{\tilde{C}}}_{i=1}$ is the computational basis on $\tilde{C}$. A measurement in the basis $\{\ket{e_i}^{\tilde{C}}\}$ is equivalent to applying the unitary $U^{\dag}$ on the $\tilde{C}$ system, followed by a projective measurement in the computational basis. Hence, we can apply Proposition \ref{prop:isometry} with $m=1$, $L=1$ , and $K=1$. This gives us
\begin{equation*}
\int_{\mathbb{U}(\tilde{C})} \sum_i p_i \bigl \| \psi^{\tilde{R}}_i - \psi^{\tilde{R}} \bigr \|_1 dU  \leq \frac{2}{d_{\tilde{C}}} + 2\sqrt{d_{\tilde{R}} \Tr \bigg [ \psi^2_{\tilde{R}\tilde{C}} \bigg ]}.
\end{equation*}
Since $\Tr[\psi^2_{\tilde{R}\tilde{C}}]=\Tr[\psi^2_{\tilde{B}}]$, we can use the properties of typicality found in eqs.~(\ref{eq:deux}) and (\ref{eq:trois}) to simplify the last equation:
\begin{equation}\label{eq:zero}
 \begin{split}
\int_{\mathbb{U}(\tilde{C})} \sum_i p_i \bigl \| \psi^{\tilde{R}}_i - \psi^{\tilde{R}} \bigr \|_1 dU  &\leq \frac{2^{1-n(S(\tilde{C})_{\psi}-\delta)}}{(1-\zeta)} + 2\frac{\sqrt{2^{-n(S(B)_{\psi} - S(R)_{\psi} - 4\delta)}}}{1-\zeta}, \\
\end{split}
\end{equation}
where $\delta$ and $\zeta$ can be made arbitrarily small by choosing $n$ large enough. If the condition $S(R)_{\psi} < S(B)_{\psi}$ is satisfied, the right hand side of eq.~(\ref{eq:zero}) vanishes (i.e can be made less than any $\epsilon > 0$) as $n$ grows larger.
\end{proof}

In Horodecki et al. \cite{merge}, the previous proposition was used to study assisted distillation of pure states. The following theorem generalizes the reasoning used there to the mixed state case. The lower bound on the rate at which ebits are distilled, involving the minimum of $I(AC\rangle B)_\psi$ and $I(A\rangle BC)_\psi$, suggests that $C$ is merged either to Alice or Bob, at which point they engage in an entanglement distillation protocol achieving the hashing bound. This need not be the case, however. In the discussion following the proof of the theorem, we will exhibit an example where merging is impossible but the rates are nonetheless achieved.

\begin{theorem}
\label{thm:lowerbound} Let $\psi^{ABC}$ be an arbitrary tripartite state shared by two recipients (Alice and Bob) and a helper (Charlie). Then the asymptotic entanglement of
assistance is bounded below as follows:
\begin{equation}\label{eq:L}
D_A^{\infty}(\psi^{ABC}) \geq \max \{ I(A\rangle B)_{\psi} , L(\psi) \},
\end{equation}
where $L(\psi):=\min \{I(AC\rangle B)_{\psi},I(A\rangle BC)_{\psi}\}$.
\end{theorem}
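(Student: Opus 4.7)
The plan is to prove two lower bounds on $D_A^\infty(\psi^{ABC})$ separately and take their maximum. First, $D_A^\infty(\psi^{ABC}) \geq I(A\rangle B)_\psi$ is immediate: Charlie broadcasts nothing and Alice and Bob apply the hashing protocol of Lemma \ref{hashinginequality} to the marginal $\psi^{AB}$. The substantive claim is $D_A^\infty(\psi^{ABC}) \geq L(\psi)$, which by Theorem \ref{thm:equivalence} reduces to exhibiting, for each $\epsilon > 0$ and $n$ sufficiently large, a rank-one POVM on $C^n$ whose outcome ensemble $\{p_i, \psi_i^{A^n B^n}\}$ satisfies $\sum_i p_i I(A^n \rangle B^n)_{\psi_i} \geq nL(\psi) - o(n)$. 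Double blocking will then lift this to a full broadcast assisted distillation protocol at the desired rate: the outer layer performs the POVM on independent blocks while the inner layer distills within each typical outcome class via Lemma \ref{hashinginequality}.

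For the main bound I would use Proposition \ref{thm:random}, the same Haar-random typical-subspace measurement that drives the pure-state proof of $E_A^\infty = \min\{S(A), S(B)\}$ in \cite{merge}. Introduce the purification $\ket{\psi}^{ABCR}$ and regard it as a bipartite pure state with $C$ on one side. Grouping $B$ with the ``receiver'' and $AR$ with the ``reference,'' the hypothesis of Proposition \ref{thm:random} becomes $S(AR)_\psi < S(B)_\psi$, equivalently $I(C\rangle B)_\psi > 0$. Under this condition, Charlie's random measurement on his typical subspace leaves $\psi_i^{\tilde A \tilde R}$ on average arbitrarily close to $\tilde\psi^{\tilde A \tilde R}$. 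Since $\ket{\psi_i}^{\tilde A \tilde B \tilde R}$ is pure, Uhlmann's theorem (Corollary \ref{cor:Ulhmann}) lets Bob, after receiving Charlie's outcome $i$, apply an outcome-dependent isometry whose effect is to map $\psi_i^{\tilde A \tilde B}$ arbitrarily close to the $\tilde A \tilde B$-marginal of the state obtained by ``merging'' $C$ into $B$ inside $\psi^{\otimes n}$. Because the isometry preserves spectra, $S(\tilde B)_{\psi_i} \approx nS(BC)_\psi$, while $S(\tilde A \tilde B)_{\psi_i} = S(\tilde R)_{\psi_i} \approx nS(R)_\psi$ by the decoupling estimate; Fannes' inequality (Lemma \ref{lem:Fannes}) converts these into
\begin{equation*}
  I(A^n\rangle B^n)_{\psi_i} \approx n\bigl(S(BC)_\psi - S(R)_\psi\bigr) = n\, I(A\rangle BC)_\psi.
\end{equation*}
The symmetric construction, grouping $A$ with the receiver and $BR$ with the reference under the condition $I(C\rangle A)_\psi > 0$, yields outcome states with $I(A^n\rangle B^n)_{\psi_i} \approx n\, I(AC\rangle B)_\psi$.

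The two regimes interlock via the algebraic identity
\begin{equation*}
  I(AC\rangle B)_\psi - I(A\rangle BC)_\psi = S(B)_\psi - S(BC)_\psi = I(C\rangle B)_\psi,
\end{equation*}
and a short case analysis shows that in every regime at least one of the two decoupling conditions is available --- at worst after absorbing into $\psi^{ABC}$ an arbitrarily small amount of catalytic entanglement to break ties, the same padding trick used in the pure-state proof of \cite{merge} --- and that the strategy thus activated always delivers a rate of at least $L(\psi)$. The sublinear errors from Proposition \ref{thm:random} and Fannes' inequality are absorbed into the outer layer of blocks, which together with the trivial $I(A\rangle B)_\psi$ bound finishes the proof.

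The main obstacle, in my view, is not any single estimate but the bookkeeping required to fuse the random POVM, the outcome-dependent Uhlmann isometry applied by Bob (or by Alice in the symmetric case), and the inner hashing protocol into a single broadcast LOCC operation consistent with the one-way classical-communication model of Problem~\ref{pbl:oneway}. In particular, since the Uhlmann isometry depends on $i$, Charlie's outcome must be broadcast before the decoding; verifying that this classical communication and the subsequent sublinear losses from typicality (eqs.~(\ref{eq:typicXX})--(\ref{eq:trois})) and Fannes continuity do not pollute the final coherent-information estimate is the heart of the technical work.
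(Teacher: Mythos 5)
There is a genuine gap in the step where you claim that ``in every regime at least one of the two decoupling conditions is available'' after at most an arbitrarily small catalytic injection. Your two strategies require $I(C\rangle B)_\psi>0$ (merge $C$ to Bob, then hash to get $I(A\rangle BC)_\psi$) or $I(C\rangle A)_\psi>0$ (merge $C$ to Alice, then hash to get $I(AC\rangle B)_\psi$). But the identity $I(AC\rangle B)_\psi-I(A\rangle BC)_\psi=I(C\rangle B)_\psi$ only tells you which of the two coherent informations is the minimum; it says nothing about the sign of $I(C\rangle A)_\psi$, and both $I(C\rangle A)_\psi$ and $I(C\rangle B)_\psi$ can be strictly negative while $L(\psi)>I(A\rangle B)_\psi$. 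The state of eq.~(\ref{eq:stateexample}) discussed right after the theorem is exactly such a case: there $L(\psi)=I(AC\rangle B)_\psi\approx 0.40$, $I(A\rangle B)_\psi<0$, and $I(C\rangle A)_\psi<0$, so the Alice-merging branch your case analysis would invoke is not available by LOCC. Sublinear ``tie-breaking'' entanglement only repairs the boundary case of a vanishing coherent information; a strictly negative one needs catalytic entanglement at a linear rate, which the broadcast model does not provide. So your protocol, as described, does not achieve $L(\psi)$ on all states, which is precisely the point the paper makes when it stresses that the rate $I(AC\rangle B)_\psi$ is attained \emph{without} performing the Charlie-to-Alice transfer.

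The paper's proof sidesteps merging altogether. Charlie performs a single Haar-random measurement on his typical subspace, and Lemma~\ref{lem:unionbound} (Proposition~\ref{thm:random} applied to the two tripartite groupings $C$ vs $AB$ vs $R$ and $C$ vs $AR$ vs $B$, combined via Markov and the union bound) guarantees that with high probability the \emph{same} outcome state satisfies both $S(R^n)_{\psi_J}\approx n\min\{S(AB)_\psi,S(R)_\psi\}\leq n(S(R)_\psi+\delta)$ and $S(B^n)_{\psi_J}\approx n\min\{S(B)_\psi,S(AR)_\psi\}$. Applying the hashing inequality (Lemma~\ref{hashinginequality}) directly to the raw mixed post-measurement state $\psi_J^{A^nB^n}$ then gives $D(\psi_J^{A^nB^n})\geq n(\min\{S(B)_\psi,S(AR)_\psi\}-S(R)_\psi-2\delta)=n(L(\psi)-2\delta)$, with no Uhlmann decoding by either receiver. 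If you want to salvage your route, you must replace the ``merge, then distill'' logic in the $I(C\rangle B)_\psi\leq 0$ regime by this direct entropy bookkeeping (or prove the simultaneous decoupling statement yourself); the Uhlmann isometry is harmless where it applies, but it cannot be the engine of the proof.
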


\begin{proof} That $D_A^{\infty}(\psi^{ABC})$ is always greater than or equal to the coherent
information $I(A\rangle B)_{\psi}$ follows from the hashing inequality and the fact that Charlie's worst measurement is
no worse than throwing away his system and letting Alice and Bob perform a two-way distillation protocol without
outside help. Hence, it remains to show that $D_A^{\infty}(\psi^{ABC}) \geq \min \{I(AC\rangle B)_{\psi},I(A\rangle
BC)_{\psi}\}$.

Since $D^{\infty}_A(\psi^{ABC})$ is equal to the regularization of $D_A(\psi^{ABC})$, we only need to show the existence of a
measurement for Charlie for which the average distillable entanglement is asymptotically close to
$L(\psi)$. We prove this fact via a protocol which uses a random coding strategy. The state $\psi^{ABC}$ and its purifying system $R$ can be regarded as:
\begin{packed_enum}
\item a tripartite system composed of $C,AB$, and $R$.
\item a tripartite system composed of $C,AR$, and $B$.
\end{packed_enum}
 Let's consider $n$ copies of $\psi^{ABC}$, and furthermore, let's assume that $S(AB)_{\psi}$ (resp. $S(AR)_{\psi}$) and $S(R)_{\psi}$ (resp. $S(B)_{\psi}$) are different. This can be enforced by using only a sub-linear amount of entanglement shared between chosen parties in the limit of large $n$. After Schumacher compressing his share of the state $\psi_{C}^{\otimes n}$, Charlie performs a random measurement of his system $\tilde{C}$. Let $J$ be the random variable associated with the measurement outcome and let $\psi^{A^nB^nR^n}_J$ be the state of the systems $A^n, B^n$ and $R^n$ after Charlie's measurement. By Lemma \ref{lem:unionbound} and the Fannes inequality, there exists a measurement of Charlie's system which will produce a state $\psi^{A^nB^nR^n}_J$ satisfying, with arbitrarily high probability:
\begin{equation}\label{eq:property}
 \begin{split}
  S(A^nB^n)_{\psi_J}=S(R^n)_{\psi_J} &= n(\min\{S(AB)_{\psi},S(R)_{\psi}\}
\pm \delta) \\
  S(A^nB^n)_{\psi_J}=S(B^n)_{\psi_J} &= n(\min\{S(AR)_{\psi},S(B)_{\psi}\}
\pm \delta), \\
 \end{split}
\end{equation}
where $\delta$ can be made arbitrarily small by choosing $n$ large enough.
Applying the hashing inequality to such a state will give:
\begin{equation*}
\begin{split}
D(\psi^{A^nB^n}_J) & \geq S(B^n)_{\psi_J} - S(A^nB^n)_{\psi_J} \\
&=  n(\min \{S(B)_{\psi},S(AR)_{\psi}\} \pm \delta) -
n(\min\{S(AB)_{\psi},S(R)_{\psi}\} \pm \delta) \\
&\geq n (\min \{S(B)_{\psi},S(AR)_{\psi}\} - S(R)_{\psi} - 2\delta) \\
& = n (\min \{ I(AC \rangle B)_{\psi},I(A \rangle BC)_{\psi} \} - 2\delta).
\end{split}
\end{equation*}
For each outcome $j$, define $X_j$ to be the variable taking the value zero if $\psi_j^{A^nB^nR^n}$ satisfies eq~(\ref{eq:property}), or one otherwise.
The average two-way distillable entanglement for this measurement will be at least
 \begin{equation*}
   \begin{split}
    \sum_j p_j D(\psi^{A^nB^n}_j) &= \sum_{X_j=0}p_j D(\psi^{A^nB^n}_j) + \sum_{X_j=1} p_j D(\psi^{A^nB^n}_j) \\
    &\geq P(X_J=0)n (\min \{ I(AC \rangle B)_{\psi},I(A \rangle BC)_{\psi} \} - 2\delta) + \sum_{X_j=1} p_j D(\psi^{A^nB^n}_j)\\
    &\geq (1-\alpha)n  \left[\min \{I(AC \rangle B)_{\psi},I(A \rangle
    BC)_ {\psi} \} - 2\delta\right],
 \end{split}
 \end{equation*}
where $\alpha$ can be made arbitrarily small by taking sufficiently large values of $n$.  Finally, we have
 \begin{equation*}
  \begin{split}
   \frac{1}{n} D^A((\psi^{ABC})^{\otimes n}) &\geq \sum_j p_j D(\psi^{A^nB^n}_j) \\
 &\geq (1-\alpha)\left[\min \{I(AC \rangle B)_{\psi},I(A \rangle BC)_ {\psi} \} -2\delta\right].  \\
  \end{split}
 \end{equation*}
Since $\alpha$ and $\delta$ can be chosen to be arbitrarily small, we are done.
\end{proof}

\begin{corollary} \label{cor:lowerbound}
Let $\psi^{ABC}$ be an arbitrary tripartite state shared by two recipients (Alice and Bob) and a helper (Charlie).
Then the asymptotic entanglement of assistance is bounded below as follows:
\begin{equation}\label{eq:L1}
D_A^{\infty}(\psi^{ABC}) \geq \lim_{n \rightarrow \infty}\frac{1}{n} \sup_{{\cal I}} \sum_i p_i L(\sigma^{A^nB^n\bar{C}}_i),
\end{equation}
where the supremum is over all instruments ${\cal I} := \{{\cal E}_i\}$ performed by Charlie, with $\sigma^{A^nB^n\bar{C}}_i = \frac{1}{p_i}(\mathrm{id}^{A^nB^n} \otimes {\cal E}_i)(\psi_{ABC}^{\otimes n})$ and $p_i = \Tr [ {\cal E}_i \psi_C^{\otimes n} ]$.
\end{corollary}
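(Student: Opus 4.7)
The plan is to leverage Theorem \ref{thm:lowerbound} via a multi-blocking strategy in which Charlie first performs the instrument $\mathcal{I}$ as a preprocessing step and then delegates the rest of the work to the protocol of Theorem \ref{thm:lowerbound} on the classically conditioned states $\sigma_i^{A^nB^n\bar{C}}$, with $\bar{C}$ now playing the role of Charlie's system.

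Fix an instrument $\mathcal{I} = \{\mathcal{E}_i\}$ acting on $n$ copies of $\psi^{ABC}$. First I would take $nN$ total copies of $\psi^{ABC}$, partition them into $N$ blocks of size $n$, and have Charlie apply $\mathcal{I}$ independently to each block, broadcasting the outcome of each application to Alice and Bob. This produces outcomes $(i_1,\ldots,i_N)$ with the blocks in outcome $i_k$ carrying the state $\sigma_{i_k}^{A^nB^n\bar{C}}$. By the weak law of large numbers, for any $\epsilon>0$ and $N$ large enough, the fraction of blocks with outcome $i$ lies within $\epsilon$ of $p_i$ with probability at least $1-\epsilon$. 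Since Alice and Bob know the outcomes, they can regroup the blocks by outcome label, obtaining, for each $i$, roughly $p_i N$ copies of the tripartite state $\sigma_i^{A^nB^n\bar{C}}$ shared among Alice (holding $A^n$), Bob (holding $B^n$) and Charlie (holding $\bar{C}$).

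The second step is to invoke Theorem \ref{thm:lowerbound} (and the equivalence of Theorem \ref{thm:equivalence}) on each group separately: viewing $\sigma_i^{A^nB^n\bar{C}}$ as a tripartite state with Charlie holding $\bar{C}$, there exists, for all sufficiently many copies of $\sigma_i^{A^nB^n\bar{C}}$, a broadcast assisted distillation protocol achieving a rate arbitrarily close to $L(\sigma_i^{A^nB^n\bar{C}})$ ebits per copy of $\sigma_i$, with error going to zero. Applying such a protocol to each group independently and summing the yields, the total number of ebits distilled is at least
\[
\sum_i \bigl(p_i N - o(N)\bigr)\bigl(L(\sigma_i^{A^nB^n\bar{C}}) - o(1)\bigr) = N\sum_i p_i L(\sigma_i^{A^nB^n\bar{C}}) - o(N).
\]
Dividing by the total number $nN$ of copies of $\psi^{ABC}$ shows that the rate $\frac{1}{n}\sum_i p_i L(\sigma_i^{A^nB^n\bar{C}})$ is achievable. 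Taking the supremum over instruments $\mathcal{I}$ and then the limit $n \to \infty$ yields the claimed lower bound on $D_A^{\infty}(\psi^{ABC})$.

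The main obstacle is the bookkeeping that ensures the errors from three independent sources — the statistical deviation of the outcome frequencies from $(p_i)$, the finite-size corrections in the Theorem \ref{thm:lowerbound} sub-protocols applied to each group, and the requirement that infrequent outcomes $i$ nonetheless provide enough copies of $\sigma_i$ to run their associated distillation protocol — can be made to vanish jointly. I would handle the last point by truncating the sum to outcomes with $p_i \geq \delta$, paying a vanishing penalty as $\delta \to 0$, and handle the first two by choosing $N \gg n$ so that each retained group contains superpolynomially many copies of $\sigma_i^{A^nB^n\bar{C}}$, making the standard asymptotic arguments for Theorem \ref{thm:lowerbound} applicable within each group. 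The overall error and the gap between the achieved rate and $\frac{1}{n}\sum_i p_i L(\sigma_i^{A^nB^n\bar{C}})$ then tend to zero as $N,n\to\infty$ and $\delta\to 0$, which, combined with Theorem \ref{thm:equivalence}, establishes the corollary.
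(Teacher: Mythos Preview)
Your proposal is correct and takes essentially the same approach as the paper: the paper says achievability ``follows by considering a blocking strategy'' without further detail, and you have spelled out exactly what that blocking strategy is. The paper adds only a brief remark that the tracing-out instrument recovers $I(A\rangle B)_\psi$ as a special case, so that the corollary subsumes the full statement of Theorem~\ref{thm:lowerbound}.
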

\begin{proof}
First, to see that the maximization of eq.~(\ref{eq:L}) can be removed, consider an instrument ${\cal J}$ which traces out the $C$ system: $\sigma^{AB} = \Tr_C \psi^{ABC}$. Then, both coherent information quantities in $L(\sigma)$ reduce to the coherent information $I(A\rangle B)_{\psi}$. Achievability of the rate $\sum_i p_i L(\sigma_i^{ABC})$, for any instrument ${\cal I}$ performed on $n$ copies of $\psi^{ABC}$, follows by considering a blocking strategy.
\end{proof}

Let's look into some of the peculiarities of the previous results.
First, observe that the right hand side of eq.~(\ref{eq:L}) is bounded from
above by the coherent information $I(A\rangle BC)_{\psi}$. This
follows from the definition of $L(\psi)$, and the strong
subadditivity of the von Neumann entropy, expressed in terms of coherent information quantities as: \[ I(A\rangle
BC)_{\psi} \geq I(A\rangle B)_{\psi}.\] When the lower bound of eq.~(\ref{eq:L}) is equal to
$I(A\rangle BC)_{\psi}$, we have $I(A\rangle BC)_{\psi} \leq
I(AC\rangle B)_{\psi}$, which implies by further calculation that
$I(C\rangle B)_{\psi} \geq 0$. Suppose that $I(C \rangle B)_{\psi} > 0$ and consider $n$ copies of the purified state $(\psi^{ABCR})^{\otimes n}$, written
as $\psi^{C^nB^nR^n_1}$ where $R_1:=AR$ is the relative reference for the helper $C$. State merging (see Chapter 3) tells us that a random measurement on
the typical subspace $\tilde{C}$, as described in our protocol, will decouple the system from its relative reference $R^n_1$, allowing recovery of $C^n$ by Bob up to arbitrarily high fidelity. Our assisted distillation protocol can be improved for this case by recovering the $C^n$ system at Bob's location
before engaging in a two-way distillation protocol, which will now act on the state
$(\psi^{AB\tilde{B}})^{\otimes n}$, where $\tilde{B}$ is an ancilla of the same dimension as the $C$ system. Since the distillable entanglement
across the cut $A$ vs $BC$ cannot increase by local operations and classical communication, the previous strategy is in fact optimal. A
small amount of initial entanglement between $C^n$ and $B^n$
may be needed if $I(C\rangle B)_{\psi}=0$ (see \cite{merge}).

The previous analysis may lead us to believe that when the lower bound of eq.~(\ref{eq:L}) is equal to $I(AC\rangle B)_{\psi}$, a similar strategy of
transferring the system $C$ to Alice could be
applied. However, the following counterexample will show that this
is not always true. Let
\begin{equation}\label{eq:stateexample}
\begin{split}
\ket{\psi}^{BC_2R}
    &= \frac{1}{\sqrt{2}} \ket{000}^{BC_2R}
            +\frac{1}{2}\ket{110}^{BC_2R}+\frac{1}{2}\ket{111}^{BC_2R} \quad \mbox{and} \\
\ket{\psi}^{AC_1}
    &=  \frac{1}{2}\ket{00}^{AC_1} + \sqrt{\smfrac{3}{4}}\ket{11}^{AC_1}.
\end{split}
\end{equation}
Alice and Bob are to perform assisted distillation on $n$ copies
of $\psi^{ABC_1C_2} = \psi^{AC_1} \otimes \psi^{BC_2}$ with the
help of a single Charlie holding both the $C_1$ and $C_2$ systems.
Such a situation could arise in practice if Alice had a
high-quality quantum channel to Charlie but Charlie's channel
to Bob were noisy. The system $R$ would represent the environment of the
noisy channel.
%
In this case, $L(\psi)$ is equal to $I(AC\rangle B)_\psi$, which
is easily calculated to be approximately 0.40, since $I(A\rangle
BC)_\psi \approx 0.81$ and $I(A\rangle B)_\psi$ is negative.
For this example, the achievable rate of our random coding
protocol is therefore at least the rate that could have been
obtained by a strategy of first transferring the state of the $C$
system to Alice, followed by entanglement distillation between
Alice and Bob at the hashing bound rate. However, the coherent
information $I(C\rangle A)_{\psi}$ is negative for the state
$\psi^{ABC_1C_2R}$. By the optimality of state merging, the state
transfer from Charlie to Alice cannot be accomplished without the
injection of additional entanglement between them. Therefore, the
protocol achieves the rate $I(AC\rangle B)_\psi$ \emph{without}
performing the Charlie to Alice state transfer.

This example also illustrates a general relationship between
hierarchical distillation strategies and the random measurement
strategy proposed in this chapter. A hierarchical strategy for a
state $\psi^{ABC_1C_2} = \psi^{AC_1} \otimes \psi^{C_2B}$ would
consist of first distilling entanglement between $A$ and $C_1$ as
well as between $C_2$ and $B$, followed by entanglement swapping
to establish ebits between Alice and Bob. If the first level
distillations are performed at the hashing rate, then this
strategy will establish $\min [ I(A\rangle C_1)_\psi, I(C_2
\rangle B)_\psi ]$ ebits between Alice and Bob per copy of the input state. On the other hand,
the random measurement strategy will establish at least $L(\psi)$,
which in the case of the example is the minimum of
\begin{eqnarray*}
I(AC \rangle B)_\psi
    &=& I(C_2 \rangle B)_\psi - S(AC_1)_\psi  = I(C_2 \rangle B)_\psi \mbox{  and} \\
I(A\rangle BC)_\psi
    &=& I(A\rangle C_1)_\psi,
\end{eqnarray*}
yielding exactly the same rate as the hierarchical strategy. (The
first line uses the fact that $\psi^{AC_1}$ is pure.) So, for the
random measurement strategy to beat the hierarchical strategy, it
is necessary that the state not factor into the form $\psi^{AC_1}
\otimes \psi^{C_2B}$. As an example, consider modifying the state of eq.~(\ref{eq:stateexample}) by applying a $CNOT$ operation between
the systems $C_1$ and $C_2$ held by Charlie:
\begin{equation*}
CNOT = \left( \begin{array}{cccc}
1 & 0 & 0 & 0 \\
0 & 1 & 0 & 0 \\
0 & 0 & 0 & 1 \\
0 & 0 & 1 & 0 \\
 \end{array} \right)
 \end{equation*}
A CNOT operation can be used to model phase dampening effects between an input state (i.e the $C_2$ system) and its environment (i.e the $C_1$ system).
If the control qubit is the system $C_1$ and the target qubit is $C_2$, the previous state transforms to:
\begin{equation*}
\ket{\phi}^{C_1C_2ABR}
    :=\frac{\ket{00}^{AC_1}\ket{\psi}}{2}^{BC_2R} + \sqrt{\smfrac{3}{4}}\ket{11}^{AC_1} \bigg ( \frac{\ket{010}}{\sqrt{2}}^{BC_2R}
            +\frac{\ket{100}}{2}^{BC_2R}+\frac{\ket{101}}{2}^{BC_2R} \bigg ).
\end{equation*}
The reduced state $\phi^{C_1}$ of the $C_1$ system is equal to:
\begin{equation*}
\phi^{C_1}:=\frac{1}{4}\braket{0}^{C_1} + \frac{3}{4}\braket{1}^{C_1},
\end{equation*}
and the reduced state $\phi^{AC_1}$ of the system $AC_1$ is given by
\begin{equation*}
\phi^{AC_1}:=\frac{1}{4}\braket{00}^{AC_1} + \frac{3}{4}\braket{11}^{AC_1}.
\end{equation*}
Hence, the coherent information $I(A \rangle C_1)_{\phi}$ is zero, yielding a null rate for the hierarchical strategy. On the other hand, the quantities $I(AC \rangle B)_{\phi}$ and $I(A \rangle BC)_{\phi}$ are equal to the coherent informations $I(AC \rangle B)_{\psi}$ and $I(A \rangle BC)_{\psi}$. Thus, the random measurement strategy is tolerant against a CNOT ``error'' on Charlie's systems, as opposed to the hierarchical strategy, which fails to recover from this error.

Finally, it is easy to determine conditions under which
the random measurement strategy for assisted entanglement distillation will yield a
higher rate than the hashing bound between Alice and Bob. As the
next result shows, a state $\psi^{ABC}$ is a good
candidate for the random measurement strategy if it does not
saturate the strong subadditivity inequality of the von Neumann
entropy, and if the $C$ system can be redistributed to Alice and
Bob provided they are allowed to perform joint operations on their
systems.

\begin{proposition}[Beating the Hashing Inequality]
\label{prop:abovehashing} For any state $\psi^{ABC}$, the value of
$L(\psi)$ is positive and strictly greater than the coherent
information $I(A \rangle B)_{\psi}$ if \begin{equation*} I(C
\rangle AB)_{\psi}
> 0 \mbox{ and } S(A|BC)_{\psi} < S(A|B)_{\psi}.\end{equation*}
\end{proposition}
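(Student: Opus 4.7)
The plan is to reduce both inequalities comprising the claim to two elementary entropic identities, one matched to each term in the minimum defining $L(\psi) = \min\{I(AC\rangle B)_\psi, I(A\rangle BC)_\psi\}$.

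First, I would establish the chain-rule identity $I(AC\rangle B)_\psi = I(A\rangle B)_\psi + I(C\rangle AB)_\psi$, which follows immediately by writing both sides as telescoping differences of the von Neumann entropies $S(B)_\psi$, $S(AB)_\psi$ and $S(ABC)_\psi$. Combined with the first hypothesis $I(C\rangle AB)_\psi > 0$, this yields $I(AC\rangle B)_\psi > I(A\rangle B)_\psi$.

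Next, I would compute the difference $I(A\rangle BC)_\psi - I(A\rangle B)_\psi$ directly from the definitions: it equals $[S(BC)_\psi - S(ABC)_\psi] - [S(B)_\psi - S(AB)_\psi] = S(A|B)_\psi - S(A|BC)_\psi$. The second hypothesis $S(A|BC)_\psi < S(A|B)_\psi$ then gives $I(A\rangle BC)_\psi > I(A\rangle B)_\psi$. Taking the minimum of these two strict inequalities proves $L(\psi) > I(A\rangle B)_\psi$. For the positivity claim, the relevant content in the surrounding context of ``beating the hashing inequality'' is that $L(\psi)$ strictly exceeds $\max\{0, I(A\rangle B)_\psi\}$, the one-way rate guaranteed by Lemma \ref{hashinginequality}; when $I(A\rangle B)_\psi \geq 0$ this follows immediately from the strict inequality just proved, which is the regime in which the comparison with hashing is operationally meaningful.

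No step presents a genuine technical obstacle—the entire argument is an algebraic match of entropic identities to hypotheses. The only ``trap'' to avoid is over-complicating the derivation by invoking strong subadditivity or other heavy machinery when a direct rewriting of the defining entropies suffices; the careful choice is simply to decompose $I(AC\rangle B)_\psi$ and $I(A\rangle BC)_\psi$ in a way that exactly reproduces the two quantities appearing in the hypotheses.
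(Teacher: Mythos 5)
Your argument for the strict inequality is correct and essentially the same as the paper's: the paper rewrites $S(A|BC)_{\psi} < S(A|B)_{\psi}$ as $I(A\rangle BC)_{\psi} > I(A\rangle B)_{\psi}$ and rewrites $I(C\rangle AB)_{\psi} > 0$ as $S(AB)_{\psi} > S(ABC)_{\psi}$, then negates and adds $S(B)_{\psi}$ to obtain $I(A\rangle B)_{\psi} < I(AC\rangle B)_{\psi}$; your chain-rule identity $I(AC\rangle B)_{\psi} = I(A\rangle B)_{\psi} + I(C\rangle AB)_{\psi}$ is the same computation packaged differently. On the positivity clause your caution is warranted: the paper's own proof never addresses it, and indeed $L(\psi) > 0$ does not follow from the two hypotheses alone --- tensoring $A$ with a large maximally mixed system uncorrelated with $BC$ (starting, say, from a GHZ state on $ABC$) preserves both hypotheses while making $I(AC\rangle B)_{\psi}$ and $I(A\rangle BC)_{\psi}$ arbitrarily negative --- so positivity is only guaranteed in the regime $I(A\rangle B)_{\psi} \geq 0$, exactly as you note.
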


\begin{proof}
  The inequality $S(A|BC)_{\psi} < S(A|B)_{\psi}$ can be rewritten
  as \begin{equation*}
    I(A\rangle BC)_{\psi} > I(A\rangle B)_{\psi},
    \end{equation*} and the condition $I(C \rangle AB)_{\psi} >
  0$ as \begin{equation*}
    S(AB)_{\psi} > S(ABC)_{\psi}.
    \end{equation*} By
  negating and adding $S(B)_{\psi}$ on both sides of the previous inequality, we
  get back
  \begin{equation*}
   \begin{split}
      I(A\rangle B)_{\psi} := S(B)_{\psi} - S(AB)_{\psi} < S(B)_{\psi} - S(ABC)_{\psi} =: I(AC \rangle B)_{\psi}. \\
   \end{split}
  \end{equation*}
\end{proof}

\section{Multipartite entanglement of assistance} \label{sec:multi}

In this section, we look at the optimal rate achievable when many spatially separated parties are assisting
Alice and Bob in distilling entanglement. First, we extend the one-shot entanglement of assistance $D_A$ (Definition \ref{def:A1D}) to arbitrary multipartite states $\psi^{C_1C_2\ldots AB}$, henceforth written simply as $\psi^{C_MAB}$. The type of protocols involved is depicted in Figure~\ref{fig:many}.
\begin{definition}
For a general multipartite state $\psi^{C_MAB}$, consider POVMs $E_1, \ldots, E_m$ performed by
$\{C_1,C_2,\ldots,C_{m}\}$ respectively which lead to a (possibly mixed) bipartite state $\psi^{AB}_{k_1k_2\ldots k_m}$ for POVM outcomes $\overline{k}:=k_1k_2\ldots k_m$. We define the multipartite entanglement of assistance as
\begin{equation*}
D_A(\psi^{C_MAB}) := \sup \sum_{\overline{k}} p_{\overline{k}} D(\psi^{AB}_{\overline{k}}),
\end{equation*}
where the supremum is taken over the above measurements. The asymptotic multipartite entanglement of
assistance $D^{\infty}_A(\psi^{C_MAB})$ is obtained by regularization of the above quantity
$D_A^{\infty}(\psi^{C_MAB})=\lim_{n\rightarrow\infty} \frac{1}{n} D_A(\psi^{\otimes n})$.
\end{definition}

\begin{figure}
\begin{center}
\includegraphics{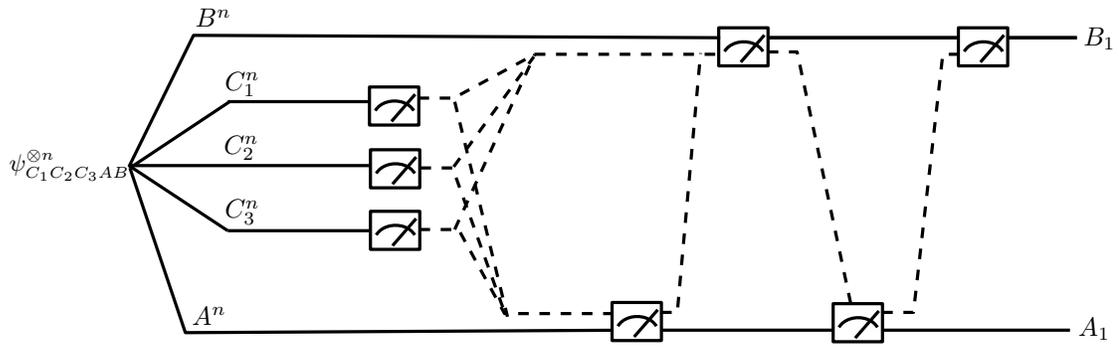}
\end{center}
\caption{Quantum circuit representing a broadcast assisted entanglement distillation protocol involving three helpers. The three helpers perform their measurements, sending copies of the classical outcomes to Alice and Bob. Alice and Bob then implement an LOCC operation, based on that outcome.}
\label{fig:many}
\end{figure}

For a pure state $\psi^{C_MAB}$, it is immediate that the maximization in the preceding definition is attained for POVMs of rank one, leading to an ensemble of pure states $\{q_{\overline{k}}, \psi^{AB}_{\overline{k}}\}$. And so,  $D^{\infty}_A(\psi^{C_MAB})$ reduces to the asymptotic multipartite entanglement of assistance \cite{merge} for pure states.

\begin{proposition}\label{prop:upper}
Let $\psi^{C_MAB}$ be an arbitrary multipartite state. The quantity $D^{\infty}_A(\psi^{C_MAB})$ is bounded
from above by the following quantity:
\begin{equation*}
  D^{\infty}_A(\psi^{C_MAB}) \leq \min_{{\cal T}}  D(\psi^{A{\cal T}|B\overline{{\cal T}}}),
\end{equation*}
where the minimum is over all bipartite cuts $\cal{T}$ and $\psi^{A{\cal T}|B\overline{{\cal T}}}$ is a bipartite state with Alice holding the systems $A{\cal T}$ and Bob holding the systems $B\overline{\cal T}$.
\end{proposition}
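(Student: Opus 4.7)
The strategy is to show that any broadcast assisted distillation protocol for $\psi^{C_MAB}$ can be simulated by a two-party LOCC entanglement distillation protocol on the bipartite state $\psi^{A{\cal T}|B\overline{\cal T}}$, for any fixed bipartite cut ${\cal T}\subseteq \{C_1,\ldots,C_m\}$. Taking the minimum over cuts then yields the claim, and regularization preserves the bound.

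Concretely, fix $n$, $\epsilon>0$, and any $(n,\epsilon)$ broadcast assisted protocol with output dimension $M_n$. It consists of rank-one POVMs $E^{(1)},\ldots,E^{(m)}$ performed locally by the helpers, a broadcast of the joint outcome $\overline{k}=(k_1,\ldots,k_m)$ to Alice and Bob, and an LOCC operation ${\cal V}_{\overline{k}}:A^nB^n\to A_1B_1$ conditioned on $\overline{k}$. Now fix any cut ${\cal T}$ and regroup: let Alice$'$ hold the systems $A^n$ together with $\{C_i^{\,n}\}_{i\in{\cal T}}$, and let Bob$'$ hold $B^n$ together with $\{C_i^{\,n}\}_{i\in\overline{\cal T}}$. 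The measurements of helpers in ${\cal T}$ are local operations on Alice$'$, and those of helpers in $\overline{\cal T}$ are local operations on Bob$'$; the broadcast of each $k_i$ becomes one-way classical communication across the Alice$'$--Bob$'$ cut (free under LOCC); and the original conditional LOCC ${\cal V}_{\overline{k}}$ between Alice and Bob is, a fortiori, LOCC between Alice$'$ and Bob$'$. The composite procedure is therefore a valid LOCC protocol acting on $(\psi^{A{\cal T}|B\overline{\cal T}})^{\otimes n}$ that produces a state $\epsilon$-close in trace norm to $\ket{\Phi^{M_n}}$ between subsystems of Alice$'$ and Bob$'$.

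By the definition of the two-party distillable entanglement $D(\psi^{A{\cal T}|B\overline{\cal T}})$, any rate $R$ achievable by such an LOCC protocol satisfies $R\le D(\psi^{A{\cal T}|B\overline{\cal T}})$. Since every achievable rate for the original broadcast assisted protocol is thus an achievable rate for LOCC distillation across the cut ${\cal T}$, we conclude
\[
D_A^{\infty}(\psi^{C_MAB})\ \le\ D(\psi^{A{\cal T}|B\overline{\cal T}}).
\]
The cut ${\cal T}$ was arbitrary, so we may minimize over all ${\cal T}\subseteq\{C_1,\ldots,C_m\}$, giving the stated inequality.

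There is essentially no technical obstacle here: the argument is the standard ``absorb the helpers into one side of the cut'' observation, and it relies only on the fact that broadcast classical communication is a special case of two-way classical communication, and that local operations remain local when one enlarges a party. The only mild care needed is to note that the regularized quantity on the left is compatible with this monotonicity, since the above reasoning applies uniformly to $n$ copies: the same simulation turns an $(n,\epsilon)$ broadcast protocol on $\psi^{\otimes n}$ into an LOCC protocol on $(\psi^{A{\cal T}|B\overline{\cal T}})^{\otimes n}$, and hence $\tfrac{1}{n}\log M_n$ is bounded by the $n$-copy distillable entanglement rate, which tends to $D(\psi^{A{\cal T}|B\overline{\cal T}})$ as $n\to\infty$.
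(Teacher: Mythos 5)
Your proposal is correct and follows essentially the same route as the paper: absorb the helpers in ${\cal T}$ into Alice and those in $\overline{\cal T}$ into Bob, observe that any broadcast assisted protocol is then an LOCC protocol across the $A{\cal T}$ vs $B\overline{\cal T}$ cut, and bound the rate by $D(\psi^{A{\cal T}|B\overline{\cal T}})$ before minimizing over cuts. Your write-up merely makes the simulation and the $n$-copy bookkeeping more explicit than the paper does.
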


\begin{proof} Consider any cut ${\cal T}$ of the helpers $\{C_1,C_2,\ldots, C_{m}\}$ and suppose Alice (resp. Bob) is allowed to perform joint operations on the systems $A{\cal T}$ (resp. $B{\overline{\cal T}}$). Any protocol achieving $D^{\infty}_A(\psi^{C_MAB})$ consists of: 1) POVMs on the helpers followed by a transmission of the outcomes to Alice and Bob 2) local operations and classical communication between the systems $A$ and $B$. This kind of protocol is contained in protocols allowing local operations on the systems $A{\cal T}$ and $B{\overline{{\cal T}}}$  and classical communication between the cut $A{\cal T}$ vs $B{\cal \overline{T}}$. Since the distillable entanglement across the cut $A{\cal T}$ vs $B{\cal {\overline{T}}}$ cannot increase under local operations and classical communication, the optimal achievable rate for these protocols is given by $D(\psi^{A{\cal T}|B{\overline{{\cal T}}}})$. Since this holds for any cut ${\cal T}$ of the helpers, we are done. \end{proof}

\begin{definition}
\label{minhashcut} For an arbitrary multipartite state $\psi^{C_MAB}$, we
define the minimum cut coherent information as:
\[ I^c_{min}(\psi,A:B) := \min_{{\cal T}} I(A {\cal T} \rangle B\overline{{\cal T}})_{\psi},
\] where the minimization is over all bipartite cuts ${\cal T} \subseteq \{C_1,C_2,\ldots, C_{m}\}$.
\end{definition}

\begin{theorem}\label{thm:gen}
Let $\psi^{C_MAB}$ be an arbitrary multipartite state. The asymptotic
multipartite entanglement of assistance $D_A^{\infty}(\psi^{C_MAB})$ is bounded below by:
\begin{equation}
\label{eq:lowerbound} D_A^{\infty}(\psi^{C_MAB}) \geq \max\{I(A\rangle B)_{\psi},I^c_{min}(\psi, A:B)\}.
\end{equation}
\end{theorem}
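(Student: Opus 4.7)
The plan is to proceed by induction on the number of helpers $m$. The bound $D_A^{\infty}(\psi^{C_M AB}) \geq I(A\rangle B)_{\psi}$ is immediate from the hashing inequality applied after the helpers discard their systems (their worst strategy), so the task reduces to establishing $D_A^{\infty} \geq I^c_{min}(\psi, A{:}B)$. For the base case $m = 1$, the minimum-cut coherent information is exactly $L(\psi) = \min\{I(AC\rangle B)_{\psi}, I(A\rangle BC)_{\psi}\}$, so the statement is Theorem \ref{thm:lowerbound}.

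For the inductive step, assume the theorem holds for $m-1$ helpers, and fix a purification $\psi^{C_1 C_2 \ldots C_m AB R}$. The first helper Schumacher-compresses his share of $n$ copies and performs a rank-one projective measurement in a Haar-random basis of the typical subspace $\tilde{C}_1$. Applying Proposition \ref{thm:random} together with a union bound over the $2^m$ bipartitions indexed by subsets $\mathcal{T}' \subseteq \{C_2,\ldots,C_m\}$ (precisely as in the proof of Theorem \ref{thm:lowerbound}), there is such a measurement for which, with probability approaching one, the post-measurement state $\psi_j^{C_2\ldots C_m AB R}$ simultaneously satisfies
\begin{equation*}
  S(B\overline{\mathcal{T}'})_{\psi_j} \approx n\min\bigl\{S(B\overline{\mathcal{T}'})_{\psi},\, S(C_1 B\overline{\mathcal{T}'})_{\psi}\bigr\},\qquad S(R)_{\psi_j} \approx n\min\bigl\{S(R)_{\psi},\, S(C_1 R)_{\psi}\bigr\}
\end{equation*}
for every $\mathcal{T}'$, up to an arbitrarily small slack $\delta$. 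Any entropic ties $S(X) = S(C_1 X)$ that would spoil the strict inequality required by Proposition \ref{thm:random} are broken by a sub-linear injection of entanglement between suitable parties, contributing nothing to the asymptotic rate.

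Using the elementary inequality $\min(a,b) - \min(c,d) \geq \min(a-c,\, b-c)$ and the purity identities $S(C_1 R)_{\psi} = S(A\mathcal{T}' B\overline{\mathcal{T}'})_{\psi}$ and $S(R)_{\psi} = S(AC_1 \mathcal{T}' B\overline{\mathcal{T}'})_{\psi}$, it follows that for every cut $\mathcal{T}' \subseteq \{C_2,\ldots,C_m\}$,
\begin{equation*}
  I(A\mathcal{T}' \rangle B\overline{\mathcal{T}'})_{\psi_j} \,\gtrsim\, \min\bigl\{I(A C_1 \mathcal{T}' \rangle B\overline{\mathcal{T}'})_{\psi},\; I(A\mathcal{T}' \rangle B C_1 \overline{\mathcal{T}'})_{\psi}\bigr\} \,\geq\, I^c_{min}(\psi, A{:}B),
\end{equation*}
because the two quantities on the right are coherent informations across valid cuts of the original $(m+2)$-partite state. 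Hence the minimum-cut coherent information of the post-measurement state relative to the $m-1$ remaining helpers is at least $I^c_{min}(\psi, A{:}B) - o(1)$. A multi-blocking strategy, in the spirit of the pure-state argument in \cite{merge}, regroups the measurement outcomes into batches consisting of many copies of an effectively i.i.d.\ state; applying the induction hypothesis to each batch (with $m-1$ helpers) yields a cumulative distillable rate arbitrarily close to $I^c_{min}(\psi, A{:}B)$.

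The main obstacle is the bookkeeping around the multi-blocking: one must verify that the entropic estimates, which hold on average over the random measurement and over typical subspaces, remain simultaneously valid block-by-block so that Fannes-type arguments combine with the induction hypothesis to give a rate arbitrarily close to $I^c_{min}(\psi, A{:}B)$. Controlling the vanishing fraction of atypical outcomes and tracking the sub-linear entanglement injections across the $m$ levels of recursion is somewhat delicate but causes no essential difficulty.
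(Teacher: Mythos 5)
Your proposal is correct and follows essentially the same route as the paper: a Haar-random measurement by one helper on its typical subspace, with a union bound over cuts and Fannes-type estimates showing that the minimum-cut coherent information is preserved up to $o(n)$ (this is the paper's Lemma~\ref{lem:mincut}, which your $\min(a,b)-\min(c,d)\geq\min(a-c,b-c)$ manipulation reproduces), followed by a multi-blocking recursion over the remaining helpers and the hashing bound at the end. The only cosmetic difference is that you phrase the recursion as induction on $m$ invoking the theorem for $m-1$ helpers, whereas the paper iterates the lemma directly and applies hashing once at the final stage; the content and level of rigor are the same.
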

Before giving a proof of Theorem \ref{thm:gen}, we need the following lemma, which states that the minimum cut coherent information of the original state is preserved, up to a vanishingly small perturbation, after a helper has finished performing a random measurement on his system. The arguments needed for  demonstrating this lemma are similar to those used to in \cite{merge} to prove eq.~(\ref{eq:mincutEofA}).
\begin{lemma}\label{lem:mincut}
Given $n$ copies of a state $\psi^{C_MAB}$, let $C_{m}$ perform a random measurement on his typical subspace $\tilde{C}_{m}$ as in Proposition \ref{thm:random}. For any $\delta > 0$ and $n$ large enough, there exists a measurement performed by the helper $C_m$ such that, with arbitrarily high probability, the outcome state $\psi^{A^nB^nC_1^n\ldots C_{m-1}^n}_J$ satisfies the following
inequality:
\begin{equation*}
I^c_{min}(\psi_J,A^n:B^n) \geq n (I^c_{min}(\psi,A:B) - \delta),
\end{equation*}
where $J$ is the random variable associated with the measurement outcome.
\end{lemma}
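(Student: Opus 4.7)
The plan is to argue cut-by-cut, applying Proposition~\ref{thm:random} separately to each bipartite cut ${\cal T}$ of the helpers and then taking a union bound over this finite collection of cuts.

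Fix a cut ${\cal T} \subseteq \{C_1,\ldots,C_m\}$; by symmetry assume $C_m \in {\cal T}$, the case $C_m \in \overline{\cal T}$ being analogous. Let $X$ denote the composite system consisting of $A$ together with the helpers in ${\cal T}\setminus\{C_m\}$, and $Y$ the composite system consisting of $B$ together with the helpers in $\overline{\cal T}$. Then $\psi^{C_m X Y R}$ is pure, and purity gives
\[
 I(A{\cal T}\rangle B\overline{\cal T})_\psi = I(X C_m \rangle Y)_\psi = S(Y)_\psi - S(R)_\psi.
\]
Since Charlie's random measurement is rank-one on $\tilde C_m$, every post-measurement state $\psi_J^{X^n Y^n R^n}$ remains pure, so
\[
 I(X^n \rangle Y^n)_{\psi_J} = S(Y^n)_{\psi_J} - S(R^n)_{\psi_J}.
\]
It therefore suffices to show that both marginals $\psi_J^{Y^n}$ and $\psi_J^{R^n}$ remain close to the typical versions of $(\psi^Y)^{\otimes n}$ and $(\psi^R)^{\otimes n}$.

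Each marginal is handled by a separate invocation of Proposition~\ref{thm:random} applied to the pure state $\psi^{C_m X Y R}$. Viewing $(C_m, XR, Y)$ as the triple $(C,B,R)$ of the proposition and working within typical subspaces, a random measurement on $\tilde C_m$ yields outcomes whose expected trace-norm distance between $\psi_J^{Y^n}$ and $(\psi^Y)^{\otimes n}$ vanishes, provided $S(Y)_\psi < S(XR)_\psi$. A second application with the grouping $(C_m, XY, R) \leftrightarrow (C,B,R)$ yields analogous closeness of $\psi_J^{R^n}$ to $(\psi^R)^{\otimes n}$ under $S(R)_\psi < S(XY)_\psi$. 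Combining with the Fannes inequality then gives
\[
 S(Y^n)_{\psi_J} = n\bigl(S(Y)_\psi \pm \gamma\bigr), \qquad S(R^n)_{\psi_J} = n\bigl(S(R)_\psi \pm \gamma\bigr)
\]
for arbitrarily small $\gamma$ once $n$ is large enough, whence $I(X^n \rangle Y^n)_{\psi_J} \geq n\bigl(I(A{\cal T}\rangle B\overline{\cal T})_\psi - 2\gamma\bigr)$ on a set of outcomes of probability at least $1 - \epsilon_{\cal T}$ by Markov's inequality.

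The measurement itself does not depend on the cut, so a union bound over the $2^m$ bipartite cuts---a fixed, finite collection since $m$ is a constant---produces a single measurement outcome for which every cut coherent information is preserved up to $n\delta$ simultaneously, with probability $1-o(1)$. Since the minimum over a finite collection is stable under uniform $\pm n\delta$ perturbations, this gives $I^c_{\min}(\psi_J, A^n:B^n) \geq n\bigl(I^c_{\min}(\psi, A:B) - \delta\bigr)$ with high probability, as required. The main obstacle is that the entropy inequalities $S(Y)_\psi < S(XR)_\psi$ and $S(R)_\psi < S(XY)_\psi$ demanded by Proposition~\ref{thm:random} are not automatic for a generic cut. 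We handle this with the same device used in the proof of Theorem~\ref{thm:lowerbound}: prior to Charlie's measurement, a sublinear amount of maximally entangled pairs is injected between appropriately chosen parties so that every relevant pair of entropies is strictly separated. Because the injected entanglement is $o(n)$, it shifts neither the original min-cut coherent information nor the asymptotic rate of the subsequent distillation, while ensuring that the decoupling bound applies uniformly over all cuts of the helpers.
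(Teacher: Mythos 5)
There is a genuine gap, and it sits exactly where you flag your ``main obstacle.'' Your argument needs, for every cut, that the marginal on $Y=B\overline{\cal T}$ is preserved, which via Proposition \ref{thm:random} requires $S(Y)_\psi < S(XR)_\psi$ (and similarly $S(R)_\psi < S(XY)_\psi$ for the reference). When one of these inequalities fails \emph{strictly}, it fails by a constant per copy, i.e.\ by $\Theta(n)$ over $n$ copies, and injecting a sublinear amount of entanglement can only shift entropies by $o(n)$: that device (used in Theorem \ref{thm:lowerbound} and in the paper's proof) is a tie-breaker for the case of \emph{equal} entropies, not a way to reverse an inequality. Worse, in that regime your intermediate claim is simply false: since the post-measurement state on $X^nY^nR^n$ is pure, Proposition \ref{thm:random} applied with the roles swapped shows the $X^nR^n$ marginal is the one preserved, so $S(Y^n)_{\psi_J}=S(X^nR^n)_{\psi_J}\approx n\,S(XR)_\psi < n\,S(Y)_\psi$, contradicting $S(Y^n)_{\psi_J}=n(S(Y)_\psi\pm\gamma)$. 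So individual cut values are \emph{not} preserved in general, and your proof of preservation cut-by-cut cannot be patched by catalysis.

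The correct resolution --- and what the paper actually does --- is to observe that preservation of each cut value is not needed. After $C_m$'s measurement, the entropy of $B^n{\cal T}'^n$ is close to $n\min\{S(B{\cal T}')_\psi,\,S(AR{\cal T})_\psi\}$, and by purity $S(AR{\cal T})_\psi=S(B{\cal T}'C_m)_\psi$; both candidates are of the form $S(B{\cal S})_\psi$ for some cut ${\cal S}$ of the \emph{original} $m$ helpers, hence both are at least $\min_{\cal S}S(B{\cal S})_\psi$. Combined with the upper bound $S(R^n)_{\psi_J}\leq n(S(R)_\psi+\delta'')$ (here a failed ordering only helps, since $S(R^n)_{\psi_J}$ enters with a minus sign), this yields the min-cut lower bound without ever requiring the problematic orderings; sublinear entanglement is used only to separate exactly equal entropies. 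Your union bound over the finitely many cuts and the Fannes step are fine and mirror the paper's use of Lemma \ref{lem:unionbound}, but the per-cut preservation claim and its proposed repair are where the argument breaks.
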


\begin{proof} The minimum cut coherent information $I^c_{min}(\psi,A:B)$ of the state $\psi^{C_MAB}$ can be rewritten as
   \beu
      I^c_{min}(\psi, A:B) = \min_{{\cal T} \subseteq \{C_1,C_2,\ldots, C_{m}\}} \bigl \{ S(B{\cal T})_{\psi} \bigr \} - S(R)_{\psi},
   \eeu
where $R$ is the purifying system for the state $\psi^{C_MAB}$.

Let $\cal T$ be a bipartite cut of the helpers $\{C_1,C_2,\ldots, C_{m}\}$ such that $C_{m} \notin
{\cal T}$. We define its relative complement as ${\cal T'}=\{C_1,\ldots,C_{m-1}\} \backslash \cal T$.
For any such cut $\cal T$, the state $\psi^{C_MAB}$ and its purifying system $R$ can be regarded as a tripartite system composed of $C_{m}$, $AR{\cal T}$ and $B{\cal T'}$. Assuming $S(AR{\cal T})_{\psi}$ and $S(B\cal T')_{\psi}$ to be distinct, the helper $C_{m}$ performs a random measurement on his typical subspace $\tilde{C}_m$. By Proposition \ref{thm:random} and the Fannes inequality, there exists a measurement for Charlie's system for which the outcome state $\psi^{C_1^n\ldots C_{m-1}^nA^nB^n}_J$ satisfies, with arbitrarily high probability:
\begin{equation}
 \label{eq:entropy1}
\min\{S(AR{\cal T})_{\psi},S(B{\cal T'})_{\psi}\} - \delta' \leq \frac{1}{n} S(B^n{\cal T'}^n)_{\psi_J} \leq \min\{S(AR{\cal T})_{\psi},S(B{\cal T'})_{\psi}\} + \delta',
\end{equation} where $\delta'$ can be made arbitrarily small by taking sufficiently large values for $n$. Hence, the reduced state entropies stay distinct by taking a sufficiently small value of $\delta'$. Since $I^c_{min}(\psi_J,A^n:B^n)$ can be re-expressed as
\begin{equation}
\label{mincut} I^c_{min}(\psi_J,A^n:B^n) = \min_{{\cal T}\subseteq \{C_1,C_2,\ldots, C_{m-1}\}} \{S(B^n{\cal T'}^n)_{\psi_J} \} - S(R^n)_{\psi_{J}},
\end{equation}
we can substitute the lower bound for $S(B^n{\cal T'}^n)_{\psi_J}$ into eq.~(\ref{mincut}) and obtain
\begin{subequations}
\begin{align}
I^c_{min}(\psi_J,A^n:B^n) & \geq n \min_{\cal T}  ( \min\{S(AR{\cal
T})_{\psi},S(B{\cal T'})_{\psi}    \} - \delta' ) - S(R^n)_{\psi_{J}} \\
 & = n (\min_{\cal T} \{ S(B{\cal T'}C_{m})_{\psi},S(B{\cal T'})_{\psi} \} - \delta') -
S(R^n)_{\psi_J} \\
 & = n (\min_{\cal T} \{ S(B{\cal T})\} - \delta')  -
S(R^n)_{\psi_J}. \label{psijeq}
\end{align}
\end{subequations}

To finish the proof, the last fact we need concerns the entropy of the purifying system $R$. If we consider the purified state
$\psi^{C_MABR}$ as a tripartite system composed of $C_{m}$, $R$ and $ABC_1,\ldots,C_{m-1}$, we
can apply Proposition \ref{thm:random} and obtain, w.h.p:
\begin{equation}
 \label{eq:entropyreference}
   \begin{split}
S(R^n)_{\psi_J} & =  n
(\min\{S(R)_{\psi},S(C_1,\ldots,C_{m-1}AB)_{\psi}\} \pm \delta'' ) \\
& \leq  n(S(R)_{\psi} + \delta''), \\
 \end{split}
\end{equation}
where $\delta''$ can be made arbitrarily small. This tells us that for large values of $n$, the entropy of the purifying system will not significantly increase as a result of the helper $C_m$ performing a measurement on his typical subspace $\tilde{C}_{m}$. Note that, as in Theorem \ref{thm:lowerbound}, we can use the union bound and Markov's inequality (see Lemma \ref{lem:unionbound}) to show the existence of a measurement on $\tilde{C}_{m}$ which produces states such that, w.h.p, eqs.(\ref{eq:entropy1}) and (\ref{eq:entropyreference}) are both satisfied.
Combining the last equation with eq.~(\ref{psijeq}) and choosing values
for $\delta',\delta''$ small enough that $\delta'+\delta'' < \delta$, we get the desired result. \end{proof}

\begin{proof+}{of Theorem \ref{thm:gen}} The right hand side of eq.~(\ref{eq:lowerbound}) is just the coherent information when $m=0$, and is equal to $\max\{I(A \rangle B)_{\psi}, L(\psi)\}$ for $m=1$. Eq.~(\ref{eq:lowerbound}) holds for these base cases by the hashing inequality and Theorem \ref{thm:lowerbound}. So, from here on, assume $m \geq 2$. Moreover, that $D^{\infty}_A(\psi^{C_MAB})$ is at least $I(A\rangle B)_{\psi}$ follows again from the hashing inequality. Hence, we can focus on proving that $D^{\infty}_A(\psi^{C_MAB})$ is bounded below by the minimum cut coherent information.

By Lemma \ref{lem:mincut}, there exists a measurement $E_m$ for the helper $C_m$ which produces an outcome state $\psi^{C_1^n\ldots C_{m-1}^nA^nB^n}_{J}$ satisfying w.h.p. the following inequality:
\begin{equation*}
 \begin{split}
I^c_{min}(\psi_J,A^n:B^n) \geq n (I^c_{min}(\psi,A:B)-\delta), \\
 \end{split}
\end{equation*}
for an arbitrary small $\delta$ and sufficiently large $n$. If we have at our disposal $n^{m}$ copies of the state $\psi^{C_MAB}$ and perform the measurement $E_m$ for each block of $n$ copies of the state, we expect to obtain approximately $n^{m-1}p_j$ copies of the state $\psi_j$, with $p_j$ being the probability of obtaining the state $\psi^j$ after the measurement $E_m$ is performed by $C_{m}$.

For each block consisting of many copies of the state $\psi_j$, we repeat the previous procedure in a recursive manner. We continue this process until all $m$ helpers have performed measurements on their systems. In the end, Alice and Bob will obtain a number of bipartite states $\psi^{AB}_{J_1J_2\ldots J_{m}}$ each satisfying w.h.p.
\begin{equation*}
I^c_{min}(\psi^{AB}_{J_1J_2\ldots J_{m}},A^{n^{m}}:B^{n^{m}}) \geq n^{m} (I^c_{min}(\psi,A:B)-\delta'), \\
\end{equation*}
where $\delta'$ can be made arbitrarily small. Observe that the term on the left hand side of the inequality is the coherent information $I(A\rangle B)_{\psi^{AB}_{J_1J_2\ldots J_{m}}}$, which is bounded above by the distillable entanglement $D(\psi^{AB}_{J_1J_2\ldots J_{m}})$. Since $D_A(\psi^{n^{m}})$ is a supremum over all LOCC measurements performed by the helpers, we have
 \begin{equation*}
  \begin{split}
  \frac{1}{n^{m}} D_A(\psi^{\otimes (n^{m})}) &\geq \frac{1}{n^{m}} \sum_{j_1j_2\ldots j_m} p_{j_1j_2\ldots j_m} D(\psi^{AB}_{j_1j_2\ldots j_{m}}) \\
  &\geq \frac{1}{n^{m}}\sum_{j_1j_2\ldots j_m} p_{j_1j_2\ldots j_m} I(A\rangle B)_{\psi^{AB}_{j_1j_2\ldots j_{m}}} \\
  &\geq (1-\epsilon)(I^c_{min}(\psi,A:B)-\delta'), \\
  \end{split}
\end{equation*}
where $\epsilon$ and $\delta'$ can be both be made arbitrarily small by the arguments of the previous paragraphs. This concludes the proof.  \end{proof+}

Before closing this section, let us say a few words on assisted
distillation when the two recipients are separated by a
one-dimensional chain of repeater nodes, as depicted in figure
\ref{fig:chain}. Applying a hierarchical distillation strategy on
$\psi^{ABCD}$ will achieve a rate of ebits corresponding to
\begin{equation*}
 R(\psi):=\min\{I(A\rangle C_1)_{\psi_1},I(C_2 \rangle
D_1)_{\psi_2}, I(D_2 \rangle B)_{\psi_3}\}.
\end{equation*}
\begin{figure}
\begin{center}
\includegraphics{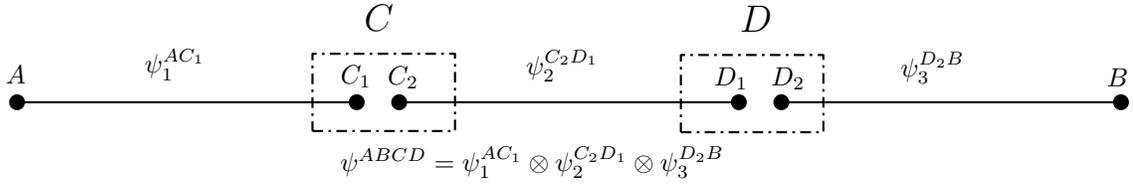}
\end{center}
\caption{A 1-dimensional chain with two repeater stations
separating the two recipients $A$ and $B$.} \label{fig:chain}
\end{figure}
If we consider the cut ${\cal T}_1:=\{C\}$ of the helpers $C$ and
$D$, the coherent information $I(A{\cal T}_1\rangle
B{\overline{\cal T}_1})_{\psi}$ can be simplified to
\begin{equation*} I(A{\cal T}_1\rangle B{\overline{\cal T}_1})_{\psi}
= I(C_2 \rangle D_1)_{\psi_2} - S(AC_1)_{\psi_1} \leq I(C_2
\rangle D_1)_{\psi_2},
\end{equation*} and similarly for the cuts
${\cal T}_2=\{CD\}$ and ${\cal T}_3=\emptyset$, we have
\begin{equation*}
 \begin{split}
I(A{\cal T}_2\rangle B{\overline{\cal T}_2})_{\psi} &= I(D_2
\rangle B)_{\psi_3} - S(AC_1)_{\psi_1} - S(C_2D_1)_{\psi_2} \leq
I(D_2\rangle B)_{\psi_3}, \\
I(A{\cal T}_3\rangle B{\overline{\cal T}_3})_{\psi} &= I(A \rangle C_1)_{\psi_1}. \\
\end{split}
\end{equation*} Thus, the minimum cut coherent information
$I^c_{min}(\psi^{ABCD},A:B)$ is not greater than $R(\psi)$, and so
a hierarchical strategy might be better suited for the case of a
chain state than a random measurement strategy, provided the information stored in the repeaters is not subject to errors (see the example of the previous section). It is easy to
generalize the previous arguments to a chain of arbitrary length
(i.e $m \geq 3$), and to other network configurations.

\chapter{Conclusion}
  \section{Summary}
Following a brief review of quantum information theory in Chapter 2, we studied the problem of multiparty state merging. The main contribution of that chapter was to perform a direct technical analysis of the task of multiparty state merging. The main technical challenge to overcome was to adapt the decoupling lemma of \cite{merge} and the upper bound on the quantum decoupling error (Proposition 4 in \cite{merge}) for the case of $m$ senders and a single receiver sharing a multipartite state. Our upper bound is derived using a random measurement strategy for the senders. Our random coding strategy allowed us to use a well-known result from representation theory to simplify the calculations and thereby obtain a nice simple bound for the decoupling error. Using this calculation in the i.i.d setting, we proved the existence of protocols which achieve multiparty state merging without the need for time-sharing for the case of two senders and no side information at the decoder. We conjectured that time-sharing is not essential also for an arbitrary number of senders and discussed the main difficulty for proving this.

We also introduced the split-transfer problem, a variation on the state merging task, and applied it in the context of assisted distillation. The main technical difficulty here was to formally prove that the decoding operations, implemented by two receivers $A$ and $B$, can be done at the same time following simultaneous measurements by the senders. The essential ingredients for showing this were the commutativity of the Kraus operators $P^{\cK}_{\jK}$ and $P^{\cKbar}_{\jKbar}$ and the triangle inequality. The rate region for a split-transfer is composed of two sub-regions, each
corresponding to rates which would be achievable for a merging operation from $\cK$ (resp. $\cKbar$) to $A$ (resp. $B$). In the context of assisted distillation, our split-transfer protocol was used to redistribute a pure multipartite state to two decoders $A$ and $B$ in such a way that it preserves the min-cut entanglement of the input state, provided the multiparty typicality conjecture holds. Under this assumption, we gave a non-recursive proof that the optimal assisted distillation rate is equal to the min-cut entanglement of the input state.

In Chapter 4, an emphasis on how to accomplish merging when the participants have access only to a single copy of a quantum state was considered. The one-shot analysis was performed using the quantum min- and max-entropy formalism of \cite{Renner02}, and presented other difficulties
than in the asymptotic setting. Most notably, because time-sharing is impossible with only a single copy of a quantum state, our intrinsically multiparty protocol provides the first method to interpolate between achievable costs in the multiparty setting. The technical challenge was to derive an upper bound on the decoupling error for a random coding strategy in terms of the min-entropies. We suspect that it might be possible to further improve our bound by replacing the
min-entropies with their smooth variations, but it is unclear how to proceed in order to show this. We leave it as an open problem. To illustrate the advantages of intrinsic multiparty merging over iterated two-party merging, we have considered three different examples of one-shot distributed compression. Two of those examples demonstrate clearly the advantages of our protocol by allowing some of the senders to transfer their systems for free, something which is impossible for a protocol relying on two-party state merging. The last example considers a state which does not reduce to bipartite entanglement between various subsystems, and thus provided more of a challenge, but yielded greater rewards. Tractable computations for min- and max-entropies of non-trivial states are difficult to perform, and we feel this example is a useful contribution to the ``one-shot'' literature, which contains very few explicit examples.

In the last chapter, we generalized the entanglement of assistance problem by allowing the parties to share a multipartite mixed state. For the case of three parties holding a mixed tripartite state, the optimal assisted rate was proven to be equal to the regularization of the one-shot entanglement of assistance, a quantity which maximizes the average distillable entanglement over all measurements performed by the helper. Two upper bounds for this quantity were established and examples of classes of states attaining them were given. Additionally, the one-shot entanglement of assistance was proven to be a convex quantity for pure ensembles. We also presented new protocols for assisted entanglement distillation, based on a random coding strategy, which are proven to distill entanglement at a rate no less than the minimum cut coherent information, defined as the minimum coherent information over all possible bipartite cuts of the helpers. For states not saturating strong subadditivity, and recoverable by Alice and Bob if they can implement joint operations, we proved that our random coding strategy achieves rates surpassing the hashing inequality. Moreover, the rates formally resemble those achievable if the helper system were merged to either Alice or Bob even when such merging is impossible. Finally, we compared our protocol to a hierarchical strategy in the context of quantum repeaters. We identified a major weakness of the hierarchical strategy by analyzing the effect of a CNOT error on the rates achievable for such strategy. We found that the rate, which can be as good as the rate of our random measurement, becomes null when such error occurs at the repeater node. On the other hand, our protocol is completely fault tolerant and yields the same rate even if this error goes undetected by the helper holding the systems at the repeater node.

\section{Future research directions}

Our proposed protocol in Chapter 5 for assisted distillation of an arbitrary multipartite state involved a measurement on a long block of states, and then a measurement on blocks of these blocks, and so on. It seems likely that a strategy where all the helpers measure in a random basis of their respective typical subspaces and broadcast the results to Alice and Bob would still produce states preserving the minimum cut coherent information. For pure multipartite states, we showed in Chapter 3 that a split-transfer, followed by a distillation protocol between Alice and Bob, removes the multiple blocking required in \cite{merge} for distilling an optimal number of ebits between Alice and Bob. For an arbitrary multipartite state, we are still unsure if our split-transfer can be applied to remove the multiple blocking argument needed to show the lower bound. More generally, it would be interesting to come up with other potential applications for the split-transfer protocol. State merging was used as a building block for solving various communication tasks, and we believe split-transfer could be useful in other multipartite scenarios than the assisted distillation context. Alternatively, it could also simplify some of the existing protocols which rely on multiple applications of the state merging primitive.

We could extend our assisted entanglement scenario in several ways. For instance, we can consider other forms of pure entanglement such as GHZ states and look at the optimal achievable rates under LOCC operations. Another interesting question is to analyze whether general LOCC operations between the parties give more power to the helpers. In the pure multipartite case, we saw that such a strategy is not required to achieve optimal assisted rates. For multipartite mixed state, we should expect a difference in achievable rates when allowing more communication freedom to the helpers. We just have to consider bipartite distillation protocols to see this: the hashing protocol is impossible without communication between the parties. Finally another potential line of research is to analyze our assisted protocol using smooth min- and max-entropies. Recent work by Buscemi and Datta \cite{Datta2} analyzed the one-way distillable entanglement of a bipartite mixed state $\psi^{AB}$ in the one-shot regime using one-shot entropic quantities similar to the quantum min- and max-entropy of \cite{Renner02}. The entanglement of assistance for pure states $\psi^{ABC}$ was also analyzed under this framework \cite{Datta}, and it is another natural progression of our work to analyze the entanglement of assistance using the quantum min- and max-entropy formalism.

Quantum min- and max-entropies are quantities which require an optimization over an infinite set of objects, and thus, do not lend easily to computation. In \cite{Renner03}, an approach for computing these quantities was suggested in terms of semidefinite programming. In Chapter 4, we gave three examples where computation of min-entropies are tractable and have a relatively simple form. Finding other examples of classes of states for which their min- and max-entropies can be evaluated or characterized in more simple terms would be a welcomed addition to the literature on this subject. 

To conclude, theoretical investigations of multiparty protocols are difficult to perform. In this thesis, we suggested new multiparty quantum communication protocols and performed a rigorous analysis of their properties when such analysis could be carried. The multiparty typicality conjecture is the most important open problem we leave on the table. We suspect that a proof technique for answering this conjecture in the affirmative will also be beneficial for several other multiuser information processing tasks~\cite{int1,int2,int3}. As the design of these protocols become more complex, better techniques will be required for analyzing these protocols.

\renewcommand{\appendixname}{Appendix}
\fancyhead[RO]{\emph{Appendix A}}
\begin{appendices}
\chapter{Various Technical Results}
 \renewcommand{\min}{{\operatorname{min}}}
\renewcommand{\max}{{\operatorname{max}}}
\newcommand{\tr}{{\operatorname{tr}}}
\section{Trace norm}

\begin{lemma}\cite{merge}\label{lem:relHS}
For an operator $X$ acting on a space $A$, the Hilbert-Schmidt norm $\|X\|_2$ is related to the trace norm $\|X\|_1$ as follows:
\begin{equation}\label{eq:relationT}
\|X\|^2_1 \leq d \|X\|^2_2,
\end{equation}
where $d$ is the dimension of the support of $X$.
\end{lemma}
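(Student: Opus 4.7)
The plan is to reduce the inequality to a statement about singular values and then apply the Cauchy-Schwarz inequality. First I would invoke the singular value decomposition of $X$, writing $X = \sum_{i=1}^d \sigma_i \ket{u_i}\bra{v_i}$, where $\sigma_i \geq 0$ are the singular values of $X$ and $d$ is the dimension of the support of $X$ (so at most $d$ singular values are non-zero). From the definitions, $\|X\|_1 = \Tr\sqrt{X^\dagger X} = \sum_{i=1}^d \sigma_i$ and $\|X\|_2^2 = \Tr(X^\dagger X) = \sum_{i=1}^d \sigma_i^2$.

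Next, I would apply the Cauchy-Schwarz inequality to the vectors $(1,1,\ldots,1)$ and $(\sigma_1,\sigma_2,\ldots,\sigma_d)$ in $\mathbb{R}^d$:
\begin{equation*}
\|X\|_1 = \sum_{i=1}^d \sigma_i \leq \sqrt{d}\,\sqrt{\sum_{i=1}^d \sigma_i^2} = \sqrt{d}\,\|X\|_2.
\end{equation*}
Squaring both sides yields the desired inequality $\|X\|_1^2 \leq d\,\|X\|_2^2$.

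There is no real obstacle here; the only subtlety is making sure the bound uses $d$, the dimension of the support, rather than the full dimension of $A$. This is handled automatically since the singular value expansion has only $d$ non-zero terms, so the Cauchy-Schwarz factor is $\sqrt{d}$ and not $\sqrt{d_A}$. The result is a standard fact about Schatten norms and follows immediately from this single application of Cauchy-Schwarz to the singular value sequence.
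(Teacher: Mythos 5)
Your proof is correct and is essentially the same argument as the paper's: the paper's one-line appeal to convexity of $x^2$ with uniform weights $1/d$ applied to the (at most $d$ nonzero) singular values is exactly the Jensen/Cauchy--Schwarz step you carry out explicitly, $\bigl(\sum_{i=1}^d \sigma_i\bigr)^2 \leq d \sum_{i=1}^d \sigma_i^2$. Your version just spells out the singular value decomposition and the norm identities that the paper leaves implicit.
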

\begin{proof}
This follows straightforwardly from the convexity of the $x^2$ function by taking probabilities $1/d$.
\end{proof}

 \section{The swap operator}
Let $A$ and $\tilde{A}$ be any two isomorphic vector spaces of dimensions $d_A$. Consider the computational bases $\{\ket{i}\}^{d_A}_{i=1}$ and $\{\ket{j}\}^{d_A}_{i=1}$ of the spaces $A$ and $\tilde{A}$. The swap operator $F^{A\tilde{A}}$ is defined by the following action on the basis elements $\{\ket{i}^A\otimes \ket{j}^{\tilde{A}}\}$ of $A \otimes \tilde{A}$:
\begin{equation*}
  F^{A\tilde{A}}\ket{i}^A\ket{j}^{\tilde{A}} := \ket{j}^{A} \ket{i}^{\tilde{A}} \quad 1\leq i,j \leq d_A.
\end{equation*}
For any two arbitrary vectors $\ket{\psi}^A = \sum^{d_A}_{i=1} \alpha_i \ket{i}^A$ and $\ket{\phi}^{\tilde{A}}=\sum^{d_A}_{j=1} \beta_j \ket{j}^{\tilde{A}}$, we have
\begin{equation*}
 F^{A\tilde{A}}\ket{\psi}^A\ket{\phi}^{\tilde{A}} = \ket{\phi}^{A} \ket{\psi}^{\tilde{A}},
\end{equation*}
where $\ket{\phi}^A :=\sum^{d_A}_{i=1} \beta_i \ket{i}^A$ and $\ket{\psi}^{\tilde{A}}:=\sum^{d_A}_{j=1} \alpha_j \ket{j}^{\tilde{A}}$.

The swap operator has many interesting properties. It is a unitary operator which is also hermitian. To see this, consider two basis elements $\ket{ij}^{A\tilde{A}}$ and $\ket{kl}^{A\tilde{A}}$. We have
\begin{equation}
\begin{split}
   \bra{ij}F^{\dag}F\ket{kl} &= \langle ji | lk \rangle  = \delta_{j,l} \delta_{i,k}, \\
   F^2\ket{ij}^{A\tilde{A}} &= F \ket{ji}^{A\tilde{A}} = \ket{ij}^{A\tilde{A}}. \\
\end{split}
\end{equation}
From the last line, it follows that the eigenvalues of $F^{A\tilde{A}}$ are 1 and -1. Denote by $\Pi^{A\tilde{A}}_{sym}$ the subspace spanned by the eigenvectors $\{\ket{e^{sym}_i}^{A\tilde{A}}\}$ of $F^{A\tilde{A}}$ with eigenvalues equal to 1. We call $\Pi^{A\tilde{A}}_{sym}$ the \textit{symmetric} subspace of $A\tilde{A}$. We sometimes use the symbol $\Pi^{A\tilde{A}}_{sym}$ to denote the projector onto this subspace. It will be clear from the context which definition applies. The subspace spanned by the eigenvectors $\{\ket{e^{anti}_i}^{A\tilde{A}}\}$ of $F^{A\tilde{A}}$ with eigenvalues equal to -1 is written as $\Pi^{A\tilde{A}}_{anti}$ and is called the \textit{anti-symmetric} subspace of $A\tilde{A}$. This decomposes the space $A\tilde{A}$ into two orthogonal subspaces.
\begin{lemma}
Let $A$ and $\tilde{A}$ be any two isomorphic vector spaces of dimensions $d_A$. Then, we have
\begin{equation*}
  \Pi^{A\tilde{A}}_{sym} = \mathrm{span}\bigg \{ \ket{\psi}^A\ket{\psi}^{\tilde{A}} \bigg | \ket{\psi}^A \in A\bigg \},
\end{equation*}
with $\Tr [\Pi^{A\tilde{A}}_{sym}] = d_A(d_A+1)/2$.
\end{lemma}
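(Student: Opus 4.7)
The plan is to prove the characterization and the dimension count together by producing an explicit orthonormal basis for $\Pi^{A\tilde{A}}_{sym}$ consisting of symmetrized computational basis vectors, and then showing that each such basis vector lies in the span of ``doubled'' vectors $\ket{\psi}^A\ket{\psi}^{\tilde{A}}$ via a polarization-style identity.

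First I would verify the easy inclusion: for any $\ket{\psi}^A \in A$, the vector $\ket{\psi}^A \ket{\psi}^{\tilde{A}}$ satisfies $F^{A\tilde{A}}\ket{\psi}^A\ket{\psi}^{\tilde{A}} = \ket{\psi}^A\ket{\psi}^{\tilde{A}}$, so it is a $+1$ eigenvector of $F^{A\tilde{A}}$ and hence lies in $\Pi^{A\tilde{A}}_{sym}$. Therefore the span on the right-hand side is contained in $\Pi^{A\tilde{A}}_{sym}$.

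For the reverse inclusion, I would exhibit the natural symmetric basis
\begin{equation*}
\mathcal{B} := \bigl\{ \ket{ii}^{A\tilde{A}} : 1\leq i \leq d_A \bigr\} \cup \Bigl\{ \tfrac{1}{\sqrt{2}}(\ket{ij}^{A\tilde{A}}+\ket{ji}^{A\tilde{A}}) : 1\leq i < j \leq d_A \Bigr\},
\end{equation*}
and check that every element of $\mathcal{B}$ is fixed by $F^{A\tilde{A}}$, so $\mathcal{B} \subseteq \Pi^{A\tilde{A}}_{sym}$. Conversely, any $+1$ eigenvector of $F^{A\tilde{A}}$ expanded in the computational basis $\{\ket{ij}^{A\tilde{A}}\}$ must have coefficients satisfying $c_{ij}=c_{ji}$, which places it in the span of $\mathcal{B}$. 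This shows $\mathcal{B}$ is an orthonormal basis for $\Pi^{A\tilde{A}}_{sym}$, and counting gives $\Tr[\Pi^{A\tilde{A}}_{sym}] = d_A + \binom{d_A}{2} = d_A(d_A+1)/2$, which handles the dimension claim.

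It remains to show each element of $\mathcal{B}$ lies in $\mathrm{span}\{\ket{\psi}^A\ket{\psi}^{\tilde{A}}\}$. The vectors $\ket{ii}^{A\tilde{A}}$ are already of the required form (take $\ket{\psi}=\ket{i}$). For the off-diagonal symmetrizations, I would use the polarization identity
\begin{equation*}
(\ket{i}+\ket{j})^{\otimes 2} - \ket{i}^{\otimes 2} - \ket{j}^{\otimes 2} = \ket{ij}^{A\tilde{A}} + \ket{ji}^{A\tilde{A}},
\end{equation*}
which expresses $\ket{ij}+\ket{ji}$ as a linear combination of three doubled vectors. Combining these two observations yields $\Pi^{A\tilde{A}}_{sym} \subseteq \mathrm{span}\{\ket{\psi}^A\ket{\psi}^{\tilde{A}}\}$, completing the proof. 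The argument is essentially bookkeeping; I do not anticipate any serious obstacle, the only subtle point being to make sure the polarization identity is written cleanly so that the span inclusion is manifest.
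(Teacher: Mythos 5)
Your proof is correct and follows essentially the same route as the paper: expand in the computational basis to get symmetric coefficients $\alpha_{ij}=\alpha_{ji}$, use the polarization identity $(\ket{i}+\ket{j})\otimes(\ket{i}+\ket{j})-\ket{ii}-\ket{jj}=\ket{ij}+\ket{ji}$ to land in the span of doubled vectors, and count $d_A+\binom{d_A}{2}=d_A(d_A+1)/2$. The only cosmetic difference is that you organize the argument around an explicit orthonormal symmetric basis, whereas the paper applies the polarization step directly to an arbitrary symmetric vector.
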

\begin{proof}
   That the right hand side is contained in the symmetric subspace follows from the definition. It remains to show that any vector $\ket{\psi}^{A\tilde{A}} \in \Pi^{A\tilde{A}}_{sym}$ can be written as a linear combination of vectors of the form $\ket{\phi}^A\ket{\phi}^{\tilde{A}}$. Write $\ket{\psi}^{A\tilde{A}}$ in the computational basis:
  \begin{equation*}
   \begin{split}
     \sum_{ij} \alpha_{ij} \ket{ij}^{A\tilde{A}} = \ket{\psi}^{A\tilde{A}} = F\ket{\psi}^{A\tilde{A}} = \sum_{ij} \alpha_{ij} \ket{ji}^{A\tilde{A}}. \\
   \end{split}
  \end{equation*}
Hence, we have $\alpha_{ij} = \alpha_{ji}$ for all $1\leq i,j\leq d_A$. We rewrite the state $\ket{\psi}^{A\tilde{A}}$ as
\begin{equation}\label{eq:proofSym}
 \begin{split}
  \ket{\psi}^{A\tilde{A}} &= \sum_{1\leq i < j \leq d_A} \alpha_{ij} (\ket{ij}^{A\tilde{A}}+\ket{ji}^{A\tilde{A}}) + \sum^{d_A}_{i=1} \alpha_{ii} \ket{ii}^{A\tilde{A}} \\
  &= \sum_{1\leq i < j\leq d_A} \alpha_{ij}[ (\ket{i}^A+\ket{j}^A)(\ket{i}^{\tilde{A}}+\ket{j}^{\tilde{A}})-\ket{ii}^{A\tilde{A}}-\ket{jj}^{A\tilde{A}} ] + \sum^{d_A}_{i=1} \alpha_{ii} \ket{ii}^{A\tilde{A}}. \\
 \end{split}
\end{equation}
This proves the first statement. (Observe that the previous set of vectors $(\ket{i}^A+\ket{j}^A)(\ket{i}^{\tilde{A}}+\ket{j}^{\tilde{A}})$ and $\ket{ii}^{A\tilde{A}}$ are also linearly independent.) To obtain the second result, notice that in the first line of the previous equation, the state is written as a linear combination of orthogonal vectors $\{\ket{ij}^{A\tilde{A}}+\ket{ji}^{A\tilde{A}}\}_{1\leq i<j\leq d_A}$ and $\{\ket{ii}\}^{d_A}_{i=1}$. Since the state was arbitrarily chosen, these vectors must generate the symmetric subspace. We have ${{d_A}\choose{2}}+d_A = d_A(d_A+1)/2$ of these vectors.
\end{proof}
Observe that the preceding equation implies this orthonormal basis for the anti-symmetric subspace: $\{ \frac{1}{\sqrt{2}}(\ket{ij}^{A\tilde{A}} -\ket{ji}^{A\tilde{A}})\}_{1\leq i < j \leq d_A}$.

\begin{corollary}\label{cor:invariance}
 Let $U$ be a unitary operator acting on a space $A$ of dimension $d_A$. For any two vectors $\ket{\psi}^{A\tilde{A}}$ in $\Pi^{A\tilde{A}}_{sym}$ and $\ket{\phi}^{A\tilde{A}} \in \Pi^{A\tilde{A}}_{anti}$, we have $(U \otimes \tilde{U}) \ket{\psi}^{A\tilde{A}} \in \Pi^{A\tilde{A}}_{sym}$ and $(U \otimes \tilde{U}) \ket{\phi}^{A\tilde{A}} \in \Pi^{A\tilde{A}}_{anti}$. Here, the unitary $\tilde{U}$ is a ``copy'' version of $U$: if $U\ket{i}^A = \ket{\psi_i}^A$, then $\tilde{U}\ket{i}^{\tilde{A}} = \ket{\psi_i}^{\tilde{A}}$.
\end{corollary}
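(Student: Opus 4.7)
The plan is to reduce the corollary to a single commutation statement: $F^{A\tilde{A}}(U \otimes \tilde{U}) = (U \otimes \tilde{U})F^{A\tilde{A}}$. Once this is in hand, both claims follow at once because eigenspaces of any operator are invariant under any other operator that commutes with it. In particular, since $\Pi^{A\tilde{A}}_{sym}$ and $\Pi^{A\tilde{A}}_{anti}$ are the eigenspaces of $F^{A\tilde{A}}$ corresponding to eigenvalues $+1$ and $-1$ respectively, and $U \otimes \tilde{U}$ is unitary, any vector $\ket{\psi}^{A\tilde A}$ in $\Pi^{A\tilde A}_{sym}$ satisfies $F^{A\tilde A}(U\otimes \tilde U)\ket{\psi} = (U \otimes \tilde U)F^{A\tilde A}\ket{\psi} = (U\otimes \tilde U)\ket{\psi}$, placing $(U\otimes \tilde U)\ket{\psi}$ back in $\Pi^{A\tilde A}_{sym}$, and similarly for the anti-symmetric subspace with the eigenvalue $-1$.

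To establish the commutation relation, I would give a direct computation on the computational basis. Writing $U\ket{i}^{A} = \sum_{k} u_{ki}\ket{k}^{A}$ and, because $\tilde{U}$ is the ``copy'' of $U$, $\tilde{U}\ket{j}^{\tilde A} = \sum_{l}u_{lj}\ket{l}^{\tilde A}$, one computes
\begin{equation*}
F^{A\tilde A}(U \otimes \tilde U)\ket{ij}^{A\tilde A} \;=\; \sum_{k,l}u_{ki}u_{lj}\ket{lk}^{A\tilde A} \;=\; (U\otimes \tilde U)F^{A\tilde A}\ket{ij}^{A\tilde A},
\end{equation*}
where the second equality is obtained by swapping the summation indices $k \leftrightarrow l$ after applying $F^{A\tilde A}$ first and then $U\otimes \tilde U$. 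Extending by linearity gives the commutation on all of $A\otimes \tilde A$.

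An even slicker route for the symmetric case, which avoids any index manipulation, is to invoke the preceding lemma characterising $\Pi^{A\tilde A}_{sym}$ as the span of ``doubled'' product vectors $\ket{\psi}^{A}\ket{\psi}^{\tilde A}$. Then $(U \otimes \tilde U)(\ket{\psi}^A\ket{\psi}^{\tilde A}) = (U\ket{\psi})^A (\tilde U \ket{\psi})^{\tilde A} = \ket{\phi}^A\ket{\phi}^{\tilde A}$ with $\ket{\phi}=U\ket{\psi}$, which is again of the required doubled form and hence lies in $\Pi^{A\tilde A}_{sym}$. The anti-symmetric case then follows for free: since $U \otimes \tilde U$ is unitary, it preserves the orthogonal complement of any invariant subspace, and $\Pi^{A\tilde A}_{anti} = (\Pi^{A\tilde A}_{sym})^{\perp}$.

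I do not anticipate a genuine obstacle here; the only subtle point is the definition of the ``copy'' operator $\tilde U$, which must be interpreted as having the \emph{same} matrix elements as $U$ in the corresponding computational bases (not the complex conjugate, which is what would appear if $\tilde A$ were the dual of $A$). With that convention fixed, the commutation with $F^{A\tilde A}$ is immediate, and both invariance statements in the corollary follow.
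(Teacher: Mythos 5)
Your proof is correct, and your main route is genuinely different in organization from the paper's. The paper argues case by case: for the symmetric part it expands $\ket{\psi}^{A\tilde{A}}$ in the ``doubled'' form $\sum_i \beta_i \ket{\psi_i}^A\ket{\psi_i}^{\tilde{A}}$ supplied by the preceding lemma and checks that $F^{A\tilde{A}}$ fixes $(U\otimes\tilde{U})\ket{\psi}^{A\tilde{A}}$, while for the antisymmetric part it expands in the basis $\ket{ij}^{A\tilde{A}}-\ket{ji}^{A\tilde{A}}$ and verifies by direct computation that the eigenvalue $-1$ is preserved. You instead prove once, by a matrix-element calculation, the commutation $F^{A\tilde{A}}(U\otimes\tilde{U})=(U\otimes\tilde{U})F^{A\tilde{A}}$ and then read off both invariance statements from the fact that $\Pi^{A\tilde{A}}_{sym}$ and $\Pi^{A\tilde{A}}_{anti}$ are the $\pm 1$ eigenspaces of $F^{A\tilde{A}}$; your index swap $k\leftrightarrow l$ is valid precisely because $\tilde{U}$ is the same-matrix-element copy of $U$ (not its conjugate), the caveat you correctly flag and the same convention the paper uses. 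Your argument is more economical and conceptually cleaner—it is exactly the statement that $U\otimes\tilde{U}$ lies in the commutant of the swap, which is the structural fact the appendix's twirling computation (Proposition \ref{prop:SchurApplication}) rests on—whereas the paper's explicit basis expansions stay deliberately elementary, in keeping with its stated aim of avoiding representation-theoretic shortcuts. Your secondary route's symmetric half coincides with the paper's; passing to the antisymmetric case via orthogonal complements is also fine in finite dimensions, since a unitary that maps a subspace into itself maps it onto itself and hence preserves its orthogonal complement.
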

\begin{proof}
  From the preceding lemma, we write the vector $\ket{\psi}^{A\tilde{A}}$ as
  \begin{equation*}
  \begin{split}
  \ket{\psi}^{A\tilde{A}} &= \sum_{1\leq i < j\leq d_A} \alpha_{ij}[ (\ket{i}^A+\ket{j}^A)(\ket{i}^{\tilde{A}}+\ket{j}^{\tilde{A}})-\ket{ii}^{A\tilde{A}}-\ket{jj}^{A\tilde{A}} ] + \sum^{d_A}_{i=1} \alpha_{ii} \ket{ii}^{A\tilde{A}} \\
  & =: \sum_i \beta_i \ket{\psi_i}^{A}\ket{\psi_i}^{\tilde{A}}.
  \end{split}
  \end{equation*}
 Applying the unitary $(U \otimes \tilde{U})$ to this vector, followed by the swap operator $F^{A\tilde{A}}$, we have
  \begin{equation*}
  F^{A\tilde{A}}(U \otimes \tilde{U}) \ket{\psi}^{A\tilde{A}} = F (\sum_i \beta_i U\ket{\psi_i}^{A}\tilde{U}\ket{\psi_i}^{\tilde{A}}) = \sum_i \beta_i U\ket{\psi_i}^{A} \tilde{U}\ket{\psi_i}^{\tilde{A}}.
  \end{equation*}
Hence, the vector $(U \otimes \tilde{U}) \ket{\psi}^{A\tilde{A}}$ is in the symmetric subspace. To prove the second statement, we proceed similarly by writing the vector $\ket{\phi}^{A\tilde{A}} \in \Pi^{A\tilde{A}}_{anti}$ as
 \begin{equation*}
  \ket{\phi}^{A\tilde{A}} = \sum_{1\leq i < j\leq d_A} \alpha'_{ij} (\ket{ij}^{A\tilde{A}} - \ket{ji}^{A\tilde{A}}).
  \end{equation*}
Applying the swap operator to $(U \otimes \tilde{U})\ket{\phi}^{A\tilde{A}}$, we get
\begin{equation*}
\begin{split}
  F(U \otimes \tilde{U})\ket{\phi}^{A\tilde{A}} &= F (\sum_{1\leq i < j\leq d_A} \alpha'_{ij} (U\ket{i}^A\tilde{U}\ket{j}^{\tilde{A}} - U\ket{j}^A\tilde{U}\ket{i}^{\tilde{A}}) \\
  &= \sum_{1\leq i < j\leq d_A} \alpha'_{ij} (U\ket{j}^A\tilde{U}\ket{i}^{\tilde{A}} - U\ket{i}^A\tilde{U}\ket{j}^{\tilde{A}} \\
  &= - (U \otimes \tilde{U})\ket{\phi}^{A\tilde{A}}.
 \end{split}
\end{equation*}
Hence, the anti-symmetric subspace is also \textit{invariant} under unitaries of the form $(U \otimes \tilde{U})$.
\end{proof}

\begin{lemma}[The swap trick]\label{lem:swaptrick}
Let $X^A$ be any operator acting on a vector space $A$ of dimension $d_A$. Then, we have
\begin{equation}\label{eq:swaptrick}
 \Tr [ (X^A)^2] = \Tr [ (X^A \otimes X^{\tilde{A}}) F^{A\tilde{A}}],
\end{equation}
where $X^{\tilde{A}}$ is a ``copy'' of $X$ acting on an isomorphic space $\tilde{A}$ of $A$.
\end{lemma}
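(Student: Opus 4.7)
The plan is to verify the identity by direct computation in a fixed orthonormal basis, since both sides of eq.~(\ref{eq:swaptrick}) are multilinear scalar functions of the matrix entries of $X$. First I would expand $X^A=\sum_{ij}x_{ij}\ket{i}\bra{j}^A$ in the computational basis of $A$, and likewise write $X^{\tilde{A}}=\sum_{kl}x_{kl}\ket{k}\bra{l}^{\tilde{A}}$ on the isomorphic copy $\tilde{A}$. A short calculation then gives $\Tr[(X^A)^2]=\sum_{i,j}x_{ij}x_{ji}$, which will serve as the target expression on the left.

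Next I would rewrite the swap operator in the same basis as $F^{A\tilde{A}}=\sum_{m,n}\ket{mn}\bra{nm}$, using the defining action $F^{A\tilde{A}}\ket{i}^A\ket{j}^{\tilde{A}}=\ket{j}^A\ket{i}^{\tilde{A}}$ recorded at the start of this section. Plugging into the right hand side of eq.~(\ref{eq:swaptrick}) and taking the trace gives
\[
\Tr\bigl[(X^A\otimes X^{\tilde{A}})F^{A\tilde{A}}\bigr]
=\sum_{m,n}\bra{mn}(X^A\otimes X^{\tilde{A}})\ket{nm}
=\sum_{m,n}x_{mn}\,x_{nm},
\]
which matches $\Tr[(X^A)^2]$ after relabeling the summation indices. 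There is no real obstacle here: the only step needing care is tracking the order of indices produced by the swap, which is purely bookkeeping. An alternative, basis-free derivation I might mention as a remark is to observe that $F^{A\tilde{A}}$ is the Choi-like operator implementing the flip, so that $\Tr[(Y\otimes Z)F]=\Tr[YZ]$ for all operators $Y,Z$ on $A$; specializing to $Y=Z=X$ recovers the statement immediately.
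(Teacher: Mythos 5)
Your proof is correct and follows essentially the same route as the paper: a direct expansion of both sides in the computational basis, with the identity reducing to matching the index pattern $\sum_{m,n}x_{mn}x_{nm}$ on each side. The closing remark that $\Tr[(Y\otimes Z)F]=\Tr[YZ]$ for all $Y,Z$ is a harmless and slightly more general restatement of the same computation.
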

\begin{proof}
Let's expand $(X^A)^2$ (we remove some of the notation for clarity):
\begin{equation*}
\begin{split}
  (X^A)^2 &= \sum_{ij} \lambda_{ij} \ket{i}\bra{j} \sum_{pq}\lambda_{pq} \ket{p}\bra{q} \\
     &= \sum_{ijpq} \lambda_{ij}\lambda_{pq} \delta_{j,p} \ket{i}\bra{q}. \\
     &= \sum_{ipq} \lambda_{ip}\lambda_{pq} \ket{i}\bra{q}. \\
 \end{split}
\end{equation*}
Hence, $\Tr[ (X^A)^2] = \sum^{d_A}_{p=1}\sum^{d_A}_{q=1} \lambda_{qp}\lambda_{pq}$. Now, expand the right hand side of eq.~(\ref{eq:swaptrick}):
\begin{equation*}
\begin{split}
  \Tr [ (X^A \otimes X^{\tilde{A}}) F^{A\tilde{A}}] &= \Tr \bigg [ \sum_{ijpq} \lambda_{ij}\lambda_{pq} \ket{ip}\bra{jq} F^{A\tilde{A}}\bigg ] \\
  &=  \Tr \bigg [ \sum_{ijpq} \lambda_{ij}\lambda_{pq} \ket{pi}\bra{jq} \bigg ]\\
  &= \sum_{ijpq} \lambda_{ij}\lambda_{pq} \delta_{p,j} \delta_{i,q}\\
  &= \sum^{d_A}_{p=1}\sum^{d_A}_{q=1} \lambda_{qp}\lambda_{pq}\\
  &= \Tr[ (X^A)^2]. \\
\end{split}
\end{equation*}
\end{proof}
\begin{lemma}\label{lem:swaptensor}
Let $A$ and $R$ be any two arbitrary vector spaces of dimensions $d_A$ and $d_R$ respectively. Then, we have
\begin{equation}
  F^{AR,\widetilde{AR}} = F^{A\tilde{A}} \otimes F^{R\tilde{R}},
\end{equation}
where $\widetilde{AR}:=\ti{A}\otimes\ti{R}$. Here, $\ti{A}$ and $\ti{R}$ are isomorphic vector spaces of $A$ and $R$ respectively.
\end{lemma}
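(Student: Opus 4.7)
The plan is to verify the identity by checking the action of both sides on an arbitrary product basis of $AR\otimes\widetilde{AR}$, since a linear operator on a finite-dimensional space is determined by its action on a basis. The identity is essentially a bookkeeping statement about how the swap operator is compatible with the tensor structure of its arguments.

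First, I would fix computational bases $\{\ket{i}^A\}_{i=1}^{d_A}$ and $\{\ket{k}^R\}_{k=1}^{d_R}$ for $A$ and $R$, and copies of these for $\tilde{A}$ and $\tilde{R}$. A basis for $AR \otimes \widetilde{AR}$ is then given by the vectors $\ket{i}^A\ket{k}^R\ket{j}^{\tilde{A}}\ket{l}^{\tilde{R}}$ as $i,j$ range over $\{1,\ldots,d_A\}$ and $k,l$ over $\{1,\ldots,d_R\}$. By the definition of the swap operator on $AR\otimes\widetilde{AR}$ (viewing $AR$ and $\widetilde{AR}$ as the two isomorphic factors),
\[
F^{AR,\widetilde{AR}}\bigl(\ket{i}^A\ket{k}^R\ket{j}^{\tilde{A}}\ket{l}^{\tilde{R}}\bigr)
= \ket{j}^A\ket{l}^R\ket{i}^{\tilde{A}}\ket{k}^{\tilde{R}}.
\]

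Next I would compute the action of $F^{A\tilde{A}} \otimes F^{R\tilde{R}}$ on the same basis vector. Here the tensor product structure is to be read as acting on the systems $A\tilde{A}$ and $R\tilde{R}$ respectively, so it is convenient to (mentally) reorder the tensor factors into $\ket{i}^A\ket{j}^{\tilde{A}}\otimes \ket{k}^R\ket{l}^{\tilde{R}}$. Applying the two swaps independently gives $\ket{j}^A\ket{i}^{\tilde{A}}\otimes\ket{l}^R\ket{k}^{\tilde{R}}$, which, upon reordering back to the $AR\widetilde{AR}$ grouping, is exactly $\ket{j}^A\ket{l}^R\ket{i}^{\tilde{A}}\ket{k}^{\tilde{R}}$. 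The two sides therefore agree on every basis vector, and by linearity they agree as operators on $AR\otimes\widetilde{AR}$.

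The only subtle point (and the one piece of the argument worth stating carefully) is the identification of the tensor factor ordering: the operator $F^{A\tilde{A}}\otimes F^{R\tilde{R}}$ naturally lives on $(A\tilde{A})\otimes(R\tilde{R})$, whereas $F^{AR,\widetilde{AR}}$ lives on $(AR)\otimes(\widetilde{AR})$. These two Hilbert spaces are canonically isomorphic via the obvious reshuffling of factors, and the lemma is the statement that the two swap operators coincide under that isomorphism. Apart from this notational care, the proof is a one-line check on basis elements, so I do not anticipate any real obstacle.
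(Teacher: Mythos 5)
Your proof is correct and follows essentially the same route as the paper: both verify the identity on the product basis $\ket{ij}^{AR}\ket{mn}^{\widetilde{AR}}$, apply the composite swap, and reorder the tensor factors to identify the result with $(F^{A\tilde{A}}\otimes F^{R\tilde{R}})$ acting on the same vector. Your explicit remark about the canonical reshuffling isomorphism between $(AR)\otimes(\widetilde{AR})$ and $(A\tilde{A})\otimes(R\tilde{R})$ is exactly the implicit step in the paper's third line, so nothing is missing.
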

\begin{proof}
The elements of the computational basis $\{\ket{k}^{AR}\}^{d_Ad_R}_{k=1}$ can be rewritten as $\{\ket{ij}^{AR}\}^{d_A,d_R}_{i=1,j=1}$ by identifying $k$ with $d_A(j-1)+i$. We have
\begin{equation*}
\begin{split}
 F^{AR,\widetilde{AR}}\ket{ij}^{AR}\ket{mn}^{\widetilde{AR}} &= \ket{mn}^{AR}\ket{ij}^{\widetilde{AR}}\\
 & = \ket{m}^A\ket{n}^R \ket{i}^{\tilde{A}}\ket{j}^{\tilde{R}} \\
 &= \ket{m}^A\ket{i}^{\tilde{A}}\ket{n}^R\ket{j}^{\tilde{R}} \\
 &= F^{A\tilde{A}}\ket{i}^A\ket{m}^{\tilde{A}} \otimes F^{R\tilde{R}}\ket{j}^R\ket{n}^{\tilde{R}} \\
 &= (F^{A\tilde{A}} \otimes F^{R\tilde{R}})\ket{ij}^{AR}\ket{mn}^{\tilde{A}\tilde{R}}.\\
\end{split}
\end{equation*}
Since this holds for any $1\leq i,m \leq d_A$ and $1\leq j,n\leq d_R$, we are done.
\end{proof}
 \section{Averages over the unitary group}

 \begin{lemma}\label{lem:Haar1}
  Let $\{\ket{e_k}\}^{d_A}_{k=1}$ be an orthonormal basis of a space $A$. For all $i,j$ with $1 \leq i,j \leq d_A$, we have
  \begin{equation}
   \label{eq:Haar1}
    \int_{\mathbb{U}(A)}U \ket{e_i}\bra{e_j}^A U^{\dag} dU = \delta_{i,j} \frac{I^A}{d_A},
  \end{equation}
  where the average is taken over the unitary group $\mathbb{U}(A)$ using the Haar measure.
 \end{lemma}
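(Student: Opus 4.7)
\medskip

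\textbf{Proof proposal.} The plan is to exploit two invariance properties of the Haar measure on $\mathbb{U}(A)$: left-invariance $d(VU)=dU$ and right-invariance $d(UV)=dU$ for any fixed unitary $V$. Denote the integral by $M_{ij} := \int_{\mathbb{U}(A)} U\ket{e_i}\bra{e_j}^A U^\dag \, dU$. I will handle the off-diagonal case ($i\neq j$) and diagonal case ($i=j$) separately, using only sign-flip operators and permutation matrices, thereby avoiding Schur's lemma as advertised in the chapter-contribution summary.

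For $i\neq j$, I would use the sign-flip unitary $V_i$ defined by $V_i\ket{e_i}=-\ket{e_i}$ and $V_i\ket{e_k}=\ket{e_k}$ for $k\neq i$. Applying right-invariance (substituting $U \mapsto UV_i$),
\[
M_{ij} \;=\; \int U V_i \ket{e_i}\bra{e_j} V_i^\dag U^\dag \, dU \;=\; \int U(-\ket{e_i})\bra{e_j} U^\dag \, dU \;=\; -M_{ij},
\]
so $M_{ij}=0$. Here the key point is that $V_i$ flips the sign on $\ket{e_i}$ but fixes $\ket{e_j}$ (using $i\neq j$), giving a single overall minus sign.

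For $i=j$, my first step is to show $M_{ii}$ is diagonal in the basis $\{\ket{e_k}\}$, and my second step is to show all its diagonal entries coincide. For the first step, I apply left-invariance with the sign-flip $V_k$ (for arbitrary $k$) to get $V_k M_{ii} V_k^\dag = M_{ii}$. Expanding $M_{ii}=\sum_{p,q} m_{pq}\ket{e_p}\bra{e_q}$, the conjugation by $V_k$ multiplies $m_{pq}$ by $(-1)^{\delta_{pk}+\delta_{qk}}$, so the equation forces $m_{pq}=0$ whenever exactly one of $p,q$ equals $k$. Ranging over all $k$, every off-diagonal coefficient must vanish. For the second step, let $P_{jk}$ be the permutation unitary that swaps $\ket{e_j}$ and $\ket{e_k}$; left-invariance gives $P_{jk} M_{ii} P_{jk}^\dag = M_{ii}$, which swaps the $j$-th and $k$-th diagonal entries, so they are equal. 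Combined with the normalization $\Tr M_{ii} = \Tr\ket{e_i}\bra{e_i}=1$, this yields $M_{ii}=I^A/d_A$.

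I do not expect any serious obstacle here: the only subtlety to watch is keeping straight which side the sign-flip or permutation acts on (left-invariance is used to produce identities of the form $W M_{ii} W^\dag = M_{ii}$, while right-invariance is used to produce cancellations of the form $M_{ij}=-M_{ij}$). This lemma is the warm-up for the richer invariance argument needed in Proposition~\ref{prop:twirl} (the twirling average), where Hadamard transformations come in addition to sign-flips and permutations; the present statement is precisely the case where sign-flips and permutations already suffice.
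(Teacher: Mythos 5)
Your proof is correct, and every step checks out: the right-invariance substitution $U\mapsto UV_i$ does give $M_{ij}=-M_{ij}$ for $i\neq j$, the sign-flip conjugations force $M_{ii}$ to be diagonal in the fixed basis, the transpositions equalize the diagonal entries, and $\Tr M_{ii}=1$ fixes the constant. The route differs from the paper's in its organization, though both rest on the same invariance tool. The paper does not split into the cases $i\neq j$ and $i=i$ and does not use sign-flip unitaries here at all: it first computes $\Tr\bigl[\int U\ket{e_i}\bra{e_j}U^{\dag}dU\bigr]=\delta_{i,j}$ by cyclicity, then proves the averaged operator is hermitian by conjugating with the swap of $\ket{e_i}$ and $\ket{e_j}$ together with invariance, then writes it in its spectral decomposition and uses left-invariance under permutations of its \emph{eigenvectors} to show all eigenvalues coincide; the trace value $\delta_{i,j}$ then forces the eigenvalues to be $0$ (off-diagonal case) or $1/d_A$ (diagonal case). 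Your version is slightly more elementary, since it never needs the hermiticity step or a spectral decomposition and works entry-by-entry in the given basis $\{\ket{e_k}\}$; what the paper's version buys is a single uniform argument for both cases and a structure (invariance of the average under conjugation by arbitrary unitaries, detected on the eigenbasis) that foreshadows the symmetric/antisymmetric decomposition used in Proposition~\ref{prop:SchurApplication}. Your off-diagonal sign-flip argument is, in fact, exactly the style of the paper's own Lemmas~\ref{lem:zeroanti1} and~\ref{lem:zeroanti2} used later for the twirling average, so your proposal is fully in the spirit of the appendix even though it is not the proof given for this particular lemma.
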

 \begin{proof}
 Let $P(\ket{e_i}\bra{e_j}) =  \int_{\mathbb{U}(A)}U \ket{e_i}\bra{e_j}^A U^{\dag} dU$. Using the linearity of the trace, we have
\begin{equation*}
  \begin{split}
 \Tr \bigg [ P(\ket{e_i}\bra{e_j}) \bigg] &= \int_{\mathbb{U}(A)} \Tr \bigg [ U \ket{e_i}\bra{e_j}^A U^{\dag} \bigg ] dU \\
 &= \int_{\mathbb{U}(A)} \Tr \bigg [ U^{\dag} U \ket{e_i}\bra{e_j}^A \bigg ] dU \\
  &= \delta_{i,j},
  \end{split}
 \end{equation*}
 where the last step follows since $U^{\dag}U = I^A$ and $\int_{\mathbb{U}(A)} dU = 1$. (The Haar measure on a topological compact group is also a probability measure).
 The operator $P(\ket{e_i}\bra{e_j})$ is hermitian. To see this, choose any unitary operator $V$ such that $V\ket{e_i}^A = \ket{e_j}^A$ and $V\ket{e_j}^A=\ket{e_i}^A$ and use the right invariance of the Haar measure on the unitary group $\mathbb{U}(A)$ \footnote{The Haar measure is left and right invariant for compact topological groups.}:
 \begin{equation*}
   \begin{split}
     P(\ket{e_i}\bra{e_j}) = P(V\ket{e_i}\bra{e_j}V^{\dag}) &= \int_{\mathbb{U}(A)}UV \ket{e_i}\bra{e_j}^A V^{\dag}U^{\dag} dU \\
     &= \int_{\mathbb{U}(A)}U \ket{e_j}\bra{e_i}^A U^{\dag} dU \\
     &= P(\ket{e_i}\bra{e_j})^{\dag}. \\
   \end{split}
 \end{equation*}
Finally, write $P(\ket{e_i}\bra{e_j})$ in its spectral decomposition as $\sum^{d_A}_{k=1} \lambda_k \braket{\mu_k}^A$ and let $\pi : \{1,2,\ldots,d_A\} \rightarrow \{1,2,\ldots, d_A\}$ be any permutation of the set $\{1,2,\ldots,d_A\}$. Define the unitary $V_{\pi}$ such that $V_{\pi} \ket{\mu_k}^A = \ket{\mu_{\pi(k)}}^A$ for all $1 \leq k \leq d_A$. Using the left-invariance of the Haar measure, we have
\begin{equation*}
  \begin{split}
     \sum_k \lambda_k \braket{\mu_{\pi(k)}} &= V_{\pi}P(\ket{e_i}\bra{e_j})V_{\pi}^{\dag}  \\
      & = P(\ket{e_i}\bra{e_j}) = \sum_k \lambda_k \braket{\mu_k}. \\
  \end{split}
\end{equation*}
Since the eigenvectors $\ket{\mu_k}$ are orthogonal, we have
\begin{equation*}
  \lambda_{\pi^{-1}(k)} = \Tr [ V_{\pi}P(\ket{e_i}\bra{e_j})V_{\pi}^{\dag} \braket{\mu_k}] = \Tr [ P(\ket{e_i}\bra{e_j}) \braket{\mu_k} ] = \lambda_k
\end{equation*}
for any permutation $\pi$. Hence, all the eigenvalues of $P(\ket{e_i}\bra{e_j})$ must be the same, and since $\Tr[P(\ket{e_i}\bra{e_j})] = \delta_{i,j}$, we either have $\lambda_k = 0$ for all $k$ when $i\neq j$ or $\lambda_k = \frac{1}{d_A}$ for all $k$ when $i=j$.
\end{proof}

\begin{lemma}\label{lem:Haar2}
   Let $\psi^{AR}$ be any mixed bipartite state of the system $AR$. We have,
  \begin{equation}
   \label{eq:Haar2}
    \int_{\mathbb{U}(A)}(U \otimes I^R)\psi^{AR} (U \otimes I^R)^{\dag} dU = \frac{I^A}{d_A} \otimes \psi^R,
  \end{equation}
where the average is taken over the unitary group $\mathbb{U}(A)$ using the Haar measure.
\end{lemma}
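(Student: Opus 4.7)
The plan is to reduce this to Lemma \ref{lem:Haar1} by expanding $\psi^{AR}$ in a product basis and using linearity of the integral. First, I would fix orthonormal bases $\{\ket{e_i}^A\}_{i=1}^{d_A}$ of $A$ and $\{\ket{f_k}^R\}_{k=1}^{d_R}$ of $R$, and write
\begin{equation*}
\psi^{AR} = \sum_{i,j,k,l} c_{ij,kl}\, \ket{e_i}\bra{e_j}^A \otimes \ket{f_k}\bra{f_l}^R
\end{equation*}
for some coefficients $c_{ij,kl} \in \mathbb{C}$. Since the action $X \mapsto (U \otimes I^R) X (U \otimes I^R)^{\dag}$ is linear in $X$, and the integral over the unitary group is itself a linear operation on operators, I can interchange the sum and the integral.

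Next, for each fixed pair $(k,l)$ the integrand splits as $\bigl(U\ket{e_i}\bra{e_j}^A U^{\dag}\bigr) \otimes \ket{f_k}\bra{f_l}^R$, and Lemma \ref{lem:Haar1} gives
\begin{equation*}
\int_{\mathbb{U}(A)} U\ket{e_i}\bra{e_j}^A U^{\dag}\, dU = \delta_{i,j}\,\frac{I^A}{d_A}.
\end{equation*}
Substituting back, only the diagonal $i = j$ terms survive, and the result becomes
\begin{equation*}
\frac{I^A}{d_A} \otimes \sum_{i,k,l} c_{ii,kl}\, \ket{f_k}\bra{f_l}^R.
\end{equation*}
The remaining step is to recognize the $R$-operator on the right as $\psi^R = \Tr_A \psi^{AR}$: computing the partial trace of $\psi^{AR}$ in the $\{\ket{e_i}^A\}$ basis yields $\psi^R = \sum_{i,k,l} c_{ii,kl}\, \ket{f_k}\bra{f_l}^R$, which matches exactly.

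There is essentially no technical obstacle here — the lemma is a straightforward corollary of Lemma \ref{lem:Haar1} combined with the definition of the partial trace. The only thing to be slightly careful about is justifying the interchange of the finite sum with the integral over the compact group $\mathbb{U}(A)$, but this is immediate since the sum is finite and the integrand depends continuously on $U$. An alternative, more coordinate-free presentation would be to apply Lemma \ref{lem:Haar1} directly to the operator-valued function $U \mapsto U X U^{\dag}$ viewed as acting only on the $A$-factor of $\mathcal{L}(A) \otimes \mathcal{L}(R)$, which is exactly the tensor extension $(\mathrm{id}_{\mathcal{L}(R)} \otimes \mathcal{T})(\psi^{AR})$ where $\mathcal{T}(X) = \Tr(X)\, I^A / d_A$ is the completely depolarizing channel on $A$; this gives the stated identity in one line.
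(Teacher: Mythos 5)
Your proposal is correct and follows essentially the same route as the paper: expand $\psi^{AR}$ in a product basis, use linearity of the integral to reduce to Lemma \ref{lem:Haar1}, keep only the $\delta_{i,j}$ terms, and identify the surviving $R$-operator with $\Tr_A\psi^{AR}=\psi^R$. No further comment is needed.
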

\begin{proof}
Given orthonormal bases $\{\ket{e_i}\}^{d_A}_{i=1}$ and $\{\ket{f_i}\}^{d_R}_{i=1}$ of the systems $A$ and $R$, write the state $\psi^{AR}$ as $\sum_{ijkl} \lambda_{ijkl} \ket{e_if_j}\bra{e_kf_l}^{AR}$. Using the linearity of the integral, we have
\begin{equation*}
  \begin{split}
    \int_{\mathbb{U}(A)}(U \otimes I^R)\psi^{AR} &(U \otimes I^R)^{\dag} dU \\
    &=\sum_{ijkl} \lambda_{ijkl} \int_{\mathbb{U}(A)}(U \otimes I^R)\ket{e_i}\bra{e_k}^A \otimes \ket{f_j}\bra{f_l}^R (U \otimes I^R)^{\dag} dU \\
    &=\sum_{ijkl} \lambda_{ijkl} \bigg [ \int_{\mathbb{U}(A)}U \ket{e_i}\bra{e_k}^A U^{\dag} dU \bigg ] \otimes \ket{f_j}\bra{f_l}^R  \\
    &=\sum_{ijkl} \lambda_{ijkl} \delta_{i,k} \frac{I^A}{d_A} \otimes \ket{f_j}\bra{f_l}^R \\
    &= \frac{I^A}{d_A} \otimes \sum_{ijl} \lambda_{ijil} \ket{f_j}\bra{f_l}^R \\
    &= \frac{I^A}{d_A} \otimes \psi^{R},\\
  \end{split}
\end{equation*}
where we have used Lemma \ref{lem:Haar1} to get the fourth line.
\end{proof}

\begin{lemma}\label{lem:tensorlemma}
 Consider any arbitrary mixed state $\psi^{C_MR}$ of the systems $C_M$ and $R$, where $C_M = C_1 \otimes C_2 \otimes \ldots \otimes C_m$. Let $Q_i$ be a projector of rank $L_i$ onto a subspace $C^1_i$ of $C_i$ and $U_i$ a unitary acting on $C_i$. Define the sub-normalized density operator
\begin{equation*}
 \omeg := (Q_1U_1 \otimes Q_2U_2 \otimes \ldots \otimes Q_mU_m \otimes I^{R}) \psi^{C_MR}
(Q_1U_1 \otimes Q_2U_2  \otimes \ldots \otimes Q_mU_m \otimes I^{R})^{\dag},
\end{equation*}
where $U_M:=U_1 \otimes U_2 \otimes \ldots \otimes U_m$. Then, we have
  \begin{equation}
   \label{eq:Haar3}
 \int_{\mathbb{U}(C_1)}\int_{\mathbb{U}(C_2)}\cdots\int_{\mathbb{U}(C_m)} \omeg dU_M = \frac{L_M}{d_{C_M}} \tau^{C^1_M} \otimes \psi^{R},
  \end{equation}
where the average is taken over the unitary groups $\mathbb{U}(C_1), \mathbb{U}(C_2), \ldots, \mathbb{U}(C_m)$ using the Haar measure. Here $dU_M = dU_1 dU_2 \ldots dU_m$, with $\int_{\mathbb{U}(C_i)} dU_i = 1$ and the average is over unitaries of the form $U_1 \otimes U_2 \otimes \ldots \otimes U_m$.
\end{lemma}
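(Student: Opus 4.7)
The plan is to prove the lemma by induction on the number of senders $m$, using Lemma~\ref{lem:Haar2} as the inductive engine. The key structural observation is that the operators $Q_i U_i$ act on mutually disjoint tensor factors $C_i$, so the Haar integrals decouple and can be performed sequentially. In particular, projectors $Q_j$ and unitaries $U_j$ with $j \neq 1$ commute with everything acting on $C_1$, which lets me peel off one sender at a time.

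First I would isolate the $U_1$ integration. Writing
\begin{equation*}
\omega^{C^1_M R}(U_M) = (Q_1 U_1 \otimes I^{C_2 \ldots C_m R})\, \eta(U_2,\ldots,U_m)\, (Q_1 U_1 \otimes I^{C_2 \ldots C_m R})^{\dag},
\end{equation*}
where
\begin{equation*}
\eta(U_2,\ldots,U_m) := (I^{C_1} \otimes Q_2 U_2 \otimes \cdots \otimes Q_m U_m \otimes I^R)\, \psi^{C_M R}\, (I^{C_1} \otimes Q_2 U_2 \otimes \cdots \otimes Q_m U_m \otimes I^R)^{\dag},
\end{equation*}
I would pull the deterministic projector $Q_1$ outside the $U_1$ integral, and apply Lemma~\ref{lem:Haar2} (with $A=C_1$ and the rest of the systems playing the role of $R$) to obtain
\begin{equation*}
\int_{\mathbb{U}(C_1)} (U_1 \otimes I)\, \eta(U_2,\ldots,U_m)\, (U_1 \otimes I)^{\dag}\, dU_1 = \tau^{C_1} \otimes \Tr_{C_1}[\eta(U_2,\ldots,U_m)].
\end{equation*}
Conjugating by $Q_1$ then gives $Q_1 \tau^{C_1} Q_1 = \tfrac{1}{d_{C_1}} Q_1 = \tfrac{L_1}{d_{C_1}} \tau^{C^1_1}$, since $Q_1$ is an idempotent of rank $L_1$ and $\tau^{C_1}=I^{C_1}/d_{C_1}$.

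The second step is to observe that the partial trace $\Tr_{C_1}[\eta(U_2,\ldots,U_m)]$ reproduces exactly the same structure as the original operator, but with $m-1$ senders and with $\psi^{C_M R}$ replaced by the reduced state $\psi^{C_2 \ldots C_m R} = \Tr_{C_1}[\psi^{C_M R}]$; this is because the $Q_i U_i$ for $i \geq 2$ are of the form $I^{C_1} \otimes (\cdot)$ and can be pulled outside the partial trace. Hence after integrating out $U_1$, what remains is
\begin{equation*}
\frac{L_1}{d_{C_1}}\, \tau^{C^1_1} \otimes (Q_2 U_2 \otimes \cdots \otimes Q_m U_m \otimes I^R)\, \psi^{C_2 \ldots C_m R}\, (Q_2 U_2 \otimes \cdots \otimes Q_m U_m \otimes I^R)^{\dag},
\end{equation*}
which is precisely the $(m-1)$-sender version of the left-hand side of eq.~(\ref{eq:Haar3}) multiplied by the overall factor $\tfrac{L_1}{d_{C_1}} \tau^{C^1_1}$. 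Iterating this procedure $m$ times (or invoking induction with the $m=1$ case handled directly by Lemma~\ref{lem:Haar2} combined with the projector identity above) yields the claimed
\begin{equation*}
\frac{L_1 L_2 \cdots L_m}{d_{C_1} d_{C_2} \cdots d_{C_m}}\, \tau^{C^1_1} \otimes \tau^{C^1_2} \otimes \cdots \otimes \tau^{C^1_m} \otimes \psi^R = \frac{L_M}{d_{C_M}}\, \tau^{C^1_M} \otimes \psi^R.
\end{equation*}
There is no genuine obstacle here; the only mild bookkeeping step is verifying that the projector/trace identity $Q_1 \tau^{C_1} Q_1 = \tfrac{L_1}{d_{C_1}} \tau^{C^1_1}$ correctly lands us in the reduced subspace $C^1_1$ so that the induction hypothesis applies cleanly to the remaining systems.
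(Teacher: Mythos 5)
Your proposal is correct and follows essentially the same route as the paper: both arguments integrate out one unitary at a time via Lemma~\ref{lem:Haar2} and then use the identity $Q_i\,\tau^{C_i}\,Q_i = \frac{L_i}{d_{C_i}}\tau^{C^1_i}$ to collapse the result onto the subspace $C^1_i$, the only cosmetic differences being the order in which the senders are peeled off and your framing as induction versus the paper's explicit sequential evaluation with all projectors $Q_M$ pulled outside at the start. The one point you should make explicit is that Lemma~\ref{lem:Haar2} is stated for density operators but is applied in your argument to the sub-normalized operator $\eta(U_2,\ldots,U_m)$; this is harmless because the lemma's proof is linear in the input operator, which is precisely why the paper sidesteps the issue by keeping the projectors outside every integral.
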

\begin{proof}
Let \[D(\psi^{C_MR}) := \int_{\mathbb{U}(C_m)} (U_m \otimes I^{C_{M-1}R}) \psi^{C_MR} (U_m \otimes I^{C_{M-1}R})^{\dag} dU_m,\]
and write the integral $\int_{\mathbb{U}(C_1)}\int_{\mathbb{U}(C_2)}\cdots\int_{\mathbb{U}(C_m)}dU_M$ as $\int_{\mathbb{U}(C_M)}dU_M$. Additionally, define $Q_M := Q_1 \otimes Q_2 \otimes \ldots \otimes Q_m$.  Using Lemma \ref{lem:Haar2}, we can simplify each of the integrals, starting with the inner most one:
\begin{equation*}
\begin{split}
&\int_{\mathbb{U}(C_M)} \omeg dU_M \\
&=Q_M \bigg [ \int_{\mathbb{U}(C_M)} (U_M \otimes I^R) \psi^{C_MR} (U_M \otimes I^R)^{\dag} dU_M \bigg ] Q_M \\
&=Q_M \bigg [ \int_{\mathbb{U}(C_{M-1})} (U_{M-1} \otimes I^{C_mR}) D(\psi^{C_MR}) (U_{M-1} \otimes I^{C_mR})^{\dag} dU_{M-1} \bigg ] Q_M \\
&=Q_M \bigg [ \int_{\mathbb{U}(C_{M-1})} (U_{M-1} \otimes I^{C_mR}) (\frac{I^{C_m}}{d_{C_m}} \otimes \psi^{C_{M-1}R}) (U_{M-1} \otimes I^{C_mR})^{\dag} dU_{M-1} \bigg ] Q_M \\
&=\frac{L_m}{d_{C_m}}\tau^{C^1_m}\otimes Q_{M-1}\bigg[  \int_{\mathbb{U}(C_{M-1})} (U_{M-1} \otimes I^{R}) \psi^{C_{M-1}R} (U_{M-1} \otimes I^{R})^{\dag} dU_{M-1}\bigg ] Q_{M-1}, \\
\end{split}
\end{equation*}
where in the last line we have used the fact that
\[Q_m \frac{I^{C_m}}{d_{C_m}} Q_m = \frac{I^{C^1_m}}{d_{C_m}} = \frac{L_m}{d_{C_m}}\tau^{C^1_m}.\]
Continuing in this way for the other integrals, the left hand side of eq.~(\ref{eq:Haar3}) is eventually equal to the state
\[\frac{L_1}{d_{C_1}}\tau^{C^1_1} \otimes \frac{L_2}{d_{C_2}}\tau^{C^1_2} \otimes \ldots \frac{L_m}{d_{C_m}}\tau^{C^1_m} \otimes \psi^{R} = \frac{L_M}{d_{C_M}}\tau^{C^1_M}\otimes \psi^R, \]
where we recall that $L_M := \prod^m_{i=1} L_i$ and $d_{C_M} := \prod^m_{i=1} d_{C_i}$.
\end{proof}

For the following lemmas and the main proposition, we denote by $D(X)$ the average $\int_{\mathbb{U}(A)} (U^{\dag} \otimes U^{\dag}) X (U \otimes U) dU$. We will sometimes drop the superscript notation for clarity.
\begin{lemma}\label{lem:zeroanti1}
Let $1 \leq i < j \leq d_A$ and $ 1 \leq k < l \leq d_A$. If $i \neq k$ or $j \neq l$, we have
\begin{equation*}
\int_{\mathbb{U}(A)} (U^{\dag} \otimes \tilde{U}^{\dag})\bigg ( \ket{ij}\pm\ket{ji}\bigg )\bigg (\bra{kl} \pm \bra{lk}\bigg ) (U \otimes \tilde{U}) dU = 0
\end{equation*}
\end{lemma}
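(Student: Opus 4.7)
The strategy is to exploit the left–right invariance of the Haar measure on $\mathbb{U}(A)$ to produce a nontrivial symmetry of the twirled operator. Write $D(X) := \int_{\mathbb{U}(A)} (U^{\dag} \otimes U^{\dag}) X (U \otimes U)\, dU$. Substituting $U \mapsto VU$ in the integral, for any fixed unitary $V$, and using translation-invariance of $dU$, yields the identity
\[
D(X) \;=\; D\bigl((V^{\dag} \otimes V^{\dag})\, X\, (V \otimes V)\bigr).
\]
Consequently, if one can find $V$ such that $(V^{\dag} \otimes V^{\dag})\, X\, (V \otimes V) = c\, X$ with $c \neq 1$, then $D(X) = 0$.

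The plan is to specialize $V$ to a diagonal sign-flip unitary $V_{\boldsymbol{\epsilon}} = \mathrm{diag}(\epsilon_1,\ldots,\epsilon_{d_A})$ with each $\epsilon_m \in \{+1,-1\}$. Then $V_{\boldsymbol{\epsilon}} \otimes V_{\boldsymbol{\epsilon}}$ acts diagonally on the computational basis by $|mn\rangle \mapsto \epsilon_m \epsilon_n |mn\rangle$, and a direct calculation shows that for any $s_1,s_2 \in \{+1,-1\}$,
\[
(V_{\boldsymbol{\epsilon}}^{\dag} \otimes V_{\boldsymbol{\epsilon}}^{\dag})\bigl(|ij\rangle + s_1 |ji\rangle\bigr)\bigl(\langle kl| + s_2 \langle lk|\bigr)(V_{\boldsymbol{\epsilon}} \otimes V_{\boldsymbol{\epsilon}}) \;=\; \epsilon_i \epsilon_j \epsilon_k \epsilon_l \, \bigl(|ij\rangle + s_1 |ji\rangle\bigr)\bigl(\langle kl| + s_2 \langle lk|\bigr),
\]
since the global factor $\epsilon_i \epsilon_j \epsilon_k \epsilon_l$ is the same for the $|ij\rangle\langle kl|$, $|ij\rangle\langle lk|$, $|ji\rangle\langle kl|$, and $|ji\rangle\langle lk|$ pieces. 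Thus $D(X) = \epsilon_i \epsilon_j \epsilon_k \epsilon_l \cdot D(X)$, and it suffices to exhibit a sign assignment making this coefficient equal to $-1$.

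The remaining work is a short combinatorial check, which I expect to be the only mildly delicate step. Let $n_m$ denote the multiplicity of the index $m$ in the multiset $\{i,j,k,l\}$, so that $\sum_m n_m = 4$. Then $\epsilon_i \epsilon_j \epsilon_k \epsilon_l = \prod_{m : n_m \text{ odd}} \epsilon_m$, and a $-1$ value can be arranged exactly when at least one index has odd multiplicity (take $\epsilon_{m_0} = -1$ for one such index and $+1$ elsewhere). The only way every $n_m$ is even, given the total is $4$, is to have two distinct indices each appearing twice (the case of a single index appearing four times is excluded by $i<j$), which forces $\{i,j\} = \{k,l\}$ as sets; combined with $i<j$ and $k<l$, this gives $(i,j)=(k,l)$. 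The hypothesis $i \neq k$ or $j \neq l$ therefore guarantees the existence of an index of odd multiplicity, and the lemma follows.
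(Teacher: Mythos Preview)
Your proof is correct and uses essentially the same idea as the paper: exploit left invariance of the Haar measure under diagonal sign-flip unitaries to force $D(X) = \pm D(X)$ with a minus sign. The paper handles this via a short case analysis (flip $\ket{i}$ if $i\notin\{k,l\}$; flip $\ket{i}$ and $\ket{k}$ if $i=l$, etc.), whereas you package the same mechanism in a unified combinatorial form via the parity of the multiplicities in $\{i,j,k,l\}$, which is arguably cleaner but not materially different.
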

\begin{proof}
 If $i \neq k$ and $i \neq l$, consider the unitary $V$ which flips the sign of $\ket{i}$ and fixes all other basis vectors. Using the left invariance of the Haar measure, we quickly see that these averages must be zero. If $i \neq k$ , but $i = l$, use the left invariance with the unitary $W$ which flips the signs of the vectors $\ket{i}$ and $\ket{k}$. Since $j > i = l > k$, the averages must once again be zero. The other cases are treated similarly.
\end{proof}

\begin{lemma}\label{lem:zeroanti2}
Let $1 \leq i \leq d_A$ and $ 1 \leq k < l \leq d_A$. We have
\begin{equation*}
\begin{split}
\int_{\mathbb{U}(A)} (U^{\dag} \otimes \tilde{U}^{\dag})\ket{ii}\bigg (\bra{kl}\pm\bra{lk}\bigg )(U \otimes \tilde{U}) dU &=0 \\
\int_{\mathbb{U}(A)} (U^{\dag} \otimes \tilde{U}^{\dag})\bigg (\ket{kl}\pm \ket{lk}\bigg )\bra{ii}(U \otimes \tilde{U}) dU &=0 \\
\end{split}
\end{equation*}
\end{lemma}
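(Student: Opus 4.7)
The plan is to mimic the proof strategy of Lemma \ref{lem:zeroanti1}: for each of the two identities, exhibit a sign-flip unitary $V \in \mathbb{U}(A)$ such that the integrand $X := \ket{ii}(\bra{kl} \pm \bra{lk})$ (resp.\ $X := (\ket{kl} \pm \ket{lk})\bra{ii}$) satisfies $(V^{\dag} \otimes \tilde{V}^{\dag}) X (V \otimes \tilde{V}) = -X$. By the left/right invariance of the Haar measure on $\mathbb{U}(A)$ (and the fact that the map $U \mapsto \tilde{U}$ is the same group homomorphism as before in the notation of Corollary \ref{cor:invariance}), the average is invariant under the substitution $U \mapsto UV$, so it equals both itself and its negation, hence must vanish.

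The execution splits into three cases depending on the overlap of $\{i\}$ and $\{k,l\}$; recall $k < l$.  First, if $i \notin \{k,l\}$, take $V$ to be the diagonal unitary that flips the sign of $\ket{k}$ and fixes every other computational basis vector.  Then $V\ket{i} = \ket{i}$, $V\ket{k} = -\ket{k}$, $V\ket{l} = \ket{l}$, which produces a single minus sign on each of the two terms in $\ket{ii}(\bra{kl} \pm \bra{lk})$, yielding $-X$.  Second, if $i = k$, use instead the unitary flipping the sign of $\ket{l}$; since $i = k \ne l$, we fix $\ket{i}$ and $\ket{k}$ but negate $\ket{l}$, again giving $-X$.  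Third, if $i = l$, flip the sign of $\ket{k}$; this fixes $\ket{i}=\ket{l}$ and negates $\ket{k}$, once more producing $-X$.  In all three cases the argument of the previous paragraph applies.

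For the second identity, I would simply take the Hermitian adjoint of the first: observe that
\begin{equation*}
\bigl[(U^{\dag} \otimes \tilde{U}^{\dag})\ket{ii}(\bra{kl}\pm\bra{lk})(U \otimes \tilde{U})\bigr]^{\dag} = (U^{\dag} \otimes \tilde{U}^{\dag})(\ket{kl}\pm\ket{lk})\bra{ii}(U \otimes \tilde{U}),
\end{equation*}
and since the adjoint commutes with the Haar integral, vanishing of the first average forces vanishing of the second.  Alternatively, the same sign-flip argument goes through verbatim for $(\ket{kl}\pm\ket{lk})\bra{ii}$.

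There is no real obstacle here; the only point requiring a bit of care is verifying that the sign-flip unitary $V$ gives exactly one minus sign per term (so that both $\bra{kl}$-type and $\bra{lk}$-type terms are negated, not just one), which is why the case split on whether $i$ coincides with $k$, with $l$, or with neither is necessary.  Since $k < l$ rules out $k = l$, the three cases are exhaustive, completing the proof.
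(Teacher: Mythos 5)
Your argument is correct and takes essentially the same route as the paper: conjugation by a diagonal sign-flip unitary that negates the rank-one operator, combined with invariance of the Haar measure, with the paper compressing your three cases into two (if $i \neq k$ flip the sign of $\ket{k}$, if $i = k$ flip the sign of $\ket{l}$) and disposing of both displayed averages with the same words. The only nit is that the change of variables turning $(V^{\dag}\otimes\tilde{V}^{\dag})X(V\otimes\tilde{V})=-X$ into $D(X)=-D(X)$ is $U\mapsto VU$ (left invariance) rather than $U\mapsto UV$, which by itself only shows the average commutes with $V\otimes\tilde{V}$; since the Haar measure is bi-invariant this is a purely cosmetic correction.
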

\begin{proof}
 If $i \neq k$, then apply the left invariance property of the Haar measure with a unitary which flips the sign of the vector $\ket{k}$ and fixes all other basis vectors. If $i \neq l$, use the left invariance property of the Haar measure with a unitary which flips the sign of $\ket{l}$ and fixes all other basis vectors.
\end{proof}

The following proposition was first proven in \cite{merge}. We give a different proof in this appendix which does not rely on Schur's lemma (see pg.37 in \cite{Tung}). Our proof suggests that the following result may hold for other measures than the Haar measure. That is, any measure which is invariant under permutations, sign-flip operators (i.e $V\ket{i}=-\ket{i}$) and Hadamard operations will satisfy the following proposition:
\begin{proposition}\cite{merge}\label{prop:SchurApplication}
  Let $A$ and $\tilde{A}$ be any two isomorphic vector spaces of dimensions $d_A$, and let $X$ be any operator acting on the tensor space $A\tilde{A}$. Then, we have
  \begin{equation}\label{eq:UU}
     \int_{\mathbb{U}(A)} (U^{\dag} \otimes \tilde{U}^{\dag}) X (U \otimes \tilde{U}) dU = \frac{\Tr[X \Pi^{A\tilde{A}}_{sym}]}{\Tr[\Pi^{A\tilde{A}}_{sym}]} \Pi^{A\tilde{A}}_{sym}  + \frac{\Tr[X \Pi^{A\tilde{A}}_{anti}]}{\Tr[\Pi^{A\tilde{A}}_{anti}]} \Pi^{A\tilde{A}}_{anti},
  \end{equation}
where $\tilde{U}$ is a ``copy'' version of the unitary $U$ acting on the space $\tilde{A}$.
\end{proposition}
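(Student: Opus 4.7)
The plan is to decompose $X$ in a basis of $A\otimes\tilde{A}$ adapted to the symmetric/antisymmetric decomposition, then apply sign-flip, permutation, and Hadamard invariance of the Haar measure --- together with Lemmas~\ref{lem:zeroanti1}, \ref{lem:zeroanti2} and Corollary~\ref{cor:invariance} --- to reduce the average $D(X)$ to a linear combination of $\Pi_{sym}^{A\tilde{A}}$ and $\Pi_{anti}^{A\tilde{A}}$, and finally to fix the two coefficients by taking traces.

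First, I would expand $X$ in the basis consisting of the diagonal vectors $\ket{ii}$ ($1\le i\le d_A$) and the unnormalized symmetric/antisymmetric combinations $\ket{ij}\pm\ket{ji}$ for $i<j$. By linearity of the integral it suffices to evaluate $D(\ket{e_\alpha}\bra{e_\beta})$ on each pair of such basis elements. Lemmas~\ref{lem:zeroanti1} and \ref{lem:zeroanti2} already kill every outer product in which ket and bra refer to distinct pairs $\{i,j\}\neq\{k,l\}$, or in which a diagonal $\ket{ii}$ meets an off-diagonal $\ket{kl}\pm\ket{lk}$. The surviving outer products are: (i) $\ket{ii}\bra{jj}$ for any $i,j$; (ii) the pure blocks $(\ket{ij}\pm\ket{ji})(\bra{ij}\pm\bra{ji})$ for each $i<j$; and (iii) the sym/anti cross-blocks $(\ket{ij}\pm\ket{ji})(\bra{ij}\mp\bra{ji})$.

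Next, I would kill family (iii) by left-invariance of the Haar measure: taking $V$ to be the transposition swapping $\ket{i}\leftrightarrow\ket{j}$, a direct check shows that $(V\otimes V)$-conjugation flips the sign of each such outer product while leaving its $D$-image invariant, forcing that image to vanish. After this step, $D(X)$ is supported in $\Pi_{sym}^{A\tilde{A}}\oplus\Pi_{anti}^{A\tilde{A}}$ and splits as $D(X_{ss})+D(X_{aa})$, where $X_{ss}$ gathers the terms from family (i) together with the $(+,+)$ terms from (ii), and $X_{aa}$ the $(-,-)$ terms from (ii).

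It then remains to prove $D(X_{ss})=c_s\Pi_{sym}^{A\tilde{A}}$ and $D(X_{aa})=c_a\Pi_{anti}^{A\tilde{A}}$. Here I would use permutation invariance of the Haar measure to equate coefficients within each index-orbit, and Hadamard invariance on the two-dimensional block $\{\ket{i},\ket{j}\}$ to link the diagonal coefficients of $\ket{ii}\bra{ii}$ and $\ket{ii}\bra{jj}$ to the surviving off-diagonal coefficient of $(\ket{ij}+\ket{ji})(\bra{ij}+\bra{ji})$ --- a Hadamard on this block maps $\ket{ii}$ into a specific linear combination of exactly these vectors, forcing proportionality of the corresponding coefficients --- and similarly for the antisymmetric part. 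The constants $c_s$ and $c_a$ are then pinned down by Corollary~\ref{cor:invariance}: since both projectors are $U\otimes U$-invariant, cyclicity of the trace together with left-invariance of the Haar measure gives $\Tr[D(X)\Pi_{sym}^{A\tilde{A}}]=\Tr[X\Pi_{sym}^{A\tilde{A}}]$, whence $c_s=\Tr[X\Pi_{sym}^{A\tilde{A}}]/\Tr[\Pi_{sym}^{A\tilde{A}}]$ and analogously $c_a=\Tr[X\Pi_{anti}^{A\tilde{A}}]/\Tr[\Pi_{anti}^{A\tilde{A}}]$. The main obstacle I anticipate is this final reduction to proportionality without invoking Schur's lemma: one has to check carefully that the orbit of each basis vector of $\Pi_{sym}^{A\tilde{A}}$ (resp.\ $\Pi_{anti}^{A\tilde{A}}$) under permutations together with block Hadamards spans the whole symmetric (resp.\ antisymmetric) subspace, so that the equal-coefficient constraints coming from these invariances genuinely force $D(X_{ss})$ and $D(X_{aa})$ to be scalars on their respective subspaces.
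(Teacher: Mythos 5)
Your route is the same as the paper's: expand $X$ in the basis $\{\ket{ii}\}\cup\{\ket{ij}\pm\ket{ji}\}$, kill the cross terms using Lemmas~\ref{lem:zeroanti1} and \ref{lem:zeroanti2} and Haar invariance, force the two surviving blocks to be scalar multiples of $\Pi^{A\tilde{A}}_{sym}$ and $\Pi^{A\tilde{A}}_{anti}$ with concrete unitaries, and fix the constants by the trace identity $\Tr[D(X)\Pi^{A\tilde{A}}_{sym}]=\Tr[X\Pi^{A\tilde{A}}_{sym}]$. Your transposition argument for the same-pair cross blocks $(\ket{ij}+\ket{ji})(\bra{ij}-\bra{ji})$ is correct and is actually a useful supplement: those terms are not covered by the two lemmas as stated, and the paper's proof glosses over them.

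The genuine gap sits exactly where you anticipated trouble, and your proposed resolution does not close it. Sign flips, permutations and block Hadamards are all \emph{real orthogonal}, and for any real orthogonal $V$ the vector $\ket{\Phi}=d_A^{-1/2}\sum_i\ket{ii}$ satisfies $(V\otimes\tilde{V})\ket{\Phi}=\ket{\Phi}$. Consequently every constraint you can extract from such $V$ — both $D(Y)=D\bigl((V\otimes\tilde{V})^{\dag}Y(V\otimes\tilde{V})\bigr)$ and the commutation of the output with $V\otimes\tilde{V}$ — is equally satisfied by the family of maps $X\mapsto \frac{\Tr[X\Pi^{A\tilde{A}}_{sym}]}{\Tr[\Pi^{A\tilde{A}}_{sym}]}\Pi^{A\tilde{A}}_{sym}+\frac{\Tr[X\Pi^{A\tilde{A}}_{anti}]}{\Tr[\Pi^{A\tilde{A}}_{anti}]}\Pi^{A\tilde{A}}_{anti}+c\,\Tr[X\braket{\Phi}]\braket{\Phi}$ for every $c$; in particular these invariances can never show $D(\ket{ii}\bra{jj})=0$ for $i\neq j$, which eq.~(\ref{eq:UU}) requires. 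Your "orbit spans the subspace" criterion is also not the right test: the orbit of $\ket{11}$ under permutations and block Hadamards does span $\Pi^{A\tilde{A}}_{sym}$, yet $a\Pi^{A\tilde{A}}_{sym}+b\braket{\Phi}$ commutes with all of these $V\otimes\tilde{V}$, so equal-coefficient constraints alone do not force scalarity (one can check, e.g., that your Hadamard relation linking the $\ket{ii}$ and $(\ket{ij}+\ket{ji})$ coefficients is satisfied for every $b$). The fix is small: include the complex phase unitary $V\ket{i}=\mathrm{i}\ket{i}$ (identity on the other basis vectors). Conjugation by $V\otimes\tilde{V}$ sends $\ket{ii}\bra{jj}$ to $-\ket{ii}\bra{jj}$ for $j\neq i$, so Haar invariance kills these averages (equivalently, the corresponding output matrix elements between $\ket{ii}$ and $\ket{jj}$); with all off-diagonal output entries gone, your permutation and Hadamard steps correctly equate the diagonal entries within and across the two families, each block becomes a multiple of its projector, and the trace computation pins down the constants as you describe. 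The paper's instruction to "flip the vectors appropriately" is implicitly doing this; real sign flips alone do not suffice.
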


\begin{proof}
Consider the representation of $X$ in the orthonormal basis with vectors $\{\ket{ii}\}^{d_A}_{i=1}$ , $\{\frac{1}{\sqrt{2}}(\ket{ij}+\ket{ji})\}_{1\leq i < j \leq d_A}$ and $\{\frac{1}{\sqrt{2}}(\ket{ij}-\ket{ji})\}_{1 \leq i < j\leq d_A}$. To simplify the notation, we impose some ordering on the basis elements and write the vectors in $\Pi^{A\tilde{A}}_{sym}$ as $\ket{e^{sym}_i}$ and the vectors in $\Pi^{A\tilde{A}}_{anti}$ as $\ket{e^{ant}_i}$. We have
\begin{equation*}
     X =\sum_{ij} \alpha_{ij} \ket{e^{sym}_i}\bra{e^{sym}_j} + \sum_{ij} \alpha_{ij} \ket{e^{ant}_{i}}\bra{e^{sym}_j} + \sum_{ij} \alpha_{ij} \ket{e^{sym}_i}\bra{e^{ant}_{j}} + \sum_{ij} \alpha_{ij} \ket{e^{ant}_{i}}\bra{e^{ant}_{j}}.
\end{equation*}
Using linearity of the integral, we have
\begin{equation*}
 \begin{split}
    D(X) &=\sum_{ij} \alpha_{ij} D(\ket{e^{sym}_i}\bra{e^{sym}_j}) + \sum_{ij} \alpha_{ij} D(\ket{e^{ant}_{i}}\bra{e^{sym}_j}) \\
    &+ \sum_{ij} \alpha_{ij} D(\ket{e^{sym}_i}\bra{e^{ant}_{j}}) + \sum_{ij} \alpha_{ij} D(\ket{e^{ant}_{i}}\bra{e^{ant}_{j}})
  \end{split}
\end{equation*}
Using Lemmas \ref{lem:zeroanti1} and \ref{lem:zeroanti2}, the previous equation simplifies to
\begin{equation}\label{eq:invariance}
  \begin{split}
  D(X)&= D(\Pi^{A\tilde{A}}_{sym}X\Pi^{A\tilde{A}}_{sym}) + D(\Pi^{A\tilde{A}}_{anti}X\Pi^{A\tilde{A}}_{anti}). \\
  \end{split}
\end{equation}
Write $D(\Pi^{A\tilde{A}}_{sym}X\Pi^{A\tilde{A}}_{sym})$ as $\sum_{ij} \mu_{ij} \ket{e^{sym}_i}\bra{e^{sym}_j}.$
Using the right invariance of the Haar measure, we have
\begin{equation*}
V\otimes \tilde{V} D(\Pi^{A\tilde{A}}_{sym} X \Pi^{A\tilde{A}}_{sym}) = D(\Pi^{A\tilde{A}}_{sym} X \Pi^{A\tilde{A}}_{sym}) V \otimes \tilde{V}
\end{equation*}
for all unitaries $V$ acting on $A$, with $\tilde{V}$ being a copy version of $V$ acting on $\tilde{A}$. As in Lemmas $\ref{lem:zeroanti1}$ and $\ref{lem:zeroanti2}$, we can choose carefully our unitary , flipping some of the vectors in the appropriate way, so that the previous equation implies
\begin{equation*}
  \mu_{ij} = -\mu_{ij} \quad \text{for all $1 \leq i ,j \leq d_A(d_A+1)/2$ such that $i \neq j.$}
\end{equation*}
Hence, the operator $D(\Pi^{A\tilde{A}}_{sym} X \Pi^{A\tilde{A}}_{sym})$ is diagonal. To show that it is also a multiple of the projector onto the symmetric subspace, consider the unitary $V$ which permutes the basis elements $\frac{1}{\sqrt{2}}(\ket{12}+\ket{21})$ and $\frac{1}{\sqrt{2}}(\ket{kl}+\ket{lk})$. Then, the right invariance of the Haar measure implies that
\begin{equation*}
\begin{split}
\Tr \bigg [ D(\Pi^{A\tilde{A}}_{sym} X \Pi^{A\tilde{A}}_{sym})&\bigg (\frac{\ket{12}+\ket{12}}{\sqrt{2}}\bigg )\bigg(\frac{\bra{12}+\bra{21}}{\sqrt{2}}\bigg)\bigg ] \\
&= \\
\Tr \bigg[D(\Pi^{A\tilde{A}}_{sym} X \Pi^{A\tilde{A}}_{sym})&\bigg(\frac{\ket{kl}+\ket{lk}}{\sqrt{2}}\bigg)\bigg(\frac{\bra{kl}+\bra{lk}}{\sqrt{2}}\bigg) \bigg ]
\end{split}
\end{equation*}
for all $1 \leq k < l \leq d_A$. We can proceed similarly and show that
\[\Tr\bigg [D(\Pi^{A\tilde{A}}_{sym} X \Pi^{A\tilde{A}}_{sym})\braket{ii}\bigg ]=\Tr\bigg [D(\Pi^{A\tilde{A}}_{sym} X \Pi^{A\tilde{A}}_{sym})\braket{jj}\bigg ] \]
for all $1\leq i,j \leq d_A$. The last thing we need to prove is that
\[\Tr\bigg [D(\Pi^{A\tilde{A}}_{sym} X \Pi^{A\tilde{A}}_{sym})\braket{11}\bigg ]=\Tr\bigg [D(\Pi^{A\tilde{A}}_{sym} X \Pi^{A\tilde{A}}_{sym})\bigg(\frac{\ket{12}+\ket{21}}{\sqrt{2}}\bigg)\bigg(\frac{\bra{12}+\bra{21}}{\sqrt{2}}\bigg)\bigg ]. \]
This is proven using the right invariance of the Haar measure with a unitary $V$, which transforms the vectors $\ket{1}$ and $\ket{2}$ to the vectors $\frac{1}{\sqrt{2}}(\ket{1}+\ket{2})$ and $\frac{1}{\sqrt{2}}(\ket{1}-\ket{2})$, while fixing all other basis vectors.
Combining the previous facts, the operator $D(\Pi^{A\tilde{A}}_{sym} X \Pi^{A\tilde{A}}_{sym})$ is a multiple of $\Pi^{A\tilde{A}}_{sym}$:
 \begin{equation*}
 D(\Pi^{A\tilde{A}}_{sym} X \Pi^{A\tilde{A}}_{sym}) = \lambda \Pi^{A\tilde{A}}_{sym}.
 \end{equation*}
 Using the linearity and the cyclic property of the trace, we have $\Tr [D(\Pi^{A\tilde{A}}_{sym} X \Pi^{A\tilde{A}}_{sym})] = \Tr [ \Pi^{A\tilde{A}}_{sym} X]$, and so
 \begin{equation*}
   \lambda = \frac{\Tr [ \Pi^{A\tilde{A}}_{sym} X]}{\Tr [\Pi^{A\tilde{A}}_{sym}]}.
 \end{equation*}
By a similar argumentation, we have
 \begin{equation*}
 D(\Pi^{A\tilde{A}}_{anti} X \Pi^{A\tilde{A}}_{anti}) =  \frac{\Tr [ \Pi^{A\tilde{A}}_{anti} X]}{\Tr [\Pi^{A\tilde{A}}_{anti}]} \Pi^{A\tilde{A}}_{anti},
 \end{equation*}
and so we are done.
\end{proof}

\begin{proposition} \label{prop:twirl}
Let $A$ be a vector space of dimension $d_A$ and consider a subspace $A_1$ of dimension $L$. For the swap operator $F^{A_1\tilde{A}_1}$, the previous proposition evaluates to
 \begin{equation}\label{eq:UU2}
     \int_{\mathbb{U}(A)} (U^{\dag} \otimes \tilde{U}^{\dag}) F^{A_1\tilde{A}_1} (U \otimes \tilde{U}) dU = \frac{L(d_A-L)}{d_A(d_A^2-1)} I^{A\tilde{A}}  + \frac{L(Ld_A-1)}{d_A(d_A^2-1)} F^{A\tilde{A}}
  \end{equation}
\end{proposition}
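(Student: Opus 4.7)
The plan is to apply Proposition \ref{prop:SchurApplication} directly with $X = F^{A_1 \tilde{A}_1}$, and then rewrite the resulting expression, which will naturally come out in terms of $\Pi_{sym}^{A\tilde{A}}$ and $\Pi_{anti}^{A\tilde{A}}$, in the $\{I^{A\tilde{A}}, F^{A\tilde{A}}\}$ basis. Recall that $\Pi_{sym}^{A\tilde{A}} = (I^{A\tilde{A}} + F^{A\tilde{A}})/2$ and $\Pi_{anti}^{A\tilde{A}} = (I^{A\tilde{A}} - F^{A\tilde{A}})/2$, and from the preceding discussion in the appendix we have $\Tr[\Pi_{sym}^{A\tilde{A}}] = d_A(d_A+1)/2$ and $\Tr[\Pi_{anti}^{A\tilde{A}}] = d_A(d_A-1)/2$. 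The key computational step will be to evaluate the two numerators $\Tr[F^{A_1\tilde{A}_1} \Pi_{sym}^{A\tilde{A}}]$ and $\Tr[F^{A_1\tilde{A}_1} \Pi_{anti}^{A\tilde{A}}]$.

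To do this, write $F^{A_1\tilde{A}_1} = (Q \otimes Q) F^{A\tilde{A}} (Q \otimes Q)$, where $Q$ is the projector onto $A_1 \subseteq A$ (as done in the proof of Lemma~\ref{Lemma:rdmisometry}). Using the swap trick (Lemma~\ref{lem:swaptrick}) together with $Q^2 = Q$ and $\Tr[Q] = L$, I compute $\Tr[F^{A_1\tilde{A}_1}] = \Tr[(Q \otimes Q) F^{A\tilde{A}}] = \Tr[Q^2] = L$. Similarly, using $F^{A\tilde{A}}(Q \otimes Q)F^{A\tilde{A}} = Q \otimes Q$, I get $\Tr[F^{A_1\tilde{A}_1} F^{A\tilde{A}}] = \Tr[(Q \otimes Q)^2] = L^2$. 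Hence
\begin{equation*}
\Tr[F^{A_1\tilde{A}_1} \Pi_{sym}^{A\tilde{A}}] = \frac{L + L^2}{2} = \frac{L(L+1)}{2}, \quad \Tr[F^{A_1\tilde{A}_1} \Pi_{anti}^{A\tilde{A}}] = \frac{L - L^2}{2} = -\frac{L(L-1)}{2}.
\end{equation*}

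Substituting into Proposition \ref{prop:SchurApplication} yields
\begin{equation*}
\int_{\mathbb{U}(A)} (U^{\dag} \otimes \tilde{U}^{\dag}) F^{A_1\tilde{A}_1} (U \otimes \tilde{U}) dU = \frac{L(L+1)}{d_A(d_A+1)} \Pi_{sym}^{A\tilde{A}} - \frac{L(L-1)}{d_A(d_A-1)} \Pi_{anti}^{A\tilde{A}}.
\end{equation*}
Expanding the symmetric and anti-symmetric projectors in terms of $I^{A\tilde{A}}$ and $F^{A\tilde{A}}$ and collecting terms, the coefficient of $I^{A\tilde{A}}$ becomes
\begin{equation*}
\frac{1}{2 d_A(d_A^2-1)}\bigl[L(L+1)(d_A-1) - L(L-1)(d_A+1)\bigr] = \frac{L(d_A - L)}{d_A(d_A^2-1)},
\end{equation*}
after noting that the bracketed quantity simplifies to $2L(d_A-L)$. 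Likewise, the coefficient of $F^{A\tilde{A}}$ becomes
\begin{equation*}
\frac{1}{2 d_A(d_A^2-1)}\bigl[L(L+1)(d_A-1) + L(L-1)(d_A+1)\bigr] = \frac{L(Ld_A - 1)}{d_A(d_A^2-1)},
\end{equation*}
using $(L+1)(d_A-1) + (L-1)(d_A+1) = 2(Ld_A - 1)$. This gives exactly eq.~(\ref{eq:UU2}).

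The proof is essentially a bookkeeping exercise once Proposition~\ref{prop:SchurApplication} is available; the only place where care is required is in recognizing that $F^{A_1\tilde{A}_1}$ should be interpreted as $(Q \otimes Q)F^{A\tilde{A}}(Q \otimes Q)$ so that all traces and invariance arguments take place on the full space $A \otimes \tilde{A}$. Aside from that conceptual point, no real obstacle is expected.
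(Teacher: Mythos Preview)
Your proof is correct and follows essentially the same route as the paper: apply Proposition~\ref{prop:SchurApplication} with $X=F^{A_1\tilde A_1}$, compute the two traces $\Tr[F^{A_1\tilde A_1}\Pi_{sym}^{A\tilde A}]$ and $\Tr[F^{A_1\tilde A_1}\Pi_{anti}^{A\tilde A}]$, and then rewrite the result in the $\{I^{A\tilde A},F^{A\tilde A}\}$ basis. The only cosmetic difference is that the paper obtains the two traces by decomposing $F^{A_1\tilde A_1}=\Pi_{sym}^{A_1\tilde A_1}-\Pi_{anti}^{A_1\tilde A_1}$ and using the subspace inclusions $\Pi_{sym}^{A_1\tilde A_1}\subseteq\Pi_{sym}^{A\tilde A}$, $\Pi_{anti}^{A_1\tilde A_1}\subseteq\Pi_{anti}^{A\tilde A}$, whereas you compute them via the swap trick with the projector $Q$; both yield $L(L+1)/2$ and $-L(L-1)/2$ and the remainder of the argument is identical.
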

\begin{proof}
The swap operator $F^{A_1\tilde{A}_1}$ can be expressed as $\Pi^{A_1\tilde{A}_1}_{sym} - \Pi^{A_1\tilde{A}_1}_{anti}$. We have $\Pi^{A_1\tilde{A}_1}_{sym} \in \Pi^{A\tilde{A}}_{sym}$ and $\Pi^{A_1\tilde{A}_1}_{anti} \in \Pi^{A\tilde{A}}_{anti}$. Hence,
\begin{equation*}
\begin{split} \Tr[F^{A_1\tilde{A}_1}\Pi^{A\tilde{A}}_{sym}]&=L(L+1)/2,\\
\Tr[F^{A_1\tilde{A}_1}\Pi^{A\tilde{A}}_{anti}]&=(L-L^2)/2. \\
\end{split}
\end{equation*}
Using the previous proposition, we have
 \begin{equation}\label{eq:UU2}
 \begin{split}
     \int_{\mathbb{U}(A)} (U^{\dag} \otimes \tilde{U}^{\dag}) F^{A_1\tilde{A}_1} (U \otimes \tilde{U}) dU &=
     \frac{L(L+1)}{d_A(d_A+1)} \Pi^{A\tilde{A}}_{sym}  - \frac{L(L-1)}{d_A(d_A-1)} \Pi^{A\tilde{A}}_{anti} \\
     &=\frac{L(L+1)}{d_A(d_A+1)}\frac{I^{A\tilde{A}}+F^{A\tilde{A}}_{sym}}{2}  - \frac{L(L-1)}{d_A(d_A-1)}\frac{I^{A\tilde{A}}-F^{A\tilde{A}}}{2} \\
     &= \frac{L(d_A-L)}{d_A(d_A^2-1)} I^{A\tilde{A}}  + \frac{L(Ld_A-1)}{d_A(d_A^2-1)} F^{A\tilde{A}} \\
  \end{split}
 \end{equation}
\end{proof}

\subsection{Convexity of $D_A$ for pure ensembles}
\begin{lemma}
For a state $\psi^{ABC}=\sum_i p_i \psi^{ABC}_i$, where $\psi^{ABC}_i$ are pure states, let $F =\{F_x\}_{x=1}^X$ be a POVM of rank one operators on the system $C$. Then, we have
\begin{equation}
   \sum_x q_x D(\psi^{AB}_x) \leq \sum_{x,i} p_i \Tr[F_x\psi_i^{C}] D(\tilde{\psi}^{AB}_{i,x}) ,
\end{equation}
where $\psi^{AB}_x = \frac{1}{q_x} \Tr_C[ (F_x \otimes I^{AB})\psi^{ABC}]$, $q_x = \Tr[F_x \psi^C]$ and $\tilde{\psi}^{AB}_{i,x} = \frac{1}{\Tr[F_x \psi_i^C]} \Tr_C[ (F_x \otimes I^{AB}) \psi_i^{ABC}]$.
\end{lemma}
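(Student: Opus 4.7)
The plan is to show that the averaged distillable entanglement of the post-measurement states is bounded by the further-refined average obtained by first decomposing $\psi^{ABC}$ into its pure-state ensemble $\{p_i, \psi_i^{ABC}\}$ and then applying $F$. Everything should follow from the linearity of the partial trace and the (standard) convexity of the distillable entanglement $D$ under mixtures.

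First I would rewrite $\psi^{AB}_x$ in terms of the refined ensemble. Since the partial trace is linear and $\psi^{ABC}=\sum_i p_i \psi^{ABC}_i$, we have
\begin{equation*}
\psi^{AB}_x \;=\; \frac{1}{q_x}\Tr_C\!\bigl[(F_x\otimes I^{AB})\sum_i p_i\,\psi_i^{ABC}\bigr]
\;=\; \sum_i \frac{p_i\,\Tr[F_x \psi_i^C]}{q_x}\,\tilde{\psi}^{AB}_{i,x}.
\end{equation*}
The coefficients $r_{i|x}:=p_i\Tr[F_x\psi_i^C]/q_x$ are nonnegative, and using $\sum_i p_i\Tr[F_x\psi_i^C]=\Tr[F_x\psi^C]=q_x$ we see that $\sum_i r_{i|x}=1$, so each $\psi^{AB}_x$ is exhibited as a genuine convex combination of the pure-ensemble post-measurement states $\tilde{\psi}^{AB}_{i,x}$.

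Next I would invoke convexity of the distillable entanglement $D$: for any ensemble $\{r_j,\rho_j\}$ one has $D(\sum_j r_j \rho_j)\leq \sum_j r_j D(\rho_j)$, because having access to the classical label $j$ can only help a distillation protocol (formally, one can simulate the mixture by preparing the label and discarding it, and distillable entanglement is non-increasing under LOCC). Applying this to the decomposition above gives
\begin{equation*}
D(\psi^{AB}_x)\;\leq\;\sum_i r_{i|x}\, D(\tilde{\psi}^{AB}_{i,x})
\;=\;\sum_i \frac{p_i\,\Tr[F_x\psi_i^C]}{q_x}\,D(\tilde{\psi}^{AB}_{i,x}).
\end{equation*}

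Finally I would multiply by $q_x$ and sum over $x$, which telescopes the $1/q_x$ factor and yields $\sum_x q_x D(\psi^{AB}_x)\leq\sum_{x,i}p_i\Tr[F_x\psi_i^C]\,D(\tilde{\psi}^{AB}_{i,x})$, as required. The only conceptual step is the convexity of $D$; the rest is bookkeeping with the partial trace. Since $D$ being convex under mixtures is standard and is already implicitly used elsewhere in the thesis (e.g. in the upper bound of Proposition \ref{thm:upperbound} via the argument that appending a classical register and then averaging can only decrease the optimum), there is no real obstacle to completing the proof in a few lines.
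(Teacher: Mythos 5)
Your first step (using linearity of the partial trace to exhibit $\psi^{AB}_x = \sum_i r_{i|x}\,\tilde{\psi}^{AB}_{i,x}$ with $r_{i|x} = p_i\Tr[F_x\psi_i^C]/q_x$ a normalized probability distribution) is exactly the paper's first step. The gap is in the second step: you invoke convexity of the distillable entanglement, $D(\sum_j r_j\rho_j) \leq \sum_j r_j D(\rho_j)$, as a ``standard'' fact. It is not: convexity of $D$ for general mixed-state ensembles is a well-known open problem (non-convexity, like non-additivity, would follow from the conjectured existence of NPT bound entanglement). Your sketched justification does not repair this. Preparing the flag and discarding it only shows, via LOCC monotonicity, that $D(\sum_j r_j\rho_j) \leq D\bigl(\sum_j r_j \rho_j\otimes\braket{jj}\bigr)$; to conclude you would still need $D\bigl(\sum_j r_j \rho_j\otimes\braket{jj}\bigr) \leq \sum_j r_j D(\rho_j)$, and bounding the distillable entanglement of the flagged state by the average requires a subadditivity-across-blocks statement for $D$ that is likewise not known in general. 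Also, the convexity used elsewhere in the thesis is convexity of $D_A$ for \emph{pure} ensembles, whose proof rests on this very lemma, so appealing to it here would be circular.

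The fix is to use the structure you never exploit: since each $\psi_i^{ABC}$ is pure and each $F_x$ has rank one, every $\tilde{\psi}^{AB}_{i,x}$ is \emph{pure}, so $\{r_{i|x},\tilde{\psi}^{AB}_{i,x}\}$ is a pure-state decomposition of $\psi^{AB}_x$. This is what the paper does: bound $D(\psi^{AB}_x)$ by the entanglement of formation, which by definition is at most the average entropy of entanglement of any pure decomposition,
\begin{equation*}
D(\psi^{AB}_x) \;\leq\; E_F(\psi^{AB}_x) \;\leq\; \sum_i r_{i|x}\, S(A)_{\tilde{\psi}_{i,x}} \;=\; \sum_i r_{i|x}\, D(\tilde{\psi}^{AB}_{i,x}),
\end{equation*}
where the last equality holds because for pure states distillable entanglement equals the entropy of entanglement. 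Multiplying by $q_x$ and summing over $x$ then gives the claim, exactly as in your final bookkeeping step. In short: replace the unproved appeal to general convexity of $D$ by $D\leq E_F$ together with purity of the $\tilde{\psi}^{AB}_{i,x}$, and the argument is sound.
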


\begin{proof} The state we get after applying $F_x$ on system $C$ is given by:
 \begin{align}\label{eq:pureD}
   \frac{1}{q_x} \Tr_{C} \biggl [ \biggl (F_x \otimes I^{AB} \biggr ) \sum_i p_i \psi_i^{ABC} \biggr ] &= \frac{1}{q_x} \sum_i p_i \Tr_C[(F_x \otimes I^{AB}) \psi_i^{ABC}] \notag\\
    &= \sum_i \frac{p_i \Tr_C[F_x \psi_i^C]}{q_x} \tilde{\psi}^{AB}_{i,x}.
 \end{align}
Since $q_x = \Tr_C[F_x \psi^{C}]$ and $\sum_i p_i \psi^C_i = \psi^C$, we have a well-defined ensemble of pure states on the right hand side of Eq.~(\ref{eq:pureD}). Since the distillable entanglement is bounded from above by the entanglement of formation, we get
 \begin{align}
   \sum_x q_x D( \frac{1}{q_x} \Tr_{C} \biggl [ \biggl (F_x \otimes I^{AB} \biggr ) \sum_i p_i \psi_i^{ABC} \biggr ]) &\leq \sum_x q_x \sum_i \frac{p_i \Tr_C[F_x \psi_i^C]}{q_x} S(A)_{\tilde{\psi}_{i,x}} \notag \\
   &= \sum_{x,i} p_i \Tr_C[F_x \psi_i^C] D(\tilde{\psi}^{AB}_{i,x}).
 \end{align}
\end{proof}

\begin{proposition}[Convexity of $D_A$ for Pure Ensembles]\label{lemma:convex}
Let $\psi^{ABC}$ be an arbitrary tripartite state. Then, for any convex decomposition $\{p_i,
\psi_i^{ABC}\}$ of $\psi^{ABC}$ into pure states,
  \begin{equation}
    D_A(\psi^{ABC}) \leq \sum_i p_i D_A(\psi_i^{ABC}).
  \end{equation}
\end{proposition}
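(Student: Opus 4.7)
The plan is to leverage the preceding lemma in a natural two-step manner: apply it to bound any POVM-induced average distillable entanglement in terms of the fine-grained ensemble $\{\tilde{\psi}_{i,x}^{AB}\}$ obtained by applying the same POVM to each pure component $\psi_i^{ABC}$, and then recognize each inner sum as an admissible quantity in the definition of $D_A(\psi_i^{ABC})$.

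More concretely, I would first fix an arbitrary POVM $F = \{F_x\}_{x=1}^X$ of rank-one operators on system $C$, producing states $\psi_x^{AB} = \frac{1}{q_x}\Tr_C[(F_x \otimes I^{AB})\psi^{ABC}]$ with $q_x = \Tr[F_x \psi^C]$. Applying the immediately preceding lemma yields
\begin{equation*}
\sum_x q_x D(\psi_x^{AB}) \;\leq\; \sum_{x,i} p_i \Tr[F_x \psi_i^C]\, D(\tilde{\psi}_{i,x}^{AB}),
\end{equation*}
where $\tilde{\psi}_{i,x}^{AB} = \frac{1}{\Tr[F_x\psi_i^C]} \Tr_C[(F_x \otimes I^{AB})\psi_i^{ABC}]$.

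Next, I would rearrange the double sum to group by $i$ first:
\begin{equation*}
\sum_{x,i} p_i \Tr[F_x\psi_i^C]\, D(\tilde{\psi}_{i,x}^{AB}) \;=\; \sum_i p_i \left( \sum_x \Tr[F_x\psi_i^C]\, D(\tilde{\psi}_{i,x}^{AB}) \right).
\end{equation*}
The key observation is that, for each fixed $i$, the POVM $F=\{F_x\}$ is an admissible rank-one POVM on $C$ applied to the pure state $\psi_i^{ABC}$, so the inner sum is exactly one of the quantities over which the supremum in the definition of $D_A(\psi_i^{ABC})$ is taken. Consequently the inner sum is bounded above by $D_A(\psi_i^{ABC})$, giving
\begin{equation*}
\sum_x q_x D(\psi_x^{AB}) \;\leq\; \sum_i p_i\, D_A(\psi_i^{ABC}).
\end{equation*}
Since the POVM $F$ was arbitrary, taking the supremum over all rank-one POVMs on $C$ on the left-hand side yields $D_A(\psi^{ABC}) \leq \sum_i p_i D_A(\psi_i^{ABC})$, completing the proof.

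The proof is essentially routine once the preceding lemma is available; no step presents a real obstacle. The only mildly delicate point is ensuring that the restriction to rank-one POVMs in the definition of $D_A$ is preserved throughout, which it is because we use the \emph{same} POVM $F$ both globally (on $\psi^{ABC}$) and on each pure component $\psi_i^{ABC}$.
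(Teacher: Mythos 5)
Your proposal is correct and follows essentially the same route as the paper's proof: both invoke the immediately preceding lemma to pass from the POVM applied to $\psi^{ABC}$ to the fine-grained ensemble on the pure components, then bound each inner sum by $D_A(\psi_i^{ABC})$ since the same rank-one POVM is admissible in its definition. The only cosmetic difference is that the paper starts from a POVM achieving $D_A(\psi^{ABC})-\nu$ and lets $\nu\to 0$, whereas you take an arbitrary POVM and pass to the supremum at the end, which is equivalent.
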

\begin{proof} For any $\nu > 0$, there exists a POVM $E=\{E_x\}^X_{x=1}$ of rank one operators such that
\begin{equation}
  \sum_x q_x D(\psi^{AB}_x) \geq D_A(\psi^{ABC}) - \nu,
\end{equation}
where $\psi^{AB}_x = \frac{1}{q_x} \Tr_{C}[(E_x \otimes I^{AB}) \psi^{ABC}]$.
From the previous lemma, we have
 \begin{align}
      \sum_x q_x D(\psi^{AB}_x) &\leq \sum_{x,i} p_i \Tr[E_x\psi_i^{C}] D(\tilde{\psi}^{AB}_{i,x}) \notag \\
        &\leq \sum_i p_i \sum_x  \Tr[E_x\psi_i^{C}] D(\tilde{\psi}^{AB}_{i,x}) \notag \\
        &\leq \sum_i p_i D_A(\psi_i^{ABC}),
 \end{align}
where $\tilde{\psi}_{i,x} = \frac{1}{\Tr_C[E_x \psi_i^{C}]}\Tr_{C}[(E_x \otimes I^{AB})\psi^{ABC}_i]$ is the state obtained after performing the POVM $E$ on the state $\psi^{ABC}_i$. Since $\nu$ was arbitrarily chosen, we get back the statement of the proof.
\end{proof}

\section{Typicality}
\begin{Lemma}\label{Lem:operatorineq}
For $n$ copies of a state $\initstate$, let $\Pi_{\ti{B}},\Pi_{\ti{C}_1},\Pi_{\ti{C}_2},\ldots,\Pi_{\ti{C}_m},\Pi_{\ti{R}}$ be the projectors onto the $\delta-$typical subspaces $\ti{B},\ti{C}_1,\ti{C}_2,\ldots,\ti{C}_m$ and $\ti{R}$ respectively. Then, we have
\begin{equation}\label{eq:operatorineq1}
    \Pi_{\ti{B}\ti{C}_M\ti{R}} := \Pi_{\ti{B}} \otimes \Pi_{\ti{C_1}} \otimes \ldots \otimes \Pi_{\ti{C_m}} \otimes \Pi_{\ti{R}} \geq \Pi_{\ti{B}} +
   \Pi_{\ti{C_1}} + \ldots + \Pi_{\ti{C_m}} + \Pi_{\ti{R}} - (m+1) I_{B C_M R}, \\
 \end{equation}
 where $\Pi_{\ti{B}}$ is a shorthand for $\Pi_{\ti{B}} \otimes I^{C_1C_2\ldots C_mR}$, and similarly for $\Pi_{\ti{C}_1},\Pi_{\ti{C}_2},\ldots,\Pi_{\ti{C}_m}$ and $\Pi_{\ti{R}}$.
\end{Lemma}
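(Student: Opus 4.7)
The plan is to reduce the inequality to the standard fact that for two commuting projectors $P,Q$ one has $PQ \geq P + Q - I$, and then iterate. The crucial structural observation is that the projectors $\Pi_{\tilde{B}}, \Pi_{\tilde{C}_1}, \ldots, \Pi_{\tilde{C}_m}, \Pi_{\tilde{R}}$, once extended to the full space $B C_M R$ as $\Pi_{\tilde{B}} \otimes I^{C_M R}$, etc., all act nontrivially on disjoint tensor factors. Hence they pairwise commute, and their product is itself the projector onto the intersection of the typical subspaces, namely $\Pi_{\tilde{B}} \otimes \Pi_{\tilde{C}_1} \otimes \cdots \otimes \Pi_{\tilde{C}_m} \otimes \Pi_{\tilde{R}}$.

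First I would establish the base case: for any two commuting projectors $P$ and $Q$ on a Hilbert space, $PQ \geq P + Q - I$. This follows from the identity
\begin{equation*}
(I-P)(I-Q) = I - P - Q + PQ,
\end{equation*}
and since $I-P$ and $I-Q$ are commuting projectors, their product is itself a projector and therefore positive semidefinite. Rearranging gives $PQ - P - Q + I \geq 0$, i.e.\ $PQ \geq P + Q - I$.

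Next, I would extend this to $k$ pairwise commuting projectors $P_1, \ldots, P_k$ by induction on $k$, proving
\begin{equation*}
P_1 P_2 \cdots P_k \geq P_1 + P_2 + \cdots + P_k - (k-1) I.
\end{equation*}
The base case $k=2$ is handled above. For the inductive step, note that if $P_2, \ldots, P_k$ all commute with $P_1$, then the product $Q := P_2 \cdots P_k$ is a projector commuting with $P_1$. Applying the base case to $P_1$ and $Q$ yields $P_1 Q \geq P_1 + Q - I$, and the inductive hypothesis gives $Q \geq P_2 + \cdots + P_k - (k-2) I$, so
\begin{equation*}
P_1 P_2 \cdots P_k \;\geq\; P_1 + P_2 + \cdots + P_k - (k-1) I.
\end{equation*}

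Finally, I would apply this with $k = m+2$ to the pairwise commuting projectors $\Pi_{\tilde{B}}, \Pi_{\tilde{C}_1}, \ldots, \Pi_{\tilde{C}_m}, \Pi_{\tilde{R}}$, identifying $\Pi_{\tilde{B}} P_{\tilde{C}_1} \cdots \Pi_{\tilde{C}_m} \Pi_{\tilde{R}}$ with the tensor product $\Pi_{\tilde{B} \tilde{C}_M \tilde{R}}$ that appears on the left-hand side of the claimed inequality. This yields exactly the stated bound with $(k-1)I = (m+1)I$ on the right-hand side. There is no real obstacle here; the only subtle point is the initial observation that, after the implicit tensoring with identities, all the typical-subspace projectors commute, which is what allows the product to be interpreted as a projector and the induction to go through cleanly.
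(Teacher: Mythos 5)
Your proof is correct. The key structural observation — that after tensoring with identities the typical projectors act on disjoint factors, hence pairwise commute, and their product is the projector $\Pi_{\ti{B}\ti{C}_M\ti{R}}$ — is exactly what the argument needs, and your induction from the two-projector inequality $PQ \geq P + Q - I$ (via positivity of $(I-P)(I-Q)$) is sound: the inductive step correctly combines $P_1Q \geq P_1 + Q - I$ with the hypothesis on $Q = P_2\cdots P_k$ by transitivity of the operator order. The paper reaches the same conclusion by a slightly different, one-shot route: it uses commutativity to pass to a common eigenbasis and then checks eigenvalues case by case — on eigenvectors fixed by the product projector both sides agree, while on eigenvectors annihilated by it at least one individual projector vanishes, so the right-hand side has eigenvalue at most zero. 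Your version isolates a reusable lemma (the ``union bound'' for $k$ commuting projectors, $P_1\cdots P_k \geq \sum_i P_i - (k-1)I$) and proceeds purely algebraically without invoking the spectral decomposition; the paper's version is marginally shorter for this specific statement but proves nothing more general. Either argument is acceptable, and the substance — commutativity making everything simultaneously diagonalizable or, equivalently, making complements multiply to a positive operator — is the same.
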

\begin{proof}
 The projection operators involved in the proof statement pairwise commute, and thus, are simultaneously diagonalizable. Let $\{\ket{e_i}\}$ be a common eigenbasis for these projectors. Then any eigenvector $\ket{e_i}$ with $\Pi_{\ti{B}\ti{C}_M\ti{R}}\ket{e_i} = \ket{e_i}$ satisfies
 \[
 \bigg (\Pi_{\ti{B}} + \Pi_{\ti{C_1}} + \ldots + \Pi_{\ti{C_m}} + \Pi_{\ti{R}} - (m+1) I_{BC_MR} \bigg ) \ket{e_i} = \ket{e_i}.
 \]
 If $\ket{e_i}$ is any eigenvector with $\Pi_{\ti{B}\ti{C}_M\ti{R}}\ket{e_i} = 0$, then it must be in the kernel of at least one of the projection operators $\Pi_{\ti{B}},\Pi_{\ti{C}_1},\Pi_{\ti{C}_2},\ldots,\Pi_{\ti{C}_m}$ and $\Pi_{\ti{R}}$, which implies that
\[
\bigg (\Pi_{\ti{B}} +
   \Pi_{\ti{C_1}} + \ldots + \Pi_{\ti{C_m}} + \Pi_{\ti{R}} - (m+1) I_{BC_MR} \bigg )\ket{e_i} = \lambda_i \ket{e_i},
\]
where $\lambda_i \leq 0$.
Using both of these observations, we have
\begin{equation}
 \begin{split}
   \Pi_{\ti{B}\ti{C}_M\ti{R}} = \sum_{ \Pi_{\ti{B}\ti{C}_M\ti{R}}\ket{e_i}=\ket{e_i}} \braket{e_i} & \geq \sum_{ \Pi_{\ti{B}\ti{C}_M\ti{R}}\ket{e_i}=\ket{e_i}} \braket{e_i} + \sum_{ \Pi_{\ti{B}\ti{C}_M\ti{R}}\ket{e_i}=0} \lambda_i \braket{e_i} \\
   &= \Pi_{\ti{B}} +
   \Pi_{\ti{C_1}} + \ldots + \Pi_{\ti{C_m}} + \Pi_{\ti{R}} - (m+1) I_{BC_MR} \\
 \end{split}
\end{equation}
\end{proof}

\begin{lemma}(Hoeffding's inequality)
For i.i.d random variables $X_1, X_2, \ldots, X_n$, with $X_i \in [a_i,b_i]$, we have
\begin{equation*}
P( \bigg |\sum_i X_i - \sum_i E X_i \bigg | \geq t) \leq 2\exp\bigg (\frac{-2t^2}{\sum^n_{i=1} (b_i - a_i)^2}\bigg )
\end{equation*}
\end{lemma}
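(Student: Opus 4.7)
The plan is to establish Hoeffding's inequality via the classical Chernoff/moment generating function approach. First I would reduce the two-sided statement to a one-sided tail bound by a union bound: writing $S_n := \sum_i X_i - \sum_i E X_i$, one has $P(|S_n| \geq t) \leq P(S_n \geq t) + P(-S_n \geq t)$, so it suffices to prove each of these is bounded by $\exp(-2t^2/\sum_i (b_i - a_i)^2)$. Since $-X_i$ has the same sort of bounded support as $X_i$ (in $[-b_i, -a_i]$, an interval of the same length), both bounds follow from the same one-sided argument applied to $S_n$ and $-S_n$.

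Next, for the one-sided bound, I would apply the exponential Markov trick: for any $s > 0$,
\begin{equation*}
P(S_n \geq t) = P(e^{s S_n} \geq e^{st}) \leq e^{-st} \, E[e^{s S_n}].
\end{equation*}
Since the $X_i$ are independent, the centered variables $Y_i := X_i - E X_i$ are also independent, so $E[e^{s S_n}] = \prod_{i=1}^n E[e^{s Y_i}]$. Each $Y_i$ has mean zero and lies in the interval $[a_i - E X_i, b_i - E X_i]$, an interval of length $b_i - a_i$.

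The key technical step, and the main obstacle, is Hoeffding's lemma: for a mean-zero random variable $Y$ supported in an interval $[\alpha, \beta]$, one has $E[e^{sY}] \leq \exp(s^2(\beta-\alpha)^2/8)$. I would prove this by convexity: for any $y \in [\alpha, \beta]$, writing $y = \lambda \beta + (1-\lambda)\alpha$ with $\lambda = (y-\alpha)/(\beta-\alpha)$, convexity of $e^{sy}$ gives $e^{sy} \leq \lambda e^{s\beta} + (1-\lambda) e^{s\alpha}$. Taking expectations and using $E[Y] = 0$ yields $E[e^{sY}] \leq \frac{-\alpha}{\beta-\alpha} e^{s\beta} + \frac{\beta}{\beta-\alpha} e^{s\alpha} =: e^{\phi(u)}$, where $u := s(\beta-\alpha)$ and $\phi(u) = -pu + \log(1-p+pe^u)$ with $p = -\alpha/(\beta-\alpha) \in [0,1]$. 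A direct calculation shows $\phi(0) = \phi'(0) = 0$ and $\phi''(u) \leq 1/4$ for all $u$ (this is where maximizing a product $q(1-q) \leq 1/4$ appears, with $q = pe^u/(1-p+pe^u)$), so Taylor's theorem gives $\phi(u) \leq u^2/8$, i.e., $E[e^{sY}] \leq \exp(s^2(\beta-\alpha)^2/8)$.

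Finally, combining Hoeffding's lemma with the product structure yields
\begin{equation*}
P(S_n \geq t) \leq e^{-st} \prod_{i=1}^n \exp\bigl(s^2 (b_i - a_i)^2/8\bigr) = \exp\bigl(-st + s^2 \textstyle\sum_i (b_i-a_i)^2/8 \bigr).
\end{equation*}
Optimizing the exponent in $s > 0$ gives $s^* = 4t / \sum_i (b_i-a_i)^2$, producing the bound $\exp(-2t^2/\sum_i (b_i-a_i)^2)$. Doubling for the two-sided estimate completes the proof. The only nontrivial ingredient is Hoeffding's lemma; the rest is a routine application of Markov's inequality, independence, and optimization.
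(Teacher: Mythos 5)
Your proposal is correct: it is the standard Chernoff-bound proof of Hoeffding's inequality — reduce to a one-sided tail by the union bound, apply Markov's inequality to $e^{sS_n}$, factor the moment generating function using independence, invoke Hoeffding's lemma $E[e^{sY}]\leq \exp\bigl(s^2(\beta-\alpha)^2/8\bigr)$ (your convexity argument and the computation $\phi(0)=\phi'(0)=0$, $\phi''(u)=q(1-q)\leq 1/4$ are exactly right), and optimize at $s^*=4t/\sum_i(b_i-a_i)^2$ to get the exponent $-2t^2/\sum_i(b_i-a_i)^2$. The paper itself offers no proof of this lemma: it is stated as a classical inequality of probability theory and used as a black box in the typicality appendix (to bound $\Tr[\psi_A^{\otimes n}\Pi^{n,\delta}_A]$), so there is no authorial argument to compare yours against; your write-up supplies the missing standard argument completely. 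One minor observation: although the statement says ``i.i.d.'', the per-index intervals $[a_i,b_i]$ indicate that only independence (not identical distribution) is needed, and your proof indeed uses only independence, so it proves the statement in the generality actually required.
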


\begin{lemma}
Let $\psi^{A} = \sum^d_{i=1} p(x_i)\braket{x_i}$. For $n$ copies of this state, and any $\delta > 0$, we have
\begin{equation*}
\Tr [\psi^{\otimes n}_A \Pi^{n,\delta}_A] \geq 1 - 2d \exp(-2n \delta^2).
\end{equation*}
where $\Pi^{n,\delta}_A$ is the projector for the $\delta-$typical subspace for $\psi^{\otimes n}_A$.
\end{lemma}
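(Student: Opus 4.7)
The plan is to reduce this quantum statement to a classical probabilistic one and then apply Hoeffding's inequality together with a union bound over letters of the alphabet. Since $\psi^A=\sum_{i=1}^{d}p(x_i)\braket{x_i}$ is diagonal in the basis $\{\ket{x_i}\}$, the tensor power $\psi^{\otimes n}_A$ is diagonal in the product basis $\{\ket{x^n}:x^n\in\{x_1,\ldots,x_d\}^n\}$, with eigenvalues $p^n(x^n)=\prod_{j=1}^{n}p(x_j)$. The typical projector $\Pi^{n,\delta}_A$ is the sum of $\braket{x^n}$ over all $x^n\in\mathcal{T}^n_{p,\delta}$, so
\[
\Tr\bigl[\psi^{\otimes n}_A\Pi^{n,\delta}_A\bigr]\;=\;\sum_{x^n\in\mathcal{T}^n_{p,\delta}}p^n(x^n)\;=\;\Pr_{X_1\cdots X_n\sim p}\!\bigl[X_1\cdots X_n\in\mathcal{T}^n_{p,\delta}\bigr].
\]
So it suffices to lower bound the probability that an i.i.d.\ sample of length $n$ is $\delta$-typical.

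By definition of $\mathcal{T}^n_{p,\delta}$, the complementary event is $\bigcup_{i=1}^{d}A_i$ where $A_i=\{|N(x_i|X^n)/n-p(x_i)|>\delta\}$. The key step is to handle a single $A_i$. For fixed $i$, write $N(x_i|X^n)=\sum_{j=1}^{n}Y_j^{(i)}$ where $Y_j^{(i)}=\mathds{1}[X_j=x_i]$ are i.i.d.\ Bernoulli variables in $[0,1]$ with mean $p(x_i)$. Hoeffding's inequality applied with $a_j=0$, $b_j=1$ and $t=n\delta$ gives
\[
\Pr[A_i]\;=\;\Pr\Bigl[\bigl|\textstyle\sum_{j=1}^{n}Y_j^{(i)}-np(x_i)\bigr|\geq n\delta\Bigr]\;\leq\;2\exp\!\Bigl(\tfrac{-2(n\delta)^2}{n}\Bigr)\;=\;2\exp(-2n\delta^2).
\]

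Finally I apply the union bound over the $d$ alphabet letters to conclude
\[
\Pr\bigl[X^n\notin\mathcal{T}^n_{p,\delta}\bigr]\;\leq\;\sum_{i=1}^{d}\Pr[A_i]\;\leq\;2d\exp(-2n\delta^2),
\]
which yields the claimed lower bound on $\Tr[\psi^{\otimes n}_A\Pi^{n,\delta}_A]$. There is no real obstacle here; the only mild subtlety is making sure the typicality set is defined in terms of a two-sided deviation $|N(x|x^n)/n-p(x)|\leq\delta$ so that the two-sided Hoeffding bound applies directly, which matches the definition given earlier in the excerpt.
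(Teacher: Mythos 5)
Your proof is correct and follows exactly the paper's argument: the paper's proof is a one-line appeal to its preceding Hoeffding lemma plus the union bound over the $d$ letters, using the identity $\Tr[\psi^{\otimes n}_A \Pi^{n,\delta}_A] = P(x^n \in {\cal T}^n_{p,\delta})$, which is precisely what you spell out. Your write-up simply fills in the details (Bernoulli indicators, two-sided deviation) that the paper leaves implicit.
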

\begin{proof}
This follows from the previous lemma by using the union bound on $\Tr [\psi^{\otimes n}_A \Pi^{n,\delta}_A] = P(x_n \in {\cal T}^n_{\delta,p})$.
\end{proof}

\begin{lemma}\label{lem:mergeTyp}
Let $\rho$ be a state on $X \otimes Y$ and let both $\Pi_X$ and $\Pi_Y$ be orthogonal projectors acting on $X$ and $Y$, respectively. Let $\Omega = (\Pi_X \otimes \Pi_Y) \rho (\Pi_X \otimes \Pi_Y)$. Then $\Omega_X \leq \Pi_X \rho_X \Pi_X$.
\end{lemma}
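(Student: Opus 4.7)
The plan is to reduce the claim to the statement that a partial trace can only decrease under ``pinching'' by a projector acting on the traced-out system, and then apply this after separating the two projectors $\Pi_X$ and $\Pi_Y$.

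First, I would use that the projectors act on different tensor factors so that $\Pi_X \otimes \Pi_Y = (\Pi_X \otimes I_Y)(I_X \otimes \Pi_Y)$ and that $\Pi_X \otimes I_Y$ commutes with anything of the form $A \otimes I_Y$. Taking the partial trace over $Y$ and pulling $\Pi_X$ outside the partial trace, the claim reduces to
\begin{equation*}
\Omega_X \;=\; \Pi_X \, \mathrm{Tr}_Y\!\bigl[(I_X \otimes \Pi_Y)\,\rho\,(I_X \otimes \Pi_Y)\bigr] \,\Pi_X,
\end{equation*}
so it suffices to prove that $\mathrm{Tr}_Y[(I_X \otimes \Pi_Y)\rho(I_X \otimes \Pi_Y)] \leq \rho_X$; conjugating both sides by $\Pi_X$ is monotone on positive operators and yields the desired inequality.

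Next, I would decompose $I_Y = \Pi_Y + \Pi_Y^\perp$ and expand $\rho$ accordingly into the four blocks $(I \otimes \Pi_Y^{(\perp)})\,\rho\,(I \otimes \Pi_Y^{(\perp)})$. The key step will be to check that the off-diagonal blocks vanish after tracing out $Y$: a direct computation in a basis, using $\Pi_Y \Pi_Y^\perp = 0$ and cyclicity of the full trace on $Y$, gives $\mathrm{Tr}_Y[(I \otimes \Pi_Y^\perp)\rho(I \otimes \Pi_Y)] = 0$ and similarly for its adjoint. This produces the orthogonal decomposition
\begin{equation*}
\rho_X \;=\; \mathrm{Tr}_Y\!\bigl[(I \otimes \Pi_Y)\rho(I \otimes \Pi_Y)\bigr] \;+\; \mathrm{Tr}_Y\!\bigl[(I \otimes \Pi_Y^\perp)\rho(I \otimes \Pi_Y^\perp)\bigr].
\end{equation*}
Since $\rho \geq 0$, the second summand is a partial trace of a positive operator, hence positive, and we conclude $\mathrm{Tr}_Y[(I \otimes \Pi_Y)\rho(I \otimes \Pi_Y)] \leq \rho_X$. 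Conjugating by $\Pi_X$ finishes the argument.

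There is no real obstacle here; the only point that needs any care is verifying that the ``off-diagonal'' partial-trace terms vanish, and that conjugation by $\Pi_X$ preserves operator inequalities between positive operators. Both are straightforward one-line checks.
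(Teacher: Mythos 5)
Your proof is correct. It rests on the same basic idea as the paper's — insert $\Pi_Y + \Pi_Y^{\perp}$ on the traced-out system and discard a positive leftover term — but the execution differs in structure. The paper argues at the level of quadratic forms: for an arbitrary $\ket{\psi}$ it writes $\bra{\psi}\Pi_X\rho_X\Pi_X\ket{\psi}$ as a full trace, splits $I_Y = \Pi_Y + \Pi_Y^c$ inside that trace, identifies the $\Pi_Y$ piece with $\bra{\psi}\Omega_X\ket{\psi}$, and drops the $\Pi_Y^c$ piece using $\Tr[AB] \geq 0$ for $A,B \geq 0$; it never needs the off-diagonal blocks to vanish. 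You instead prove the operator inequality $\Tr_Y[(I\otimes\Pi_Y)\rho(I\otimes\Pi_Y)] \leq \rho_X$ — that pinching the traced-out system can only decrease the reduced state — which does require checking that the cross terms $\Tr_Y[(I\otimes\Pi_Y^{\perp})\rho(I\otimes\Pi_Y)]$ vanish (true, by the cyclicity of the partial trace with respect to operators acting only on $Y$ together with $\Pi_Y\Pi_Y^{\perp}=0$), and then you finish by conjugating with $\Pi_X$, which preserves operator inequalities. Your route is marginally longer but yields a cleaner, reusable intermediate statement; the paper's is a more compact one-shot scalar computation. Both are sound.
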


\begin{proof}
The statement is equivalent to demonstrating that for all $|\psi\rangle$,
$\langle \psi | \Omega_X | \psi \rangle \leq \langle \psi | \Pi_X \rho_X \Pi_X | \psi \rangle$, which can be seen by direct calculation. Let $\Pi_Y^c = I - \Pi_Y$.
\begin{eqnarray*}
 \langle \psi | \Pi_X \rho_X \Pi_X | \psi \rangle
 &=& \Tr \left[ (|\psi\rangle\langle\psi | \otimes I) (\Pi_X \otimes I) \rho (\Pi_X \otimes I) \right] \\
 &=& \Tr \left[ \left(|\psi\rangle\langle\psi | \otimes (\Pi_Y + \Pi_Y^c) \right)
 		(\Pi_X \otimes I) \rho (\Pi_X \otimes I) \right] \\
&=& \langle \psi | \Omega_X | \psi \rangle +
	 \Tr \left[ \left(|\psi\rangle\langle\psi | \otimes \Pi_Y^c \right)
 		(\Pi_X \otimes I) \rho (\Pi_X \otimes I) \right] \\
&\geq& \langle \psi | \Omega_X | \psi \rangle.
\end{eqnarray*}
The inequality follows from the fact that $\Tr[AB] \geq 0$ whenever $A, B \geq 0$.
\end{proof}
\newcommand{\da}{\delta_1}
\newcommand{\db}{\delta_2}
\newcommand{\sigAP}{\sigma_1^{\ti{C}_1\ti{C}_2}}
\newcommand{\sigBP}{\sigma_2^{\ti{C}_1\ti{C}_2}}

\begin{proposition}\label{prop:mixedstate}
Let $\psi^{C_1C_2}$ be an arbitrary mixed state and consider $n$ copies of it. For any $\epsilon > 0$ and $n$ large enough, there exists a state $\Psi^{\ti{C}_1\ti{C}_2}$ which satisfies
\begin{equation*}
\begin{split}
\|\Psi^{\ti{C}_1\ti{C}_2} - \psi^{\otimes n}\|_1 &\leq \nu(\epsilon) \\
\Tr[(\Psi^{\ti{C}_1\ti{C}_2})^2] &\leq (1-\epsilon)^{-2}2^{-n(S(C_1C_2)_{\psi} - \upsilon)} \\
\Tr[(\Psi^{\ti{C}_1})^2] &\leq (1-\epsilon)^{-2} 2^{-n(S(C_1)_{\psi} - 3\da)} \\
\Tr[(\Psi^{\ti{C}_2})^2]  &\leq (1-\epsilon)^{-2}2^{-n(S(C_2)_{\psi} - 3\da)}\\
\mbox{rank }\Psi^{\ti{C}_i} &\leq 2^{n(S(C_i)_{\psi}+\da)},
\end{split}
\end{equation*}
where $\da$ and $\upsilon$ can be made arbitrarily small by taking sufficiently large values of $n$. Here, $\nu(\epsilon)$ is a function of $\epsilon$ which vanishes as $\epsilon \rightarrow 0$.
\end{proposition}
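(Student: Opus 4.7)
The plan is to construct $\Psi^{\tilde{C}_1\tilde{C}_2}$ via a double blocking procedure and to combine the operator-based union bound of Lemma \ref{Lem:operatorineq} with the gentle measurement lemma and Hoeffding's inequality. The core difficulty is that, in contrast to the pure-state case (eqs.~(\ref{eq:typicXX})--(\ref{eq:trois})), the joint typical projector $\Pi_{\tilde{C}_1\tilde{C}_2}$ for $\psi^{C_1C_2}$ and the marginal typical projectors $\Pi_{\tilde{C}_1}, \Pi_{\tilde{C}_2}$ do not commute, so a single-layer typical projection cannot simultaneously enforce the joint purity bound with parameter $\upsilon$ \emph{and} the marginal purity and rank bounds with parameter $\delta_1$. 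A naive tensoring of inner-block approximations also fails because the trace distance grows linearly in the number of blocks.

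First, I would write $n = mk$ and view $\psi^{\otimes n}$ as $m$ outer copies of $\psi^{\otimes k}$. In each inner block of size $k$, I build a sub-normalized state $\omega_k$ by sandwiching $\psi^{\otimes k}$ between the joint typical projector $\Pi_{\tilde{C}_1^k\tilde{C}_2^k}$ and the tensor of marginal typical projectors $\Pi_{\tilde{C}_1^k}\otimes\Pi_{\tilde{C}_2^k}$, applied in an order designed to leave the outer support inside $\tilde{C}_1^k$ and $\tilde{C}_2^k$ while still forcing the joint support into $\tilde{C}_1^k\tilde{C}_2^k$. The normalized state $\tau_k$ is then shown to satisfy $\|\tau_k - \psi^{\otimes k}\|_1 \leq \nu_k$ by iterated application of the gentle measurement lemma, where each typical projector has high expectation on $\psi^{\otimes k}$ (by eq.~(\ref{eq:unn})) and Lemma \ref{Lem:operatorineq} gives the joint high-expectation bound. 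Standard typicality estimates (eqs.~(\ref{eq:un})--(\ref{eq:trois})) bound $\nu_k$ exponentially in $k$.

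Second, I would set $\Psi^{\tilde{C}_1\tilde{C}_2} := \tau_k^{\otimes m}$. By subadditivity of the trace norm, $\|\Psi - \psi^{\otimes n}\|_1 \leq m \nu_k$, which can be made smaller than any prescribed $\nu(\epsilon)$ by choosing $k$ large relative to $m$ (this is where the double-blocking pays off: $\nu_k$ decays exponentially, so we can afford $m$ to grow). Tensor-product structure then gives all multiplicative purity bounds: $\Tr[\Psi^2] = \Tr[\tau_k^2]^m$, $\Tr[(\Psi^{C_i})^2] = \Tr[(\tau_k^{C_i^k})^2]^m$, and $\mathrm{rank}(\Psi^{C_i}) = \mathrm{rank}(\tau_k^{C_i^k})^m$. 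Thus everything reduces to bounding $\tau_k$'s purities and ranks at level $k$.

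Third, for the block-level bounds on $\tau_k$: the marginal rank bound $\mathrm{rank}(\tau_k^{C_i^k}) \leq 2^{k(S(C_i)+\delta_1)}$ follows because the inner marginal projector $\Pi_{\tilde{C}_i^k}$ forces the $C_i^k$ marginal to be supported in $\tilde{C}_i^k$, whose dimension is bounded by typicality (eq.~(\ref{eq:deux})). The marginal purity bound with the tight exponent $3\delta_1$ follows from eq.~(\ref{eq:trois}) applied to the single-party typical projection, together with a monotonicity argument showing that the extra joint projector only decreases the marginal purity (using that $A \leq B$ for positive operators implies $\Tr[A^2] \leq \Tr[B^2]$). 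The joint purity bound with the looser exponent $\upsilon$ is obtained from eq.~(\ref{eq:un}), together with Hoeffding's inequality to absorb the small additional deviation introduced by the marginal projectors into $\upsilon$; here Hoeffding gives exponential concentration of the number of inner blocks behaving typically, which permits replacing the joint typicality constant $3c\delta$ by some slightly larger $\upsilon$. Finally, multiplying over the $m$ outer blocks promotes the block-level bounds to the $n$-level bounds stated in the proposition.

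The main obstacle will be the third step: showing that the composition of the non-commuting projectors $\Pi_{\tilde{C}_1^k\tilde{C}_2^k}$ and $\Pi_{\tilde{C}_1^k}\otimes\Pi_{\tilde{C}_2^k}$ in the construction of $\tau_k$ does not degrade either the joint purity bound or the marginal rank/purity bounds beyond the tolerances $\upsilon$ and $3\delta_1$ respectively. This is exactly where the double-blocking buys us the freedom to use a slightly looser joint bound at the inner level, which Hoeffding's inequality then sharpens back at the outer level via concentration of the fraction of well-behaved blocks.
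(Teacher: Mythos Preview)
Your double-blocking is structurally different from the paper's, and the role you assign to Hoeffding does not match your own construction. The paper does \emph{not} apply both the joint and the marginal typical projectors at the inner level and then tensor at the outer level. Instead it applies \emph{only} the marginal projectors $\Pi_{\tilde C_1}^{s_1,\delta_1}\otimes\Pi_{\tilde C_2}^{s_1,\delta_1}$ to $\psi^{\otimes s_1}$ at the inner level to form $\sigma_1$, and then at the outer level applies the joint typical projector $\Pi_{\tilde C_1\tilde C_2}^{s_2,\delta_2}$ \emph{for the derived state $\sigma_1$} (not for $\psi$) to $\sigma_1^{\otimes s_2}$, obtaining $\sigma_2$. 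The point of this arrangement is that the outer projector, being diagonal in the eigenbasis of $\sigma_1$, commutes with $\sigma_1^{\otimes s_2}$, so $\sigma_2\le\sigma_1^{\otimes s_2}$ as an operator inequality; partial traces plus Lemma~\ref{lem:mergeTyp} at the inner level then give the marginal purity and rank bounds directly. The joint purity comes from eq.~(\ref{eq:trois}) applied to $\sigma_1^{\otimes s_2}$, with the Fannes inequality converting $S(\sigma_1)$ into $s_1S(C_1C_2)_\psi$ --- that entropy correction is where the looser constant $\upsilon$ originates. Hoeffding enters only to give explicit exponential decay of the inner and outer typicality errors (the outer one involving an alphabet of size $\sim d^{s_1}$), so that $s_1,s_2$ can be balanced; there is no ``concentration of the fraction of well-behaved blocks'', because in your setup $\Psi=\tau_k^{\otimes m}$ is a deterministic tensor of identical factors.

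Your construction can in fact be made to work, and more simply than you describe. If you order the inner sandwich as $\omega_k=(\Pi_{\tilde C_1}\otimes\Pi_{\tilde C_2})\,\Pi_{\tilde C_1\tilde C_2}\,\psi^{\otimes k}\,\Pi_{\tilde C_1\tilde C_2}\,(\Pi_{\tilde C_1}\otimes\Pi_{\tilde C_2})$, then because $\Pi_{\tilde C_1\tilde C_2}$ commutes with $\psi^{\otimes k}$ you get $\|\omega_k\|_\infty\le 2^{-k(S(C_1C_2)-c\delta)}$, and conjugation by the product projector cannot increase it --- so the joint purity bound holds with exponent $c\delta$ directly (no Fannes, no separate $\upsilon$). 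For the marginals, Lemma~\ref{lem:mergeTyp} together with $\Pi_{\tilde C_1\tilde C_2}\psi^{\otimes k}\Pi_{\tilde C_1\tilde C_2}\le\psi^{\otimes k}$ gives $\omega_k^{C_i}\le\Pi_{\tilde C_i}\psi_{C_i}^{\otimes k}\Pi_{\tilde C_i}$, yielding both the $3\delta_1$ purity bound and the rank bound. In this route the outer $m$-fold tensor is actually unnecessary --- a single block $m=1$ already works --- and Hoeffding, if used at all, only serves to make the gentle-measurement errors explicit. So your idea is salvageable, but the mechanism you describe (Hoeffding sharpening a loose inner bound via block-counting) is not what makes it go.
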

\begin{proof}
For any $\delta_i > 0$, with $1 \leq i \leq 2$, define the states:
\begin{equation*}
\begin{split}
\sigAP &:= \Pi^{s_1,\da}_{\ti{C}_1} \otimes \Pi^{s_1,\da}_{\ti{C}_2} \psi_{C_1C_2}^{\otimes s_1} \Pi^{s_1,\da}_{\ti{C}_1} \otimes \Pi^{s_1,\da}_{\ti{C}_2}  \\
\sigBP &:= \Pi^{s_2,\db}_{\ti{C}_1\ti{C}_2} (\sigma^{\ti{C}_1\ti{C}_2}_{1})^{\otimes s_2} \Pi^{s_2,\db}_{\ti{C}_1\ti{C}_2} \\
\end{split}
\end{equation*}
where $\Pi^{s_2,\db}_{\ti{C}_1\ti{C}_2}$ is the projector onto the $\db$-typical subspace for the state $(\sigma^{\ti{C}_1\ti{C}_2}_1)^{\otimes s_2}$. Using Lemma \ref{Lem:operatorineq} and Hoeffding's inequality, we have
\begin{equation*}
\begin{split}
\Tr [ \Pi^{s_1,\da}_{\ti{C}_1} \otimes\Pi^{s_1,\da}_{\ti{C}_2}\psi_{C_1C_2}^{\otimes s_1}] & \geq 1- 2(d_{C_1}+d_{C_2}) \exp(-2s_1 \da^2) \\
 &\geq 1- c \exp(-2s_1 \da^2) \\
\Tr [ \Pi^{s_2,\db}_{\ti{C}_1\ti{C}_2} \sigma^{\otimes s_2}_1] & \geq 1- 2(d_{C_1}^{s_1}d_{C_2}^{s_1}) \exp(-2s_2 \db^2) \\
 &\geq 1- c^{s_1} \exp(-2s_2 \db^2) \\
 \end{split}
\end{equation*}
\newcommand{\sigEnd}{\sigma^{\ti{C}_1\ti{C}_2}}
for some constant $c > 0$. The distance between $\sigBP$ and $\psi^{\otimes n}$ is bounded using the triangle inequality and the Gentle Measurement Lemma:
\begin{equation} \label{eq:t}
\begin{split}
 \| \sigBP - \psi^{\otimes n}\|_1 &\leq \|\sigBP - (\sigma^{\ti{C}_1\ti{C}_2}_{1})^{\otimes s_2}\|_1 + \|(\sigma^{\ti{C}_1\ti{C}_2}_{1})^{\otimes s_2} - \psi^{\otimes n}\|_1 \\
 &\leq 2\sqrt{c^{s_1}} \exp(-s_2 \db^2) + s_2 \|\sigma^{\ti{C}_1\ti{C}_2}_{1} - \psi^{\otimes s_1}\|_1\\
 &\leq 2\sqrt{c^{s_1}} \exp(-s_2 \db^2) + s_2 2\sqrt{c} \exp(-s_1 \da^2) \\
 \end{split}
\end{equation}
For a fixed value of $s_1$, choose $s_2$ to be such that
\begin{equation}\label{eq:xx}
 \epsilon:= c\exp(-2s_{1} \da^2) = c^{s_1} \exp(-2s_2 \db^2)
\end{equation}
Taking the logarithm on both sides, this is equivalent to:
\begin{equation*}
\begin{split}
s_2 &= \frac{1}{2\db^2}(  2s_1 \da^2 + s_1\ln(c) -\ln(c)) \\
s_2 &= \frac{1}{2\db^2}( s_1 ( 2\da^2 + \ln(c) ) -\ln(c))  \\
\end{split}
\end{equation*}
Replacing $s_2$ into eq.~(\ref{eq:t}) and using eq.~(\ref{eq:xx}), we have
\begin{equation*} \label{eq:triangle}
\begin{split}
 \| \sigBP - \psi^{\otimes n}\|_1 &\leq 2\sqrt{c^{s_1}} \exp(-s_2 \db^2) + s_2 2\sqrt{c} \exp(-s_1 \da^2) \\
 &=   2 \sqrt{c}\exp(-s_{1} \da^2) + \frac{1}{\db^2}( s_1 ( 2\da^2 + \ln(c) ) -\ln(c))\sqrt{c} \exp(-s_1 \da^2)  \\
 \end{split}
\end{equation*}
which vanishes for sufficiently large values of $s_1$. Let $\Psi^{\ti{C}_1\ti{C}_2}$ be the normalized state of $\sigBP$. Using the triangle inequality, we have
\begin{equation*} \label{eq:triangle}
\begin{split}
 \| \Psi^{\ti{C}_1\ti{C}_2} - \psi^{\otimes n}\|_1 &\leq 4 \sqrt{c}\exp(-s_{1} \da^2) + \frac{1}{\db^2}( s_1 ( 2\da^2 + \ln(c) ) -\ln(c))\sqrt{c} \exp(-s_1 \da^2)\\
\end{split}
\end{equation*}
which also vanishes as $s_1 \rightarrow \infty$. To bound the quantity $\Tr[ (\Psi^{\ti{C}_1\ti{C}_2})^2]$, we have, using typicality (see eq.~(\ref{eq:typicXX})),
\begin{equation}\label{eq:tracesq}
\Tr[ (\Psi^{\ti{C}_1\ti{C}_2})^2] \leq (1-\epsilon)^{-2}2^{-s_2(S(\ti{C}_1\ti{C}_2)_{\sigma_1}-3\db)}.
\end{equation}
Since the state $\sigma^{\ti{C}_1\ti{C}_2}_1$ is close in the trace distance to the state $\psi^{\otimes s_1}$, we can apply the Fannes inequality, obtaining
\begin{equation*}
S(\ti{C}_1\ti{C}_2)_{\sigma_1} \geq s_1S(C_1C_2)_{\psi} - s_1\eta(\epsilon)\log(d_{C_1}d_{C_2})
\end{equation*}
Substituting into eq.~(\ref{eq:tracesq}), we have
\begin{equation}
\begin{split}
\Tr[ (\Psi^{\ti{C}_1\ti{C}_2})^2] &\leq (1-\epsilon)^{-2}2^{-s_2(S(\ti{C}_1\ti{C}_2)_{\sigma_1}-3\db)} \\
&\leq (1-\epsilon)^{-2}2^{-s_2 s_1 (S(C_1C_2)_{\psi} - \eta(\epsilon)\log(d_{C_1}d_{C_2}) - 3\frac{\db}{s_1})} \\
&\leq (1-\epsilon)^{-2}2^{-n(S(C_1C_2)_{\psi} -\upsilon)},
\end{split}
\end{equation}
where $\upsilon$ will vanish as $s_1$ increases. To obtain a bound on $\Tr[ (\Psi^{\ti{C}_1})^2]$, we have
\begin{equation}
 \sigma^{\ti{C}_1\ti{C}_2}_2 \leq (\sigma^{\ti{C}_1\ti{C}_2}_1)^{\otimes s_2}.
\end{equation}
Since the partial trace preserves the previous ordering and $\Tr[AB] \geq 0$ for two positive operators $A$ and $B$, we have
\begin{equation}
\begin{split}
 \Tr[(\sigma^{\ti{C}_1}_2)^2] &\leq \Tr[((\sigma^{\ti{C}_1}_1)^{\otimes s_2})^2] \\
&= \Tr[(\sigma^{\ti{C}_1}_1)^2]^{s_2} \\
&\leq (2^{-s_1(S(C_1)_{\psi} -3\da)})^{s_2}\\
&= 2^{-n(S(C_1)_{\psi} - 3\da)},
\end{split}
\end{equation}
where the third line follows from Lemma \ref{lem:mergeTyp}. Finally, we have
\begin{equation}
\begin{split}
 \Tr[(\Psi^{\ti{C}_1})^2] &\leq (1-\epsilon)^{-2}\Tr[((\sigma^{\ti{C}_1}_1)^{\otimes s_2})^2] \\
&\leq (1-\epsilon)^{-2} 2^{-n(S(C_1)_{\psi} - 3\da)}.
\end{split}
\end{equation}
We can apply a similar set of inequalities for the quantity $\Tr[(\sigma^{\ti{C}_2}_2)^2]$ and obtain
\begin{equation}
\begin{split}
 \Tr[(\Psi^{\ti{C}_2})^2] &\leq (1-\epsilon)^{-2}\Tr[((\sigma^{\ti{C}_2}_1)^{\otimes s_2})^2] \\
&\leq (1-\epsilon)^{-2} 2^{-n(S(C_2)_{\psi} - 3\da)}.
\end{split}
\end{equation}
Since $\sigma^{\ti{C}_1}_2 \leq (\sigma^{\ti{C}_1}_1)^{\otimes s_2}$, the rank of $\Psi^{\ti{C}_1}$ can be bounded as follows:
\begin{equation*}
\begin{split}
\mbox{rank }\Psi^{\ti{C}_1} &\leq  (\mbox{rank }\sigma^{\ti{C}_1}_1)^{s_2} \\
&\leq  (2^{-s_1(S(C_1)_{\psi}+\da)})^{s_2}\\
&=  2^{-n(S(C_1)_{\psi}+\da)}.
\end{split}
\end{equation*}
The second line follows from Lemma \ref{lem:mergeTyp}. We have a similar calculation for $\mbox{rank }\Psi^{\ti{C}_2}$, and so we are done.
\end{proof}
\section{Smooth max entropy}
\begin{Lemma} \label{lem:hmax-smoothing}
Suppose the density operator $\rho$ has eigenvalues $r = (r_1, \ldots, r_d)$ with $r_j \geq r_{j+1}$. Then
\begin{equation}
H^\epsilon_\max(\rho) \geq 2 \log \min \left\{ \sum_{j=1}^{k-1} \sqrt{r_j}:
	k \mbox{ such that } \sum_{j=k+1}^d r_j \leq 2\epsilon \right\}.
\end{equation}
\end{Lemma}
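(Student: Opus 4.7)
The plan is to work from the variational characterization
\begin{equation*}
H^\epsilon_\max(\rho)=\min_{\bar\rho\in {\cal S}_{\leq}(A):\,P(\bar\rho,\rho)\leq\epsilon}H_\max(\bar\rho)=\min_{\bar\rho}\,2\log\sum_x\sqrt{\bar r_x},
\end{equation*}
combining eq.~(\ref{eq:smoothmax}) with the evaluation of eq.~(\ref{eq:hmax}) on each sub-normalized $\bar\rho$ in the purified-distance ball. It therefore suffices to show that every such $\bar\rho$ satisfies $\sum_x\sqrt{\bar r_x}\geq\sum_{j=1}^{k-1}\sqrt{r_j}$ for at least one index $k$ obeying the tail condition $\sum_{j>k}r_j\leq 2\epsilon$.

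The first step is a rearrangement reduction. Given any feasible $\bar\rho$, let $\bar\rho^\downarrow$ be the state obtained by placing $\bar\rho$'s eigenvalues in $\rho$'s eigenbasis with the two decreasing orderings matched. It has the same spectrum, hence the same $\sum\sqrt{\bar r_x}$, and by the standard rearrangement inequality for Uhlmann fidelity (from the weak majorization $\sum_i\sigma_i(\sqrt{\bar\rho}\sqrt{\rho})\leq\sum_i\sqrt{\bar r_i^\downarrow r_i}$) satisfies $F(\bar\rho^\downarrow,\rho)\geq F(\bar\rho,\rho)$, so $\bar\rho^\downarrow$ remains feasible. Since $\rho$ is normalized the generalized fidelity coincides with the Uhlmann fidelity, so the constraint $P\leq\epsilon$ becomes the scalar inequality $\sum_j\sqrt{\bar r_j r_j}\geq\sqrt{1-\epsilon^2}$, combined with $\bar r_j\geq 0$ and $\sum_j\bar r_j\leq 1$. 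I would then pick $k$ to be the smallest index with $\sum_{j>k}r_j\leq 2\epsilon$, which makes the tail condition automatic, and reduce the theorem to the purely scalar claim that
\begin{equation*}
\min\Big\{\sum_j\sqrt{\bar r_j}\,:\,\bar r_j\geq 0,\ \sum_j\bar r_j\leq 1,\ \sum_j\sqrt{\bar r_j r_j}\geq\sqrt{1-\epsilon^2}\Big\}\geq\sum_{j<k}\sqrt{r_j}.
\end{equation*}

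For this scalar minimization, the KKT conditions force any optimizer to be supported on a prefix $\{1,\dots,K\}$ of indices and to satisfy $\sqrt{\bar r_j}=c_1\sqrt{r_j}-c_2$ on that support, reducing the problem to two Lagrange multipliers together with the cutoff $K$; its optimal value is then an explicit rational expression in the head sums $\sum_{j\leq K}\sqrt{r_j}$ and $\sum_{j\leq K}r_j$. The plan is to show that the optimal $K$ may be taken equal to the chosen $k$, and then to establish the desired inequality using the elementary facts that (i) $1-\sqrt{1-\epsilon^2}\leq 2\epsilon$, and (ii) the Cauchy--Schwarz bound $(\sum_{j<k}\sqrt{r_j})^2\leq(k-1)\sum_{j<k}r_j$, which together control the slack between the fidelity constraint and the objective. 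The hard part will be this final algebraic comparison between the KKT optimum and $\sum_{j<k}\sqrt{r_j}$; the loss of one eigenvalue (the bound involves $k-1$ rather than $k$) emerges naturally because $\bar r_k$ is forced to vanish at the KKT support boundary, rendering the extremal state effectively a truncation one index short of the tail threshold.
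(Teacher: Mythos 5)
Your reductions up to the scalar problem are sound: the rearrangement step (via the weak majorization $\sum_i\sigma_i(\sqrt{\bar\rho}\sqrt{\rho})\leq\sum_i\sqrt{\bar r_i^\downarrow\, r_i^\downarrow}$) is correct, and restricting to the smallest admissible $k$ does suffice, because the head sum $\sum_{j<k}\sqrt{r_j}$ is nondecreasing in $k$. The gap is at the end: the entire content of the lemma now sits in the ``final algebraic comparison,'' which you have not carried out, and your plan for it contains a step that cannot be executed as stated. The support size $K$ of the KKT optimizer is determined by the data $(r,\epsilon)$ through the multipliers; it is not something you ``may take equal to'' your chosen $k$. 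For instance, when $r$ is uniform the fidelity-constrained optimizer spreads over all $d$ coordinates ($K=d$) while $k\approx d(1-2\epsilon)$, so in general $K\neq k$ and you must prove the inequality for whatever $K$ stationarity produces — which is exactly the part you defer. Until that comparison is done, nothing is proved.

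The larger point is that the fidelity/KKT route is harder than necessary. Since $D(\rho,\bar\rho)\leq P(\rho,\bar\rho)$ (eq.~(\ref{eq:purified})), every $\bar\rho$ in the purified-distance ball satisfies $\|\bar\rho-\rho\|_1\leq 2\epsilon$, hence $\sum_j|\bar r_j-r_j|\leq 2\epsilon$ for the ordered spectra. Over this relaxed $\ell_1$ feasible set the minimizer of $\sum_j\sqrt{\bar r_j}$ is found by an elementary exchange argument: because $x\mapsto\sqrt{x}$ has decreasing derivative, shifting the allowed $2\epsilon$ of deleted mass onto the smallest eigenvalues never increases the objective, so the extremal $\bar r$ is a truncation that keeps $r_1,\dots,r_{j_0-1}$ intact, and $j_0$ automatically satisfies the tail condition $\sum_{j>j_0}r_j\leq 2\epsilon$. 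This is the proof given in the appendix; it needs no Lagrange multipliers and no case analysis of the support. Your scalar claim is in fact true — it follows from this relaxation, since the fidelity constraint implies the $\ell_1$ constraint — but attacking it directly from the fidelity constraint buys nothing here (the lemma only asserts the weaker $2\epsilon$-tail bound) and leaves you with an optimization you have not solved.
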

\begin{proof}
By eq.~(\ref{eq:smoothmax}) (see Chapter 2), $H^\epsilon_\max(\rho)$ is equal to the minimum of $H_\max(\overline\rho)$ over all sub-normalized density operators $\overline\rho$ such that $P(\rho,\bar{\rho}) \leq \epsilon$. From eq.~(\ref{eq:purified}), we can bound the purified distance from below by the trace distance:
\begin{equation*}
 P(\rho, \bar{\rho}) \geq \frac{1}{2}\| \rho - \bar{\rho} \|_1.
\end{equation*}
Since the smooth max entropy $H^{\epsilon}_{\max}(\rho)$ is the minimization of $H_{\max}(\bar{\rho})$ over all sub-normalized density operator $\bar{\rho}$ with $P(\rho,\bar{\rho}) \leq \epsilon$, the previous bound implies
\begin{eqnarray}
H^\epsilon_\max(\rho)
	&\geq& \min \left\{
	H_\max(\overline\rho) : \| \rho - \bar{\rho} \|_1 \leq 2\epsilon
	\right\} =: \bar{H}^{\epsilon}_{\max}(\rho),
\end{eqnarray}
\begin{figure}[t]
  \centering
    \includegraphics{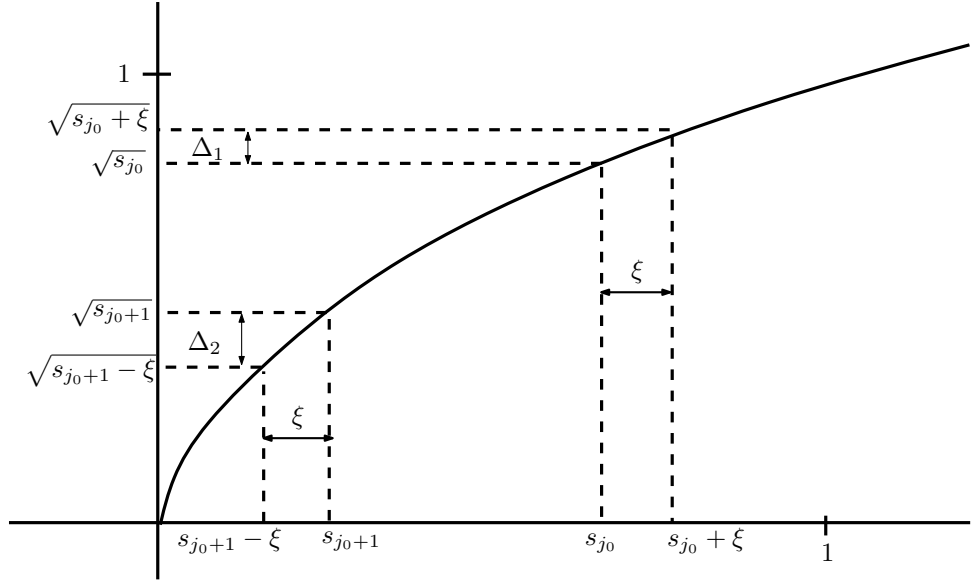}
\caption{The square root function evaluated at four different points. The derivative of $\sqrt{x}$ is a non-increasing function for $x> 0$. Since we have $s_{j_0}\geq s_{j_0+1}$ by hypothesis, the finite difference $\Delta_1$ cannot be greater than the finite difference $\Delta_2$. }\label{fig:sqrt}
\end{figure}
 where $\bar{H}^{\epsilon}_{\max}(\rho)$ the minimization of the max entropy $H_{\max}(\bar{\rho})$ over all sub-normalized density operators $\bar{\rho}$ such that $\|\rho - \bar{\rho}\|_1 \leq 2\epsilon$. Let $\overline{\rho}$ be a sub-normalized density operator such that $\bar{H}^{\epsilon}_{\max}(\rho) = H_{\max}(\overline{\rho})$. Let $\overline{r} = ( \overline{r}_1,\ldots,\overline{r}_d )$ be the eigenvalues of $\overline\rho$, ordered such that $\overline{r}_j \geq \overline{r}_{j+1}$. We will identify $\overline{r}$ and $r$ with their corresponding diagonal matrices. Then, we have (see the proof of the Fannes inequality in \cite{Nielsen})
\begin{equation*}
\| \overline{r} - r \|_1 \leq \| \overline{\rho} - \rho \|_1 \leq 2\epsilon.
\end{equation*}
Thus, without loss of generality, we can assume that $\overline{\rho}$ and $\rho$ are diagonal in the same basis. We therefore dispense with $\rho$ and $\overline\rho$, discussing only $r$ and $\overline{r}$ from now on.

By eq.~(\ref{eq:hmax}), we have $H_\max(r) = 2 \log \sum_j \sqrt{r_j}$, which is monotonically decreasing in each $r_j$. This implies that a minimizing $\overline{r}$ must satisfy $\overline{r}_j \leq r_j$. If not, redefining $\overline{r}_j = r_j$ decreases $\| \overline{r} - r \|_1$ and $H_\max(\overline{r})$ at the same time.

We will now argue that there is a minimizing $\overline{r}$ with the following property: there is a $j_0$ for which $r_j = \overline{r}_j$ for all $j < j_0$ and $\overline{r}_j = 0$ for all $j > j_0$. Let $s = (s_1,\ldots,s_d)$ be any vector such that $H_{\max}(s) = \bar{H}^{\epsilon}_{\max}(\rho)$, with $s_j \geq s_{j+1} \geq 0$ and $s_j \leq r_j$. That such a vector exists follows from the arguments in the previous paragraphs. Suppose that $s$ does not have the prescribed form. That is, there is a $j_0$ such that $s_{j_0} < r_{j_0}$ but $s_{j_0+1} > 0$. Note that this implies that $s_{j_0} > 0$ since $s_{j_0} \geq s_{j_0+1}$. Let $\xi = \min\{ r_{j_0} - s_{j_0}, s_{j_0+1}\}$ be the minimum between $s_{j_0+1}$ and the difference between the eigenvalues $r_{j_0}$ and $s_{j_0}$. Define the vector $s'$ such that $s'_{j_0} = s_{j_0}+\xi$, $s'_{j_0+1} = s_{j_0+1}-\xi$ and $s'_j= s_j$ for $j \not\in \{ j_0, j_0+1 \}$. If $\xi = s_{j_0+1}$, the trace norm $\| r - s'\|_1$ is equal to
\begin{equation*}
\begin{split}
\|r - s'\|_1 &= \sum^{j_0-1}_{j=1} |r_j - s_j| + |r_{j_0} - (s_{j_0} + s_{j_0+1})| + r_{j_0+1} \\
&=  \sum^{j_0-1}_{j=1} (r_j - s_j) + (r_{j_0}-s_{j_0}) + r_{j_0+1} - s_{j_0+1} \\
&= \|r-s\|_1.
\end{split}
\end{equation*}
If $\xi = r_{j_0}-s_{j_0}$, the trace norm $\| r - s'\|_1$ is equal to
\begin{equation*}
\begin{split}
\|r - s'\|_1 &= \sum^{j_0-1}_{j=1} |r_j - s_j| + |r_{j_0} - (s_{j_0}+\xi)| + |r_{j_0+1}-(s_{j_0+1}-\xi)| \\
&=  \sum^{j_0-1}_{j=1} (r_j - s_j)  + r_{j_0+1} - s_{j_0+1} + r_{j_0}-s_{j_0} \\
&= \|r-s\|_1.
\end{split}
\end{equation*}
Hence, our new vector $s'$ preserves the trace norm $\|r-s\|_1$. The derivative of the function $\sqrt{x}$ is given by
\begin{equation}
\frac{df}{dx} \sqrt{x} = \frac{1}{2\sqrt{x}},
\end{equation}
which is well-defined for any $x > 0$ and is also a non-increasing function of $x$. Since $s_{j_0+1} \leq s_{j_0}$, we have (see Figure \ref{fig:sqrt}):
\begin{equation*}
\Delta_1 := \sqrt{s_{j_0}+\xi}-\sqrt{s_{j_0}} \leq \sqrt{s_{j_0+1}} - \sqrt{s_{j_0+1}-\xi} =: \Delta_2
\end{equation*}
and so $H_{\max}(s') \leq H_{\max}(s)$. If $H_{\max}(s')$ is less than $H_{\max}(s)$, we have a contradiction and $s$ must have the prescribed form. Thus, assume from now on the max entropies are equal. If $\xi = s_{j_0+1}$, the new vector $s'$ has almost the prescribed form. We have $s'_{j_0+1}=0$, but it could be that $s'_{j} > 0$ for any $j > j_0+1$. Let $s''$ be the vector such that $s^{''}_{j_0+1} = s'_{j_0+2}$,$s^{''}_{j_0+2}=0$ and $s^{''}_j = s'_j$ for all other values of $j$. Then, the trace norm $\|r-s''\|_1$ is equal to $\|r-s'\|_1$, as can be seen from the following equations:
\begin{equation*}
\begin{split}
\|r-s''\|_1 &= \sum^d_{j \neq \{j_0+1,j_0+2\}} |r_{j} - s^{''}_{j}| + |r_{j_0+1} -s^{''}_{j_0+1}| + |r_{j_0+2} -s^{''}_{j_0+2}| \\
&=  \sum^d_{j \neq \{j_0+1,j_0+2\}} (r_{j} - s'_{j}) + |r_{j_0+1} -s^{'}_{j_0+2}| + r_{j_0+2} \\
&=  \sum^d_{j \neq \{j_0+1,j_0+2\}} (r_{j} - s'_{j}) + r_{j_0+1} + (r_{j_0+2} -s^{'}_{j_0+2}) \\
&= \|r-s'\|.
\end{split}
\end{equation*}
The third line is obtained using $s'_{j_0+2} \leq r_{j_0+2} \leq r_{j_0+1}$. Thus, we can push back $s'_{j_0+1}$ until another zero value is encountered. That is, there exists a vector $\overline{s}$ such that $\overline{s}_j \geq \overline{s}_{j+1}$, with $\overline{s}_j =r_j$ for all $j< j_0$, $\overline{s}_{j_0}=s'_{j_0}$ and $\overline{s}_{j}=0$ for all $j \geq k$ , where $j_0 < k \leq d$. If $\overline{s}_{j_0+1} > 0$, we apply the previous argument until the prescribed form $\overline{r}$ is obtained.

If $s_{j_0+1} \geq \xi$, we have $s'_{j_0} = r_{j_0}$ and we can repeat the previous argumentation with $j'_0 = j_0+1$. Since $j_0$ is at most $d$, we will eventually find a vector $\overline{r}$ of the prescribed form. The statement follows by evaluating the max entropy for this vector $\overline{r}$:
\begin{equation*}
\begin{split}
H_{\max}^{\epsilon}(\rho) &\geq \bar{H}^{\epsilon}_{\max}(\rho) = H_{\max}(\overline{r}) \\
& \geq 2 \log \min \left\{ \sum_{j=1}^{k-1} \sqrt{r_j}: k \mbox{ such that } \sum_{j=k+1}^d r_j \leq 2\epsilon \right\}
\end{split}
\end{equation*}
\end{proof}

\section{Assisted distillation}

\begin{lemma}[Markov's Inequality] \label{lem:Markov}
 If $X$ is a random variable with probability distribution $p(x)$ and expectation $E(X)$, then, for any positive number $a$, we have: \[ P(|X| \geq a) \leq \frac{E(|X|)}{a}. \]
  \end{lemma}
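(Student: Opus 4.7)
The statement to be proved is Markov's inequality, a standard elementary result from probability theory. The proof I have in mind is the textbook one based on the indicator-function trick, so there is no real obstacle and the plan is short.

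The plan is to observe the pointwise inequality
\begin{equation*}
a \cdot \mathbb{1}_{\{|X| \geq a\}} \leq |X|,
\end{equation*}
valid for every realization of $X$, since on the event $\{|X| \geq a\}$ the left side equals $a$ and the right side is at least $a$, while on the complementary event the left side is zero and the right side is nonnegative. Taking expectations of both sides and using linearity and monotonicity of expectation gives
\begin{equation*}
a \cdot P(|X| \geq a) \;=\; E\bigl[a \cdot \mathbb{1}_{\{|X| \geq a\}}\bigr] \;\leq\; E(|X|).
\end{equation*}
Since $a > 0$ by hypothesis, dividing both sides by $a$ yields the desired bound
\begin{equation*}
P(|X| \geq a) \;\leq\; \frac{E(|X|)}{a}.
\end{equation*}

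The only modest point worth remarking on is that the argument works for both discrete and continuous $X$ in a unified way through the expectation of the indicator random variable, so no case split on the type of distribution $p(x)$ is needed. No step is expected to be difficult; if one wished to stay purely within the discrete setting suggested by the phrasing ``probability distribution $p(x)$'', the same argument can be written as $\sum_{x : |x| \geq a} a \, p(x) \leq \sum_{x : |x| \geq a} |x| \, p(x) \leq \sum_x |x| \, p(x) = E(|X|)$, and dividing by $a$ gives the result.
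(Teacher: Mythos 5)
Your proof is correct: the indicator-function argument (or equivalently the discrete sum you give at the end) is the standard and complete proof of Markov's inequality. The paper itself states this lemma without proof, treating it as a well-known fact, so there is nothing in the paper's treatment that your argument needs to be reconciled with; your write-up fills the gap in the canonical way and no step is missing.
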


\begin{lemma}\label{lem:unionbound}
Suppose we have $n$ copies of the pure state $\psi^{CABR}$ with $S(R)_{\psi} < S(AB)_{\psi}$ and $S(B)_{\psi} < S(AR)_{\psi}$. Let $\psi^{\tilde{C}A^nB^nR^n}$ be be the normalized state obtained after projecting the space $C^n$ into its $\delta-$typical subspace $\tilde{C}$. If Charlie performs a (rank one) random measurement of his system $\tilde{C}$, we have, for any fixed $\xi_1 > 0$ and $\xi_2 > 0$,
\begin{equation}\label{eq:fanness}
\begin{split}
\int_{\mathbb{U}(\tilde{C})} P\left (\| \psi^{R^n}_J - (\psi^R)^{\otimes n} \|_1 < \xi_1 \bigcap \| \psi^{B^n}_J - (\psi^B)^{\otimes n}\|_1 < \xi_2\right ) dU \geq 1-\alpha,
\end{split}
\end{equation}
 where $\alpha$ can be made arbitrarily small by taking sufficiently large values of $n$. Here, $J$ is the random variable associated with the measurement outcome and $\psi_J^{A^nB^nR^n}$ is the pure state of the systems $A^n, B^n$ and $R^n$ after Charlie's measurement.
\end{lemma}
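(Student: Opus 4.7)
My plan is to obtain the statement as a corollary of Proposition \ref{thm:random}, applied twice with two different tripartite views of the same state, combined with Markov's inequality and the union bound. The hypotheses $S(R)_\psi < S(AB)_\psi$ and $S(B)_\psi < S(AR)_\psi$ were chosen precisely so that in each of these two tripartite regroupings the relevant entropy inequality of Proposition \ref{thm:random} is satisfied.

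First I would regard $\psi^{CABR}$ as a tripartite pure state on $C \otimes (AB) \otimes R$, where now the ``$B$ system'' of Proposition \ref{thm:random} is $AB$ and the ``$R$ system'' is $R$. Since $S(R)_\psi < S(AB)_\psi$, Proposition \ref{thm:random} yields, for every $\epsilon_1>0$ and all sufficiently large $n$,
\begin{equation*}
\int_{\mathbb{U}(\tilde{C})} \sum_j p_j \bigl\| \psi^{R^n}_j - \psi^{\tilde{R}} \bigr\|_1 \, dU \leq \epsilon_1,
\end{equation*}
where $\psi^{\tilde{R}}$ is the marginal on $R^n$ of the typical-projected state and $\psi^{R^n}_j$ is the marginal of the normalized post-measurement state for outcome $j$. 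A second, entirely parallel application to the regrouping $C \otimes (AR) \otimes B$, using the hypothesis $S(B)_\psi < S(AR)_\psi$, gives the analogous bound $\int \sum_j p_j \| \psi^{B^n}_j - \psi^{\tilde{B}} \|_1 dU \leq \epsilon_2$. Moreover, by the Gentle Measurement Lemma applied to the typical projector on $C^n$, together with the triangle inequality, $\psi^{\tilde{R}}$ is within $O(\sqrt{\epsilon_{\rm typ}})$ in trace norm of $(\psi^R)^{\otimes n}$, and likewise for $B$; so for $n$ large enough we may replace $\psi^{\tilde{R}}$ and $\psi^{\tilde{B}}$ in the two inequalities by $(\psi^R)^{\otimes n}$ and $(\psi^B)^{\otimes n}$ at the cost of increasing the constants $\epsilon_1,\epsilon_2$ by vanishing amounts.

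Next I would convert these averaged trace-norm bounds into probability bounds via Markov's inequality (Lemma \ref{lem:Markov}). Concretely, for each fixed unitary $U$ the inner sum $\sum_j p_j \|\psi^{R^n}_j - (\psi^R)^{\otimes n}\|_1$ is the expectation $E_J[\|\psi^{R^n}_J - (\psi^R)^{\otimes n}\|_1]$, so Lemma \ref{lem:Markov} gives $P(\|\psi^{R^n}_J - (\psi^R)^{\otimes n}\|_1 \geq \xi_1) \leq \frac{1}{\xi_1} E_J[\|\psi^{R^n}_J - (\psi^R)^{\otimes n}\|_1]$, and similarly for the $B$ event with threshold $\xi_2$. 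Integrating over $U$ and using the two displayed bounds from Proposition \ref{thm:random}, I get
\begin{equation*}
\int_{\mathbb{U}(\tilde{C})} P\bigl(\|\psi^{R^n}_J - (\psi^R)^{\otimes n}\|_1 \geq \xi_1\bigr)\, dU \leq \frac{\epsilon_1}{\xi_1},
\end{equation*}
and analogously $\int P(\|\psi^{B^n}_J - (\psi^B)^{\otimes n}\|_1 \geq \xi_2)\, dU \leq \epsilon_2/\xi_2$.

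Finally, the union bound yields
\begin{equation*}
\int_{\mathbb{U}(\tilde{C})} P\bigl(\|\psi^{R^n}_J - (\psi^R)^{\otimes n}\|_1 < \xi_1 \,\cap\, \|\psi^{B^n}_J - (\psi^B)^{\otimes n}\|_1 < \xi_2\bigr)\, dU \geq 1 - \frac{\epsilon_1}{\xi_1} - \frac{\epsilon_2}{\xi_2},
\end{equation*}
and since the two hypotheses $S(R)<S(AB)$ and $S(B)<S(AR)$ let us make $\epsilon_1,\epsilon_2$ arbitrarily small by choosing $n$ large, setting $\alpha := \epsilon_1/\xi_1 + \epsilon_2/\xi_2$ gives the claim. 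The only mildly subtle point is the passage from $\psi^{\tilde{R}}$ to $(\psi^R)^{\otimes n}$ (and the analogous passage on $B$): this is handled by the Gentle Measurement Lemma exactly as in the analysis preceding Proposition \ref{thm:random}, and introduces only a vanishing additive error to the averaged trace-norm bounds before Markov is applied. Everything else is a routine two-fold use of results already established in the chapter.
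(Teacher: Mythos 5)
Your skeleton---Markov's inequality, the union bound, and two applications of Proposition \ref{thm:random} to the regroupings $C$ vs.\ $AB$ vs.\ $R$ and $C$ vs.\ $AR$ vs.\ $B$---is the same as the paper's, but there is a genuine gap at the point where you invoke Proposition \ref{thm:random}. You apply it as if its conclusion concerned the post-measurement marginals $\psi^{R^n}_J$, $\psi^{B^n}_J$ of the state actually measured in the lemma, namely $\psi^{\tilde{C}A^nB^nR^n}$, which is projected \emph{only} on Charlie's typical subspace. The proposition, however, is stated for (and its proof via Proposition \ref{prop:isometry} genuinely requires) a state projected onto the typical subspaces of \emph{all three} blocks of the split: the decoupling bound there is controlled by $d_{\tilde{R}}\,\Tr[\psi^2_{\tilde{B}}]$, and it is the typical projections that replace the raw dimension $d_R^n$ by $\approx 2^{nS(R)}$ and the purity of the complementary block by $\approx 2^{-nS(AB)}$; for the $C$-only-projected state one would instead face $d_R^n$ and roughly $2^{-nS_2(AB)}$, whose product need not be small even when $S(R)_\psi<S(AB)_\psi$. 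So the averaged bound you quote for $\int\sum_j p_j\|\psi^{R^n}_j-\psi^{\tilde{R}}\|_1\,dU$ is not what the proposition delivers, and the ``only mildly subtle point'' you identify---swapping the fixed references $\psi^{\tilde{R}},\psi^{\tilde{B}}$ for $(\psi^R)^{\otimes n},(\psi^B)^{\otimes n}$ by gentle measurement---is the easy part: it never touches the outcome ensemble $\{p_j,\psi_j\}$, which is where the mismatch sits.

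What is missing is exactly the bridging step that occupies most of the paper's proof: introduce a single intermediate state, $\ket{\Omega}^{\tilde{C}\tilde{A}\tilde{B}\tilde{R}}=(\Pi_{\tilde{A}}\otimes\Pi_{\tilde{B}}\otimes\Pi_{\tilde{C}}\otimes\Pi_{\tilde{R}})\ket{\psi}^{\otimes n}$ with normalization $\Psi$, show that under the \emph{same} random measurement the probability-weighted post-measurement marginals of the lemma's state and of $\Psi$ are close (the paper does this via monotonicity of the purified distance under the trace non-increasing measurement maps, giving $\|\psi^{R^n}_j-\Psi^{\tilde{R}}_j\|_1\le 2P(\psi^{\tilde{C}A^nB^nR^n},\Psi)$ uniformly in $j$, and similarly for $B$), and only then apply Proposition \ref{thm:random} to $\Psi$ for the two regroupings. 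Working with one common measured state also repairs a second defect of your outline: your two invocations of the proposition refer to two \emph{different} projected states (one carrying the joint projector for $(AB)^n$, the other for $(AR)^n$), so the two events you union-bound would a priori live on different outcome distributions, whereas the lemma concerns a single measurement of a single state. With the bridging argument inserted, your Markov-plus-union-bound conclusion goes through as in the paper.
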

\begin{proof}
The proof of this statement is obtained by combining Proposition \ref{thm:random} with Markov's inequality and Boole's inequality (the union bound). For any $\xi_1 > 0$ and $\xi_2 > 0$, consider a projective measurement of Charlie with rank one projectors $U\braket{i}U^{\dag}$ and let $J$ be the measurement outcome. We want to bound the following probability from below:
\begin{equation}\label{eq:probR}
P_J:=P(\| \psi^{R^n}_J - (\psi^R)^{\otimes n} \|_1 < \xi_1 \bigcap \| \psi^{B^n}_J - (\psi^B)^{\otimes n}\|_1 < \xi_2)  \geq 1 - \alpha
\end{equation}
for any $\alpha > 0$. Applying the union bound and Markov's inequality to such probability, we have
\begin{equation}\label{eq:probS}
\begin{split}
P_J &\geq 1 - P(\| \psi^{R^n}_J - (\psi^R)^{\otimes n} \|_1 \geq \xi_1) - P(\| \psi^{B^n}_J - (\psi^B)^{\otimes n}\|_1 \geq \xi_2)  \\
&\geq 1- \frac{\sum_j p_j \| \psi^{R^n}_j - (\psi^R)^{\otimes n} \|_1} {\xi_1} - \frac{\sum_j p_j \| \psi^{B^n}_j - (\psi^B)^{\otimes n} \|_1}{\xi_2}.
\end{split}
\end{equation}
Averaging over all unitaries, using the Haar measure, we get
\begin{equation}\label{eq:probS}
\begin{split}
\int_{\mathbb{U}(\tilde{C})} P_J dU &\geq 1- \frac{\int_{\mathbb{U}(\tilde{C})} \sum_j p_j \| \psi^{R^n}_j - (\psi^R)^{\otimes n} \|_1 dU} {\xi_1} - \frac{\int_{\mathbb{U}(\tilde{C})}\sum_j p_j \| \psi^{B^n}_j - (\psi^B)^{\otimes n} \|_1 dU}{\xi_2}.
\end{split}
\end{equation}
The averages are not quite of the desired form to apply Proposition \ref{thm:random} directly. Define the state
\begin{equation*}
\ket{\Omega}^{\tilde{C}\tilde{A}\tilde{B}\tilde{R}} := (\Pi_{\tilde{A}} \otimes \Pi_{\tilde{B}} \otimes \Pi_{\tilde{C}} \otimes \Pi_{\tilde{R}}) \ket{\psi}^{\otimes n},
\end{equation*}
and let $\ket{\Psi}^{\tilde{C}\tilde{A}\tilde{B}\tilde{R}}$ be the normalized version of $\ket{\Omega}^{\tilde{C}\tilde{A}\tilde{B}\tilde{R}}$. If Charlie were to perform his measurement on the state $\Psi$, the properties of typicality tell us that the trace norms $\|\psi^{R^n}_j - \Psi^{\tilde{R}}_j\|_1$  and $\|\psi^{B^n}_j - \Psi^{\tilde{B}}_j\|_1$ should be arbitrarily close. This is verified by using the bounds between the trace distance and the purified distance, eq.~(\ref{eq:purified}), and the monotonicity of the purified distance under trace non-increasing quantum operations (see \cite{Renner01} for a proof of this fact):
\begin{equation*}
\begin{split}
 \|\psi^{R^n}_j -  \Psi^{\tilde{R}}_j\|_1 &\leq \|\psi_j^{\tilde{C}A^nB^nR^n} - \Psi_j^{\tilde{C}\tilde{A}\tilde{B}\tilde{R}}\|_1 \\
 &\leq 2P(\psi^{\tilde{C}A^nB^nR^n}_j, \Psi^{\tilde{C}\tilde{A}\tilde{B}\tilde{R}}_j) \\
&\leq 2P(\psi^{\tilde{C}A^nB^nR^n}, \Psi^{\tilde{C}\tilde{A}\tilde{B}\tilde{R}})\\
&\leq \epsilon,
\end{split}
\end{equation*}
for any $\epsilon > 0$ by choosing sufficiently large values of $n$. The last line follows from typicality and the triangle inequality. A similar statement holds for the trace norm $\|\psi^{B^n}_j - \Psi^{\tilde{B}}_j\|_1$.
Applying the triangle inequality twice on each average of eq.~(\ref{eq:probS}), we have
\begin{equation}\label{eq:probSD}
\begin{split}
\int_{\mathbb{U}(\tilde{C})} P_J dU &\geq 1- f(\epsilon) - \frac{\int_{\mathbb{U}(\tilde{C})} \sum_j p_j \| \Psi^{\tilde{R}}_j - \Psi^{\tilde{R}} \|_1 dU}{\xi_1} - \frac{\int_{\mathbb{U}(\tilde{C})} \sum_j p_j \| \Psi^{\tilde{B}}_j - \Psi^{\tilde{B}}\|_1 dU}{\xi_2},
\end{split}
\end{equation}
where $f(\epsilon)$ is a function of various trace norms which vanish, by typicality, for sufficiently large values of $n$. Applying Proposition \ref{thm:random} on the averages of eq.~(\ref{eq:probSD}), we can make the right hand side bigger than $1-\alpha$ for any $\alpha > 0$ by choosing $n$ sufficiently large.
\end{proof}
\end{appendices}

\fancyhead[RO]{\emph{References}}
\bibliographystyle{unsrt}
\bibliography{EofABib}
\end{document}